\declaretheorem[name=Theorem,parent=section]{restatableTheorem}
\declaretheorem[name=Lemma,sibling=restatableTheorem]{restatableLemma}
\declaretheorem[name=Proposition,sibling=restatableTheorem]{restatableProposition}
\declaretheorem[name=Corollary,sibling=restatableTheorem]{restatableCorollary}
\declaretheorem[name=Definition,sibling=restatableTheorem]{restatableDefinition}
\declaretheorem[name=Claim,sibling=restatableTheorem]{claim}
\declaretheorem[name=Example,sibling=restatableTheorem]{example}
\let\shortcite\citeyearpar
\newcommand{\complexityclassname}[1]{\ensuremath{\mathrm{#1}}}
\newcommand{\PTIME}{\complexityclassname{P}}
\newcommand{\NP}{\complexityclassname{NP}}
\newcommand{\mathtext}[1]{\ensuremath{\mathrm{\text{#1}}}}
\newcommand{\set}[1]{\{#1\}}
\newcommand{\suchthat}{\ensuremath{\ \vert\ }}
\newcommand{\card}[1]{{ \mathopen\parallel {#1} \mathclose\parallel }}
\newcommand{\score}[1]{{{\mbox{\it{score}}(#1)}}}
\newcommand{\surpl}[1]{{{\mbox{\it{surplus}}(#1)}}}
\newcommand{\surplsub}[2]{{{\mbox{\it{surplus}}_{#1}(#2)}}}
\newcommand{\scoresub}[2]{{{\mbox{\it{score}}_{#1}(#2)}}}
\newcommand{\ccdvstar}{CCDV$^*$\xspace}
\newcommand{\dual}[1]{\ensuremath{\mathop{dual}\left(#1\right)}}
\newcommand{\RESTcandidates}{\ensuremath{REST}}
\newcommand{\scorethreedm}[1]{\ensuremath{\scoresub{3DM}{#1}}}
\newcommand{\scorefinal}[1]{\ensuremath{\scoresub{final}{#1}}}
\newcommand{\scoresetup}[1]{\ensuremath{\scoresub{setup}{#1}}}
\newcommand{\scorelosep}[1]{\ensuremath{\scoresub{lose}{#1}}}
\newcommand{\ceilfrac}[2]{\ensuremath{\left\lceil\frac{#1}{#2}\right\rceil}}
\title{Complexity Dichotomies for Unweighted Scoring Rules\thanks{Supported in part by NSF grant CCF-1101452 and by COST Action IC1205. Work done in part while H.~Schnoor visited the University of Rochester supported by an STSM grant of Cost Action IC1205. }}
\author{Edith Hemaspaandra\\
Department of Computer Science\\
Rochester Institute of Technology\\
Rochester, NY 14623, USA \\
\and Henning Schnoor \\
Institut f\"ur Informatik\\
Christian-Albrechts-Universit\"at zu Kiel \\
24098 Kiel, Germany \\
}
\begin{document}

\maketitle

\begin{abstract}
Scoring systems are an extremely important class of election systems. 
We study the complexity of manipulation, constructive control by deleting
voters (CCDV), and bribery for scoring systems.
For manipulation, 
we show that for all scoring rules with a constant number of different
coefficients, manipulation is in P. And we conjecture that
there is no dichotomy theorem.

On the other hand, we obtain dichotomy theorems for CCDV and bribery problem. 
More precisely, we show that both of these problems are easy for 1-approval, 2-approval, 
1-veto, 2-veto, 3-veto, generalized 2-veto, and $(2,1,...,1,0)$, 
and hard in all other cases. These results are the ``dual'' of the 
dichotomy theorem for the constructive control by adding 
voters (CCAV) problem
from~\cite{HemaspaandraHemaspaandraSchnoor-CCAV-AAAI-2014},
but do not at all follow from that result. In particular, proving 
hardness for CCDV is harder than for CCAV since we do not 
have control over what the controller can delete, and proving 
easiness for bribery tends to be harder than for control, since bribery can 
be viewed as control followed by manipulation.
\end{abstract}

\section*{Introduction}
Elections are an important way to make decisions, both in human 
and electronic settings. Arguably the most important class
of election systems are the scoring rules. A scoring rule is defined
by, for each number $m$ of candidates, a scoring
vector $\alpha_1 \geq \alpha_2 \geq \cdots \geq \alpha_m$.
In addition, we typically want these vectors to be somehow similar.
This is captured nicely by the notion of pure scoring rules
from~\cite{bet-dor:j:possible-winner-dichotomy} where the 
length-($m+1$) vector is obtained by adding a coefficient in
the length-$m$ vector. 
Voters have complete tie-free preferences over the candidates, and 
a candidate ranked $i$th by a voter receives a score of $\alpha_i$ 
from that voter. The winners are the candidates with the highest
score.  

We are interested in determining, for all scoring rules at once, 
which of them give rise to easy computational problems and which
of them lead to hard problems. Theorems of that form are
known as dichotomy theorems.
For weighted scoring rules, in which each voter has a weight $w$ and
counts as $w$ regular voters,
there are dichotomy theorems for
all standard manipulative actions: manipulation~\cite{hem-hem:j:dichotomy},
bribery~\cite{fal-hem-hem:j:bribery}, and 
control~\cite{DBLP:journals/jair/FaliszewskiHH15}. 
The arguably more natural unweighted case is much harder to analyze
(since in the unweighted case we can only get hardness when the number of
candidates is unbounded, whereas in the weighted case hardness already
occurs with a fixed number of candidates; since weighted dichotomy theorems
typically look at a fixed number of candidates,
the results for the unweighted cases do not at all follow from the
results for the weighted cases). 
Despite the prevalence of scoring
rules, there are only two dichotomy theorems for the unweighted case, 
namely for the possible winner
problem~\cite{bet-dor:j:possible-winner-dichotomy,bau-rot:j:possible-winner-dichotomy-final-step}
and
for the constructive control by adding voters (CCAV)
problem~\cite{HemaspaandraHemaspaandraSchnoor-CCAV-AAAI-2014}.

In this paper, we look at bribery and manipulation for unweighted
scoring rules, and, since bribery can be viewed as deleting voters
followed by a manipulation, we also look at 
the constructive control by deleting voters (CCDV) problem.

For manipulation, we show that for all scoring rules
with a constant number of different coefficients, manipulation is in P.
This subsumes all known polynomial-time results for unweighted manipulation
for scoring rules. We conjecture that there is no dichotomy theorem for manipulation.

For bribery and CCDV, we obtain a dichotomy theorem for
pure scoring rules. In particular, we show exactly when these
problems are easy (in P) and that they are hard (NP-complete) 
in all other cases. Interestingly,
our characterization is the ``dual'' of the CCAV characterization in the following sense:
For every scoring rule $f$, 
the complexity of $f$-CCDV (and of $f$-bribery)
is the same as for $\dual{f}$-CCAV, where
$\dual{f}$ is obtained from $f$ by multiplying each entry in a scoring
vector by $-1$, and reversing the order of the vector.

These results are quite surprising: 
CCDV has less structure to encode hard problems into it than CCAV, but we still obtain the same complexity characterization (modulo duality). On the other hand, bribery can be seen as the combination of CCDV and manipulation, but the complexity is the same as for CCDV.
However, in another sense bribery behaves very differently from CCDV: The complexity of bribery changes from polynomial-time solvable to NP-complete by small changes in the definition of the problem, while the complexity of the former is much more robust.

The structure of the paper is as follows: In Section~\ref{sect:prelim}, we introduce relevant definitions, including the specific problems we study in this paper. In Section~\ref{sect:results:manipulation}, we state our results on manipulation. Section~\ref{sect:ccdv bribery dichotomy} contains our dichotomy result for CCDV and bribery. Our individual complexity results for CDDV and bribery can be found in Sections~\ref{sect:results:ccdv} and~\ref{sect:bribery}, respectively. We conclude with open questions in Section~\ref{sect:open questions}. 
All proofs not contained in the main paper can be found in the appendix.

\section{Preliminaries}\label{sect:prelim}

An \emph{election} consists of a non-empty, finite set of candidates and a finite set of voters. Each voter is identified with her vote, which is simply a linear order on the set of candidates. An \emph{election system} or \emph{voting rule} is a rule that, given an election, determines the set of candidates who are winners of the election according to this rule. A \emph{scoring vector} for $m$ candidates is simply a vector $(\alpha_1,\dots,\alpha_m)$ of integer coefficients, where $\alpha_i\ge\alpha_{i+1}$ for all $1 \leq i < m$. Such a vector defines a voting rule for elections with $m$ candidates by simply awarding, for each vote in the election, $\alpha_i$ points to the candidate ranked in the $i$-th position of this vote, and defining the candidates with the most points to be the winners of the election. A \emph{scoring rule} is an election system that for each number of candidates applies an appropriate scoring vector. Such a system can be described by a \emph{generator}, which is a function $f$ such that for each $m\in\mathbb N$, $f(m)$ is a scoring vector for $m$
candidates. Well-known scoring rules are Borda (using $f(m)=(m-1,m-2,\dots,1,0)$), $k$-approval (using $f(m)=(\underbrace{1,\dots,1}_{k\mathtext{ many}},0,\dots,0)$) and $k$-veto (using $f(m)=(1,\dots,1,\underbrace{0,\dots,0}_{k\mathtext{ many}})$) for natural numbers $k$. For readability, we usually identify a generator with the election system it defines.

To capture that the elections for different numbers of candidates should use ``similar'' generators, we use the following notion \cite{bet-dor:j:possible-winner-dichotomy}: A generator $f$ as above is \emph{pure}, if for all $m\ge1$, the vector $f(m)$ can be obtained from the vector $f(m+1)$ by removing one coefficient from the sequence. In~\cite{HemaspaandraHemaspaandraSchnoor-CCAV-AAAI-2014}, it is shown that the restriction to the set of pure generators with rational numbers covers all generators in a large and reasonable class.

We define standard manipulative actions:
Manipulation~\cite{bar-tov-tri:j:manipulating,con-lan-san:j:when-hard-to-manipulate},
bribery~\cite{fal-hem-hem:j:bribery}, and 
control~\cite{bar-tov-tri:j:control},
for generators.

\begin{restatableDefinition}
  For a generator $f$, the \emph{constructive control problem for $f$ by deleting voters}, $f$-CCDV, is the following problem: Given a set of voters $V$ over a set of candidates $C$, a
  candidate $p\in C$ and a number $k$, is there a subset $V'\subseteq V$ with
  $\card{V'}\leq k$ such that $p$ is a winner of the election if the
  votes in  $V - V'$ are evaluated using $f$?
\end{restatableDefinition}

A similar problem, the \emph{constructive control problem for $f$ by adding voters}, called $f$-CCAV, asks whether $p$ can be made a winner by adding to $V$ at most $k$ voters from a given set of so-called unregistered voters.
In the \emph{manipulation problem for $f$}, we are given a set $V$ of
nonmanipulative voters and a set of manipulators, and we
ask whether $p$ can be made a winner by setting the votes of the manipulators,
with no restriction on how these votes can be chosen.
Finally, the \emph{bribery problem} for $f$ asks whether $p$ can be made
a winner by replacing up to $k$ votes in $V$ with the same number of arbitrary votes.
Clearly, for every polynomial-time uniform generator $f$, the problems $f$-CCDV, $f$-bribery, and $f$-manipulation are in \NP.

Two scoring vectors $(\alpha_1,\dots,\alpha_m)$ and $(\beta_1,\dots,\beta_m)$ are \emph{equivalent} if they describe the same election system, i.e., if for any election, they lead to the same winner set. It is easy to see~\cite{HemaspaandraHemaspaandraSchnoor-CCAV-AAAI-2014} that this is the case if and only if there are numbers $\gamma>0$ and $\delta$ such that for each $i$, $\beta_i=\gamma\alpha_i+\delta$. We say that $f_1$ and $f_2$ are \emph{ultimately equivalent} if $f_1(m)$ and $f_2(m)$ are equivalent for all but finitely many $m$. It is easy to see that in this case, CCDV, bribery, and manipulation have the same complexity for $f_1$ and $f_2$.

For algorithms, we need the function $f$ is efficiently computable. A generator $f$ is polynomial-time uniform if $f(m)$ can be computed in polynomial time, given $m$ in unary. (Given $m$ in binary, polynomial time would not suffice to even write down a sequence of $m$ numbers.)
For the remainder of this paper, a generator is always a polynomial-time uniform pure generator with rational coefficients.

\section{Manipulation}\label{sect:results:manipulation}

Our main result on manipulation is the following: Every generator $f$ for which there is a fixed, finite upper bound on the number of coefficients that are used for any number of candidates has a polynomial-time solvable manipulation problem. We mention that this result also holds for generators that are not pure (but still are polynomial-time uniform). We note that the special cases where $f$ generates $k$-approval or $k$-veto were shown in~\cite{pro-ros-zuc:j:borda}.

\begin{restatable}{restatableTheorem}{theoremmanp}\label{theorem:man-p} Let $f$ be a generator such that there is a constant $c$ such that for each number $m$ of candidates, at most $c$ different coefficients appear in the vector $f(m)$. Then $f$-manipulation can be solved in polynomial time.
\end{restatable}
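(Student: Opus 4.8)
The plan is to reduce $f$-manipulation to a combinatorial distribution problem and then to solve that problem in polynomial time, crucially exploiting that only $c$ distinct coefficients ever occur. First I would show that we may assume every manipulator ranks $p$ first. This is the standard exchange argument: starting from any successful setting of the manipulators' votes, moving $p$ to the top of each manipulator's vote pushes every other candidate weakly downward, and since the coefficients of $f(m)$ are non-increasing this never increases any competitor's score and never decreases $\score{p}$, so the modified votes are still successful. Let $m=\card{C}$ and let $n$ be the number of manipulators. After this reduction the final score of $p$ is fixed at $T=\scoresub{V}{p}+n\alpha_1$, where $\scoresub{V}{p}$ is the score of $p$ from the nonmanipulative voters $V$ and $\alpha_1$ is the largest coefficient of $f(m)$. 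Setting $b_a:=T-\scoresub{V}{a}$ for each competitor $a$, the instance is positive if and only if $b_a\ge 0$ for all $a$ and the manipulators can fill the remaining positions so that each competitor $a$ gains at most $b_a$ further points.

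Next I would phrase the remaining question as integer feasibility. Every manipulator distributes the same multiset $\{\alpha_2,\dots,\alpha_m\}$ over the $m-1$ competitors, one coefficient per competitor. By hypothesis these coefficients take at most $c$ distinct values $v_1>\cdots>v_t$ with $t\le c$, where $v_\ell$ occurs $n_\ell$ times per vote. Writing $x_{a,\ell}$ for the number of value-$v_\ell$ coefficients that competitor $a$ receives in total, feasibility becomes: do there exist nonnegative integers $x_{a,\ell}$ with $\sum_\ell x_{a,\ell}=n$ for every competitor (each competitor is ranked once per vote), $\sum_a x_{a,\ell}=n\,n_\ell$ for every value (all copies are used), and $\sum_\ell v_\ell\,x_{a,\ell}\le b_a$ for every competitor (budgets are met)?

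The crux, and the step I expect to be the main obstacle, is solving this feasibility problem in polynomial time. The count constraints alone form a transportation problem and are harmless, but the per-competitor budget constraints are \emph{weighted} and couple the variables, so the system is not totally unimodular and LP integrality does not apply. Here the bound $t\le c$ is decisive. Since each competitor receives exactly $n$ coefficients, its \emph{profile} $(x_{a,1},\dots,x_{a,t})$ ranges over only $\binom{n+t-1}{t-1}=O(n^{\,t-1})$ vectors --- polynomially many for constant $t$ --- and its budget merely restricts it to an admissible subset. I would then decide, by dynamic programming over the competitors, whether one admissible profile can be chosen per competitor so that the chosen profiles sum to the supply vector $(n n_1,\dots,n n_t)$: the DP state records the remaining supply in each of the $t\le c$ value classes, and since each coordinate is bounded by $n(m-1)$ there are only polynomially many states. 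That constancy of $c$ is genuinely needed is witnessed by Borda, whose vectors use $m$ distinct coefficients and whose manipulation problem is \NP-hard already for a fixed number of manipulators.

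Finally I would establish realizability: an aggregate solution $X=(x_{a,\ell})$ must actually arise from $n$ genuine votes, i.e., it must decompose as a sum of $n$ assignments, each giving every competitor exactly one coefficient and using value $v_\ell$ exactly $n_\ell$ times. I would prove this by peeling off one vote at a time: the fractional point $\tfrac1n X$ lies in the (integral) transportation polytope of single-vote profiles supported on the positive entries of $X$, so that polytope contains an integer point --- a valid vote $V$ with $V\le X$ entrywise --- and $X-V$ then has exactly the marginals of an $(n-1)$-vote aggregate, so the claim follows by induction. This is a standard flow/Hall-type argument and poses no essential difficulty once the feasibility question above is settled.
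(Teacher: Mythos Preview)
Your proposal is correct and follows essentially the same route as the paper: reduce to an integer feasibility problem in the profile variables $x_{a,\ell}$, solve it by dynamic programming over the competitors with state equal to the remaining supply vector (polynomially many states since $t\le c$ is constant), and then establish realizability by peeling off one vote at a time. The only cosmetic differences are that the paper normalizes so the top block has coefficient $0$ and therefore uses an inequality $\sum_j x_{i,j}\le k$ rather than your equality $\sum_\ell x_{a,\ell}=n$, and that the paper phrases the peel-off step directly via Hall's marriage theorem on the sets $X_j=\{c_i:x_{i,j}>0\}$ (with multiplicities $m_j$) rather than via integrality of the transportation polytope; these are equivalent formulations of the same argument.
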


We give a proof sketch for a simple special case of the theorem, namely generators $f$ of the form $f=(0,\dots,0,-\beta,-\alpha)$.
With great care, this proof sketch generalizes to the general case.

\begin{proof}(Sketch)
Consider a preferred candidate $p$, a set of candidates $C=\set{c_1,\dots,c_m,p}$, a surplus $\surpl c$ for each $c\in C$ (i.e., the value $\score{c}-\score{p}$, which can easily be computed from the election instance, as $f$ is polynomial-time uniform), and a set of $k$ manipulators. Clearly, we can assume that
all manipulators will vote $p$ first.

The obvious greedy approach of having a manipulator rank a
candidate with the largest surplus last won't always work:
If $\beta = 2$, $\alpha = 3$, the surplus of $c_1$ is 4,
the surplus of $c_2$ and $c_3$ is 3, and we have two manipulators,
the only successful manipulation is to have the manipulators vote
$\cdots > c_1 > c_2$ and $\cdots > c_1 > c_3$ and so we cannot put
$c_1$ last.

If there is a successful manipulation, then for all $i\in\set{1,\dots,m}$, there are
numbers $x_i$ and $y_i$ (the number of times $c_i$ is ranked next to last / last
by a manipulator) such that:
\begin{enumerate}
\item $x_i + y_i \leq k$,
\item $\sum_{1 \leq i \leq m} x_i = k$,
\item $\sum_{1 \leq i \leq m} y_i = k$, and
\item $\surpl{c_i} - \beta x_i - \alpha y_i \leq 0$.
\end{enumerate}

We derive an 
algorithm deciding the condition by dynamic programming. For this, we define the Boolean predicate $M$ such that
$M(k, k_\beta, k_\alpha, s_1, \ldots, s_\ell)$
is true if and only if for all $i$, $1 \leq i \leq \ell$,
there exist natural numbers $x_i$ and $y_i$ such that
\begin{enumerate}
\item $x_i + y_i \leq k$,
\item $\sum_{1 \leq i \leq \ell} x_i = k_\beta$,
\item $\sum_{1 \leq i \leq \ell} y_i = k_\alpha$, and
\item $s_i - \beta x_i - \alpha y_i \leq 0$.
\end{enumerate}
It is immediate that if there is a successful manipulation, then
$M(k, k, k, \surpl{c_1}, \ldots , \surpl{c_m})$ is true. 
It is not so easy to see that the converse holds. This is shown
by induction on $k$.
It is easy to come up with a simple ad-hoc proof for the 
simple case we are looking at here, but 
we will instead 
describe an approach that 
generalizes to the general case.

The inductive step uses the following argument.
If $M(k+1, k_\beta, k_\alpha, s_1, \ldots, s_\ell)$ is true,
let $X = \{c_i \ | \ x_i > 0\}$ and let 
$Y = \{c_i \ | \ y_i > 0\}$. Then the sequence $(X, Y)$ can be
shown to fulfill the ``marriage condition,'' which then,
by Hall's Theorem~\cite{hall}, 
implies that there
is a ``traversal,'' i.e., a sequence of distinct representatives of
this sequence of sets which then gives us a vote for one of the
manipulators. In this particular case, the traversal consists of
two distinct candidates $(c_i,c_j)$ such that $x_i > 0$ and $y_j > 0$.
Let one manipulator vote $\cdots > c_i > c_j$, subtract 1 from $x_i$ and
$y_j$, subtract $\beta$ from
$\surpl{c_i}$, and subtract $\alpha$ from from $\surpl{c_j}$.
It follows from the induction hypothesis that
the remaining $k$ manipulators can vote to make $p$ a winner.

To conclude the proof sketch, we can show by dynamic programming
that $M$ is computable in polynomial for unary $k, k_\alpha, k_\beta$
(which is sufficient to solve the manipulation problem).
This holds since:
\begin{enumerate}
\item
$M(k,k_\beta, k_\alpha)$ is true if and only if
$k_\beta = k_\alpha = 0$.
\item For $\ell \geq 1$,
$M(k,k_\beta, k_\alpha, s_1, \ldots, s_\ell)$ if and only if
there exist natural numbers $x_\ell$ and $y_\ell$ such that:
\begin{enumerate}
\item
$x_\ell + y_\ell \leq k$,
\item
$x_\ell \leq k_\beta$,
\item
$y_\ell \leq k_\alpha$,
\item
$s_\ell - \beta x_\ell - \alpha y_\ell \leq 0$, and
\item
$M(k,k_\beta - x_\ell, k_\alpha - y_\ell, s_1, \ldots, s_{\ell-1}).$
\end{enumerate}
\end{enumerate}
\end{proof}

Given Theorem~\ref{theorem:man-p} and the fact that manipulation for Borda is NP-complete~\cite{DBLP:conf/ijcai/BetzlerNW11,dav-kat-nar-wal:c:complexity-and-algorithms-for-borda},
it is natural to ask whether the manipulation problem is NP-complete for all remaining generators.
But this is extremely unlikely: 
Though our approach does not give
polynomial-time algorithms when the number of coefficients is
unbounded, it will give quasipolynomial algorithms when the
coefficients are small enough and grow slowly enough.

It is also conceivable that additional cases will be in P.
Though a general greedy approach seems unlikely (as manipulation for Borda is NP-complete),
a greedy approach for specific cases is still possible.

We conjecture that there is no dichotomy theorem for manipulation
for pure scoring rules, with 
different intermediate (between P and
NP-complete) complexities showing up.

\section{CCDV and Bribery Dichotomy}\label{sect:ccdv bribery dichotomy}

We completely characterize the complexity of $f$-CCDV and $f$-bribery for every generator $f$. For each generator $f$, these two problems are polynomial-time equivalent, and are polynomial-time solvable or NP-complete. For the cases where $f$ generates $k$-approval or $k$-veto, the complexity classification is already stated in~\cite{lin:thesis:elections}.
For CCDV, the special case where $f$ generates $1$-approval was shown
in~\cite{bar-tov-tri:j:control}.
For bribery, the
special cases where $f$ generates $1$-approval or $1$-veto were shown in~\cite{fal-hem-hem:j:bribery}.
We also note the existence of an unpublished manuscript that proves NP-hardness
for bribery for generators of the form $(\alpha, \beta, 0, \ldots, 0)$, where $\alpha$ and $\beta$ are coprimes with
$\alpha > \beta \geq 1$~\cite{car-kak-kar-woe:unpub:bribery}.

\begin{restatable}{restatableTheorem}{theoremccdvandbriberydichotomy}\label{theorem:ccdv dichotomy}
 Let $f$ be a pure, polynomial-time uniform generator. If $f$ is ultimately equivalent to one of the following generators, then $f$-CCDV and $f$-bribery can be solved in polynomial time:
  \begin{enumerate}
  \item $f_1=(1,\dots, 1, 0, 0, 0)$ ($3$-veto),
  \item $f_2=(1, 0, \dots,0)$ ($1$-approval),
  \item $f_3=(1, 1, 0, \dots, 0)$ ($2$-approval),
  \item for some $\alpha\ge\beta \ge 0$, $f_4=(0, \ldots, 0, -\beta, -\alpha)$ (this includes triviality, $1$-veto, and $2$-veto),
  \item $f_5=(2,1,\dots,1,0)$.
 \end{enumerate}
Otherwise, $f$-CCDV and $f$-bribery are NP-complete.
\end{restatable}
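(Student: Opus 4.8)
The plan is to split the dichotomy into an \emph{easiness} half and a \emph{hardness} half, and within each half to do the genuinely new work on only one of the two problems, importing the other via a cheap transfer. Concretely, I would prove $f$-CCDV easiness through duality with CCAV, prove $f$-bribery easiness directly, prove $f$-CCDV hardness directly, and obtain $f$-bribery hardness by adapting the CCDV reductions. Membership in $\NP$ is already noted in the preliminaries, so each problem is either in $\PTIME$ or $\NP$-complete; the asserted polynomial-time equivalence of $f$-CCDV and $f$-bribery then falls out, since the two problems turn out to be in $\PTIME$ on exactly the same generators and $\NP$-complete on all others.

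For CCDV easiness I would exploit the duality that the introduction advertises. The pivotal observation is that reversing a vote and scoring it under $\dual{f}$ negates every point it awards. Hence, if one starts from a fixed profile of $\dual{f}$-votes realizing the total $f$-score profile of $V$ (up to an additive constant) and then \emph{adds} reversed copies of the votes of $V$ drawn from a pool, each addition \emph{subtracts} exactly that voter's $f$-contribution; adding a subset of the pool therefore mirrors deleting the corresponding subset in the original election. This gives a reduction $f$-CCDV $\leq_p \dual{f}$-CCAV, modulo a realizability lemma for the base profile that is routine for the non-degenerate duals occurring here. Since the duals of $f_1,\dots,f_5$ are precisely $3$-approval, $1$-veto, $2$-veto, generalized $2$-approval, and $(2,1,\dots,1,0)$ --- the easy CCAV cases of \cite{HemaspaandraHemaspaandraSchnoor-CCAV-AAAI-2014} --- this settles CCDV easiness. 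Bribery easiness does not follow from duality, so I would establish it directly: for each easy family a dynamic program first chooses which voters to replace and then solves the residual manipulation of the new votes under the remaining budget. For $f_4=(0,\dots,0,-\beta,-\alpha)$ this is literally the recurrence of Theorem~\ref{theorem:man-p} with one extra ``deletion'' coordinate, and analogous but more delicate programs handle $f_1$ and $f_5$.

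The heart of the argument, and the part I expect to be hardest, is proving $\NP$-hardness of $f$-CCDV for every $f$ not ultimately equivalent to one of $f_1,\dots,f_5$. Here the duality breaks: the reverse reduction $\dual{f}$-CCAV $\leq_p f$-CCDV does \emph{not} hold, because in CCDV one has no control over which votes are available to delete, whereas a CCAV instance may stock its pool with tailor-made votes. I would therefore reduce directly from a canonical $\NP$-complete problem such as Three-Dimensional Matching, organized as a structural case analysis over pure generators. The easy families are exactly those whose vectors are ultimately flat except for at most two distinguished positions at the very top ($f_2,f_3,f_5$) or the very bottom ($f_1,f_4$); for every other pure generator one can locate, as $m$ grows, a stable local coefficient pattern rich enough to drive a selection gadget --- for instance three consecutive distinct values, or a long constant run flanked by two independent cheap positions. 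The crux of each reduction is to calibrate the surpluses $\score{c}-\score{p}$ of the threatening candidates so that an optimal solution is \emph{forced} to delete exactly one designated voter per chosen set element, even though deletions cannot be restricted: any off-pattern deletion must either waste budget or fail to push a surplus down to zero. Once the CCDV reduction for a shape is in hand, the same gadget, with the deleted voters replaced by bribed $p$-first votes, yields the matching bribery hardness.

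Two pieces of bookkeeping then complete the proof. First, I must check that the case analysis is \emph{exhaustive}: every pure, polynomial-time uniform generator with rational coefficients is ultimately equivalent either to one of $f_1,\dots,f_5$ or to one of the hard shapes treated above; normalizing by the equivalence $\beta_i=\gamma\alpha_i+\delta$ keeps this a finite check. Second, assembling easiness, hardness, and the $\NP$ upper bound gives the stated dichotomy and, as noted, the polynomial-time equivalence of the two problems. I anticipate that essentially all the difficulty lives in the hardness half --- both in engineering deletion-robust gadgets and in verifying that the finitely many families of non-easy generators are genuinely all covered --- with the bribery-easiness dynamic programs for $f_1$ and $f_5$ a distant second.
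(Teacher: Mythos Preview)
Your high-level decomposition matches the paper's exactly: CCDV easiness via duality with CCAV (Proposition~\ref{prop:ccdvstar poly equiv ccav}), bribery easiness proved directly, CCDV hardness via a 3DM-based case analysis on generator shapes, and bribery hardness by adapting the CCDV reductions. The paper's top-level case split is "many coefficients" ($\alpha^m_3 > \alpha^m_{m-3}$ for some $m$, handled by an adaptive approval-like/veto-like reduction) versus "few coefficients" (generators of the form $(\alpha_1,\alpha_2,\alpha_3,\dots,\alpha_3,\alpha_4,\alpha_5,\alpha_6)$, handled by a finite sub-case analysis), which is the structural organization you gesture at.

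Two places where your plan is more optimistic than the paper's experience suggests.

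\emph{Bribery hardness.} Your sentence "the same gadget, with the deleted voters replaced by bribed $p$-first votes, yields the matching bribery hardness" glosses over the main difficulty. After bribing, the briber fills in the \emph{entire} new vote, not just the top slot; this free manipulation step can drain points from any candidate she likes and can wreck the tight surplus accounting that makes the CCDV reduction work. The paper's fix is to introduce dedicated \emph{blocking candidates} (e.g., $b_\alpha, b_\beta, b_\gamma$ in the proof of Theorem~\ref{theorem:0 dots 0 -gamma -beta -alpha bribery}) whose surpluses force them to occupy every non-top scoring position in every bribed vote, so that the manipulation freedom is absorbed by candidates outside the 3DM encoding. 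Engineering these blockers so that their own setup votes remain unattractive to delete, and so that the tightened arithmetic still forces a cover, is done separately for each generator shape and is the bulk of the bribery-hardness work.

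\emph{Bribery easiness for $f_4$ and $f_5$.} The $f_4$ algorithm is not "literally the recurrence of Theorem~\ref{theorem:man-p} with one extra deletion coordinate." The paper gives an explicit instance (Example~\ref{e:bribery}) where an optimal bribe must target a voter who ranks $p$ in neither of the last two positions---a voter an optimal CCDV would never touch---so bribery is genuinely not optimal-deletion-followed-by-optimal-manipulation. The key extra lemma (Claim~\ref{claim:few-v3}) bounds the number of such voters that ever need to be bribed by a constant depending only on $\alpha,\beta$; only after that does the extended dynamic program go through. For $f_5=(2,1,\dots,1,0)$ the paper does not use dynamic programming at all but reduces to min-cost network flow (Theorem~\ref{theorem:100-1 bribery in ptime}); for $f_1,f_2,f_3$ it cites Lin's prior results rather than reproving them.
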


This dichotomy theorem mirrors the one obtained for CCAV in~\cite{HemaspaandraHemaspaandraSchnoor-CCAV-AAAI-2014} (modulo duality, see below). For the relationship of CCDV and CCAV, this implies that the difficulty of implementing ``setup votes'' (see below) does not have any influence on the complexity of our decision problems for the class of generators we study. The below proof of Theorem~\ref{theorem:0 dots 0 -gamma -beta -alpha CCDV} is an example of a non-trivial implementation of these setup votes.

In particular, our results imply that the complexity of CCDV is ``robust'' in the following sense: The complexity of CCDV does not depend on whether we add a bit to each voter stating whether she can be deleted or not. This will be made formal below in our discussion of \ccdvstar (defined below).

The situation is different for bribery:
Generalizing the bribery problem to allow marking some voters as ``unbribable'' increases the complexity to NP-complete for some generators. As an example of this phenomenon, we state the following result. (The version of bribery defined here may be of independent interest, but is only used here to highlight the differences between CCDV and bribery.)

\begin{restatableTheorem}\label{theorem:bribery with unbribable voters np complete 0 dots 0 -1 -2}
The variation of the bribery problem for $(0, \dots, 0, -1, -2)$ where each voter has a bit that states whether this
voter can be bribed or not is NP-complete.
\end{restatableTheorem}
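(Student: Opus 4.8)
(Sketch)
Membership in \NP{} is immediate: guess the set of voters to rebribe together with their replacement votes, and verify in polynomial time that at most $k$ \emph{bribable} voters were changed, that no voter marked unbribable was touched, and that $p$ is a winner of the resulting election under $(0,\dots,0,-1,-2)$.

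For the hardness direction I would give a polynomial-time many-one reduction from Three-Dimensional Matching (3DM). The starting observation is structural: under $(0,\dots,0,-1,-2)$ every vote places $-2$ on the candidate it ranks last and $-1$ on the candidate it ranks second-to-last, and $0$ on all others, so each vote ``vetoes'' exactly two candidates, with weights $2$ and $1$. Since all scores are nonpositive, $p$ is a winner exactly when no candidate has a strictly larger (i.e., strictly less negative) score than $p$. Helping $p$ therefore means bribing voters that currently veto $p$; but the replacement vote must itself veto two \emph{other} candidates, emitting a fresh $-2$ and a fresh $-1$. The frozen vetoes of the unbribable voters, in particular any unbribable veto on $p$, cannot be removed and hence pin down a baseline that the briber cannot alter. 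This rigidity is exactly the feature missing from ordinary bribery, where Theorem~\ref{theorem:ccdv dichotomy} places $(0,\dots,0,-1,-2)$-bribery in \PTIME{} (case~4).

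Given a 3DM instance with element sets $W,X,Y$ of size $q$ and triples $T$, the plan is to introduce the preferred candidate $p$, one element-candidate $a_e$ for each $e\in W\cup X\cup Y$, and a pool of dummy candidates. The unbribable voters are used to (i) fix the target value that $\score{p}$ will take and to give each $a_e$ a baseline that sits just above this value by a controlled deficit, and (ii) install, for each triple $t\in T$, a small gadget of unbribable voters that vetoes $p$, so that $p$ can only be rescued from that veto by bribing the \emph{bribable} voters associated with $t$. Bribing a triple's voters both removes their veto from $p$ and forces their $-2/-1$ vetoes to be redirected, which we aim at the element-candidates $a_w,a_x,a_y$ of $t$ so as to ``cover'' them; surplus weak vetoes are parked on dummy candidates. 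Choosing the budget $k$ so that exactly $q$ triples' worth of voters can be bribed, the intended correctness statement is that $p$ can be made a winner if and only if the selected triples cover each element exactly once, i.e., form a perfect matching: under-covering leaves some $a_e$ strictly above $\score{p}$, while the tight budget together with the per-triple vetoes on $p$ forces the briber to pick exactly $q$ triples.

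The main obstacle is the \emph{pairing} inherent in the rule: a single bribed vote emits a $-2$ and a $-1$ onto two distinct candidates, so it cannot on its own supply the three element-candidates of a triple. I expect to implement each triple by a small fixed number of coupled bribable voters and to use the weight distinction between $-2$ and $-1$ together with the dummy candidates to absorb the leftover weak vetoes, so that they neither create new violations nor spuriously help $p$. The two delicate points are then (i) coupling the voters of a triple, via the per-triple $p$-veto gadget and the exact budget, so that a successful bribery is forced to treat them all-or-nothing, and (ii) ruling out ``cheating'' briberies that change an unmatched collection of voters yet still drive every $a_e$ to or below $\score{p}$. Both arguments rest on the frozen baseline supplied by the unbribable voters; verifying that this baseline makes the matching constraint tight, and that the reduction is computable in polynomial time with $m$ growing polynomially in the instance size, is the remaining bookkeeping.
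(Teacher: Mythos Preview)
The paper states this theorem only to illustrate that bribery is \emph{not} robust under adding an ``unbribable'' bit, and does not actually supply a proof anywhere in the main text or the appendix. So there is no paper proof to compare against; I can only assess your sketch on its own merits.

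Your high-level plan---membership in \NP{} by guess-and-check, hardness by a 3DM reduction in which unbribable voters freeze a baseline and bribable voters encode triples---is the natural approach and is consistent with how the paper handles its other hardness results. But as written the sketch has two concrete gaps.

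First, there is a confusion in your gadget description: you say you will ``install, for each triple $t$, a small gadget of \emph{unbribable} voters that vetoes $p$, so that $p$ can only be rescued from that veto by bribing the bribable voters associated with $t$.'' That cannot work as stated: vetoes cast by unbribable voters are permanent and cannot be rescued by bribing anyone else. What you presumably want is that the \emph{bribable} voters associated with $t$ veto $p$ (so bribing them helps $p$), while the unbribable voters only fix the target scores of $p$, the $a_e$, and the blockers. You should make this explicit.

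Second, and more substantively, you correctly identify the pairing obstacle---each bribed vote emits one $-2$ and one $-1$, but a triple has three elements---yet you do not resolve it. ``I expect to implement each triple by a small fixed number of coupled bribable voters'' is the statement of the problem, not its solution. You need to actually exhibit the per-triple block of bribable voters, show how the surplus vetoes are absorbed (blocker candidates that must sit in the last two positions of every manipulation vote are the standard device here; compare the role of the $b_\alpha,b_\beta,b_\gamma$ candidates in the paper's proof of Theorem~\ref{theorem:0 dots 0 -gamma -beta -alpha bribery}), and then prove the all-or-nothing coupling and the cover $\Leftrightarrow$ success equivalence. A workable template is the $4$-voters-per-triple construction with helper candidates $S_i,S_i'$ used in Theorem~\ref{theorem:0 dots 0 -gamma -beta -alpha CCDV}; adapting it to two veto positions and exploiting that the unbribable voters let you realise \emph{arbitrary} relative scores (which is precisely the extra power this variant has over plain bribery) should close the gap. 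Until that gadget and its correctness argument are written down, what you have is a plausible plan rather than a proof.
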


\section{Control by Deleting Voters}\label{sect:results:ccdv}

In this section we give an overview over our proof of CCDV-part of Theorem~\ref{theorem:ccdv dichotomy}. In Section~\ref{sect:ccav and ccdv}, we show that our CCDV polynomial-time cases easily follow from a relationship to CCAV, whose complexity was studied in~\cite{HemaspaandraHemaspaandraSchnoor-CCAV-AAAI-2014}. Sections~\ref{sect:ccdv few coefficients general cases} and~\ref{sect:ccdv many coefficients} then contain our hardness results for CCDV.

\subsection{Relationship between CCAV and CCDV and CCDV polynomial time cases}\label{sect:ccav and ccdv}

CCDV and CCAV are closely related as follows: For a generator $f$, let $\dual f$ be the generator obtained from $f$ by multiplying each coefficient with $-1$ and reversing the order of the coefficients (to maintain monotonicity). Removing a vote in an $f$-election has the same effect as adding the same vote in a $\dual f$-election. Using this observation, one can show that $f$-CCDV reduces to $\dual f$-CCAV for all generators $f$. We mention that a similar relation holds
for weighted $k$-approval and $k$-veto
elections~\cite{DBLP:journals/jair/FaliszewskiHH15}.

The other direction of this relationship does not follow so easily, as there is an important difference between CCDV and CCAV: In CCAV, the set of voters is partitioned into a set of \emph{registered} voters and a set of \emph{potential} voters, where the controller's actions can only influence the potential voters. This provides the problem with additional structure, as the controller cannot modify the registered voters. We often call these registered votes, which the controller cannot influence anymore, \emph{setup votes}, as these allow us to set up the scenario of an NP-hard problem in hardness proofs.

The CCDV problem does not have a corresponding structure; here every vote may be (potentially) deleted by the controller. This makes it harder to construct the above-mentioned setup votes: Since we cannot simply ``forbid'' the controller to delete certain votes, hardness proofs for CCDV need to ``setup'' the relevant scenario with votes that are designed to be ``unattractive'' to delete for the controller.

To obtain CCDV hardness results, it is therefore natural to consider the following analog to the CCAV problem: \ccdvstar is a version of CCDV providing the additional structure that CCAV has. In \ccdvstar, the set of votes is partitioned into a set $R$ of voters that cannot be deleted, and voters $D$ that can be deleted. From the above discussion, it follows that the complexities of CCAV and \ccdvstar are related with the following duality, which, together with the polynomial-time results obtained for CCAV in~\cite{HemaspaandraHemaspaandraSchnoor-CCAV-AAAI-2014} immediately implies the polynomial-time CCDV cases of Theorem~\ref{theorem:ccdv dichotomy}.

\begin{restatable}{restatableProposition}{propccdvstarpolyequivccav}\label{prop:ccdvstar poly equiv ccav}
 For every generator $f$, $f$-\ccdvstar and $\dual f$-CCAV are polynomially equivalent.
\end{restatable}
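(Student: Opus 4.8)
The plan is to exhibit polynomial-time many-one reductions in both directions, built entirely on the score-level duality already noted in the text. For a generator $g$, a candidate $c$, and a set of votes $W$, write $\sigma_g(c,W)$ for the total number of $g$-points that $c$ receives from $W$, and for a vote $v$ let $\bar v$ denote the reversed vote, extended to sets by $\bar W=\set{\bar v\suchthat v\in W}$. The single identity driving everything is $\sigma_g(c,W)=-\sigma_{\dual g}(c,\bar W)$, which holds because the position-$i$ coefficient of $\dual g(m)$ equals $-\alpha_{m-i+1}$, so a candidate placed $i$-th in $v$ and hence $(m-i+1)$-st in $\bar v$ earns $\alpha_i$ under $g$ and $-\alpha_i$ under $\dual g$. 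In both reductions I map each deleted vote to an added vote (and vice versa) via reversal, so that the bounds ``delete at most $k$'' and ``add at most $k$'' match directly; the only thing that does not transport for free is the baseline surplus.

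For the direction $f$-\ccdvstar $\le_p\dual f$-CCAV, fix an instance with undeletable votes $R$, deletable votes $D$, preferred candidate $p$, and bound $k$. Deleting $V'\subseteq D$ makes $p$ a winner iff for every $c$ we have $\sigma_f(p,(R\cup D)-V')\ge\sigma_f(c,(R\cup D)-V')$, which rearranges to $\sigma_f(c,V')-\sigma_f(p,V')\ge\Delta_c$, where $\Delta_c=\sigma_f(c,R\cup D)-\sigma_f(p,R\cup D)$ is fixed. Applying the identity with $W=V'$ turns this into $\sigma_{\dual f}(p,\bar{V'})-\sigma_{\dual f}(c,\bar{V'})\ge\Delta_c$. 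Hence I build a $\dual f$-CCAV instance whose potential votes are $\bar D$ (so adding $\bar{V'}$ corresponds to deleting $V'$) and whose \emph{registered} votes realize exactly the surplus profile $\sigma_{\dual f}(c,\text{reg})-\sigma_{\dual f}(p,\text{reg})=\Delta_c$ for every $c$. With these registered votes in place, the CCAV winning condition for an added set $\bar{V'}$ is literally the inequality above, so the instances are equivalent and $\card{\bar{V'}}=\card{V'}\le k$.

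The reverse direction is the mirror image: since $\dual{\dual f}=f$, it suffices to prove $g$-CCAV $\le_p\dual g$-\ccdvstar for all $g$ and instantiate $g=\dual f$. There I set the deletable votes of the $\dual g$-\ccdvstar instance to $\overline{V_{\text{pot}}}$, map an added set $S$ to the deleted set $\bar S$, and again must install a prescribed surplus profile, this time on the \emph{undeletable} votes $R$ (where it is unaffected by the controller's choices). Thus both directions reduce to the same combinatorial core, which I expect to be the main obstacle: a \textbf{surplus-realization lemma} asserting that for a pure, polynomial-time uniform generator one can, in polynomial time, construct a set of votes achieving any prescribed (polynomially bounded) integer surplus profile relative to $p$, possibly after introducing dummy candidates buried so as never to contend. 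The non-constant case follows from purity, which supplies a coefficient gap $\alpha_i>\alpha_{i+1}$ for adjusting individual surpluses, together with uniform ``cyclic'' padding blocks in which each candidate occupies each position once and hence all surpluses are left unchanged; the constant generator is trivial and handled separately, since there every candidate always ties and both problems are decidable outright. Realizability transports between $g$ and $\dual g$ because their coefficient multisets are negations of one another, so any profile arising from a $g$-election is achievable in the $\dual g$-system. I would conclude by checking that all constructed instances have polynomial size and that the dummy candidates, carrying strongly negative surplus, never affect whether $p$ wins.
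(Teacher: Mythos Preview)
Your proposal is correct and follows essentially the same route as the paper's proof: both directions are handled by reversing the ``movable'' votes (deletable $\leftrightarrow$ addable) and then invoking a score-realization construction to install the required baseline surplus profile on the fixed votes. The paper does exactly this, citing its Lemma~\ref{lemma:coefficients realization} (the ``implementation lemma'') in place of the surplus-realization lemma you sketch; your outline of that lemma via a coefficient gap plus cyclic padding is in the same spirit as the standard construction the paper appeals to.
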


Proposition~\ref{prop:ccdvstar poly equiv ccav} is not completely trivial, since the reductions must convert election instances maintaining the relative points of the candidates. However, this is done using standard constructions.

From Theorem~\ref{theorem:ccdv dichotomy} and the results in~\cite{HemaspaandraHemaspaandraSchnoor-CCAV-AAAI-2014}, it follows that 
$f$-CCDV and $f$-\ccdvstar always have the same complexity. In fact, our algorithms for CCDV and \ccdvstar do not take the structure of the ``registered'' votes into account, but can work with scores for the candidates that do not come from any set of votes. For bribery, the situation is quite different, see the above Theorem~\ref{theorem:bribery with unbribable voters np complete 0 dots 0 -1 -2}.

All polynomial-time cases for CCDV follow from the above relationship in a straight-forward manner. This is not surprising, since CCDV in the above-discussed sense has less structure than CCAV, and thus easiness results for CCAV translate to (dual) CCDV. The interesting part of our dichotomy is the converse: If CCAV is NP-hard for some generator $f$, then CCDV is hard for $\dual f$ as well. 

A natural approach for the proof of the dichotomy theorem, suggested by Proposition~\ref{prop:ccdvstar poly equiv ccav}, is to show that $f$-\ccdvstar always reduces to $f$-CCDV. While this does in fact follow, proving a generic reduction from $f$-\ccdvstar to $f$-CCDV for all generators $f$ seems to be difficult, due to the additional structure provided by \ccdvstar. 

Our proof of the CCDV part of Theorem~\ref{theorem:ccdv dichotomy} therefore uses a case distinction to obtain $f$-CCDV hardness for each remaining pure, polynomial-time uniform generator $f$.

We note that due to the relationship between CCDV and CCAV, all CCAV-hardness results in~\cite{HemaspaandraHemaspaandraSchnoor-CCAV-AAAI-2014} easily follow from the results obtained in the current paper. However, our proofs make use of the results and proofs from~\cite{HemaspaandraHemaspaandraSchnoor-CCAV-AAAI-2014}.

\subsection{CCDV hardness: ``few coefficients''}\label{sect:ccdv few coefficients general cases}

We first consider generators with ``few'' different coefficients, i.e., generators of the form $f=(\alpha_1,\alpha_2,\alpha_3,\dots,\alpha_3,\alpha_4,\alpha_5,\alpha_6)$ for rationals $\alpha_1,\dots,\alpha_6$.\footnote{This only uniquely defines $f$ for elections with at least $6$ candidates, however $f$ is uniquely defined up to ultimate equivalence, which is sufficient for the complexity analysis.}
Using equivalence-preserving transformations, we can assume that all $\alpha_i$ are nonnegative integers, and that their greatest common divisor is $1$. Note that a generator of this form is trivially polynomial-time uniform. 

\subsubsection{Reductions from \ccdvstar}\label{sect:ccdv hardness:generic ccdvstar reductions}

A general reduction from $f$-CCDV to $f$-\ccdvstar does not seem feasible, as discussed above. However, there are cases where hardness of $f$-\ccdvstar leads to hardness of $f$-CCDV with a direct proof. 
The following two results (Theorems~\ref{theorem:alpha 0 dots 0 alpha < beta ccdv} and~\ref{theorem:alpha 0 dots 0 beta < alpha ccdv}) are proven in this way.

\begin{restatable}{restatableTheorem}{theoremccdvhardnessalphazweodotsalphasmallerbeta}
  \label{theorem:alpha 0 dots 0 alpha < beta ccdv}
  Let $f=(\alpha,0,\dots,0,-\beta)$ be a generator with $1\leq\alpha<\beta$. Then $f$-CCDV is NP-complete.
\end{restatable}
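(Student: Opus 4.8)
The plan is to prove \NP-hardness (membership in \NP\ is already noted above) by going through the intermediate problem $f$-\ccdvstar and exploiting the duality of Proposition~\ref{prop:ccdvstar poly equiv ccav}. First I would observe that $\dual{f}=(\beta,0,\dots,0,-\alpha)$ with $\beta>\alpha\ge 1$, and that this generator is not among the polynomial-time cases of the CCAV dichotomy of~\cite{HemaspaandraHemaspaandraSchnoor-CCAV-AAAI-2014} (those cases are the duals of the generators $f_1,\dots,f_5$, none of which carries a single negative coefficient at the \emph{end}); hence $\dual{f}$-CCAV is \NP-complete. Proposition~\ref{prop:ccdvstar poly equiv ccav} then gives that $f$-\ccdvstar is \NP-complete. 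It therefore remains to reduce $f$-\ccdvstar to $f$-CCDV, i.e., to simulate the undeletable (registered) votes of an \ccdvstar instance by votes that an optimal controller would never delete.

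The reduction keeps the deletable votes and the budget $k$ unchanged, and replaces the registered votes by a set of ``setup'' votes together with a pool of fresh dummy candidates. The design principle is a simple surplus calculation: deleting a vote $v$ changes the surplus $\score{c}-\score{p}$ of each candidate $c$ by $v(p)-v(c)$, where $v(x)\in\set{\alpha,0,-\beta}$ is the number of points $v$ awards $x$. Consequently, a vote that ranks $p$ \emph{first} is purely harmful to delete (it raises every surplus by $\alpha$), and a vote that ranks a dummy first, $p$ in the middle, and a genuine candidate $c$ last is also harmful to delete (it only raises the surplus of $c$). These are the two ``unattractive'' setup gadgets: I would use votes ranking $p$ first (and a dummy last) to add points to $p$, and votes ranking a dummy first (and a genuine candidate $c$ last) to subtract points from $c$, pairing them so that each dummy's own score stays well below that of $p$. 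Here the hypothesis $\alpha<\beta$ is exactly what I need, since it makes the net score of a paired dummy negative and thus keeps dummies from ever threatening $p$.

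Given these gadgets, I would choose the multiplicities of the two vote types so that the genuine candidates end up with the same relative scores as in the \ccdvstar instance, and then verify the two directions. The forward direction is immediate: any deletion of a subset of the deletable votes that makes $p$ a winner in the \ccdvstar instance makes $p$ a winner in the constructed instance as well, since the setup votes reproduce the registered scores. For the converse, the surplus calculation shows that deleting any setup vote only raises some surplus, so from any successful CCDV deletion set one may remove all deleted setup votes without losing the property that $p$ wins and without increasing the number of deletions; the remaining deletions lie among the deletable votes and form an \ccdvstar solution of size at most $k$.

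The main obstacle is the score-realization step: with this scoring rule the setup gadgets can only \emph{lower} the surpluses of genuine candidates (equivalently, raise that of $p$), never raise a competitor's surplus ``unattractively,'' and they do so only in the coarse increments dictated by $\alpha$ and $\beta$. I expect to handle this by arguing that in the \ccdvstar instances arising from $\dual{f}$-CCAV the advantage of the competitors over $p$ is carried by the deletable votes, so that the registered votes only ever need to depress surpluses and boost $p$; any residual fine-tuning can be absorbed by adding further dummy candidates and a bounded number of extra setup votes. Making this quantitatively precise, while simultaneously guaranteeing that no dummy ever becomes a winner, is the delicate part of the argument.
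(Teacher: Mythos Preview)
Your high-level plan is the same as the paper's: use duality to conclude that $f$-\ccdvstar is \NP-hard, then reduce \ccdvstar to CCDV by replacing the undeletable registered block with votes that a rational controller will never touch. The two gadgets you describe (put $p$ first; put a dummy first and a genuine candidate last) are fine and are in fact used in the paper for exactly the purposes you state. The gap is precisely the one you flag as ``the main obstacle'' and then leave open.

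Your gadgets can only \emph{lower} the surplus of a genuine candidate relative to $p$; they cannot raise it. The hope that the registered block in the \ccdvstar instances coming from $\dual f$-CCAV only ever needs to depress surpluses does not pan out: those instances require some genuine candidates to sit strictly above $p$ in a pattern that is not produced by the deletable block alone, so the setup votes must be able to \emph{add} points to a chosen $c$ over $p$. Concretely, after translating via Proposition~\ref{prop:ccdvstar poly equiv ccav}, the undeletable block has to realize the scores of $R_A\cup U$ under $\dual f$ (registered plus \emph{all} unregistered voters), and in that profile several competitors genuinely outscore $p$. ``Residual fine-tuning with more dummies'' cannot fix this: with only your two vote types, the achievable surplus vector for the genuine candidates lies in the cone $\{-n\alpha\cdot\mathbf 1 - \beta\cdot\mathbf m : n\ge 0,\ \mathbf m\ge 0\}$, which never contains a coordinate strictly above another unless the difference is a nonnegative multiple of $\beta$ in the right direction; arbitrary targets are out of reach.

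The missing idea, and the heart of the paper's proof, is a \emph{chain-of-dummies} gadget that lets $c$ gain $\alpha$ over $p$ via a single vote $c>d_1$ that \emph{is} superficially attractive to delete, but whose deletion triggers a cascade. One arranges fresh dummies $d_1,\dots,d_\ell$ so that each $d_i$ ties $p$, and the only way to push $d_i$ back down is to delete votes of the form $d_i>d_{i+1}$, which in turn push $d_{i+1}$ up by $\beta$. Because $\alpha<\beta$, each level of the chain multiplies the number of forced deletions by at least $\beta/\alpha>1$; with $\ell\approx\log_{\beta/\alpha}k$ levels the controller would need more than $k$ deletions, so she cannot touch $c>d_1$. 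The chain has only logarithmic depth, so the total number of setup votes (which grows roughly like $(\beta/\alpha)^i$ at level $i$) remains polynomial. This exponential-cascade-with-logarithmic-depth construction is exactly what your proposal is missing; once you have it, together with your $\beta$-lowering gadget and the gcd argument $1=A\alpha-B\beta$, arbitrary integer targets become realizable and the reduction goes through.
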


\begin{proof}(Sketch)
 From~\cite{HemaspaandraHemaspaandraSchnoor-CCAV-AAAI-2014}, we know that $\dual f$-CCAV is NP-complete, Proposition~\ref{prop:ccdvstar poly equiv ccav}, then implies that $f$-\ccdvstar is NP-complete as well. We  show that $f$-\ccdvstar reduces to $f$-CCDV. Given an instance of $f$-\ccdvstar, we convert it into an equivalent instance of $f$-CCDV by replacing the undeletable votes $R$ with votes that
 \begin{enumerate}
  \item\label{enum label:setup votes need same relative points} result in the same relative points as the votes in $R$, and 
  \item are not deleted in a successful CCDV action.
 \end{enumerate}

 We denote a vote $c_1>c_2>\dots>c_{n-1}>c_n$ simply as $c_1>c_n$ (the rest is irrelevant). With some light preprocessing, we can assume that no deletable vote has $p$ in the first or last position, this allows us to compute the number $\score p$ of points that $p$ will have after the delete action. Similarly, we can assume that $\score p=N_p\alpha$ for some $N_p\ge2$. 
 
 Satisfying point~\ref{enum label:setup votes need same relative points} above boils down to add, for an arbitrary candidate $c\neq p$, votes that let $c$ gain $\alpha$ points against $p$, and which will not be deleted. This is done as follows: We add dummy candidates $d_1,\dots,d_\ell$ (for a suitably chosen number $\ell$) and a single vote $c>d_1$, letting $c$ gain $\alpha$ points relative to $p$. To  ensure that the vote $c>d_1$ cannot be removed, we add votes setting up the scores as follows:

 \begin{itemize}
   \item Each $d_i$ for $1\leq i\leq\ell-1$ ties with $p$,
   \item the only way to make $d_i$ lose points (relative to $p$) is to remove votes $d_i>d_{i+1}$, which then lets $d_{i+1}$ gain points (relative to $p$).
 \end{itemize}
 
 Hence removing the vote $c>d_1$, which lets $d_1$ gain $\beta$ points relative to $p$ requires the controller to remove votes of the form $d_1>d_2$, which each lets $d_2$ gain $\beta$ points. This process continues for $d_i$ with $i\ge 2$. Thus, removing $c>d_1$ triggers a ``chain'' of additional removals---more than the budget allows. The numbers of votes needed to setup grows exponentially in the number of steps. However, since the controller can only remove a polynomial number of votes, we only require logarithmically many steps, yielding a polynomial construction.
 
 Constructing the actual set of votes that results in the above scores and satisfies the two points above is nontrivial, the construction is in fact the main technical difficulty in the proof.
\end{proof}

The following result is shown similarly, the difference is that instead of logarithmically many steps of an exponentially growing construction, here we apply a simpler linear process.

\begin{restatable}{restatableTheorem}{theoremalphazerodotszerobetasmalledalphaccdv}\label{theorem:alpha 0 dots 0 beta < alpha ccdv}
 Let $f=(\alpha,0,\dots,0,-\beta)$ be a generator with $\alpha>\beta\ge1$. Then $f$-CCDV is NP-complete.
\end{restatable}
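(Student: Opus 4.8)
The plan is to follow the same three-step framework as in the proof of Theorem~\ref{theorem:alpha 0 dots 0 alpha < beta ccdv}, adjusting only the gadget that enforces undeletability. First I would invoke the CCAV dichotomy of~\cite{HemaspaandraHemaspaandraSchnoor-CCAV-AAAI-2014}: since $f=(\alpha,0,\dots,0,-\beta)$ with $\alpha>\beta\ge1$ is not ultimately equivalent to any of the polynomial-time generators, the dual generator $\dual{f}=(\beta,0,\dots,0,-\alpha)$ (which has $\beta<\alpha$, i.e.\ falls into the NP-complete CCAV case of this family) satisfies that $\dual{f}$-CCAV is NP-complete. By Proposition~\ref{prop:ccdvstar poly equiv ccav}, this gives that $f$-\ccdvstar is NP-complete. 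It then suffices to reduce $f$-\ccdvstar to $f$-CCDV, that is, to replace the undeletable votes $R$ of an \ccdvstar instance by deletable votes that (i) induce the same relative scores among the candidates and (ii) are never deleted in any successful CCDV action, so that every CCDV solution corresponds to an \ccdvstar solution restricted to the genuinely deletable votes.

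As in the previous theorem, after light preprocessing I may assume that no deletable vote ranks $p$ first or last, so that $\score{p}$ is fixed once the deletable votes are chosen, and that $\score{p}=N_p\alpha$ for some $N_p\ge2$. The core task is to realize, for each candidate $c$ that should gain $\alpha$ points over $p$, a single anchor vote $c>d_1$ (using fresh dummy candidates $d_1,\dots,d_\ell$) that the controller is deterred from deleting. The key difference from the $\alpha<\beta$ case is the \emph{direction} of the inequality: deleting a vote of the form $d_i>d_{i+1}$ lowers $d_i$ by $\alpha$ and raises $d_{i+1}$ by $\beta$, and because $\alpha>\beta$ a single such deletion more than cancels the $\beta$-point jump that $d_i$ received while passing only a smaller $\beta$-point jump on to $d_{i+1}$. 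I would therefore arrange the dummy scores so that each $d_i$ starts tied with $p$, and so that the only cheap way to repair a $d_i$ pushed above $p$ is to delete one vote $d_i>d_{i+1}$, which in turn pushes $d_{i+1}$ above $p$. Deleting the anchor $c>d_1$ then triggers a chain that advances \emph{one} dummy per level, and choosing a chain length $\ell$ slightly larger than the deletion budget $k$ makes completing the chain impossible. Since only one vote is forced per level, the whole gadget uses a number of votes linear in $\ell$, i.e.\ a \emph{linear} process, rather than the exponentially growing families needed when $\alpha<\beta$.

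With the gadget in hand, correctness splits into the two standard directions. For soundness I would show that no successful CCDV action deletes a setup vote: deleting the anchor (or any interior $d_i>d_{i+1}$) forces, via the chain above, more than $k$ further deletions to keep $p$ a winner, which the budget forbids; hence the controller can only delete images of the genuinely deletable votes, and these must form an \ccdvstar solution. For completeness, any \ccdvstar solution lifts verbatim to a CCDV solution that leaves all setup votes intact. I expect the main obstacle to be, exactly as in the previous theorem, not the chain argument but the explicit construction of a polynomial-size multiset of votes over the scoring vector $(\alpha,0,\dots,0,-\beta)$ that simultaneously pins every $d_i$ to the score of $p$, makes the only cheap way to lower each $d_i$ be the deletion of a $d_i>d_{i+1}$ vote, and reproduces the original relative scores of the non-dummy candidates; balancing these constraints against the rigid shape of the scoring vector is where the real work lies, although it is simpler here than in the $\alpha<\beta$ case thanks to the linear chain.
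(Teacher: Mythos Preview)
Your proposal is correct and follows essentially the same approach as the paper: reduce from $f$-\ccdvstar via Proposition~\ref{prop:ccdvstar poly equiv ccav} and the CCAV dichotomy, then build the undeletable anchor vote $c>d_1$ by backing it with a linear chain of $k{+}1$ dummy candidates where deleting the anchor forces one further deletion per level. The only minor refinement in the paper's version is that the dummy scores are placed in the interval $(N_p\alpha-\beta,\,N_p\alpha]$ rather than exactly tied with $p$, precisely to accommodate the divisibility constraints you flag as the main technical obstacle.
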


\subsubsection{Reductions by inspection of the CCAV reduction}\label{sect:ccdv hardness by inspecting ccav reduction}

Similarly to the preceding Section~\ref{sect:ccdv hardness:generic ccdvstar reductions}, the results in this section are proved by a reduction from $f$-\ccdvstar to $f$-CCDV. However, while the reductions above were ``generic'' (reducing from an arbitrary \ccdvstar-instance), we now start with instances of $f$-\ccdvstar produced by the hardness proof of $f$-\ccdvstar. Therefore, we do not need to construct ``setup votes'' that implement any possible given set of scores, but only need to achieve exactly the points used in the hardness proof of $\dual f$-CCAV in~\cite{HemaspaandraHemaspaandraSchnoor-CCAV-AAAI-2014}.

In the following theorem, this is a significant advantage, as here the ``setup votes'' grant more points to the preferred candidate than to the remaining candidates. Therefore, it is easy to construct these votes in such a way that the controller has no incentive to delete them. 

The proof of the theorem also illustrates the relationship between hardness results for CCAV (as obtained in~\cite{HemaspaandraHemaspaandraSchnoor-CCAV-AAAI-2014}), and the hardness results for CCDV and bribery we obtain in the current paper: The proof of Theorem~\ref{theorem:0 dots 0 -gamma -beta -alpha CCDV} below uses the reduction of the corresponding hardness result for CCAV in~\cite{HemaspaandraHemaspaandraSchnoor-CCAV-AAAI-2014} as a starting point, but is technically more involved. We will later re-use parts of the following construction to obtain the corresponding hardness result for bribery as well, in the later Theorem~\ref{theorem:0 dots 0 -gamma -beta -alpha bribery}.

\begin{restatable}{restatableTheorem}{theoremzerodotszerominusgammaminusbetaminusalphaccdv}\label{theorem:0 dots 0 -gamma -beta -alpha CCDV}
 Let $f=(\alpha_3,\dots,\alpha_3,\alpha_4,\alpha_5,\alpha_6)$ with $\alpha_3>\alpha_4>\alpha_6$. Then $f$-CCDV is NP-complete.
\end{restatable}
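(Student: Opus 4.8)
The plan is to reduce from $f$-\ccdvstar, whose NP-completeness comes for free: by duality we have $\dual f=(\alpha,\beta,\gamma,0,\dots,0)$ with $\alpha>\gamma>0$ and $\gamma\le\beta\le\alpha$ (after the usual integer normalization, where $\alpha=\alpha_3-\alpha_6$, $\beta=\alpha_3-\alpha_5$, $\gamma=\alpha_3-\alpha_4$), and the CCAV-hardness established for such generators in~\cite{HemaspaandraHemaspaandraSchnoor-CCAV-AAAI-2014} together with Proposition~\ref{prop:ccdvstar poly equiv ccav} shows that $f$-\ccdvstar is NP-complete. Membership of $f$-CCDV in \NP{} is already noted. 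It therefore remains to give a polynomial-time reduction from $f$-\ccdvstar to $f$-CCDV. As indicated in the discussion preceding the theorem, I would not try to handle an arbitrary \ccdvstar instance, but only the instances arising from the (dualized) CCAV reduction of~\cite{HemaspaandraHemaspaandraSchnoor-CCAV-AAAI-2014}; this lets me exploit the special structure of their ``setup'' (registered) votes.

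The key observation is the dual of the fact that in the CCAV reduction the registered votes grant the preferred candidate the \emph{maximum} number of points: after dualization the undeletable votes $R$ of the \ccdvstar instance penalize $p$ \emph{less} than the other candidates, and in fact we may arrange that they do not penalize $p$ at all. Concretely, I would replace the block $R$ of undeletable votes by a block $S$ of ordinary (deletable) votes in which $p$ always occupies a position above the bottom three, so that $p$ receives $0$ from every vote in $S$, while the three penalties $-\gamma,-\beta,-\alpha$ (at the third-last, second-last, and last positions) are distributed among the remaining candidates. To realize the relative scores among the real candidates of $C$, I would, for each $c\neq p$, use enough votes that place $c$ last (each costing $c$ exactly $\alpha$ points) and sweep the two remaining penalized slots onto freshly introduced dummy candidates $d_1,d_2,\dots$; the concrete target scores coming from the CCAV reduction are simple enough that the three available penalty values hit them exactly. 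The $D$-votes of the \ccdvstar instance are carried over unchanged (with the dummies placed at the top, so $D$-penalties on real candidates are unaffected). The dummies thus only ever lose points and stay far below $p$, so they never become winners, and $p$'s post-deletion score is controlled entirely by the carried-over votes $D$.

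Correctness then rests on a single exchange argument. Because no vote in $S$ penalizes $p$, deleting a vote of $S$ merely returns the penalties $\gamma,\beta,\alpha$ to three candidates other than $p$, strictly raising their scores while leaving $\score p$ unchanged; hence deleting an $S$-vote can never help the controller. Given any successful deletion set $T$ with $\card{T}\le k$ for the constructed CCDV instance, re-inserting the votes in $T\cap S$ lowers only the scores of candidates other than $p$, so $p$ remains a winner, and the number of deletions used only decreases; we may therefore assume $T\subseteq D$, and in that regime the CCDV instance is solvable exactly when the $f$-\ccdvstar instance is. Conversely, a successful \ccdvstar action deletes the same $D$-votes while leaving $S$ intact, which by construction reproduces the relative scores of $R$ and makes $p$ a winner. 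Since the construction adds only polynomially many dummies and votes, the reduction runs in polynomial time.

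The main obstacle I expect is the explicit construction of $S$ in the second paragraph: building it so that it simultaneously (i) reproduces the exact relative scores of $R$ using only votes of the rigid shape $(0,\dots,0,-\gamma,-\beta,-\alpha)$, (ii) never penalizes $p$, and (iii) creates no new ``cheap'' deletions, for instance that no small deletion among the $S$-votes can push a dummy or a real candidate above $p$. Keeping $p$ unpenalized makes (ii) and essentially all of (iii) automatic here---this is exactly the advantage highlighted before the theorem---so the genuine work lies in the bookkeeping of (i): matching the CCAV-reduction scores slot by slot with the three penalty levels $\gamma,\beta,\alpha$ and absorbing the surplus penalties on dummy candidates without disturbing the intended equivalence. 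Because we target the concrete scores of the known reduction rather than arbitrary scores, this bookkeeping, though technically involved, goes through.
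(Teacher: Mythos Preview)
Your proposal is correct and follows essentially the same route as the paper: reduce from the specific \ccdvstar instances produced by the dual CCAV hardness proof, replace the undeletable votes by setup votes that never penalize $p$ (so the exchange argument showing the controller never deletes them is immediate), and absorb the remaining two penalty slots on dummy candidates. The only place where the paper adds substance beyond your sketch is the arithmetic of step~(i): it explicitly computes the target losses $7k\beta-2\alpha$, $7k\beta-\min(\alpha,2\gamma)$, $7k\beta-\gamma$ and shows how to hit them exactly with combinations of the three penalty levels (treating separately the case where some coefficient equals~$1$), which is precisely the bookkeeping you flag as the main obstacle.
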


\begin{proof}
 For the proof, we equivalently write $f$ as $f=(0,\dots,0,-\gamma,-\beta,-\alpha)$ with $0<\gamma<\alpha$.
 Then, $\dual f=(\alpha,\beta,\gamma,0,\dots,0)$. From~\cite{HemaspaandraHemaspaandraSchnoor-CCAV-AAAI-2014}, it follows that $\dual f$-CCAV is NP-complete, and hence, due to Proposition~\ref{prop:ccdvstar poly equiv ccav}, it suffices to show that $f$-\ccdvstar reduces to $f$-CCDV. 
 
 Therefore, let an $f$-\ccdvstar instance with undeletable votes $R$, deletable votes $D$, preferred candidate $p$, and budget $\ell$ be given. From the proof of Proposition~\ref{prop:ccdvstar poly equiv ccav}, we can assume that this instance is obtained from the hardness proof of $\dual f$-CCAV as follows:
 
 \begin{itemize}
  \item the votes in $D$ are exactly the votes available for addition in the CCAV instance, with the order of candidates reversed,
  \item the relative points gained by the candidates from the votes in $R\cup D$ are the same as the points of the candidates in the CCAV instance (before the addition of votes by the controller).
 \end{itemize}

The hardness proof of $\dual f$-CCAV uses a reduction from 3DM.
3DM is the following problem: Given a multiset
$M\subseteq X\times Y\times Z$ with $X$, $Y$ and $Z$ pairwise disjoint sets of equal size such that each $s\in X\cup Y\cup Z$ appears in exactly $3$ tuples of $M$, decide whether there is a set $C\subseteq M$ with $\card C=\card X$ such that each $s\in X\cup Y\cup Z$ appears in some tuple of $C$ (we also say that $C$ \emph{covers} $s$).
From the problem definition, it follows that $\card M=3\card X$.
The condition that each $s$ appears in exactly $3$ tuples is not standard;
we prove in the appendix
remains NP-complete.

Let $M\subseteq X\times Y\times Z$ be an instance of 3DM. Following the notation used in~\cite{HemaspaandraHemaspaandraSchnoor-CCAV-AAAI-2014}, we set $n=\card M$ and $k=\card X$. Since $\card M=3\card X$ in every 3DM instance, it follows that $n=3k$. The hardness proof of $\dual f$-CCAV, translated to the CCDV setting (i.e., we present the votes as in the CCDV instance---as reversals of votes from the CCAV instance), constructs the following situation:
 
 \begin{itemize}
  \item the candidate set is $\set p\cup X\cup Y\cup Z\cup\set{S_i,S_i'\suchthat S_i\in M}$,
  \item for each $S_i=(x,y,z)\in M$, the following votes are available for deletion (we only list the candidates gaining non-zero points from the vote):
  \begin{itemize}
    \item $\dots > S_i > p > x$
    \item $\dots > S_i > p > y$
    \item $\dots > S_i' > p > z$
    \item $\dots > S_i' > p > S_i$
  \end{itemize}
  \item the relative scores resulting from the registered voters of the CCAV instance (i.e., the undeletable voters of the \ccdvstar instance) are as follows:
  \begin{itemize}
    \item $\scorefinal p=\alpha+2\gamma$,
    \item $\scorefinal c=(n+2k)\beta+2\gamma$ for each $c\in X\cup Y\cup Z$,
    \item $\scorefinal{S_i}=(n+2k)\beta+\min(\alpha,2\gamma)$,
    \item $\scorefinal{S_i'}=(n+2k)\beta+\alpha+\gamma$.
  \end{itemize}
 \end{itemize}
 
 From the above votes introduced for the 3DM-elements, the candidates gain the following initial points---recall that each $c\in X\cup Y\cup Z$ appears in exactly $3$ triples from $M$, and, since $\card M=n=3k$, and there are $4$ votes introduced for every element in $M$, there are exactly $12k$ deletable votes introduced above:
 
 \begin{itemize}
   \item $\scorethreedm{p}=-12k\beta$, since $p$ gains $-\beta$ points in each of the $12k$ votes,
   \item $\scorethreedm{S_i}=-2\gamma-\alpha$, since $S_i$ gains $0$ points in all of the votes introduced for other elements $S_j\neq S_i\in M$, and gains $-\gamma$ points in $2$ of the votes above, and $-\alpha$ points in one of the vote,
   \item $\scorethreedm{S_i'}=-2\gamma$ analogously,
   \item $\scorethreedm{c}=-3\alpha$ for each $c\in X\cup Y\cup Z$, since each $c$ appears in exactly $3$ triples $S_i$ from $M$.
 \end{itemize}
 
 For each candidate $x$, the undeletable votes from the \ccdvstar instance thus let her gain exactly $\scorefinal x-\scorethreedm x$ points (modulo an offset added to the points of all candidates, since the CCAV reduction relies on an implementation lemma
 that only fixes the relative points of each candidate). The scores that the undeletable votes implement are therefore as follows (recall that $n=3k$):
 
 \medskip
 
 \begin{tabular}{ll}
  candidate $x$        & $\scorefinal x-\scorethreedm x$ \\ \hline
  $p$                  & $12k\beta+\alpha+2\gamma$ \\
  $c\in X\cup Y\cup Z$ & $5k\beta+3\alpha+2\gamma$ \\
  $S_i$                & $5k\beta+\alpha+2\gamma+\min(\alpha,2\gamma)$ \\
  $S_i'$               & $5k\beta+\alpha+3\gamma$
 \end{tabular}

 \medskip
 
 Since we only need to implement the relative scores among the candidates, it is enough to consider the points the candidates have to gain/lose relative to $p$. These are as follows (clearly, $p$ does not gain or lose any points relative to herself):
 
 \medskip

 \begin{tabular}{lll}
  candidate $x$        & points $x$ needs to lose (result) \\ \hline
  $c\in X\cup Y\cup Z$ & $7k\beta-2\alpha$ \\
  $S_i$                & $7k\beta-\min(\alpha,2\gamma)$ \\
  $S_i'$               & $7k\beta-\gamma$
 \end{tabular}
 
 \medskip
 
 To achieve this, we first introduce three dummy candidates $d_1$, $d_2$, and $d_3$ (by placing them in the $0$-point segment of all present votes), and add sufficiently many votes voting all relevant candidates ahead of the dummy candidates. This lets the dummies lose $\alpha$, $\beta$, or $\gamma$ points relatively to the other candidates; we do this often enough to ensure that the dummy candidates cannot win the election. Using these dummies, we can easily let a relevant candidate $c\neq p$ lose $\delta\in\set{\alpha,\beta,\gamma}$ points, by using a vote placing $c$ in the position worth $-\delta$ points, filling the other two positions out of the last three with dummy candidates, and voting the remaining candidates (including $p$) in the positions gaining $0$ points. Such a vote will never be deleted by the controller, since it has $p$ in the first position.
 
 If $1=\delta\in\set{\alpha,\beta,\gamma}$, then the required amount of points can easily be achieved by repeatedly losing $\delta=1$ points as described above. Hence assume that, in particular, $\beta\ge2$. For the candidate $c$, we proceed as follows:
 
 \begin{itemize}
  \item We add $\beta-2$ many votes letting $c$ lose $\alpha$ point each.
  \item After this, $c$, still needs to lose $(7k-\alpha)\beta$ points, which we can achieve by using $7k-\alpha$ many votes letting $c$ lose $\beta$ points as described above (we can without loss of generality assume that $7k\ge\alpha$).
 \end{itemize}
 
 For candidates $S_i$ and $S_i'$, we proceed analogously. 
 
 It hence follows that $p$ can be made a winner by removing at most $\ell$ votes in the \ccdvstar instance if and only if this is possible in the constructed $f$-CCDV instance.
\end{proof}

The following result uses a similar, but technically more involved argument.

\begin{restatable}{restatableTheorem}{theoremzerominusalphafiveminusalphaonedotsminusalphaoneccdvhardness}
 \label{theorem:0 minus alpha5 minus alpha1 dots minus alpha1 CCDV hardness}
 Let $f=(\alpha_1,\alpha_2,\alpha_3,\dots,\alpha_3)$ with $\alpha_1>\alpha_2>\alpha_3$. Then $f$-CCDV is NP-complete.
\end{restatable}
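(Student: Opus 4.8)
The plan is to mirror the proof of Theorem~\ref{theorem:0 dots 0 -gamma -beta -alpha CCDV}: first establish NP-hardness of $f$-\ccdvstar, then reduce $f$-\ccdvstar to $f$-CCDV by replacing the undeletable votes with genuine votes that reproduce the required relative scores and that an optimal controller never deletes. First I would normalize $f$ using equivalence-preserving transformations: subtracting $\alpha_3$ and dividing by the greatest common divisor, $f$ becomes ultimately equivalent to $(a,b,0,\dots,0)$ with coprime integers $a>b\ge1$. Its dual $\dual f=(0,\dots,0,-b,-a)$ is a generalized $2$-veto with two distinct nonzero coefficients, so it is not one of the easy CCAV cases, and $\dual f$-CCAV is NP-complete by the dichotomy of~\cite{HemaspaandraHemaspaandraSchnoor-CCAV-AAAI-2014}; Proposition~\ref{prop:ccdvstar poly equiv ccav} then yields NP-completeness of $f$-\ccdvstar. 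As before, I take the \ccdvstar instance to be the one coming from the CCAV $3$DM reduction (with candidate orders reversed), so that the deletable votes and the exact relative scores $\tau_c$ that the undeletable votes must realize are known explicitly.

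The decisive structural fact---and the reason this case is harder than Theorem~\ref{theorem:0 dots 0 -gamma -beta -alpha CCDV}---is that for an approval-type rule $(a,b,0,\dots,0)$ the only setup votes an optimal controller never deletes are those that place $p$ \emph{first}. Indeed, deleting a vote with $p$ first lowers $p$ by $a$ and exactly one other candidate by $b<a$, which strictly widens $p$'s deficit to every rival and is therefore never part of an optimal deletion set; by contrast, any vote that awards points to a real candidate without placing $p$ first cannot be certified as safe, and this is precisely what restricts the toolbox. Each safe setup vote thus contributes $a$ points to $p$ and $b$ points to a single chosen candidate in second position. Introducing dummy candidates $d_1,d_2,\dots$ in the zero-point region of all votes and adding enough votes to keep the dummies out of contention, after $N$ setup votes in which rival $c$ sits second $m_c$ times its relative score is $g_c^{\mathrm{init}}-Na+b\,m_c$, where $g_c^{\mathrm{init}}$ is the contribution of the deletable votes; the uniform votes $p>d>\cdots$ (a dummy second) fill the $N-\sum_c m_c$ remaining slots.

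The hard part will be realizing the \emph{exact} targets with this toolbox. Since individual candidates move only in multiples of $b$ above the common level $-Na$, every rival's final relative score is congruent to $g_c^{\mathrm{init}}-Na$ modulo $b$; hence a necessary and, as it turns out, essentially sufficient condition is that the deficits $w_c:=\tau_c-g_c^{\mathrm{init}}$ all share a common residue modulo $b$. I expect verifying this congruence---by computing $\tau_c$ and $g_c^{\mathrm{init}}$ explicitly from the CCAV reduction for $(0,\dots,0,-b,-a)$ and, where necessary, adjusting the gadget (adding dummies and padding the deletable votes uniformly) so that each candidate lands in the top position the same number of times modulo $b$---to be the technical heart of the proof, and the point at which the argument becomes more involved than that of Theorem~\ref{theorem:0 dots 0 -gamma -beta -alpha CCDV}, whose three available move sizes made every target automatically reachable. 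Once the residues agree, I choose $N$ with $w_c+Na\equiv0\pmod b$ for all $c$ (possible since $\gcd(a,b)=1$); because the targets grow linearly in the $3$DM size $k$, one can take $N=\Theta(k)$ large enough that all $m_c=(w_c+Na)/b$ are nonnegative and $\sum_c m_c\le N$.

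It remains to check correctness, which is immediate from the structural fact above: no successful CCDV deletion ever removes a setup vote, so the deletable votes of the constructed instance behave exactly like the deletable votes $D$ of the \ccdvstar instance, and $p$ can be made a winner by deleting at most $\ell$ votes in one instance if and only if this holds in the other. Together with the obvious membership in NP, this gives NP-completeness of $f$-CCDV.
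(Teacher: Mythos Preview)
Your overall strategy---normalizing to $f=(a,b,0,\dots,0)$ with $\gcd(a,b)=1$, invoking hardness of $\dual f$-CCAV and hence of $f$-\ccdvstar, and then replacing the undeletable votes by genuine votes---is sound. But the implementation you propose has a real gap at exactly the point you flag as ``the technical heart.'' Restricting yourself to setup votes that place $p$ first, each such vote shifts every rival's score (relative to $p$) by $-a$ and at most one chosen rival by an additional $+b$. Hence, as you observe, realizability forces all deficits $w_c=\tau_c-g_c^{\text{init}}$ to share a residue modulo $b$. Computing these for the actual CCAV reduction (candidates $c\in X\cup Y\cup Z$, $S_i$, $S_i'$; deletable votes $S_i>S_i'$, $x>S_i$, $y>S_i$, $z>S_i'$): one gets $g_c^{\text{init}}=3a$, $g_{S_i}^{\text{init}}=a+2b$, $g_{S_i'}^{\text{init}}=2b$, while $\tau_c=a$, $\tau_{S_i}=\min(a,2b)$, $\tau_{S_i'}=b$. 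Thus $w_c\equiv -2a$, $w_{S_i'}\equiv 0\pmod b$, and equality of these residues forces $b\mid 2a$, i.e.\ $b\in\{1,2\}$ since $\gcd(a,b)=1$. For every $b\ge3$ your congruence fails, and the vague ``padding the deletable votes uniformly'' does not rescue it: any extra deletable vote $c>d>\cdots$ you add to shift $g_c^{\text{init}}$ is itself attractive to delete (it costs the controller nothing and lowers $c$ by $a$), so you cannot control the effective initial scores without re-proving correctness of a modified 3DM gadget---work you have not done and which is not obviously possible within the $p$-first-only toolbox.

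The paper proceeds quite differently and avoids the congruence obstruction altogether. It allows setup votes of both shapes $c>d>\cdots$ and $d>c>\cdots$ with a \emph{fresh} dummy $d$ in one of the two top positions; these let one add $a$ or $b$ (relative to $p$) to any single relevant candidate, so every target is reachable without any modular constraint. Such votes are \emph{not} a priori safe, and the paper does not claim they are. Instead it proves, via a tight counting argument over the $5k$ deleted votes, that a successful controller must fill all $10k$ top-two positions of the deleted votes with relevant candidates (the argument partitions $X\cup Y$, $Z$, and the $S_i$ into ``appears in a first position'' versus ``does not'' and squeezes the cardinalities until equality is forced). Since every setup vote wastes one of its two top positions on a dummy, none can be among the deletions. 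That counting argument is the real work here, and it is what your proposal is missing.
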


\subsubsection{Direct Proofs}\label{sect:ccdv hardness by direct 3DM reduction}

The remaining cases are shown with a direct proof (reducing from a variant of three dimensional matching, 3DM); they are in part similar to the hardness proofs of CCAV in~\cite{HemaspaandraHemaspaandraSchnoor-CCAV-AAAI-2014}.

\begin{restatable}{restatableTheorem}{theoremcollectionoffewcoefficientsCCDVhardnesscases}\label{theorem:collection of few coefficients CCDV hardness cases}
 $f$-CCDV is NP-complete in the following cases:
 \begin{enumerate}
  \item\label{few coefficients CCDV:ccdv hardness alpha1 alpha2 0 dots 0 minus alpha4 minus alpha5 minus alpha6} $f=(\alpha_1,\alpha_2,\alpha_3,\dots,\alpha_3,\alpha_4,\alpha_5,\alpha_6)$ with $\alpha_1>\alpha_3>\alpha_5$.
  \item\label{few coefficients CCDV:ccdv alpha1alpha2>alpha3>alpha6}$f=(\alpha_1,\alpha_2,\alpha_3,\dots,\alpha_3,\alpha_6)$ with $\alpha_1,\alpha_2>\alpha_3>\alpha_6$.
  \item\label{few coefficients CCDV:ccdv hardness alpha1 > alpha 3 > alpha 4} $f=(\alpha_1,\alpha_2,\alpha_3,\dots,\alpha_3,\alpha_4,\alpha_5,\alpha_6)$ with $\alpha_1>\alpha_3>\alpha_4$.
 \end{enumerate}
\end{restatable}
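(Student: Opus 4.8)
The plan is to establish NP-hardness for all three cases by direct polynomial-time reductions from the variant of 3DM in which every element occurs in exactly three triples (shown NP-complete in the appendix). Membership in \NP\ is immediate, so only hardness is at issue. In each case I would start from a 3DM instance $M\subseteq X\times Y\times Z$ with $\card X=\card Y=\card Z=k$ and $\card M=3k$, and build an election whose candidates are $p$, the elements of $X\cup Y\cup Z$, one ``triple candidate'' per triple, and a small number of dummy candidates, with deletion budget exactly $k$. The reductions reuse the gadget architecture of the corresponding CCAV hardness proofs in~\cite{HemaspaandraHemaspaandraSchnoor-CCAV-AAAI-2014}, but are carried out directly in the delete-voters setting.

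The core mechanism is a family of \emph{triple votes}, one per $S_i=(x,y,z)\in M$, engineered so that deleting this single vote lowers the scores of (essentially) only $x$, $y$, and $z$ relative to $p$ by a fixed amount, leaving all other relative scores unchanged. Recall that deleting a vote helps $p$ against a candidate $c$ exactly by the difference between the position value $c$ receives and the position value $p$ receives in that vote. Since all but a bounded number of positions of $f$ carry the common value $\alpha_3$, selective covering is obtained by placing $p$ and all irrelevant candidates in $\alpha_3$-positions and putting the three elements of the triple in the few positions whose value strictly differs from $\alpha_3$. The hypotheses supply exactly these positions: a strict drop at the front ($\alpha_1>\alpha_3$ in all three cases, and additionally $\alpha_2>\alpha_3$ in the second case) and/or a strict drop at the back (to $\alpha_5$, to $\alpha_6$, or to $\alpha_4$, respectively). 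How the three elements of a triple are distributed among these front and back special positions---possibly with the triple candidates acting as intermediaries, as in the proof of Theorem~\ref{theorem:0 dots 0 -gamma -beta -alpha CCDV}---is essentially the only thing that differs among the three cases; the rest of the construction is common.

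Having fixed the choice mechanism, I would set every candidate's baseline relative score to a prescribed target using additional \emph{score-setting votes}, built from the dummy candidates exactly as in Theorem~\ref{theorem:0 dots 0 -gamma -beta -alpha CCDV}. The targets are chosen so that each element of $X\cup Y\cup Z$ must be covered at least once to be brought down to $p$'s level; since each triple vote covers three elements, the budget of $k$ deletions can cover all $3k$ elements only by deleting $k$ triple votes whose triples partition $X\cup Y\cup Z$, i.e.\ form a perfect matching. This yields correctness in both directions: a perfect matching gives a successful deletion of $k$ votes, and conversely any successful deletion of at most $k$ votes must pick out a perfect matching.

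The step I expect to be the main obstacle is controlling \emph{unintended} deletions. Unlike in CCAV, the controller here may delete any vote, so the argument must guarantee that deleting a score-setting vote is never profitable (achieved by keeping $p$ in a top, value-$\alpha_1$ position of each such vote, so that $p$ loses at least as much as every rival, as in Theorem~\ref{theorem:0 dots 0 -gamma -beta -alpha CCDV}) and that no mixture of partial or redundant triple-vote deletions can substitute for a genuine matching within budget. Making the score arithmetic simultaneously tight---so that covering each element exactly once is both necessary and sufficient---and robust against these alternatives is the delicate part, and it is precisely the strict inequalities in each case's hypothesis that make such a clean encoding possible; verifying them case by case is where the real work lies.
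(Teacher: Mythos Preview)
Your high-level plan---direct reductions from 3DM with per-triple votes and score-setting votes---is the right shape and matches the paper. However, the central mechanism you describe does not actually work under the stated hypotheses, and the paper's constructions differ from yours in a way that matters.

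You write that deleting the vote for $(x,y,z)$ ``lowers the scores of (essentially) only $x$, $y$, and $z$ relative to $p$.'' For this to hold with $p$ sitting in an $\alpha_3$-position, all three of $x,y,z$ would have to occupy positions with value strictly \emph{above} $\alpha_3$. But in cases~1 and~3 only $\alpha_1>\alpha_3$ is guaranteed (nothing forces $\alpha_2>\alpha_3$), so there may be a single high position---not enough for three elements. If instead you park $p$ in a low position, then deletion helps $p$ against \emph{every} candidate in the $\alpha_3$-block, destroying selectivity. The paper does not try to make deletion uniformly lower the triple's scores. Instead it mixes front and back positions with \emph{opposite} effects: candidates placed in high positions must be covered at least once (they need to \emph{lose} points), while candidates placed in low positions \emph{gain} points when their vote is deleted, and the score targets are chosen so that being covered twice pushes them above $p$. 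The cardinality argument then forces an exact matching. In case~3 there are no high positions used for elements at all; a single blocking candidate $B$ (not one per triple) sits in position~1 of every 3DM vote, forcing exactly $k$ deletions, and every element of $X\cup Y\cup Z$ is in a low position, so the constraint is purely ``covered at most once.'' None of the three cases uses per-triple candidates $S_i$.

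Your protection argument for setup votes (``put $p$ first'') is also too coarse. In case~1 some setup votes must place an element candidate first to raise its score, so $p$ cannot always be first; the paper instead argues that each $z\in Z$ can only lose points via 3DM votes, which forces all $k$ deletions to be 3DM votes. In case~2 the protection is a capacity count: only 3DM votes have \emph{two} $X\cup Y$ candidates above $p$, and all $2k$ such candidates must lose points. In case~3 it is the blocking candidate $B$ that makes non-3DM deletions useless. Getting these case-specific protection arguments right is exactly the ``delicate part'' you flagged, but your sketch does not yet contain the ideas needed to carry them out.
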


\subsection{CCDV hardness: ``Many'' coefficients}\label{sect:ccdv many coefficients}

In Section~\ref{sect:ccdv few coefficients general cases}, we have covered all generators of the form $f=(\alpha_1,\alpha_2,\alpha_3,\dots,\alpha_3,\alpha_4,\alpha_5,\alpha_6)$. Generators \emph{not} of this form must satisfy $\alpha^m_3>\alpha^m_{m-3}$ for some value $m$. We now prove that the CCDV problem is NP-hard for all generators satisfying this condition. The proof is similar to the corresponding result in~\cite{HemaspaandraHemaspaandraSchnoor-CCAV-AAAI-2014}. 

An important observation is that when $\alpha^m_3>\alpha^m_{m-3}$ holds for some $m$, then purity of our generators implies that the condition also holds for any $m'\ge m$. Clearly, if $m\ge6$ is a multiple of $3$ and $\alpha^m_3>\alpha^m_{m-3}$, then one of $\alpha^m_3>\alpha^m_{\frac23m}$ and $\alpha^m_{\frac23m}>\alpha^m_{m-3}$ must hold. In the first case, the generator---for this number $m$ of candidates---behaves in a similar way as $3$-approval in the sense that there are three positions in the vote that let the candidates gain more points than a ``large'' number of positions later in the sequence of coefficients. In the latter case, the generator (for this number $m$) behaves similarly to $4$-veto, as there are four ``bad'' positions in the votes.

Therefore, a generator satisfying $\alpha^m_3>\alpha^m_{m-3}$ behaves, for every $m'\ge m$, similarly to $3$-approval or to $4$-veto. However, the behavior can be different for different values for $m$. Therefore, our proof of NP-completeness for generators satisfying this condition uses an ``adaptive'' reduction from the problem 3DM, which, given a 3DM instance, constructs either a reduction exploiting the ``$3$-approval likeness'' or the ``$4$-veto likeness'' of the generator, depending on the size of the instance (which linearly corresponds to the number of candidates in the constructed election instance). This gives the following result:

\begin{restatable}{restatableTheorem}{theoremmanydifferentcoefficients}
\label{theorem:many different coefficients}
 Let $f$ be a polynomial-time pure generator such that $\alpha^m_3>\alpha^m_{m-3}$ for some $m$. Then $f$-CCAV is NP-hard.
\end{restatable}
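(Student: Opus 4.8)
The plan is to prove \NP-hardness by an \emph{adaptive} reduction from 3DM (in the exactly-three-occurrences variant used in the proof of Theorem~\ref{theorem:0 dots 0 -gamma -beta -alpha CCDV}), mirroring the ``many coefficients'' CCAV hardness proof of~\cite{HemaspaandraHemaspaandraSchnoor-CCAV-AAAI-2014}. First I would record the structural consequences of the hypothesis. After normalizing $f$ so that for each $m$ the vector $f(m)$ has nonnegative integer coefficients, purity gives that if $\alpha^m_3>\alpha^m_{m-3}$ holds for one $m$ then it holds for every $m'\ge m$; so the strict drop is available at every sufficiently large number of candidates, in particular at every large multiple of $3$. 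For such an $m$, at least one of $\alpha^m_3>\alpha^m_{\frac23 m}$ (``$3$-approval likeness'': three top positions dominate a long plateau) and $\alpha^m_{\frac23 m}>\alpha^m_{m-3}$ (``$4$-veto likeness'': four bottom positions are strictly worse than a long plateau) must hold, since equality throughout would force $\alpha^m_3=\alpha^m_{m-3}$. Which one holds may depend on $m$, so the reduction must be adaptive: given a 3DM instance I compute the number $m$ of candidates the gadget needs (it grows linearly in the instance size and can be padded upward with dummy candidates to any larger value), evaluate $f(m)$ in polynomial time, decide which likeness is available at that $m$, and emit the matching gadget. Throughout I assume $f$ is not ultimately equivalent to a generator on the \NP-easy side of the dual of Theorem~\ref{theorem:ccdv dichotomy}, as this theorem is invoked only for the remaining generators.

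In the $3$-approval-like case I would build a CCAV instance whose candidates are $p$, one candidate for each element of $X\cup Y\cup Z$, auxiliary triple-candidates $S_i$, and enough dummy candidates to occupy the long $\alpha^m_{\frac23 m}$-plateau and the tail. The pool of addable voters contains one voter per triple $S=(x,y,z)\in M$, arranged so that adding it uses the three \emph{distinguished} top coefficients to advance exactly $x,y,z$ relative to $p$, while the dummies absorb the plateau and tail positions---exactly as an added $3$-approval-type voter helps three candidates. Registered \emph{setup} voters fix the base scores (only relative scores matter, so an implementation lemma as in~\cite{HemaspaandraHemaspaandraSchnoor-CCAV-AAAI-2014} suffices), and the strict gap $\alpha^m_3>\alpha^m_{\frac23 m}$ guarantees that dummies and triple-candidates can never overtake $p$. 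Taking the budget to be $k=\card X$ then makes $p$ a winner iff the added voters realize a perfect cover $C\subseteq M$, giving the reduction in this regime.

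The $4$-veto-like case is handled dually, using the four strictly-worst tail coefficients as the active positions. Here I place $p$ and the relevant candidates in the plateau and use the four bottom coefficients of each added triple-voter to impose controlled penalties, filling the now-inert top positions and the remaining tail with dummies; this mirrors the $4$-veto CCAV hardness construction of~\cite{HemaspaandraHemaspaandraSchnoor-CCAV-AAAI-2014}. Registered setup voters again realize the required relative base scores, the strict gap $\alpha^m_{\frac23 m}>\alpha^m_{m-3}$ isolates the four active positions, and a perfect cover corresponds exactly to an addition of $k$ voters after which $p$ is a winner. Combining the two gadgets with the poly-time likeness test from the first step yields a single poly-time many-one reduction.

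The main obstacle is \emph{uniformity across the two regimes together with control of the long plateau}. Unlike a fixed rule, the active coefficients (the values $\alpha^m_1,\alpha^m_2,\alpha^m_3$, or the four tail values) are known only once $m$ is fixed, and the bulk of every vote lands in the plateau where all candidates would gain the same amount; the construction must route all plateau positions to dummy candidates so that the plateau contributes only a common additive offset that cancels in the winner comparison. Making the setup scores exactly realizable by integer-coefficient votes while simultaneously guaranteeing that no dummy is ever a winner, and that the encoding genuinely forces a \emph{perfect} cover (which is where the strict gaps and the non-degeneracy of the distinguished coefficients are essential), is the delicate bookkeeping step. Once both gadgets are verified and shown to be selected correctly from the likeness available at the instance-determined $m$, the adaptive reduction is complete and establishes \NP-hardness of $f$-CCAV.
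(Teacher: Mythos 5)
Your high-level skeleton does match the paper's own proof: reduce from 3DM with $m=3n$ candidates, use purity to propagate the strict drop $\alpha^m_3>\alpha^m_{m-3}$ to all larger $m$, split adaptively into the approval-like case $\alpha^m_3>\alpha^m_{2n}$ versus the veto-like case $\alpha^m_{2n}>\alpha^m_{m-3}$, decide the case in polynomial time via uniformity of $f$, and realize base scores with the implementation lemma (Lemma~\ref{lemma:coefficients realization}). But there is a fundamental mismatch in the target problem. The ``CCAV'' in the statement is a typo: the theorem sits in the section on CCDV hardness for many coefficients, the text immediately preceding it announces that \emph{CCDV} is NP-hard under this condition, and the dichotomy proof of Theorem~\ref{theorem:ccdv dichotomy} invokes this theorem (paired with Corollary~\ref{corollary:many coefficients bribery hardness} for bribery) precisely to supply the CCDV hardness cases. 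Accordingly, the paper's own proof constructs $f$-\emph{CCDV} instances, by dispatching to Theorems~\ref{theorem:approval generalization ccdv} and~\ref{theorem:veto generalization}. You instead build CCAV instances with registered ``setup'' voters and a pool of addable voters --- that is, you essentially re-derive the already-known CCAV hardness of~\cite{HemaspaandraHemaspaandraSchnoor-CCAV-AAAI-2014}, which this paper cites and builds on, rather than proving the new content the theorem is needed for.

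This is a genuine gap, not a cosmetic relabeling, because the CCAV framing dissolves exactly the difficulty the theorem exists to overcome. In CCAV the registered voters are immune to the controller by definition, so your step ``registered setup voters fix the base scores; an implementation lemma suffices'' is legitimate there. In CCDV \emph{every} vote is deletable, and the bulk of the paper's sub-theorems is devoted to designing setup votes that both achieve the required relative scores and provably are never deleted: for instance, in Theorem~\ref{theorem:approval generalization ccdv} the blocking candidates $b_1,\dots,b_t$ must lose exactly $\card X\cdot\alpha^m_i$ points, deleting a 3DM vote removes the maximum possible number of points from the blocking set, and the initial vote $\overrightarrow v_{init}$ fed into Lemma~\ref{lemma:coefficients realization} is chosen (interleaving blockers, $p$, and dummies) so that every vote obtained from it by cycling and swapping gives the blockers strictly fewer points than any 3DM vote; a counting argument then forces any successful controller to delete 3DM votes only. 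Your proposal contains no analogue of this mechanism --- the ``delicate bookkeeping step'' you flag at the end is precisely where the CCDV proof lives. Two smaller points: your blanket assumption that $f$ is not ultimately equivalent to an easy-side generator is unnecessary, since any generator of the form $(\alpha_1,\alpha_2,\alpha_3,\dots,\alpha_3,\alpha_4,\alpha_5,\alpha_6)$ satisfies $\alpha^m_3=\alpha^m_{m-3}$ for all large $m$, so the hypothesis already excludes those cases and the theorem holds unconditionally; and no upward padding of $m$ is needed, since the reduction fixes $m=3n$ from the instance. To repair the proposal, carry out both gadgets in the CCDV setting (or reduce from $f$-\ccdvstar and show how to simulate its undeletable votes), with explicit arguments that the setup votes are never profitably deleted.
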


\section{Bribery}\label{sect:bribery}

Bribery is closely related to both CCDV and manipulation:
Bribing $k$ voters can be viewed as deleting $k$ voters followed by
adding $k$ manipulation voters. We thus can use our results obtained
for CCDV and manipulation in Sections~\ref{sect:results:manipulation} and~\ref{sect:results:ccdv}
as a starting point to obtain a complexity classification of the bribery
problem.

However, it is not necessarily the
case that an optimal bribery consists of an optimal deletion followed by
an optimal manipulation (see Example~\ref{e:bribery}),
and so it is possible for bribery to be
hard while the manipulation and deletion of voters problems are easy.
However, we will show that for every pure scoring rule $f$,
$f$-Bribery is polynomial-time solvable if and only if $f$-CCDV is, though
the proofs for Bribery are more (and sometimes much more)
involved.

We also obtain an interesting relationship between the complexities of manipulation and bribery: From Theorem~\ref{theorem:man-p}, it follows that every ``few coefficients'' case leads to a polynomial-time solvable manipulation problem. From Theorem~\ref{theorem:ccdv dichotomy}, we know that, for CCDV, \emph{only} such cases can be solved in polynomial time (unless $\PTIME=\NP$). Therefore, we obtain the following corollary:

\begin{restatableCorollary}
  Let $f$ be a polynomial-time uniform pure generator. Then $f$-manipulation reduces to $f$-CCDV and $f$-manipulation reduces 
to $f$-bribery.
\end{restatableCorollary}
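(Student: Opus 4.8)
The plan is to prove the corollary by a case distinction on whether $f$ is one of the ``few coefficients'' generators singled out in Theorem~\ref{theorem:ccdv dichotomy}. The key structural observation I would establish first is that every polynomial-time-solvable case for CCDV and bribery in that theorem---the generators ultimately equivalent to one of $f_1,\dots,f_5$---uses only a bounded number of distinct coefficients, so Theorem~\ref{theorem:man-p} applies to each of them. This alignment of the two dichotomy regimes is what makes the whole argument go through.

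First, I would handle the case where $f$ is ultimately equivalent to one of $f_1,\dots,f_5$. Each of these generators has at most three distinct coefficients for every $m$, hence the same bound holds for $f$ for all but finitely many $m$; by Theorem~\ref{theorem:man-p}, together with the earlier remark that ultimately equivalent generators share the same manipulation complexity, $f$-manipulation is in $\PTIME$. Any problem in $\PTIME$ reduces to an arbitrary nontrivial target---one possessing both a yes-instance and a no-instance---by deciding the input and mapping to a fixed yes- or no-instance of the target. Since $f$-CCDV and $f$-bribery are plainly nontrivial (take $k=0$ with an election in which $p$ already wins, versus one in which $p$ cannot be made to win), $f$-manipulation reduces to both.

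Second, I would treat the complementary case, where $f$ is not ultimately equivalent to any of $f_1,\dots,f_5$. Here Theorem~\ref{theorem:ccdv dichotomy} tells us that $f$-CCDV and $f$-bribery are both $\NP$-complete. Since $f$-manipulation is in $\NP$ for every polynomial-time uniform generator (as noted in the preliminaries), it reduces to any $\NP$-complete problem, and in particular to $f$-CCDV and to $f$-bribery.

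There is no genuine technical obstacle here; the only point needing care is the verification that the easy cases of the CCDV/bribery dichotomy coincide exactly with ``few coefficients'' cases, so that the polynomial-time regime of Theorem~\ref{theorem:man-p} covers precisely the regime in which the target problems are easy. Once that coincidence is checked, the two cases combine to give both reductions for every $f$.
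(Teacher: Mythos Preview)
Your proposal is correct and matches the paper's own reasoning. The paper does not spell out a formal proof of this corollary; it derives it directly from the observation that every generator in the polynomial-time list of Theorem~\ref{theorem:ccdv dichotomy} has boundedly many coefficients (so Theorem~\ref{theorem:man-p} puts manipulation in $\PTIME$), while all remaining generators have $\NP$-complete CCDV and bribery problems---exactly the two-case argument you give.
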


\subsection{Bribery Polynomial-Time Cases}\label{sect:polynomial time bribery}

Our first bribery algorithm (for a generator equivalent to $(2,1,\dots,1,0)$)
uses a reduction to network flow.

\begin{restatable}{restatableTheorem}{theorembriberyonezeroesminusoneptime}\label{theorem:100-1 bribery in ptime}
$(1,0,\dots,0,-1)$-bribery is in polynomial time.
\end{restatable}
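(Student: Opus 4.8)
The plan is to exploit that, for the rule $(1,0,\dots,0,-1)$ (which, as noted, is equivalent to $(2,1,\dots,1,0)$), a vote influences the scores only through its top and bottom candidate: it contributes $+1$ to the candidate ranked first, $-1$ to the candidate ranked last, and $0$ to everyone in between. Hence the current score of each candidate $c$ is $\score c = t_c - b_c$, where $t_c$ and $b_c$ count the voters ranking $c$ first and last, respectively. A bribe replaces one vote by an arbitrary vote, so it simultaneously removes an old top/bottom contribution and adds a new one. I would first record the usual structural simplifications: in an optimal bribery every bribed voter may be assumed to rank $p$ first, since placing anyone else first only benefits a competitor. The sole remaining freedom in a bribed vote is then the new last-place candidate, i.e.\ a single $-1$ ``veto token'' that may be placed on any candidate other than $p$.

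Next I would reduce the decision to a family of flow problems parametrized by a guessed final score $T$ of $p$ (and, if convenient, by the number $k'\le k$ of bribes). Since all scores lie in $[-\card V,\card V]$, there are only polynomially many candidate values of $T$, and $p$ can be made a winner if and only if for some $T$ we can, using at most $k$ bribes, bring $\score c\le T$ for every competitor $c$ while ensuring $\score p\ge T$. For a fixed $T$, I would build a network-flow instance in which the decision to bribe a voter $v$ corresponds to routing one unit of flow. The crucial point is that bribing $v$ has three linked effects that the network must encode at once: it removes the $+1$ that $v$'s current top candidate enjoyed (a reduction we can aim at a competitor), it returns a $+1$ to $v$'s current bottom candidate (a harmful side effect), and it produces one veto token placeable on an arbitrary competitor. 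The flow then has to certify that every competitor receives enough net reduction to reach $T$ and that the induced changes raise $p$ to at least $T$; we accept iff this is feasible with at most $k$ units for some $T$.

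The main obstacle is designing the network so that these coupled and partly counterproductive effects of a single bribe are represented faithfully. In particular, the side effect that deleting a vote hands a point back to its former last-place candidate must be cancellable by re-placing a veto on that same candidate, and surplus veto tokens (arising when we bribe more voters than the competitor constraints strictly demand) must be harmlessly absorbable on already-safe candidates; both facts have to be built into the capacities and costs and then shown to correspond exactly to optimal briberies. Finally I would invoke integrality of flows to conclude that a feasible (fractional) solution yields an actual assignment of first/last positions to the bribed voters, and bound the running time by the polynomially many choices of $T$ (and $k'$) times the cost of one flow computation.
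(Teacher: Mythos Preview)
Your high-level plan (reduce to min-cost flow after fixing bribed voters to rank $p$ first) matches the paper's, but the proposal stops exactly where the real work begins. You correctly identify the obstacle---a single bribe simultaneously subtracts a point from the old top candidate, adds a point back to the old bottom candidate, and produces a freely placeable veto---yet you do not show how any flow network encodes these three coupled effects with one unit of flow. Standard flow conservation does not let one unit trigger two reductions and one increase at distinct nodes, and the constraint matrix of the natural ILP (variables $x_v$ for ``bribe $v$'', $y_c$ for vetoes on $c$) is not totally unimodular because each $x_v$ appears with a $-1$ at $a_v$ and a $+1$ at $b_v$. Guessing the target score $T$ does not dissolve this coupling.

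The paper closes this gap with a structural lemma you are missing: partition the voters into $V_1$ (rank $p$ last), $V_2$ (rank $p$ in the middle), $V_3$ (rank $p$ first), and prove by an exchange argument that some optimal bribery bribes as many $V_1$ voters as possible, then $V_2$ voters, and never $V_3$ voters. This lets you split into three trivial cases; in the only interesting one (all of $V_1$ is bribed and some of $V_2$), the set of bribed $V_1$ voters is fixed, and bribing a $V_2$ voter $a>\cdots>b$ becomes exactly ``transfer one point from $a$ to $b$''---a single flow arc. The vetoes are then a separate layer of arcs into a single veto node. With the coupling eliminated in this way, the min-cost-flow reduction is straightforward. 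Your proposal needs either this exchange argument or an explicit gadget showing how a unit of flow can realize all three effects of a bribe; as written it has neither.
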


\begin{proof}(Sketch)
There are three types of voters. Let $V_1$ be the set of voters that rank $p$
last, let $V_2$ be the set of voters that rank $p$ neither first nor last,
and let $V_3$ be the set of voters that rank $p$ first.
In this case, we can assume without loss of generality that we bribe
as many $V_1$ voters as possible, followed by as many $V_2$ voters
as possible.  We never have to bribe $V_3$ voters.
All bribed voters will put $p$ first, so we also know $p$'s score after
bribery.

The hardest case is the one where we bribe all $V_1$ voters and some 
$V_2$ voters.
We view bribery as deletion followed by manipulation.
Delete all $V_1$ voters. In $V_2$,
deleting a voter $a > \dots > b$ corresponds to transferring a point
from $a$ to $b$.   After deleting $k$ voters, the deleted voters will
be bribed to rank $p$ first and to rank some other candidate last.
After deleting $V_1$, for every candidate $c$,
$\score{c} = \scoresub{V_2\cup V_3}{c}$, i.e.,
the score of $c$ in $V_2 \cup V_3$.
For every $V_2$ voter $a > \dots > b$ that is deleted, transfer
one point from $a$ to $b$.  For every bribe $p > \dots > d$,
delete a point from $d$.  There are exactly $k$ bribes.
After bribery, $\score{p} = \card{V_3} + k$ and the score
of every other candidate needs to be at most
$\score{p} = \card{V_3} + k$.

It is not too hard to see that this problem can be translated in
min-cost network flow problem, in a similar,
though somewhat more complicated, way as in 
the CCAV algorithm for the same generator
in~\cite{HemaspaandraHemaspaandraSchnoor-CCAV-AAAI-2014}.
\end{proof}

We now state our second bribery result.

\begin{restatable}{restatableTheorem}{theorembriberyzeroesbetaalphainptime}\label{theorem:bribery zeroes -beta -alpha in ptime}
Let $\alpha \geq \beta \geq 0$. Then bribery for $(0, \ldots, 0, -\beta, -\alpha)$ is solvable in polynomial time. 
\end{restatable}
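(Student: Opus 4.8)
The plan is to reduce the bribery problem, after a few normalizations, to a polynomial-sized family of network-flow feasibility tests, in the spirit of Theorem~\ref{theorem:100-1 bribery in ptime} but incorporating the deletion side effects that are special to this generator.

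\textbf{Normalization.} First I would dispose of the degenerate cases: if $\alpha=0$ then $f$ is constant and every candidate wins, and if $\alpha>0=\beta$ then $f$ is ultimately equivalent to $1$-veto, whose bribery is already in \PTIME~\cite{fal-hem-hem:j:bribery}; instances with at most two candidates are handled by brute force. So assume $\alpha\ge\beta\ge1$ and $m\ge3$, which guarantees a $0$-point segment in every vote. Every bribe may be assumed to place $p$ in that segment, so a bribe never lowers $\score{p}$ and never puts a penalty on $p$. Since a vote affects scores only through its last two candidates, I view each vote $v$ as placing $-\beta$ on its second-to-last candidate $s(v)$ and $-\alpha$ on its last candidate $t(v)$. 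Bribing $v$ then removes these penalties (raising $s(v)$ by $\beta$ and $t(v)$ by $\alpha$) and lets us put fresh $-\beta$ and $-\alpha$ penalties on any two distinct candidates other than $p$. Thus bribery is exactly ``delete a set $B$ of at most $k$ votes and add $|B|$ manipulator votes'', whose second half is precisely the polynomial manipulation problem of Theorem~\ref{theorem:man-p}.

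\textbf{Guessing $\score{p}$.} Because every bribe puts $p$ first, the final score of $p$ equals $\score{p}+\alpha a+\beta b$, where $a$ (resp.\ $b$) is the number of bribed votes originally ranking $p$ last (resp.\ second-to-last). There are only polynomially many admissible pairs $(a,b)$, so I would guess them, fixing the target value $T$ reached by $p$; it then suffices to test whether, using at most $k$ bribes of which exactly $a$ delete a $p$-last vote and $b$ delete a $p$-second-to-last vote, every other candidate can be brought to score at most $T$. Writing $\sigma_c=\score{c}-T$ for the original surplus of $c$, candidate $c\neq p$ is satisfied exactly when its net added penalty $\beta(x_c-\delta_c)+\alpha(y_c-\epsilon_c)$ is at least $\sigma_c$, where $\delta_c,\epsilon_c$ count the bribed votes penalizing $c$ in the $\beta$- and $\alpha$-slot and $x_c,y_c$ count the new penalties placed on $c$ in those slots.

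\textbf{The flow.} For a fixed guess I would decide feasibility with one integral max-flow computation. The network routes ``$\beta$-penalties'' and ``$\alpha$-penalties'': each original vote becomes a gadget that, when selected for deletion, simultaneously emits an un-penalization of $s(v)$ and of $t(v)$, thereby enforcing the source pairing (deleted penalties must come from genuine votes); each candidate $c\neq p$ is a node whose capacities force its net accumulated penalty to meet the demand $\sigma_c$; and the new penalties obey a per-candidate cap $x_c+y_c\le|B|$, which is exactly the condition that lets the marriage/Hall argument of Theorem~\ref{theorem:man-p} realize the $x_c,y_c$ as $|B|$ legal votes using distinct candidates. A global capacity $k$ on the deletions enforces the budget, and the guessed values $a,b$ are imposed as the flow through the $p$-penalizing gadgets. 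Integrality of max-flow then yields an actual set of deleted votes and new votes, and conversely any successful bribery induces a feasible flow, so the guess succeeds iff the flow is feasible.

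\textbf{Main obstacle.} The crux is that one cannot optimize the deletion and the manipulation separately (Example~\ref{e:bribery}): deleting a vote is forced to un-penalize its partner candidate, which may push a dangerous candidate above $T$ unless a new penalty is simultaneously spent to compensate. The flow must therefore couple both pairings at once---the source pairing tying each deleted $\beta$-penalty to the $\alpha$-penalty of the same vote, and the target pairing forbidding a new vote from penalizing one candidate twice---while respecting the budget and the guessed boost to $\score{p}$. Designing a single polynomial, integral network that enforces both pairings and all demands, and proving the exact correspondence between its integral flows and legal briberies, is the step I expect to demand the most care; the preprocessing and the guessing are routine, and the realizability of the new votes is already supplied by the Hall-type argument used for manipulation.
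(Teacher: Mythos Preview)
Your plan has a real gap at exactly the spot you flag as ``the step I expect to demand the most care.'' For a $V_3$ vote $v$ (neither of the last two positions is $p$), deleting $v$ simultaneously raises $\score{s(v)}$ by $\beta$ and $\score{t(v)}$ by $\alpha$. Encoding this as a flow gadget requires an all-or-nothing constraint on a pair of effects with \emph{different} magnitudes; equivalently, the variable $d_{c,c'}$ enters candidate $c$'s surplus constraint with coefficient $\beta$ and candidate $c'$'s with coefficient $\alpha$. This destroys total unimodularity once $\alpha\neq\beta$, and there is no evident single-commodity integral flow that models it. The $(1,0,\dots,0,-1)$ proof you are trying to generalize works precisely because there the two effects have equal magnitude and opposite sign, collapsing to a single unit edge $a\to b$; that collapse is unavailable here. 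Guessing $(a,b)$ fixes $\score{p}$ but does nothing to tame the $V_3$ deletions, and two-commodity integer flow is not polynomial in general, so as stated the flow step is not a proof.

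The paper avoids the coupling problem entirely by a structural lemma you are missing: there is a constant $X$ (depending only on $\alpha,\beta$) such that some optimal bribery bribes at most $X$ voters from $V_3$. The argument compares bribing $V_3$ voters to bribing unbribed $V_2$ voters and uses a pigeonhole on the last-place candidates of the unbribed $V_2$ votes. Once $V_3'\subseteq V_3$ with $|V_3'|\le X$ is guessed (polynomially many choices), every remaining deletion is of a $V_1$ or $V_2$ vote and therefore un-penalizes \emph{exactly one} non-$p$ candidate, eliminating the coupling. The rest is dynamic programming over candidates with state $(k,k_\beta,k_\alpha,k_1,k_2)$, extending the manipulation DP by two counters for how many $V_1$- and $V_2$-votes are deleted; realizability of the new votes is the same Hall argument you cite. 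If you want to rescue your outline, the place to insert work is proving this bound on $V_3$ bribes, not designing a more elaborate network.
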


\begin{proof}(Sketch)
As in the proof of
Theorem~\ref{theorem:100-1 bribery in ptime},
we partition $V$ into $V_1$, $V_2$, and $V_3$.
$V_1$ consists of all voters in $V$ that rank $p$ last,
$V_2$ consists of all voters in $V$ that rank $p$ second-to-last,
and $V_3$ consists of the remaining voters.
In the proof of Theorem~\ref{theorem:100-1 bribery in ptime},
it was important that we never had to bribe $V_3$ voters.
This is not always the case here, as 
shown in the example
below. That also means that this case is very different from CCDV, since in 
CCDV we never have to delete $V_3$ voters.
It also shows that an optimal bribery is not always an optimal
deletion followed by an optimal manipulation.

\begin{example}
\label{e:bribery}
This example shows that we sometimes need to bribe $V_3$ voters.
We will use the scoring rule $(0, \ldots, 0, -1, -3)$.
Let $C = \{p, a, b, c, d, e, f\}$ and let $V$ consist of
the following votes:

\begin{tabular}{ccc}
$\cdots > p > a$ &
 $\cdots > e > f$ &
$\cdots > f > e$\\
$\cdots > p > b$ & 
 $\cdots > e > f$ &
$\cdots > f > e$\\
$\cdots > p > c$
\end{tabular}

Then $\score{p} = \score{a} = \score{b} = \score{c} = -3$, 
$\score{d} = 0$, and
$\score{e} = \score{f} = -8$.

We can make $p$ a winner by bribing one of the $V_3$ voters
to vote $p > \cdots > d$.  But
it is easy to see that we can not make $d$ a winner by bribing a $V_2$
voter, wlog, the voter voting $\cdots > p > a$, since
in the bribed election, the score of $p$ will be at most
$-2$, and so both $a$ and $d$ must be in the last position of
the bribed voter.
\end{example}

Though we may need to bribe $V_3$ voters, we can show that we
never need to bribe more than a constant number of $V_3$ voters.
This is crucial in obtaining a a polynomial-time bound on a
dynamic programming
approach similar to, but more complicated than, the one 
in the proof
for Theorem~\ref{theorem:man-p}.
\end{proof}

Together with Lin's results on the complexity of bribery for $k$-veto and $k$-approval election systems in~\cite{lin:thesis:elections},
the above results prove all polynomial-time bribery cases of our main result, Theorem~\ref{theorem:ccdv dichotomy}. In the remainder of Section~\ref{sect:bribery}, we therefore discuss our NP-hardness results for bribery.

\subsection{Bribery Hardness Approach}\label{sect:bribery hardness from ccdv hardness}

Our dichotomy results imply that each generator $f$ for which CCDV is \NP-hard also has an \NP-hard bribery problem. Proving this via a generic reduction from CCDV to bribery does not seem to be easy: Even though bribery can be seen as CCDV followed by manipulation, solving a bribery instance is not the same as first finding an optimal deletion of votes and then performing an optimal manipulation. Therefore, hardness of $f$-bribery does not easily follow from hardness of $f$-CCDV. We briefly discuss a proof strategy to obtain a hardness proof for $f$-bribery from a hardness proof of $f$-CCDV. 

A key difficulty in the construction of a bribery hardness proof is that the manipulation action of the controller allows her more freedom than her delete action: For the latter, the reduction controls the available votes, whereas the manipulation action can use arbitrary permutations of the candidates. However, we can always assume that the bribed voters will vote $p$ in the first position, and will place any ``dummy'' candidates in the positions following $p$ in their votes. This often allows us to compute the score of $p$ after the bribery action.

To limit the controller's freedom in the manipulation votes, we proceed as follows: We identify a sufficiently long subsequence of the coefficients that differ by only a ``small'' amount. (Such a sequence exists by definition for the generators treated in Section~\ref{sect:ccdv few coefficients general cases}, and can be found using a pigeon-hole argument for other generators.) We then set up the points such that the ``relevant'' candidates must be placed into this ``low-variation'' sequence of the vote. This is done using ``blocking'' candidates that must be placed in low-score positions, and dummy candidates occupying the high-score positions (except for the first position, in which the bribed voters always vote $p$). This ensures that moving a candidate inside the ``low-variation'' does not make a large difference, and allows the reduction to control the possible manipulation actions very tightly.

A second difficulty is that the controller is more powerful in bribery than in CCDV: In bribery, she can perform a manipulation action in addition to her delete action. Therefore, to make it ``hard'' for the controller to find an optimal bribery action, the scores in the election instance must be ``worse'' for $p$ than in the CCDV setting. This leads to setup votes that are more attractive to delete than in the CCDV case. Similarly as in CCDV, the main technical issue is to define setup votes that both obtain the required scores and still will not be deleted by the controller, however due to the reasons above this is even more difficult for bribery as for CCDV.

The ideas outlined above allow
us to prove the bribery hardness results of Theorem~\ref{theorem:ccdv dichotomy}.

As an example for our approach to bribery hardness, we prove the following theorem, which is the ``bribery version'' of Theorem~\ref{theorem:0 dots 0 -gamma -beta -alpha CCDV}. We therefore apply our recipe to the proof of Theorem~\ref{theorem:0 dots 0 -gamma -beta -alpha CCDV}, which is the CCDV hardness result for the same generator.

\begin{restatable}{restatableTheorem}{theoremzerodotszerominusgammaminusbetaminusalphabribery}\label{theorem:0 dots 0 -gamma -beta -alpha bribery}
 Let $f=(\alpha_3,\dots,\alpha_3,\alpha_4,\alpha_5,\alpha_6)$ with $\alpha_3>\alpha_4>\alpha_6$. Then $f$-bribery is NP-complete.
\end{restatable}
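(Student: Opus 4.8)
The plan is to prove NP-hardness by reduction from 3DM, reusing both the construction in the proof of Theorem~\ref{theorem:0 dots 0 -gamma -beta -alpha CCDV} and the general bribery recipe of Section~\ref{sect:bribery hardness from ccdv hardness}; membership in \NP\ is immediate since $f$ is polynomial-time uniform. As in that CCDV proof, I would equivalently write $f=(0,\dots,0,-\gamma,-\beta,-\alpha)$ with $0<\gamma<\alpha$, so that the only positions in which a candidate can be penalized are the three ``bad'' positions worth $-\gamma,-\beta,-\alpha$, while the long initial run of $0$-coefficients forms the (zero-variation) low-variation segment required by the recipe. I would reuse the candidate set $\set p\cup X\cup Y\cup Z\cup\set{S_i,S_i'\suchthat S_i\in M}$ and the four votes introduced for each triple $S_i=(x,y,z)\in M$; these ``3DM votes'' are exactly the votes a successful controller should bribe, and bribing the ones corresponding to a cover is what will encode a solution.

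The central structural step is to reduce a single bribe to a ``delete-plus-dump'' operation. Since we may always assume a bribed voter ranks $p$ first (in a $0$-position) and fills the remaining $0$-positions with dummy candidates, the only meaningful choice left in a bribe is which (at most three) candidates occupy the bad positions of the new vote. Thus each bribe simultaneously (i) removes the full effect of an existing vote, exactly as a deletion would, and (ii) adds a manipulation vote that lowers the surplus of up to three freely chosen candidates by $\gamma$, $\beta$, and $\alpha$, respectively. After a light preprocessing of the bribable votes so that $p$ sits in a fixed position in each (in the 3DM votes $p$ occupies the $-\beta$ slot), this lets us compute $\score{p}$ after the bribery action, exactly as in the proofs of the polynomial-time bribery cases such as Theorem~\ref{theorem:bribery zeroes -beta -alpha in ptime}.

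The new difficulty relative to CCDV is precisely the extra ``dump'' power in (ii), which the deleter did not have; handling it is where the recipe's two adjustments come in. First, I would make the target scores worse for $p$, inflating the amount of surplus that the elements of $X\cup Y\cup Z$ must lose, so that the controller is \emph{forced} to use each of her bribes both to remove a 3DM vote and to dump its three negative slots onto the covered elements, leaving no slack to waste a dump or a bribe elsewhere. This is what channels the extra manipulation power into encoding a 3DM cover rather than into cheating. Second, and this is the part I expect to be the main obstacle, I would redesign the setup votes so that they are bribe-resistant rather than merely delete-resistant: in the CCDV proof a setup vote with $p$ first is automatically safe, but here the controller could bribe it to replace its dummy-filled bad positions with candidates she wishes to penalize. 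To kill this, the setup votes must be arranged so that any beneficial bribe of a setup vote triggers a cascade of further forced bribes exceeding the budget (the ``chain'' mechanism already used in the proof of Theorem~\ref{theorem:alpha 0 dots 0 alpha < beta ccdv}), or so that dumping via a setup vote is never more profitable than dumping via a 3DM vote.

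Finally I would verify both directions. For the easy direction, given a 3DM cover $C$, bribing exactly the 3DM votes associated with the triples in $C$, and placing in their bad positions the elements those triples cover, makes $p$ a winner by the score bookkeeping above. For the hard direction, the inflated surplus targets must be shown to leave the controller no alternative: any successful bribery within budget must bribe exactly the 3DM votes of some cover and must aim their dumps at the covered elements, since any other allocation of bribes or dumps falls short on at least one candidate's surplus. The technical heart of the argument---and the step most likely to require care---is simultaneously realizing the required relative scores with setup votes, keeping those votes unattractive to bribe under the stronger bribery action, and proving that the dumps cannot be repurposed to make $p$ a winner without a genuine cover.
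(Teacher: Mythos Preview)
Your overall framing (reduce from 3DM, reuse the CCDV construction, view each bribe as a deletion followed by a manipulation vote whose only free choice is who occupies the three bad positions) matches the paper's. The gap is in how you neutralize the manipulation part.

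You propose to inflate the surpluses of the elements of $X\cup Y\cup Z$ so that the controller must dump the bad positions of the manipulation votes onto the covered elements. But deletion and dumping are decoupled: nothing forces the controller to dump on the elements of the very triples she deletes, and with a budget of $5k$ bribes you have $15k$ bad positions to distribute over only $3k$ elements (plus the $S_i$, $S_i'$, which you do not mention). Your proposal gives no argument tying the dump pattern to a cover, so the ``hard direction'' as you sketch it does not go through. Your second device, a chain mechanism to make setup votes bribe-resistant, is heavy machinery introduced to patch a problem that the paper avoids altogether.

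The paper's key idea, which you are missing, is to introduce three fresh \emph{blocking} candidates $b_\alpha,b_\beta,b_\gamma$ whose scores are set so that $b_\heartsuit$ ties with $p$ only if it occupies the $-\heartsuit$ slot of every one of the $5k$ manipulation votes \emph{and} $p$ recovers the full $5k\beta$ from deletions. A simple tightness count (the three gaps $b_\heartsuit-p$ sum to exactly the maximum that the bribery action can close) forces all three bad positions of every manipulation vote to go to the $b$-candidates, which completely removes the manipulation step from the picture for every other relevant candidate. At that point the problem is pure CCDV again, and a setup vote with $p$ in the $0$-segment is automatically bribe-safe: bribing it would forfeit a $\beta$-gain for $p$, violating the tightness. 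So no chain mechanism is needed. The only additional wrinkle is that the paper realizes the required $b_\heartsuit$ scores using auxiliary ``$G$-votes'' (copies of the 3DM votes with the $-\gamma$ and $-\beta$ positions swapped), designed to be strictly less attractive to bribe than the genuine 3DM votes, with a short separate argument for the degenerate case $\beta=\gamma$.
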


\begin{proof}
 We follow the above recipe to obtain a hardness proof for $f$-bribery. Applying the recipe requires to make some changes to the construction of the CCDV hardness proof. As in the proof of Theorem~\ref{theorem:0 dots 0 -gamma -beta -alpha CCDV}, we write $f$ as $f=(0,\dots,0,-\gamma,-\beta,-\alpha)$. In particular, since we can assume that all bribed voters will vote $p$ first, the score of $p$ will not change from manipulation votes (but may, of course, change from the deleted votes).
 
 The budget for the controller is, as in the CCAV proof from~\cite{HemaspaandraHemaspaandraSchnoor-CCAV-AAAI-2014}, $n+2k$, where $n=3k$, i.e., the budget is $5k$ (here, $k$ is again the number $\card X$ from the given 3DM instance). We make the following changes to the setup in the proof of Theorem~\ref{theorem:0 dots 0 -gamma -beta -alpha CCDV}:
 
 \begin{itemize}
  \item We introduce three new distinct candidates $b_\alpha$, $b_\beta$, and $b_\gamma$ (i.e., even if $\beta=\gamma$, then $b_\beta$ is still a different candidate from $b_\gamma$). These candidates will be placed in the three ``relevant'' positions of each manipulation vote in every successful bribery action.
  \item Recall that in the proof of the CCDV hardness result, i.e., Theorem~\ref{theorem:0 dots 0 -gamma -beta -alpha CCDV}, each element $(x,y,z)$ leads to four 3DM votes. We make the following addition: For each 3DM vote introduced in this way, we introduce $G$ additional votes obtained from the 3DM votes by swapping the candidates in the $-\gamma$ and $-\beta$-positions (for an appropriate value of $G$).
  More precisely: For each $(x,y,z)\in M$, we add the following votes:
  \begin{itemize}
    \item a single vote   $\dots > S_i > p > x$
    \item $G$ many votes  $\dots > p > S_i > x$
    \item a single vote   $\dots > S_i > p > y$
    \item $G$ many votes  $\dots > p > S_i > y$    
    \item a single vote   $\dots > S_i' > p > z$
    \item $G$ many votes  $\dots > p > S_i' > z$    
    \item a single vote   $\dots > S_i' > p > S_i$
    \item $G$ many votes  $\dots > p > S_i' > S_i$    
  \end{itemize}
  
  The purpose of these added votes (we will call them $G$-votes in the sequel) is to ensure that the candidates $b_\alpha$, $b_\beta$, and $b_\gamma$ gain points relatively to $p$, using votes that are less attractive than the actual 3DM votes from the construction of Theorem~\ref{theorem:0 dots 0 -gamma -beta -alpha CCDV}, such that the controller will delete the latter votes instead of the $G$-votes.
 \end{itemize}
 
 Note that the controller will clearly vote $p$ in the first position of all manipulation votes. Therefore, the score of $p$ will not change from the manipulation votes, but only from the CCDV-aspect of the bribery action. Hence, the score of $p$ behaves in exactly the same way as in the CCDV proof. In the bribery instance we construct, the final scores (i.e., from the 3DM votes including the $G$-votes plus the setup votes that we will introduce below) will be as follows (recall that $n=3k$ as above):
 
 \begin{itemize}
    \item $\scorefinal p=\alpha+2\gamma$,
    \item $\scorefinal c=5k\beta+2\gamma$ for each $c\in X\cup Y\cup Z$,
    \item $\scorefinal{S_i}=5k\beta+\min(\alpha,2\gamma)$,
    \item $\scorefinal{S_i'}=5k\beta+\alpha+\gamma$.
    \item $\scorefinal{b_\heartsuit}=\alpha+2\gamma+5k(\beta+\heartsuit)$ for $\heartsuit\in\set{\alpha,\beta,\gamma}$.
 \end{itemize}
 
 The score of $b_\heartsuit$ is such that $b_\heartsuit$ ties with $p$ if $5k$ 3DM votes are removed (letting $p$ gain $5k\beta$ points), and $b_\heartsuit$ is placed in the $-\heartsuit$-position of each of the $5k$ manipulation votes. Since the score of $b_\heartsuit$ cannot be decreased by deleting 3DM votes, this ensures that the $b_\heartsuit$ candidates must in fact be placed in the relevant positions of each manipulation vote, and therefore, with regard to the remainder of the candidates, the construction works as in the CCDV case.
 
 As in the proof of Theorem~\ref{theorem:0 dots 0 -gamma -beta -alpha CCDV},  we first compute the scores the candidates receive from the 3DM- and $G$-votes, for a candidate $x$, we call this value $\scorethreedm{x}$. 
 
 \begin{itemize}
  \item $\scorethreedm{p}=-12k(\beta+G\gamma)$, since, for each tuple in $M$, $p$ gains $-4\beta-4G\gamma$ points, and $\card M=3k$.
  \item $\scorethreedm{S_i}=-2\gamma-(G+1)\alpha-2G\beta$, 
  \item $\scorethreedm{S_i'}=-2\gamma-2G\beta$,
  \item $\scorethreedm{c}=-3\alpha(G+1)$ for each $c\in X\cup Y\cup Z$, since each $c$ appears in exactly $3$ tuples from $M$,
  \item $\scorethreedm{b_\heartsuit}=0$ for each $\heartsuit\in\set{\alpha,\beta,\gamma}$.
 \end{itemize}
 
 For each candidate $x$, with $\scoresetup{x}$ we denote the points that $x$ needs to receive from the setup votes in order to ensure that $\scorefinal x=\scorethreedm x+\scoresetup x$, i.e., $\scoresetup x=\scorefinal x-\scorethreedm x$. We get the following:
 
 \begin{itemize}
  \item $\scoresetup{p}=\alpha+2\gamma+12k(\beta+G\gamma)=(12kG+2)\gamma+12k\beta+\alpha$,
  \item $\scoresetup{S_i}=5k\beta+\min(\alpha,2\gamma)+2\gamma+(G+1)\alpha+2G\beta=\min(\alpha,2\gamma)+2\gamma+(5k+2G)\beta+(G+1)\alpha$,
  \item $\scoresetup{S_i'}=5k\beta+\alpha+\gamma+2\gamma+2G\beta=3\gamma+\beta(5k+2G)+\alpha$,
  \item $\scoresetup{c}=2\gamma+5k\beta+3\alpha(G+1)$ for each $c\in X\cup Y\cup Z$,
  \item $\scoresetup{b_\heartsuit}=2\gamma+\alpha+5k(\beta+\heartsuit)$.
 \end{itemize}
 
 Clearly, it again suffices to realize the relative scores among the candidates (and clearly, the absolute points of all candidates will be at most $0$, since we wrote $f$ as having no strictly positive coefficient). Hence, it suffices to construct setup votes that for each candidate $x$, let $x$ gain $\scoresetup{p}-\scoresetup{x}$ points \emph{less than the preferred candidate $p$}; this is the number of points that the $x$ must lose against $p$ from the setup votes. For each $x$, we get the following value (clearly for $p$ itself, the value is $0$):
 
 \begin{itemize}
  \item $\scorelosep{S_i}=(12kG+2)\gamma+12k\beta+\alpha-(\min(\alpha,2\gamma)+2\gamma+(5k+2G)\beta+(G+1)\alpha) = -\min(\alpha,2\gamma)+12kG\gamma+\beta(7k-2G)-G\alpha$,
  note that this value can be made arbitrarily large if $k$ is chosen sufficiently large, since $G$ is a constant chosen depending on $\alpha,\beta,\gamma$, but independent of the instance and therefore of $k$.
  \item $\scorelosep{S_i'}=(12kG+2)\gamma+12k\beta+\alpha-(3\gamma+\beta(5k+2G)+\alpha)
                          =(12kG+2)\gamma+12k\beta+\alpha-3\gamma-\beta(5k+2G)-\alpha
                          =\gamma(12kG-1)+\beta(7k-2G)$, again this value grows arbitrarily large in $k$.
  \item $\scorelosep{c}=(12kG+2)\gamma+12k\beta+\alpha - (2\gamma+5k\beta+3\alpha(G+1))
                       = (12kG+2)\gamma+12k\beta+\alpha - 2\gamma-5k\beta-3\alpha(G+1)
                       = 12kG\gamma+7\beta+\alpha(-3G-2)$,
                       again the value grows arbitrarily large in $k$.
  \item $\scorelosep{b_\heartsuit}=(12kG+2)\gamma+12k\beta+\alpha-(2\gamma+\alpha+5k(\beta+\heartsuit))
                                  =(12kG+2)\gamma+12k\beta+\alpha-2\gamma-\alpha-5k(\beta+\heartsuit)
                                  =12kG\gamma+(7k)\beta-5k\heartsuit
                                  =k(12G\gamma+7\beta-5\heartsuit)$.
                                  
                                  Note that, for a suitable choice of $G$, this value also grows arbitrarily for increasing $k$.
 \end{itemize}
 
 Hence, all candidates must lose points against $p$, and the number of points they must lose grows arbitrarily in $k$. Therefore, the points can be implemented using setup votes letting a candidate $x$ lose $\alpha$, $\beta$, or $\gamma$ points against all other relevant candidates. The controller will not delete these votes, since they have $p$ in one of the top positions.
 
 Note that, since $b_\alpha$, $b_\beta$ and $b_\gamma$ must lose $5k(\alpha+\beta+\gamma)$ points against $p$ even if $p$ gains $5k\beta$ points, it follows that these candidates must take all relevant positions in the manipulation votes, and, if $\beta>\gamma$, then $p$ can only win if $p$ indeed gains $5k\beta$ points from the deletions, i.e., only if only non-$G$ 3DM votes are deleted. Therefore, for the remainder of the proof, assume that $\gamma=\beta$.
 
 It remains to show that the controller will in fact delete only the non-$G$-3DM votes. Therefore, assume that there is a successful bribery action in which at least one of the $G$-votes is also deleted. If for every $G$-vote $v_G$ that is removed in the bribery action, the corresponding non-$G$ vote $v_{3DM}$ (obtained from the $G$-vote by swapping the $-\gamma$ and $-\beta$ positions) is not removed, then a bribery action deleting only non-$G$-3DM votes can be obtained by deleting $v_{3DM}$ instead of $v_G$ for every relevant vote (the effect for $p$ is at least as good when deleting $v_{3DM}$). Therefore, we can without loss of generality assume that there is a successful bribery action in which there is a non-$G$ 3DM vote $v_{3DM}$ such that both $v_{3DM}$ and the corresponding $G$-vote $v_G$ are deleted. We show that in this case, $p$ cannot win the election. To see this, first note that $p$ gains at most $5k\beta$ points from the delete actions. Therefore, after the bribery action, $p$'s points are at most (recall that we can assume $\beta=\gamma$)
 
 $$\scoresub{max}p\leq\alpha+2\gamma+(5k-1)\beta+\gamma=\alpha+(5k+2)\beta.$$
 
 We now make a case distinction depending on which type of $G$-vote is deleted. 
 
 \begin{itemize}
  \item First assume that $v_G$ has a candidate $c\in X\cup Y\cup Z$ in the last position. Then the same is true for $v_{3DM}$. Since both votes are deleted, $c$ gains at least $2\alpha$ points (and note that $c$ cannot lose points from deleting other votes). Therefore, $c$ has at least $5k\beta+2\gamma+2\alpha=(5k+2)\beta+2\alpha$ points, which is strictly more than $\scoresub{max}p$.
  \item Now assume that $v_G$ has a candidate $S_i$ in the last position. Analogously to the above, $S_i$ then gains at least $2\alpha$ points, and hence ends up with at least (recall that $\beta=\gamma$) $5k\beta+\min(\alpha,2\gamma) + 2\alpha \ge 5k\beta+2\gamma + 2\alpha = (5k+2)\beta+2\alpha$ points, which again is strictly more than $\scoresub{max}p$.
 \end{itemize}
 
 Therefore, in both cases $p$ does not win the election and we have a contradiction. Therefore, if the bribery instance is positive, then there exists a successful bribe in which only non-$G$ 3DM votes are deleted, as required.
\end{proof}

\section{Open Questions}\label{sect:open questions}

The main open question is to completely characterize the complexity
of $f$-manipulation, for all generators $f$. As discussed at the
end of Section~\ref{sect:results:manipulation},
we conjecture that this will not be a dichotomy theorem.
As a first step, we would like to prove that the
cases listed in Theorem~\ref{theorem:man-p} are exactly the polynomial-time 
cases (under some reasonable complexity-theoretic assumptions) or to
construct an explicit counterexample to this statement.

Other interesting avenues to pursue are 
going beyond NP-completeness, by looking at such issues as
fixed-parameter tractability (see, e.g., \cite{fal-nie:b:FPT}),
approximability (see e.g., \cite{DBLP:journals/jair/FaliszewskiHH15}),
and experimental results (see, e.g.,
\cite{wal:j:where-hard-veto} and \cite{rot-sch:c:empirical-control}).

\vspace*{5mm}

\begin{Huge}
 \noindent\textbf{Appendix}
\end{Huge}

\vspace*{5mm}

\begin{appendix}

In this appendix, we present complete proofs for all of our results, as well as some auxiliary results required for the proofs. The appendix is structured as follows:

\begin{itemize}
 \item In Section~\ref{sect:manipulation ptime proofs}, we prove our polynomial-time results for the manipulation problem. The remainder of the Appendix is then decidated to proving our dichotomy theorem.
 \item In Section~\ref{appendix:bribery hardness}, we state our individual bribery hardness results results.
 \item Section~\ref{section:dichotomy proof appendix} contains the proof of our dichotomy result for CCDV and bribery, based on the results stated in the main paper and in Section~\ref{appendix:bribery hardness}.
 \item Section~\ref{sect:auxiliary results} contains auxiliary results required for the invividual hardness proofs for CCDV bribery. These results follow in Sections~\ref{sect:ccdv proofs} and~\ref{sect:proofs bribery}, respectively.
\end{itemize}

\section{Proofs for Manipulation in Polynomial Time}\label{sect:manipulation ptime proofs}

In this section, we state and prove correctness of our polynomial-time algorithms for manipulation. We note that Theorem~\ref{theorem:man} is a direct consequence of Theorem~\ref{theorem:man-p}. However, the proof of the special case treated in Theorem~\ref{theorem:man} is simpler and still contains the main ideas required to the proof of the more general Theorem~\ref{theorem:man-p}. We therefore also present the proof of the simpler result. In addition, our proof of Theorem~\ref{theorem:bribery zeroes -beta -alpha in ptime} below uses the algorithm from Theorem~\ref{theorem:man}.

\subsection{Proof of Theorem~\ref{theorem:man}}

\begin{restatable}{restatableTheorem}{theoremman}\label{theorem:man}
Let $\alpha \geq \beta \geq 0$.  Then manipulation for
$(0, \ldots, 0, -\beta, -\alpha)$ is solvable in polynomial time.
\end{restatable}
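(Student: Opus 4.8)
The plan is to reduce $f$-manipulation for $f=(0,\dots,0,-\beta,-\alpha)$ to a purely combinatorial feasibility question and then decide that question by dynamic programming, exactly along the lines of the proof sketch of Theorem~\ref{theorem:man-p}. First I would normalize the instance: since every top position is worth the maximal score $0$, having a manipulator rank $p$ first can only help, so without loss of generality all manipulators vote $p$ in first place. This fixes $\score{p}$ and hence the surplus $\surpl{c}=\score{c}-\score{p}$ of each remaining candidate, and it means that in each manipulator's vote exactly one non-$p$ candidate is placed second-to-last (losing $\beta$ relative to $p$) and one distinct non-$p$ candidate is placed last (losing $\alpha$). Writing $x_i$ (resp.\ $y_i$) for the number of manipulators that rank $c_i$ second-to-last (resp.\ last), a successful manipulation yields nonnegative integers with (i) $x_i+y_i\le k$, (ii) $\sum_i x_i=k$, (iii) $\sum_i y_i=k$, and (iv) $\surpl{c_i}-\beta x_i-\alpha y_i\le 0$ for all $i$. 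So feasibility of these constraints is necessary.

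The crux is the converse: from any $x_i,y_i$ satisfying (i)--(iv) I would reconstruct an actual slate of manipulator votes, proceeding by induction on $k$. For the inductive step I would form the two sets $X=\{c_i\mid x_i>0\}$ and $Y=\{c_i\mid y_i>0\}$ and produce a single manipulator's vote of the form $\cdots>c_i>c_j$ with $c_i\in X$, $c_j\in Y$, and $c_i\neq c_j$; decrementing $x_i,y_j$ by one and $\surpl{c_i},\surpl{c_j}$ by $\beta,\alpha$ respectively then leaves constraints (i)--(iv) intact for the parameter $k-1$, so the induction hypothesis supplies the remaining votes. Such a pair is a system of distinct representatives (a ``traversal'') of the two-set family $(X,Y)$, which exists by Hall's Theorem~\cite{hall} once I verify the marriage condition $X\neq\emptyset$, $Y\neq\emptyset$, $\card{X\cup Y}\ge 2$; the first two are immediate from (ii)--(iii) for $k\ge 1$, and if $\card{X\cup Y}=1$ then the single candidate would carry all mass, forcing $x+y=2k>k$ and contradicting (i).

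The main obstacle I anticipate is checking that constraint (i) survives the reduction, since removing one unit from $c_i$ and one from $c_j$ leaves $x_l+y_l$ unchanged for every other $c_l$, and such a candidate might already sit at the maximal value $x_l+y_l=k$. I would resolve this by choosing the traversal so as to decrement every candidate currently at this maximum. An easy counting argument shows that at most two candidates can satisfy $x_l+y_l=k$, and that if two do then all of the mass is concentrated on exactly those two (so that $\sum_i x_i$ and $\sum_i y_i$ are split between them); in that case any traversal of $(X,Y)$ necessarily touches both. If only one candidate $c_l$ is at the maximum, the same concentration argument rules out $X=Y=\{c_l\}$ (which would force $x_l=y_l=k$, contradicting $x_l+y_l=k$), so $c_l$ can always be used as the $X$- or $Y$-representative. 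This is the one genuinely delicate point; the rest is bookkeeping.

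Finally, I would turn the feasibility test into a polynomial-time algorithm by the dynamic program sketched for Theorem~\ref{theorem:man-p}: define the predicate $M(k,k_\beta,k_\alpha,s_1,\dots,s_\ell)$ asserting that constraints (i) and (iv), together with the partial budgets $\sum_i x_i=k_\beta$ and $\sum_i y_i=k_\alpha$, can be met for the first $\ell$ candidates; peel off candidate $c_\ell$ by quantifying over its $x_\ell\le k_\beta$ and $y_\ell\le k_\alpha$; and fill a table indexed by $(\ell,k_\beta,k_\alpha)$ with $k$ fixed. Since $k,k_\beta,k_\alpha$ are given in unary, this table has polynomially many entries, each computable in polynomial time, and the instance is a ``yes'' instance precisely when $M(k,k,k,\surpl{c_1},\dots,\surpl{c_m})$ holds. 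Trivial edge cases (fewer than three candidates, where no $0$-positions exist) are handled directly.
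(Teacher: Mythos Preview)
Your proposal is correct and essentially matches the paper's proof: the same normalization, the same integer-program characterization via the predicate $M$, the same inductive reconstruction of votes from a feasible $(x_i,y_i)$, and the same dynamic program. The only cosmetic difference is in the inductive step: the paper's full proof bypasses Hall and argues directly by letting $X=\{i\mid x_i+y_i=k+1\}$, observing $\card{X}\le 2$ from $\sum_i(x_i+y_i)=2(k+1)$, and picking $i\neq j$ with $x_i>0$, $y_j>0$, $X\subseteq\{i,j\}$; your version reaches the same conclusion through the Hall/traversal framing of the Theorem~\ref{theorem:man-p} sketch, and your handling of the ``constraint (i) survives'' obstacle is exactly the point the paper isolates.
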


\begin{proof}
Consider an instance of the manipulation problem, consisting of a set of candidates $C$, a preferred candidate $p$, the surplus $\surpl c$ for each $c\in C$ (i.e., the value $\score{c}-\score{p}$), and a number $k$ of available
manipulators.\footnote{Usually, the input to the manipulation problem does not contain the scores for each candidate but a set of voters already having voted, but it's clear how to get the scores from the voters in polynomial time. Note that this means we actually prove a stronger result, where the current scores of the candidates
don't have to be realizable and  may be given in binary.} 
Let $C - \{p\} = \{c_1, \ldots, c_m\}$.

It is obvious that all manipulators can without loss of generality be assumed to vote $p$ in the first place. The goal of the manipulators is to ensure that after manipulating, no candidate has a positive surplus.

Note that the obvious greedy approach of having a manipulator rank a
candidate with the largest surplus last won't work in all cases. For example,
if $\beta = 2$, $\alpha = 3$, the surplus of $c_1$ is 4,
the surplus of $c_2$ and $c_3$ is 3, and we have two manipulators,
the only successful manipulation is to have the manipulators vote
$\cdots > c_1 > c_2$ and $\cdots > c_1 > c_3$ and so we should not put
$c_1$ last.\footnote{This example is realizable with a couple of dummy
candidates.  Simply put (in the nonmanipulators) $p$ last once and next-to-last
twice, put $c_1$ next-to-last twice, put $c_2$ and $c_3$ last once, and
fill all other last two positions with dummies, in such a way that the dummies
do not have positive surplus.}

Note that if there exists a successful manipulation, then for
all $i$, $1 \leq i \leq m$, there exist
nonnegative integers $x_i$ (the number of times $c_i$ is ranked next to last 
by a manipulator) and $y_i$ (the number of times that $c_i$ is ranked last
by a manipulator) such that:
\begin{enumerate}
\item $x_i + y_i \leq k$,
\item $\sum_{1 \leq i \leq m} x_i = k$,
\item $\sum_{1 \leq i \leq m} y_i = k$, and
\item $\surpl{c_i} - \beta x_i - \alpha y_i \leq 0$.
\end{enumerate}

We define the following Boolean predicate $M$.
$M(k, k_\beta, k_\alpha, s_1, \ldots, s_\ell)$
is true if and only if for all $i$, $1 \leq i \leq \ell$,
there exist natural numbers $x_i$ and $y_i$ such that
\begin{enumerate}
\item $x_i + y_i \leq k$,
\item $\sum_{1 \leq i \leq \ell} x_i = k_\beta$,
\item $\sum_{1 \leq i \leq \ell} y_i = k_\alpha$, and
\item $s_i - \beta x_i - \alpha y_i \leq 0$.
\end{enumerate}

Note that if there is a successful manipulation, then
$M(k, k, k, \surpl{c_1}, \ldots , \surpl{c_m})$ is true.
We will now show that the converse is true as well:
If $M(k, k, k, \surpl{c_1}, \ldots , \surpl{c_m})$ is true then
there exists a successful manipulation.

We will prove this by induction on $k$.
If $k = 0$, then for all $i$, $\surpl{c_i} \leq 0$, so $p$ is a winner.

Now suppose that the claim holds for $k \geq 0$.  We will show that it also
holds for $k + 1$.  So, for all $i$, $1 \leq i \leq m$,
let $x_i, y_i$ be natural numbers such that:
\begin{enumerate}
\item for all $i$, $x_i + y_i \leq k+1$,
\item $\sum_{1 \leq i \leq m} x_i = k+1$,
\item $\sum_{1 \leq i \leq m} y_i = k+1$, and
\item $\surpl{c_i} - \beta x_i - \alpha y_i \leq 0$.
\end{enumerate}
Let $X = \{i \ | \ x_i + y_i = k+1\}$.  Note that $\card{X} \leq 2$.
Let $i, j$ be such that $i \neq j$, $x_i > 0$, $y_j > 0$,
and $X \subseteq \{i,j\}$.  Let one manipulator vote $\cdots > c_i > c_j$.
Subtract 1 from $x_i$ and $y_j$, subtract $\beta$ from
$\surpl{c_i}$, and subtract $\alpha$ from from $\surpl{c_j}$.
It follows from the induction hypothesis that
the remaining $k$ manipulators can vote to make $p$ a winner.

To conclude the proof of Theorem~\ref{theorem:man}, we will now 
show, by dynamic programming, that $M$ is computable
in polynomial time for unary $k, k_\beta, k_\alpha \geq 0$.  
This is easy:
\begin{enumerate}
\item
$M(k,k_\beta, k_\alpha)$ is true if and only if
$k_\beta = k_\alpha = 0$.
\item For $\ell \geq 1$,
$M(k,k_\beta, k_\alpha, s_1, \ldots, s_\ell)$ if and only if
there exist natural numbers $x_\ell$ and $y_\ell$ such that:
\begin{enumerate}
\item
$x_\ell + y_\ell \leq k$,
\item
$x_\ell \leq k_\beta$,
\item
$y_\ell \leq k_\alpha$,
\item
$s_\ell - \beta x_\ell - \alpha y_\ell \leq 0$, and
\item
$M(k,k_\beta - x_\ell, k_\alpha - y_\ell, s_1, \ldots, s_{\ell-1}).$
\end{enumerate}
\end{enumerate}
\end{proof}

\subsection{Proof of Theorem~\ref{theorem:man-p}}

\theoremmanp*

\begin{proof}
Consider an instance of the manipulation problem with $m+1$ candidates.
Let $r, m_1, \ldots, m_r, \alpha_2, \ldots, \alpha_r$ be positive integers such that $m_1 + \cdots + m_r = m$, $0 > -\alpha_2 > \cdots > -\alpha_r$, and
$f(0^{m+1})$ is equivalent to
$(0^{m_1},-\alpha_2^{m_2},\ldots, -\alpha_r^{m_r})$.

Let the instance of the manipulation problem consist of
a set of candidates $C$, a preferred candidate $p$, the surplus $\surpl c$ for each $c\in C$ (i.e., the value $\score{c}-\score{p}$), and a number $k$ of available manipulators.\footnote{Usually, the input to the manipulation problem does not contain the scores for each candidate but a set of voters already having voted, but it's clear how to get the scores from the voters in polynomial time. Note that this means we actually prove a stronger result, where both the current scores
don't have to be realizable and  may be given in binary.} 
Let $C - \{p\} = \{c_1, \ldots, c_m\}$.

It is obvious that all manipulators can without loss of generality be assumed to vote $p$ in the first place. The goal of the manipulators is to ensure that after manipulating, no candidate has a positive surplus.

Note that if there exists a successful manipulation, then for
all $i,j$, $1 \leq i \leq m$, $2 \leq j \leq r$, there exist
nonnegative integers $x_{i,j}$ (the number of times $c_i$ gets $-\alpha_j$
points from a manipulator) such that
\begin{enumerate}
\item $\sum_{2 \leq j \leq r} x_{i,j} \leq k$,
\item $\sum_{1 \leq i \leq m} x_{i,j} = m_j k$, and
\item $\surpl{c_i} - \sum_{j = 2}^r \alpha_j x_{i,j} \leq 0$.
\end{enumerate}

We define the following Boolean predicate $M$.
$M(k, \ell_2, \ldots, \ell_r, s_1, \ldots, s_\ell)$
is true if and only if for all
$i,j$, $1 \leq i \leq \ell$, $2 \leq j \leq r$, there exist
nonnegative integers $x_{i,j}$ such that
\begin{enumerate}
\item $\sum_{2 \leq j \leq r} x_{i,j} \leq k$,
\item $\sum_{1 \leq i \leq \ell} x_{i,j} = \ell_j$, and
\item $s_i - \sum_{j = 2}^r \alpha_j x_{i,j} \leq 0$.
\end{enumerate}

Note that if there is a successful manipulation, then
$M(k,  m_2 k, \ldots, m_r k,
\surpl{c_1}, \ldots , \surpl{c_m})$ is true.
We will now show that the converse is true as well:
If $M(k, m_2 k, \ldots, m_r k,
\surpl{c_1}, \ldots , \surpl{c_m})$ is true then
there exists a successful manipulation.

We will prove this by induction on $k$.
If $k = 0$, then for all $i$, $\surpl{c_i} \leq 0$, so $p$ is a winner.

Now suppose that the claim holds for $k \geq 0$.  We will show that it also
holds for $k + 1$.  So, for 
all $i,j$, $1 \leq i \leq m$, $2 \leq j \leq r$, let
$x_{i,j}$ be natural numbers such that
\begin{enumerate}
\item $\sum_{2 \leq j \leq r} x_{i,j} \leq k+1$,
\item $\sum_{1 \leq i \leq m} x_{i,j} = m_j (k+1)$, and
\item $\surpl{c_i} - \sum_{j = 2}^r \alpha_j x_{i,j} \leq 0$.
\end{enumerate}
For $2 \leq j \leq r$, let $X_j = \{c_i \ | \ x_{i,j} > 0\}$.
Consider the following sequence of sets:
$m_2$ copies of $X_2$ followed by $m_3$ copies of $X_3$ followed by $\dots$ 
followed by $m_r$ copies of $X_r$.  We will show that this sequence fulfills
the ``marriage condition,'' which then, by Hall's Theorem, implies that there
is a ``traversal,'' i.e., a sequence of distinct representatives of
this sequence of sets, which then gives us a vote for one of the
manipulators. For every candidate $c_i$ such that $c_i$ represents 
an occurrence of
set $X_j$ in the sequence,
we know that $x_{i,j} > 0$.  Subtract 1 from each such
$x_{i,j}$ 
and recompute the surpluses. It follows from the induction hypothesis that
the remaining $k$ manipulators can vote to make $p$ a winner.

Suppose for a contradiction that our sequence of sets does not
fulfill the marriage condition. Then there is a subcollection  ${\cal S}$
of $t$ sets such that the union of these $t$ sets, call it $S$,
contains fewer than $t$ elements. 

Note that $\sum_{c_i \in S}\sum_{2 \leq j \leq r} x_{i,j} < t(k+1)$, which
implies that
$\sum \{x_{i,j} \ | \ 1 \leq i \leq m, 2 \leq j \leq r, c_i \in S,
X_j \in {\cal S}\} < t(k+1)$.
But also note that, since $x_{i,j} = 0$ for $c_i \not \in X_j$,
$\sum \{x_{i,j} \ | \ 1 \leq i \leq m, 2 \leq j \leq r, c_i \in S,
X_j \in {\cal S}\} =
\sum \{x_{i,j} \ | \ 1 \leq i \leq m, 2 \leq j \leq r,
X_j \in {\cal S}\} = 
\sum \{m_j (k+1) \ | \ 2 \leq j \leq r, X_j \in {\cal S}\} = 
\sum \{m_j \ | \ 2 \leq j \leq r, X_j \in {\cal S}\} (k+1) \geq t(k+1)$,
which is a contradiction.

To conclude the proof of Theorem~\ref{theorem:man}, we will now 
show, by dynamic programming, that $M$ is computable
in polynomial time for unary
$k, \ell_2, \ldots, \ell_r \geq 0$ and $r$ bound by a fixed constant.
This is easy:
\begin{enumerate}
\item
$M(k, \ell_2, \ldots, \ell_r)$
is true if and only if
$\ell_2 = \cdots = \ell_r = 0$.
\item For $\ell \geq 1$,
$M(k, \ell_2, \ldots, \ell_r, s_1, \ldots, s_\ell)$ if and only if
there exist natural numbers $x_{\ell,j}$, $2 \leq j \leq r$,
such that:
\begin{enumerate}
\item $\sum_{2 \leq j \leq r} x_{\ell,j} \leq k$,
\item $x_{\ell,j} \leq \ell_j$, 
\item $s_\ell - \sum_{j = 2}^r \alpha_j x_{\ell,j} \leq 0$, and
\item
$M(k, \ell_2 - x_{\ell,2}, \ldots, \ell_r - x_{\ell, r},
s_1, \ldots, s_{\ell-1})$.
\end{enumerate}
\end{enumerate}
\end{proof}

\section{Results: Bribery Hardness}\label{appendix:bribery hardness}

In this section, we state the hardness results obtained for bribery. These results mirror the results for CCDV obtained in Sections~\ref{sect:ccdv few coefficients general cases} and~\ref{sect:ccdv many coefficients}. The following Section~\ref{section:dichotomy proof appendix} then shows how to combine the CCDV results from Section~\ref{sect:results:ccdv}, the polynomial-time bribery results from Section~\ref{sect:polynomial time bribery}, and the hardness results presented in Section~\ref{sect:bribery hardness from ccdv hardness} and the results here into a proof for our CCDV/bribery dichotomy result, Theorem~\ref{theorem:ccdv dichotomy}.

The proofs of the results in the current section can be found in Section~\ref{section:bribery hardness proofs}. 

\subsection{Bribery hardness for ``many coefficients''}

We now present the ``bribery analogues'' of the results obtained for CCDV in Section~\ref{sect:ccdv many coefficients}, namely, we prove that bribery is NP-hard if the generator uses ``many'' coefficients. In fact, in the same way as for CCDV, we obtain a result that is somewhat stronger, namely it covers all generators not of the form $(\alpha_1,\alpha_2,\alpha_3,\dots,\alpha_3,\alpha_4,\alpha_5,\alpha_6)$. Among others, this includes generalizations of $k$-veto for $k\ge4$ and $k$-approval for $k\ge3$.

The structure of the proof is slightly different to the corresponding situation with CCDV presented in Section~\ref{sect:ccdv many coefficients}, since a more complex case distinction is required. The reason is that we need ``blocking candidates'' in the front, making the proof of Theorem~\ref{theorem:approval generalization bribery} only work if $\alpha^m_4$ is ``large'' (this results in the need for an additional case, namely Theorem~\ref{theorem:bribery fixed coefficients additional case}), and in order to implement the setup votes, we sometimes rely on the generator providing $3$ different coefficients (for a suitably large number of candidates---note that due to the purity condition, the generator will then use at least $3$ different coefficients for all but a finite number of candidates).

We start with a generalization of $4$-approval. This corresponds to Theorem~\ref{theorem:approval generalization ccdv}, which generalizes $3$-approval for CCDV. As mentioned above, for the bribery case, generators that are ``close to'' $3$-approval in the sense that the ``bad'' positions start with $\alpha^m_4$, form a specific case, which is covered in Theorem~\ref{theorem:bribery fixed coefficients additional case}, and we rely on at least $3$ coefficients. The two-coefficient case is covered in Theorem~\ref{theorem:two coefficients case bribery}.

\begin{restatable}{restatableTheorem}{thmapprovalgeneralizationbribery}\label{theorem:approval generalization bribery}
 Let $f$ be an polynomial-time uniform $\mathbb Q$-generator with $f(m)=(\alpha^m_1,\dots,\alpha^m_m)$ for each $m$, such that $f$ uses at least three coefficients starting from some $m$.
 
 Then there is a polynomial-time computable function $g$ such that
 \begin{itemize}
  \item $g$ takes as input an instance $M$ of $F$-3DM for some arbitrary $F$ with $\card M=n$\footnote{in this case, $n$ is in fact the number of \emph{tuples} in $M$, not the size of a binary representation.} and produces an instance $I$ of $f$-bribery with $m=3n$ candidates,
  \item if $\alpha^m_4>\alpha^m_{2n}$, and $\card M\ge\card X^2+2\card X+2$, then: $M$ is a positive instance of 3DM if and only if $I$ is a positive instance of $f$-bribery.
 \end{itemize}
\end{restatable}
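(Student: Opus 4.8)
The plan is to follow the recipe of Section~\ref{sect:bribery hardness from ccdv hardness}, starting from the CCDV hardness reduction for the same generator (Theorem~\ref{theorem:approval generalization ccdv}, the ``$3$-approval-like'' CCDV case) and hardening it so that the controller's additional manipulation power cannot be exploited. I would reduce from $F$-3DM: given $M\subseteq X\times Y\times Z$ with $\card X=\card Y=\card Z=k$ and $\card M=n$, I build a bribery instance on $m=3n$ candidates consisting of the preferred candidate $p$, the $3k$ element candidates $X\cup Y\cup Z$, a set of blocking candidates, and enough dummy candidates to pad every vote. For each tuple $(x,y,z)\in M$ I introduce a deletable vote that ranks $x,y,z$ in three of the top four positions (padding the remaining top position and the tail with dummies, and keeping $p$ out of the top four); since $f$ is $4$-approval-like ($\alpha^m_4>\alpha^m_{2n}$), removing such a vote strictly lowers the scores of $x,y,z$. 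The bribery budget is $k$, and as always the bribed voters rank $p$ first. A valid bribe is meant to correspond to deleting exactly the $k$ tuple-votes of an exact cover, so that each element loses points exactly once.

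The central difficulty, per the recipe, is neutralizing the manipulation votes, which rank $p$ first and then may fill the three remaining ``good'' positions $2,3,4$ arbitrarily. I would introduce blocking candidates, dual to the $b_\heartsuit$ candidates of Theorem~\ref{theorem:0 dots 0 -gamma -beta -alpha bribery}, whose scores are tuned so that in every successful bribe they must occupy positions $2,3,4$ of each manipulation vote; this is exactly where $\alpha^m_4>\alpha^m_{2n}$ is needed, since only then are these positions worth enough to force the blockers into them. With positions $2,3,4$ used up by blockers, the manipulation votes can never re-boost an element candidate, so the only way to bring the competitors' scores down is through the deletions. To compensate for the $k\alpha^m_1$ points $p$ inevitably gains from the bribed votes, I would lower $p$'s setup score by the same amount, returning the instance to the CCDV regime; here the hypothesis that $f$ uses at least three coefficients (for large $m$) lets me realize the required relative scores with setup votes that let a candidate lose $\alpha^m_1-\alpha^m_i$ points for suitable $i$, and the bound $\card M\ge\card X^2+2\card X+2$ guarantees that $m=3n$ is large enough to supply the needed dummy and blocking candidates and to keep the element candidates confined to the flat low-scoring region. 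All setup votes would rank $p$ (and dummies) in top positions, so the controller has no incentive to bribe them.

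For correctness, the forward direction is routine: from an exact cover $C$, bribing the $k$ corresponding tuple-voters---placing $p$ first and the blockers in positions $2,3,4$---lowers every element exactly once and leaves $p$ a winner. The backward direction is where the real work lies. Given a successful bribe, I would argue in turn that $p$ is ranked first in every bribed vote (so $p$'s gain is fixed at $k\alpha^m_1$), that each blocking candidate must take its designated top position in every manipulation vote (else it overtakes $p$), that therefore the manipulation contributes nothing toward the element candidates, and finally that the surviving scores force the $k$ deleted votes to be tuple-votes forming an exact cover. The main obstacle I anticipate is the same one flagged for Theorem~\ref{theorem:0 dots 0 -gamma -beta -alpha bribery}: ruling out every profitable deviation of the controller---bribing setup votes, deviating within the near-flat low-variation region, or mixing tuple- and non-tuple deletions. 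As there, I expect to resolve it by arranging the setup scores so that every non-$p$ candidate must lose against $p$ an amount that grows linearly in $k$, and hence dominates the bounded slack created by the coefficient variation and by any single deviating vote, making any departure from ``delete an exact cover and block positions $2,3,4$'' strictly suboptimal for the controller.
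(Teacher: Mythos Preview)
Your high-level plan follows the paper's recipe, but the blocking mechanism you describe is inverted, and this is a genuine gap. You propose to force blocking candidates into positions $2,3,4$ of the manipulation votes, but this cannot be done via score constraints: the $b_\heartsuit$-mechanism of Theorem~\ref{theorem:0 dots 0 -gamma -beta -alpha bribery} works by giving the blockers \emph{high} scores so that they must \emph{lose} points, which forces them into \emph{low}-score positions. There is no analogous way to force a candidate into a \emph{high}-score position---the controller has no reason to place a competitor where it gains points. Relatedly, your stated worry (``the manipulation votes can never re-boost an element candidate'') is not the actual threat: the controller would never voluntarily boost an element candidate; the danger is the opposite, namely that the controller can place element candidates in the \emph{lowest} positions of the manipulation votes and thereby depress their scores beyond what the CCDV-style argument accounts for.

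The paper's construction handles this by blocking the \emph{tail} of each manipulation vote, not positions $2,3,4$. Concretely, it introduces $n_b = m - (\ell+3k-1)$ blocking candidates (generally many more than three), whose scores are tuned so that they must fill positions $\ell+3k,\dots,m$ of every manipulation vote; this forces all $3k$ element candidates into the window $[\ell,\ell+3k-1]$. The index $\ell$ is chosen by a pigeonhole argument---and this, not merely ``having enough candidates,'' is where the bound $\card M \geq \card X^2 + 2\card X + 2$ is actually used---so that the total variation $\alpha^m_\ell - \alpha^m_{\ell+3k-1}$ across this window, even after multiplication by $k$, is strictly smaller than the gap $\alpha^m_4 - \alpha^m_{m-3k}$ that an uncovered element would need to close. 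Your proposal alludes to a ``flat low-scoring region'' but never supplies this pigeonhole step, and with only three blockers placed at the top you have no mechanism to confine the element candidates to any particular low-variation segment.
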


Similarly, we now generalize $3$-veto, i.e., state the bribery variant of Theorem~\ref{theorem:veto generalization}:

\begin{restatable}{restatableTheorem}{thmvetogeneralizationbribery}\label{theorem:veto generalization bribery}
 Let $f$ be an polynomial-time uniform $\mathbb Q$-generator with $f(m)=(\alpha^m_1,\dots,\alpha^m_m)$ for each $m$, such that $f$ uses at least three coefficients starting from some $m$.
 
 Then there is a polynomial-time computable function $g$ such that
 \begin{itemize}
  \item $g$ takes as input an instance $M$ of $F$-3DM for some arbitrary $F$ with $\card M=n$\footnote{in this case, $n$ is in fact the number of \emph{tuples} in $M$, not the size of a binary representation.} and produces an instance $I$ of $f$-bribery with $m=3n$ candidates,
  \item if $\alpha^m_{\frac23m}>\alpha^m_{m-4}$, and $\card M\ge\card X^3$, then: $M$ is a positive instance of 3DM if and only if $I$ is a positive instance of $f$-bribery.
 \end{itemize}
\end{restatable}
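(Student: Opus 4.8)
The plan is to follow the bribery-hardness recipe of Section~\ref{sect:bribery hardness from ccdv hardness}, using as a template the fully worked proof of Theorem~\ref{theorem:0 dots 0 -gamma -beta -alpha bribery} (the ``blocker plus $G$-vote'' lift of its CCDV counterpart Theorem~\ref{theorem:0 dots 0 -gamma -beta -alpha CCDV}) and applying it to the $4$-veto-like CCDV reduction underlying Theorem~\ref{theorem:veto generalization}. First I would normalize $f$ at $m=3n$ candidates, by an equivalence-preserving transformation, so that all coefficients are nonnegative integers; by monotonicity the hypothesis $\alpha^m_{\frac23m}>\alpha^m_{m-4}$ then yields five genuinely ``bad'' tail positions $m-4,\dots,m$ whose values lie strictly below the long front ``plateau'', while the ``at least three coefficients'' hypothesis provides three distinct coefficient gaps that I will use to realize arbitrary setup scores (playing the role of $\alpha,\beta,\gamma$ in the template). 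Since every bribed voter may be assumed to rank $p$ first, the manipulation part of a bribe never changes $\score p$, so $p$'s score depends only on the deletion part, and the CCDV bookkeeping for $p$ carries over verbatim.

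The core idea is to neutralize the manipulation freedom so that $f$-bribery collapses onto the CCDV deletion encoding. To this end I would introduce \emph{blocking candidates}, one for each bad tail position, and give them---through the setup votes---scores so large that each must lose the maximum possible points in \emph{every} manipulation vote; exactly as $b_\heartsuit$ was forced into its position in Theorem~\ref{theorem:0 dots 0 -gamma -beta -alpha bribery}, this forces the blockers to occupy all bad positions of every bribed vote, so that no genuine competitor can be vetoed by re-ranking. The five bad positions guaranteed by $\alpha^m_{\frac23m}>\alpha^m_{m-4}$ give the extra room---one position more than its CCDV counterpart Theorem~\ref{theorem:veto generalization} uses---that this blocking needs, mirroring how the approval case Theorem~\ref{theorem:approval generalization bribery} requires one more good position ($\alpha^m_4$) than its CCDV analogue. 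The only remaining freedom is the placement of the relevant candidates $X\cup Y\cup Z$, $S_i$, $S_i'$ and the dummies inside the plateau, which alters scores by at most the small plateau variation; the size condition $\card{M}\ge\card{X}^3$ is what I would use to make each candidate's required loss against $p$ grow linearly in $\card{M}$, dwarfing the total plateau variation (of order $\card{X}^2$ across the $O(\card{X})$ manipulation votes) so that shuffling candidates inside the plateau cannot bridge a score gap.

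With the manipulation neutralized, the deletion part must realize the CCDV encoding, and here I would add the analogue of the $G$-votes: for each genuine encoding vote, $G$ extra copies (with $G$ a constant depending only on $f$) obtained by permuting candidates among the tail positions so as to be strictly \emph{less} attractive to delete, while supplying the blockers with the points they need. I would then run the template bookkeeping---compute $\scorethreedm{x}$ from the encoding- and $G$-votes, set $\scoresetup{x}=\scorefinal{x}-\scorethreedm{x}$, verify that each deficit $\scorelosep{x}$ of a candidate against $p$ is positive and grows in $\card{M}$, and realize these deficits with setup votes that keep $p$ and the relevant candidates ahead of whoever must lose points, so the controller never profits from deleting a setup vote. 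Correctness then splits into the two usual implications: a $3$DM cover gives a successful bribe (delete the encoding votes selected by the cover and re-rank the freed voters with the blockers in the tail), and conversely any successful bribe may be assumed to delete only genuine encoding votes.

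The hard part will be this converse: showing the construction is rigid against \emph{hybrid} actions that pair an unintended deletion (of a $G$-vote or a setup vote) with a compensating re-ranking of the freed manipulators. The veto setting sharpens this over the approval case of Theorem~\ref{theorem:approval generalization bribery}, because deletion and manipulation both act at the bottom of the ballot and five tail positions leave the controller more room to shuffle. I expect to dispatch it by the case analysis of Theorem~\ref{theorem:0 dots 0 -gamma -beta -alpha bribery}: deleting a $G$-vote together with its genuine partner lets some $c\in X\cup Y\cup Z$ or some $S_i$ gain enough additional points to strictly exceed the best attainable $\scoresub{max}{p}$, contradicting that $p$ wins. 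Making every such overshoot \emph{strict}---despite the controller's extra plateau freedom---is precisely where the cubic bound $\card{M}\ge\card{X}^3$ is needed.
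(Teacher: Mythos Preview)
Your proposal takes a genuinely different route from the paper, and the route you sketch has a real gap.

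The paper does \emph{not} lift the ``blocker plus $G$-vote'' machinery of Theorem~\ref{theorem:0 dots 0 -gamma -beta -alpha bribery} (which is tailored to the few-coefficients case where there is a literal plateau of equal coefficients). Instead it mirrors the proof of Theorem~\ref{theorem:approval generalization bribery}: it uses a \emph{pigeonhole argument} to locate a length-$3k$ block of consecutive positions, starting at some index $\ell$, whose internal variation $\alpha^m_\ell-\alpha^m_{\ell+3k-1}$ is provably at most $(\alpha^m_{3k+1}-\alpha^m_{m-3})/x$, where $x$ is the number of such blocks. The size hypothesis $\card M\ge\card X^3$ is used precisely to guarantee $x>k$, which gives the key inequality $(\alpha^m_{3k+1}-\alpha^m_{m-3})>k(\alpha^m_\ell-\alpha^m_{\ell+3k-1})$. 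The paper then introduces $n_b=m-\ell-3k+1$ blocking candidates (not five) whose scores force them to occupy \emph{all} positions past $\ell+3k-1$ in every manipulation vote, so the relevant candidates are confined to the low-variation block. There are no $S_i,S_i'$ candidates and no $G$-votes; the 3DM encoding is the direct one from Theorem~\ref{theorem:veto generalization}.

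The gap in your plan is the unjustified assertion that the ``plateau'' (positions $2$ through $m-5$, once $p$ is first and five blockers are last) has variation of order $\card X^2$. In the many-coefficients regime there is no plateau: the difference $\alpha^m_2-\alpha^m_{m-5}$ can be any polynomial in $m$, and nothing in the hypotheses bounds it in terms of $\card X$. Blocking only five tail positions therefore leaves the controller free to park a targeted candidate $c$ in a very-low-score position (anywhere between $\frac23m$ and $m-5$, say) in every manipulation vote, which can easily erase the deficit you set up for $c$ and defeat the converse direction. Your intended use of $\card M\ge\card X^3$---making required losses ``grow linearly in $\card M$''---does not help, because the coefficients themselves (and hence the controller's gain from misplacing $c$) scale the same way. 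What is actually needed is the pigeonhole step: find a short window whose internal variation is \emph{strictly smaller} than the single-step gap $\alpha^m_{3k+1}-\alpha^m_{m-3}$ even after multiplying by $k$, and then block \emph{everything} past that window, not just five positions.
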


Note that this veto-like case is a bit simpler than the corresponding approval-related case above, since we do not need an additional position for the ``blocking candidate'' at the beginning of the vote. Unlike for the CCDV case, we need one other case with a ``fixed'' number of coefficients:

\begin{restatable}{restatableTheorem}{theorembriberyfixedcoefficientsadditionalcase}\label{theorem:bribery fixed coefficients additional case}
 Let $f$ be the generator $(\alpha_1,\alpha_2,\alpha_3,\alpha_4,\dots,\alpha_4,\alpha_5,\alpha_6,\alpha_7)$ with $\alpha_3>\alpha_4$. Then $f$-bribery is \NP-complete.
\end{restatable}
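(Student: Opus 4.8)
The plan is to apply the general bribery-hardness recipe of Section~\ref{sect:bribery hardness from ccdv hardness} to this ``$3$-approval-like'' generator, reducing from the 3DM variant used throughout the paper. This case needs separate treatment precisely because the plateau value $\alpha_4$ reaches down to position $2n$, so that $\alpha^m_4=\alpha^m_{2n}$ and the position-$4$ blocking slot exploited in Theorem~\ref{theorem:approval generalization bribery} is unavailable; here only positions $1,2,3$ lie strictly above the plateau. Accordingly I would make every bribed voter rank $p$ first, fill the remaining high positions $2$ and $3$ with dummy candidates, let the relevant (3DM-encoding) candidates occupy the constant-value plateau, and force a constant number of blocking candidates into the tail positions worth $\alpha_5,\alpha_6,\alpha_7$ by giving them setup scores so large that any successful bribe must bury them there. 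Because the plateau is constant, shuffling relevant candidates within it costs nothing, so the briber's only real decision collapses to which original votes to replace; associating the triples of the 3DM instance with these votes (so that replacing a triple's vote ``covers'' its three elements) then encodes the matching problem.

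First I would fix the candidate set ($p$, one relevant candidate per element of $X\cup Y\cup Z$, constantly many blocking candidates, and dummies), take a budget of $\card{X}$ bribes as in the corresponding CCAV reduction, and associate one vote with each triple so that replacing it removes the high-position points of that triple's three elements while redirecting points to $p$ and the dummies. Following the recipe, I would then make the base scores worse for $p$ than in the CCDV setting, so that winning forces both the correct replacements and the rigid manipulation just described; this is the analog of the $G$-votes introduced in the proof of Theorem~\ref{theorem:0 dots 0 -gamma -beta -alpha bribery}. The crucial design constraint is that every setup vote must keep $p$ in one of the top positions, so the briber never profits from replacing it, and realizing the exact relative scores demanded by the encoding using only such bribery-safe votes is where I rely on the generator supplying a third coefficient value: the spread $\alpha_3-\alpha_4$ together with a distinct tail value lets me make a chosen candidate lose a controllable amount against $p$ in a single safe vote, and this same flexibility is what absorbs the asymmetry among the distinct high values $\alpha_1,\alpha_2,\alpha_3$ so that each element of a triple is covered by the same effective amount. (The degenerate two-coefficient subcase, in which $f$ is equivalent to $3$-approval, is handled by the separate two-coefficient result and need not be treated here.)

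The main obstacle I expect is the backward direction: showing that every successful bribe corresponds to a genuine 3DM cover. This means ruling out ``cheating'' bribes of three kinds---placing a blocking candidate outside the tail (excluded by the blocking candidates' oversized setup scores), replacing a setup vote (excluded because those votes keep $p$ on top, so replacing one never helps), and using the extra manipulation freedom of bribery to defeat the intended plateau/tail placement. As in Theorem~\ref{theorem:0 dots 0 -gamma -beta -alpha bribery}, the delicate bookkeeping is to verify that any deviation either leaves some relevant candidate ahead of $p$ or wastes budget, so that the number of covered elements must equal $3\card{X}$ exactly; making these counts line up while position $4$ cannot serve as a blocking slot is the technical heart of the argument.
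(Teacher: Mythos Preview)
Your proposal is correct and follows essentially the same approach as the paper: a direct 3DM reduction with one vote per triple placing $x,y,z$ in positions $1,2,3$, three blocking candidates $b_1,b_2,b_3$ whose scores force them into the tail positions of every manipulation vote, and the constant plateau ensuring the relevant candidates cannot shed points via manipulation, so that the deleted votes must form a cover.

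Two minor differences worth noting. First, you anticipate needing a $G$-vote device as in Theorem~\ref{theorem:0 dots 0 -gamma -beta -alpha bribery}, but the paper's proof is simpler: because every candidate $c\in X\cup Y\cup Z$ must lose exactly $\alpha_{r(c)}$ points and the blocking candidates absorb all low positions in the manipulation votes, the tightness of the point budget alone forces the deleted votes to be 3DM votes---no auxiliary vote families are needed. Second, for the setup votes the paper does not insist that $p$ be ranked on top; instead it adds enough dummy candidates so that the construction-lemma votes have at most two relevant candidates among the top three positions, and then a capacity argument (the relevant candidates must jointly lose exactly the amount that $\card X$ 3DM deletions provide) rules out deleting any non-3DM vote. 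Your ``$p$ on top'' constraint would also work, but the paper's route is slightly more economical.
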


In the above Theorems~\ref{theorem:approval generalization bribery} and~\ref{theorem:veto generalization bribery}, we used the fact that the generator uses at least $3$ coefficients, which allowed us to easily implement the required setup votes. Hence we now need to treat the case that $f$ uses only $2$ distinct coefficients, without loss of generality these coefficients are then $0$ and $1$.

\begin{restatable}{restatableTheorem}{theoremtwocoefficientscasebribery}\label{theorem:two coefficients case bribery}
 Let $f$ be a generator such that $\alpha^m_3>\alpha^m_{m-4}$ for some $m$, and such that $f$ only uses $2$ coefficients. Then $f$-bribery is \NP-complete.
\end{restatable}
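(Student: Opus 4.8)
The plan is to reduce from the restricted version of three-dimensional matching (3DM) used throughout the paper, following the adaptive strategy of the ``many coefficients'' proofs (Theorem~\ref{theorem:many different coefficients} and its bribery analogues, Theorems~\ref{theorem:approval generalization bribery} and~\ref{theorem:veto generalization bribery}), but working directly with only two coefficients. First I would normalize: since $f$ uses exactly two coefficients, an equivalence-preserving transformation lets me assume they are $0$ and $1$, so $f(m)$ is $k_m$-approval for a nondecreasing function $k_m$. The hypothesis $\alpha^m_3>\alpha^m_{m-4}$ says that $f(m)$ has at least three $1$'s and at least five $0$'s; by purity this persists for all $m'\ge m$, so for all large $m$ we have $3\le k_m\le m-5$. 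This is exactly the regime excluded from the polynomial cases ($1$- and $2$-approval, and $1$-, $2$-, $3$-veto) of Theorem~\ref{theorem:ccdv dichotomy}.

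Given a 3DM instance $M$ with $n=\card{M}$ tuples, I would build a bribery instance on $m=3n$ candidates and compute $k=k_{3n}$. Because $k$ can sit anywhere in $[3,3n-5]$, the reduction branches on the actual value of $k$: if $k<2n$ (so $\alpha^m_{2n}=0$ and the vote has a $3$-approval-like block of high positions), I apply an approval-style gadget; if $k\ge2n$ (so $\alpha^m_{2n}=1$ and the few $0$-positions make the vote veto-like), I apply a veto-style gadget. In both branches the candidate set is $\set{p}$ together with the 3DM element candidates $X\cup Y\cup Z$ and tuple candidates $S_i,S_i'$, plus dummy and ``blocking'' candidates, and the intended bribe corresponds to selecting the votes of a matching. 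Since every bribed voter can be assumed to approve $p$ (so $p$ gains one point per bribe) and to fill its remaining approved slots with harmless dummies, the meaningful effect of a bribe is the same as deleting a vote while $p$ gains a point for free; this lets me transplant the combinatorial core of the corresponding CCDV reduction and of the three-coefficient bribery reductions. The blocking candidates, forced by their scores into the $0$-positions (approval-like case) or into the high positions (veto-like case), pin down the structure of each bribed vote so that the controller cannot exploit the extra re-approval freedom to make $p$ a winner cheaply.

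The crux, and the reason the two-coefficient case is separated out, is the construction of the setup votes realizing the required relative score profile. With only the values $0$ and $1$ available per voter, I cannot use a third coefficient to fine-tune scores or to isolate blocking candidates, as was done in Theorems~\ref{theorem:approval generalization bribery} and~\ref{theorem:veto generalization bribery}. I would instead realize each candidate's target deficit relative to $p$ by adding the right number of approval votes that approve that candidate but not $p$ (and dually for $p$). The delicate point for bribery is that such votes---which do hand an opponent points over $p$---are exactly the votes a controller would like to bribe. I would control this by making the scores ``worse'' for $p$ than in the CCDV setting (as in the recipe of Section~\ref{sect:bribery hardness from ccdv hardness}) and by padding with $p$-approving setup votes whose remaining slots are spent on dummies, so that the arithmetic forces any budget-respecting successful bribe to spend its bribes on the 3DM-encoding votes rather than on setup votes.

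Finally I would prove correctness in both branches. For the forward direction, a 3DM matching gives a bribe of exactly the matching votes that drives every element and tuple candidate to at most $p$'s post-bribe score. For the backward direction, any successful bribe can be normalized---push all free re-approvals onto dummies, and replace any bribed setup vote by a no-worse bribe of an encoding vote---to one that bribes only encoding votes, from which a matching can be read off. The step I expect to be the main obstacle is exactly this setup-vote construction together with the normalization argument: with a single nonzero coefficient, simultaneously hitting the target relative scores \emph{exactly} and guaranteeing that no setup vote is ever worth bribing is delicate, and I expect it to be hardest in the veto-like branch, where the blocking candidates must occupy the scarce high positions rather than the abundant low ones.
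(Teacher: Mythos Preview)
Your high-level plan---normalize to $\{0,1\}$-coefficients and then run an adaptive 3DM reduction that branches into an ``approval-like'' and a ``veto-like'' construction depending on where the $0/1$ boundary sits---is exactly the paper's strategy. Where you diverge is in the machinery you import. You propose the full $S_i,S_i'$ tuple-candidate apparatus and the blocking-candidate scheme from Theorems~\ref{theorem:approval generalization bribery} and~\ref{theorem:veto generalization bribery}, and you flag the setup-vote construction (especially in the veto-like branch) as the expected obstacle. The paper's actual proof is considerably lighter and avoids that obstacle entirely.

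Concretely, the paper first disposes of the degenerate cases where $a(m)$ or $b(m)$ is eventually bounded (so $f$ is ultimately some fixed $k$-approval or $k$-veto) by invoking Lin's results directly; you do not separate these out. For the remaining cases it uses \emph{no} tuple candidates at all: in the approval-like branch ($b(m)\ge 3k+1$, so the $0$-segment can hold all of $X\cup Y\cup Z$ and $p$) the 3DM votes are simply $b_1>x>y>z>\cdots>p$ with one blocking candidate $b_1$ and scores $\score{b_1}=2k$, $\score{c}=k+1$; in the veto-like branch ($b(m)<3k+1$) the 3DM votes put $x,y,z,p$ at the bottom, and the first $b(m)$ dummies are given score $2k$ so that they are forced into the $0$-positions of every manipulation vote. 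In both branches the setup votes are trivial to build: you raise opponents' scores by repeating 3DM votes (the controller can never profit from deleting a second copy once a cover is required), and everything else is done with votes that approve $p$ first. So the step you anticipated as the main difficulty---realizing the target profile with only one nonzero coefficient while keeping setup votes unattractive---is handled by this simple ``repeat a 3DM vote / put $p$ first'' trick rather than by any delicate accounting. Your route with $S_i,S_i'$ would likely go through, but it is strictly more work than needed here.
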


Note that Theorem~\ref{theorem:two coefficients case bribery} does not only cover generators of the form $k$-veto or $k$-approval, but also systems we might call $\frac1{10}$-approval or $\sqrt.$-veto

The results from this section now allow to prove that, in fact, if we have ``many'' different coefficients, i.e., a generator not of the form $(\alpha_1,\alpha_2,\alpha_3,\dots,\alpha_3,\alpha_4,\alpha_5,\alpha_6)$, then the bribery problem is \NP-complete.

\begin{restatable}{restatableCorollary}{corollarymanycoefficientsbriberyhardness}\label{corollary:many coefficients bribery hardness}
 Let $f$ be a polynomial-time uniform generator such that $\alpha^m_3>\alpha^m_{m-3}$ for some $m$. Then $f$-bribery is \NP-complete.
\end{restatable}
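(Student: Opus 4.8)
The plan is to treat NP-membership as immediate and to derive NP-hardness as a corollary of the four hardness results of this section, through a case analysis mirroring the CCDV argument behind Theorem~\ref{theorem:many different coefficients}. Membership in \NP\ follows because $f$ is polynomial-time uniform, as noted in Section~\ref{sect:prelim}. For hardness, the first move is to exploit purity: if $\alpha^m_3 > \alpha^m_{m-3}$ holds for one $m$, then it holds for every $m' \geq m$ (the same observation used in Section~\ref{sect:ccdv many coefficients}). This lets me pass to arbitrarily large $m$, and in particular to assume $m = 3n$ is a multiple of $3$, which is the shape the reductions $g$ of Theorems~\ref{theorem:approval generalization bribery} and~\ref{theorem:veto generalization bribery} expect (they produce $m = 3n$ candidates from an $F$-3DM instance $M$ with $\card M = n$). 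Since $F$-3DM is NP-hard even on large instances, it then suffices to show that, for every admissible scale, one of the stated hardness theorems applies; the size side-conditions such as $\card M \geq \card X^2 + 2\card X + 2$ are met for free by restricting to large instances.

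The core is the case analysis. First I split on how many distinct coefficients $f$ uses for large $m$. If $f$ uses at least three, I examine the chain $\alpha^m_3 \geq \alpha^m_4 \geq \alpha^m_{\frac23 m} \geq \alpha^m_{m-4} \geq \alpha^m_{m-3}$, whose endpoints differ strictly by hypothesis, so at least one link is strict. A strict drop $\alpha^m_4 > \alpha^m_{\frac23 m}$ is the ``$3$-approval-like'' situation and invokes Theorem~\ref{theorem:approval generalization bribery}; a strict drop $\alpha^m_{\frac23 m} > \alpha^m_{m-4}$ is the ``veto-like'' situation and invokes Theorem~\ref{theorem:veto generalization bribery}; and the boundary drop $\alpha^m_3 > \alpha^m_4$ -- where the approval reduction fails because $\alpha^m_4$ is not ``large'' enough to host a blocking candidate -- is exactly the case carved out by Theorem~\ref{theorem:bribery fixed coefficients additional case}. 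If instead $f$ uses only two coefficients (without loss of generality $0$ and $1$), then either the number of $1$'s or the number of $0$'s stays bounded, so $f$ is ultimately fixed $k$-approval or $k$-veto and hardness follows from Lin's classification~\cite{lin:thesis:elections}, or both grow without bound, in which case some large scale $m'$ satisfies $\alpha^{m'}_3 > \alpha^{m'}_{m'-4}$ and Theorem~\ref{theorem:two coefficients case bribery} applies.

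I expect the main obstacle to be verifying that this case split is genuinely exhaustive. Compared with the clean two-way split for CCDV (where $\alpha^m_3 > \alpha^m_{m-3}$ immediately forces $\alpha^m_3 > \alpha^m_{\frac23 m}$ or $\alpha^m_{\frac23 m} > \alpha^m_{m-3}$), each bribery reduction needs one extra ``buffer'' position -- position $4$ rather than $3$ at the front, and $m-4$ rather than $m-3$ at the back -- to place blocking candidates. As a result the guaranteed strict drop can sit precisely at a boundary that the approval and veto theorems do not reach: at $3\to 4$, or, more delicately, within the last four positions $m-3,\dots,m$, the ``$4$-veto-like'' boundary. This is exactly why the additional case of Theorem~\ref{theorem:bribery fixed coefficients additional case}, the two-coefficient Theorem~\ref{theorem:two coefficients case bribery}, and the appeal to~\cite{lin:thesis:elections} are all needed. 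The crux of the proof is therefore the bookkeeping that, for each admissible scale and each possible location of the strict drop, pins down a single theorem covering $f$ -- possibly after passing to a larger multiple of $3$ so the relevant positions separate, and falling back on Lin's $k$-approval/$k$-veto classification for the residual fixed-threshold cases. Once every generator with $\alpha^m_3 > \alpha^m_{m-3}$ is routed to a covered case, the corollary follows.
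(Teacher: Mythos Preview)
Your proposal follows essentially the same route as the paper's proof: isolate the two-coefficient case (Theorem~\ref{theorem:two coefficients case bribery}), and for generators with at least three coefficients perform an adaptive case split on whether $\alpha^m_4 > \alpha^m_{\frac23 m}$, $\alpha^m_{\frac23 m} > \alpha^m_{m-4}$, or (otherwise) $\alpha^m_4 = \alpha^m_{m-4}$, invoking Theorem~\ref{theorem:approval generalization bribery}, Theorem~\ref{theorem:veto generalization bribery}, or Theorem~\ref{theorem:bribery fixed coefficients additional case} respectively. The only substantive differences are that the paper pads the 3DM instance via Proposition~\ref{prop:f3dm np complete} to guarantee the size side-conditions (rather than merely ``restricting to large instances''), and in the two-coefficient branch simply cites Theorem~\ref{theorem:two coefficients case bribery} without your additional split into bounded versus unbounded thresholds and the Lin fallback.
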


\subsection{Bribery hardness for ``few coefficients''}

The following result states that the hardness results for CCDV obtained in Theorems~\ref{theorem:alpha 0 dots 0 alpha < beta ccdv} and~\ref{theorem:alpha 0 dots 0 beta < alpha ccdv} as well as Corollary~\ref{corollary:alpha 0 dots 0 beta ccdv dichotomy} carry over to bribery:

\begin{restatable}{restatableCorollary}{corollaryalphazerodotszerobetabriberydichotomy}
 \label{corollary:alpha 0 dots 0 beta bribery dichotomy}
 Let $f=(\alpha_1,\alpha_2,\dots,\alpha_2,\alpha_3)$ be a generator with $\alpha_1>\alpha_2>\alpha_3$. If $f$ is equivalent to $(2,1,\dots,1,0)$, then $f$-bribery can be solved in polynomial time, otherwise, $f$-bribery is NP-complete.
\end{restatable}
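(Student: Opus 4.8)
The plan is to first put $f$ into a canonical two-parameter form and then split into the easy and hard regimes. Using the equivalence criterion recalled in Section~\ref{sect:prelim} (two scoring vectors are equivalent iff they agree up to a positive affine transformation), I would subtract $\alpha_2$ from every coefficient and clear denominators to see that $f$ is equivalent to $(\alpha,0,\dots,0,-\beta)$, where $\alpha=\alpha_1-\alpha_2>0$ and $\beta=\alpha_2-\alpha_3>0$; dividing by $\gcd(\alpha,\beta)$ I may assume $\alpha,\beta\ge 1$ are coprime. The same affine criterion shows that $f$ is equivalent to $(2,1,\dots,1,0)$ exactly when the coefficients are in arithmetic progression, i.e.\ when $\alpha_1-\alpha_2=\alpha_2-\alpha_3$, which is precisely the case $\alpha=\beta$. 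Since $f$ is polynomial-time uniform, $f$-bribery is in $\NP$, so it remains to prove membership in $\PTIME$ when $\alpha=\beta$ and $\NP$-hardness when $\alpha\neq\beta$.

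For the easy case $\alpha=\beta$, coprimality forces $\alpha=\beta=1$, so $f$ is equivalent to $(1,0,\dots,0,-1)$ (equivalently $(2,1,\dots,1,0)$), and $f$-bribery is in $\PTIME$ by Theorem~\ref{theorem:100-1 bribery in ptime}, using that equivalent generators have the same bribery complexity.

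For the hard case $\alpha\neq\beta$ the $\NP$-completeness of $f$-CCDV is already in hand: Theorem~\ref{theorem:alpha 0 dots 0 alpha < beta ccdv} covers $\alpha<\beta$ and Theorem~\ref{theorem:alpha 0 dots 0 beta < alpha ccdv} covers $\alpha>\beta$. I would transfer each of these to bribery by applying the recipe of Section~\ref{sect:bribery hardness from ccdv hardness} to the corresponding CCDV construction. The structural feature I would exploit is that a bribed voter may be assumed to rank $p$ first and may place a single candidate last, contributing exactly $+\alpha$ to $\score p$ and $-\beta$ to one other candidate, while the long all-$0$ middle segment is a low-variation subsequence into which every ``relevant'' candidate can be parked without affecting its score. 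To neutralize the extra freedom the manipulation step grants the briber (relative to pure deletion), I would introduce a \emph{blocking} candidate $b$ whose score is set so high that in any successful bribe $b$ must occupy the last position of all $k$ bribed votes, since otherwise $b$ defeats $p$. With $b$ forced into the final slot, each bribe degenerates into a deletion together with the fixed offsets $+\alpha$ for $p$ and $-\beta$ for $b$; after shifting the initial scores to absorb the aggregate $+k\alpha$ and $-k\beta$, a budget-$k$ bribe would succeed if and only if the matching budget-$k$ deletion succeeds in the CCDV instance of Theorem~\ref{theorem:alpha 0 dots 0 alpha < beta ccdv} or~\ref{theorem:alpha 0 dots 0 beta < alpha ccdv}. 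The setup votes realizing the remaining relative scores, namely the logarithmically-many-step exponentially-growing chain when $\alpha<\beta$ and the simpler linear process when $\alpha>\beta$, would be reused from those proofs, always keeping $p$ in a top position so that deleting them is never attractive to the controller.

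The hard part will be making the blocking candidate genuinely forced while simultaneously keeping the budget tight and the relative scores exactly those of the CCDV instance. I would need to guarantee both that deletions cannot lower $b$'s score (so the only way to prevent $b$ from beating $p$ is to spend the full manipulation budget placing $b$ last) and that placing any candidate other than $b$ last is never advantageous to the controller, so that no ``spare'' last positions can be used to reduce opponents' scores faster than deletion alone. Enforcing these constraints on top of the delicate chain and linear setup-vote constructions inherited from the CCDV proofs is the technical heart of the argument, and is exactly the point where, as the recipe warns, the bribery setup votes must look more attractive to delete yet remain undeletable, making the bribery case more involved than its CCDV counterpart.
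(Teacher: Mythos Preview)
Your approach is essentially the same as the paper's: normalize to $(\alpha,0,\dots,0,-\beta)$, handle the easy case $\alpha=\beta$ via Theorem~\ref{theorem:100-1 bribery in ptime}, and for $\alpha\neq\beta$ lift the CCDV hardness proofs of Theorems~\ref{theorem:alpha 0 dots 0 alpha < beta ccdv} and~\ref{theorem:alpha 0 dots 0 beta < alpha ccdv} to bribery by introducing blocking candidates that must absorb all $k$ last positions in the manipulation votes, with the chain-style setup votes reused verbatim.

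The one substantive difference is that you propose a \emph{single} blocking candidate $b$ who must be vetoed $k$ times, whereas the paper introduces $k$ distinct candidates $b_1,\dots,b_k$, each needing to lose exactly $\beta$ points and hence requiring exactly one veto. Their scores are realized using the very same chain mechanism as in the CCDV proofs, so the chain-reaction argument already on the shelf shows directly that no $b_i$ can shed its surplus via deletion; a pigeonhole count ($k$ vetoes, $k$ blockers, one veto needed per blocker) then forces the manipulation votes to be exactly $p>\cdots>b_1,\dots,p>\cdots>b_k$. This multi-blocker version sidesteps precisely the concern you flag as ``the hard part'': with each $b_i$ needing only $\beta$ points removed, a single chain-protected setup vote per $b_i$ suffices, and there is no need to rule out hybrid strategies in which the briber vetoes $b$ only $k-1$ times and spends the remaining veto elsewhere while trying to recover the missing $\beta$ via a deletion. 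Your single-$b$ variant can also be made to work (give $b$ its surplus via several independently chain-protected votes and argue that touching any of them blows the budget), but the paper's decomposition into $k$ blockers makes the counting immediate and avoids the extra bookkeeping you anticipate.
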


We now state the bribery version of Theorem~\ref{theorem:0 minus alpha5 minus alpha1 dots minus alpha1 CCDV hardness}:

\begin{restatable}{restatableTheorem}{theoremzerominusalphafiveminusalphaonedotsminusalphaonebriberyhardness}
 \label{theorem:0 minus alpha5 minus alpha1 dots minus alpha1 bribery hardness}
 Let $f=(\alpha_1,\alpha_2,\alpha_3,\dots,\alpha_3)$ with $\alpha_1>\alpha_2>\alpha_3$. Then $f$-bribery is \NP-complete.
\end{restatable}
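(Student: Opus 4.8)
The plan is to establish NP-hardness (membership in \NP{} being immediate from the remark following the problem definitions) by applying the bribery-hardness recipe of Section~\ref{sect:bribery hardness from ccdv hardness} to the CCDV hardness proof of Theorem~\ref{theorem:0 minus alpha5 minus alpha1 dots minus alpha1 CCDV hardness}, in close analogy to how Theorem~\ref{theorem:0 dots 0 -gamma -beta -alpha bribery} was obtained from Theorem~\ref{theorem:0 dots 0 -gamma -beta -alpha CCDV}. First I would normalize: using equivalence-preserving transformations, write $f$ as $(a,b,0,\dots,0)$ with $a>b>0$ (subtract $\alpha_3$ from every coefficient). In this form every position from the third onward carries score $0$, so the only ``valuable'' positions are the first (worth $a$) and the second (worth $b$). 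This is precisely the ``low-variation'' subsequence the recipe asks for---here it has zero variation---and it is what lets us control the manipulation part of a bribe almost completely.

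Second, I would pin down the manipulation action. Since a bribed voter may be assumed to rank $p$ first, she contributes $a$ points to $p$, $b$ points to whoever she ranks second, and $0$ to all remaining candidates. To prevent the controller from using the second position either to boost a 3DM-relevant candidate or to turn a dummy into a winner, I would introduce a single blocking candidate $b_2$ whose base score is chosen so that $b_2$ exactly ties $p$ precisely when it receives the $b$-points from all $k$ bribed votes while $p$ simultaneously attains its largest possible score from the deletions. Then in any successful bribe every bribed vote must read $p > b_2 > (\text{remaining candidates in the $0$-segment})$, so the manipulation contributes a fixed amount to $p$ and to $b_2$ and nothing to the relevant candidates. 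This reduces the manipulation component to a no-op on the relevant candidates and brings the analysis back to the deletion pattern of the CCDV proof, at the cost of $p$ gaining a fixed boost from the bribes.

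Third, I would compensate for this fixed gain and for the controller's extra freedom. Because the boost makes $p$ harder to beat, the setup scores must be made correspondingly ``worse'' for $p$: I would recompute $\scorethreedm{\cdot}$, $\scoresetup{\cdot}$ and the deficits $\scorelosep{\cdot}$ so that every relevant candidate still has to lose an amount that grows without bound in $k$, and realize these via setup votes that place $p$ in a top position (and hence are never deleted). As in the proof of Theorem~\ref{theorem:0 dots 0 -gamma -beta -alpha bribery}, I would also add, for each genuine 3DM vote, $G$ copies obtained by swapping the two valuable positions; these ``$G$-votes'' let $b_2$ obtain its required relative points from votes that are strictly less attractive to delete than the genuine 3DM votes, forcing the controller's deletions onto the genuine votes. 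A short exchange argument then shows that deleting a $G$-vote together with its genuine twin would push some candidate strictly above $p$'s maximum attainable score, so every successful bribe can be taken to delete only genuine 3DM votes.

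The main obstacle I anticipate is exactly the one flagged in the recipe, already present in the CCDV case: simultaneously (i) constructing setup votes that realize the exact required \emph{relative} scores among all candidates, and (ii) guaranteeing that none of these votes is ever profitably deleted, now against the stronger adversary who both chooses which votes to delete and performs an arbitrary manipulation with the freed-up bribes. Concretely, the delicate step is the normalization lemma asserting that any successful bribe can be converted to the canonical form ``delete only genuine 3DM votes and place $b_2$ second in every bribed vote''; proving it requires carefully ruling out, via the blocking-candidate and $G$-vote score inequalities, every way the controller might trade a deletion of a genuine vote for a deletion of a setup or $G$-vote. Once this canonical form is secured, correctness of the reduction follows from the already-established equivalence with the underlying CCDV/3DM instance, and the proof concludes.
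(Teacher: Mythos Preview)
Your plan works too hard and, in one place, does not work as written. After normalizing to $f=(a,b,0,\dots,0)$ with $a>b>0$, the only nonzero positions are the \emph{top} two; hence in any bribed vote the controller puts $p$ first, and putting any relevant candidate second would \emph{add} $b$ points to that candidate, which can only hurt $p$. So without loss of generality the second slot goes to an existing dummy, and the entire manipulation component is a fixed shift: $p$ gains exactly $5k\cdot a$ relative to every relevant candidate, and nobody else is affected. The paper's proof is then a one-line modification of the CCDV construction: add $5k\cdot a$ to the setup score of every relevant candidate using the very same setup votes already built in Theorem~\ref{theorem:0 minus alpha5 minus alpha1 dots minus alpha1 CCDV hardness}, and the tight $10k$-position counting argument from that proof carries over verbatim. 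No blocking candidate and no $G$-votes are needed.

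Your $G$-vote mechanism, as stated, does not transfer from Theorem~\ref{theorem:0 dots 0 -gamma -beta -alpha bribery}. There the 3DM votes placed $p$ in the $-\beta$ slot, so swapping to the $-\gamma$ slot produced votes that (i) were strictly less attractive to delete and (ii) lowered $p$'s score, thereby raising the blocking candidates' \emph{relative} score. Here $p$ sits in the $0$-segment of every 3DM vote; swapping the two valuable positions leaves $p$'s contribution unchanged and leaves your $b_2$ in the $0$-segment as well, so the $G$-votes neither give $b_2$ any points over $p$ nor are any less attractive to delete than the originals. You would need a different device to endow $b_2$ with a large lead over $p$ via undeletable votes---but since the blocking candidate is unnecessary in the first place, the simplest fix is to drop $b_2$ and the $G$-votes entirely and follow the paper's direct argument.
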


The next result is the bribery analog of Theorem~\ref{theorem:ccdv hardness alpha1 alpha2 0 dots 0 minus alpha4 minus alpha5 minus alpha6}:

\begin{restatable}{restatableTheorem}{theorembriberyhardnessalphaonealphatwozerodotszerominusalphafourminusalphafiveminusalphasix}
 \label{theorem:bribery hardness alpha1 alpha2 0 dots 0 minus alpha4 minus alpha5 minus alpha6}
 Let $f=(\alpha_1,\alpha_2,\alpha_3,\dots,\alpha_3,\alpha_4,\alpha_5,\alpha_6)$ with $\alpha_1>\alpha_3>\alpha_5$. Then $f$-bribery is NP-complete.
\end{restatable}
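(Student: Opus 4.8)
The plan is to reduce from 3DM, reusing the CCDV reduction for this same generator (case~1 of Theorem~\ref{theorem:collection of few coefficients CCDV hardness cases}, which establishes that $f$-CCDV is \NP-complete) and applying the bribery recipe of Section~\ref{sect:bribery hardness from ccdv hardness}, exactly as was done for the pure-veto case in Theorem~\ref{theorem:0 dots 0 -gamma -beta -alpha bribery}. Membership in \NP\ is immediate, so only hardness requires work. First I would normalize $f$ by an equivalence-preserving affine transformation that centers the long uniform block at $0$, writing $f$ as $(a_1,a_2,0,\dots,0,-b_4,-b_5,-b_6)$ with $a_1>0$ (from $\alpha_1>\alpha_3$), $a_2\ge0$, and $0\le b_4\le b_5\le b_6$ with $b_5>0$ (from $\alpha_3>\alpha_5$). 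The purpose of this normalization is that the positions $3,\dots,m-3$ form the ``low-variation'' block of the recipe in its sharpest form: a candidate placed anywhere inside the block receives exactly $0$ points, so the manipulator's placement of a candidate within the block is completely non-discriminating.

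Next I would pin the 3DM candidates $X\cup Y\cup Z\cup\set{S_i,S_i'}$ into this block in every bribed vote. Since we may assume every bribed voter ranks $p$ first, position~$1$ is consumed by $p$ (which strictly helps $p$ as $a_1>0$), and I fill position~$2$ with a dummy candidate, which the briber always (weakly) prefers since ranking any competitor there would only grant it $a_2\ge0$ points. For the three low positions I introduce three blocking candidates $b_{(4)},b_{(5)},b_{(6)}$ and, mirroring the $b_\heartsuit$ construction of Theorem~\ref{theorem:0 dots 0 -gamma -beta -alpha bribery}, give them graduated scores so that $b_{(j)}$ ties $p$ precisely when it occupies the $-b_j$-position of all bribed votes while $p$ receives its intended deletion bonus, whereas moving any blocker into the $0$-block even once (thereby regaining $b_j\ge b_5>0$ points) makes it beat $p$. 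This forces all three low positions of every bribed vote to be filled by blockers, so the 3DM candidates are stranded in the $0$-block. The manipulation part of each bribe therefore contributes a fixed, instance-independent amount to every candidate; in particular it lowers each 3DM candidate by exactly $a_1$ relative to $p$ per bribe, and since the budget is spent in full this is a uniform downward shift that can be absorbed into the setup scores. What remains is precisely the inherited CCDV deletion gadget.

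To realize these scores---and, per the recipe's second difficulty, to ``worsen'' $p$ so that the extra $a_1$ per bribe does not trivialize the instance---I would, as in the exemplary proof, attach to each inherited 3DM vote a block of $G$ additional copies with the low positions permuted. These $G$-votes grant the blockers their required points relative to $p$ but are strictly less attractive to delete than the genuine 3DM votes, so an optimal bribery deletes the genuine votes first; all residual score targets are then met with ordinary setup votes that rank $p$ in a top position and hence are never deleted. The ``forward'' direction is then routine: from a 3DM cover, delete the corresponding genuine votes and bribe them to rank $p$ first, a dummy second, the blockers in the low positions, and everything else in the $0$-block, and verify that $p$ ties for the win. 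Degenerate coefficient patterns ($b_4=b_5$ or $b_5=b_6$) are handled by the same case analysis used for the $\beta=\gamma$ situation at the end of Theorem~\ref{theorem:0 dots 0 -gamma -beta -alpha bribery}.

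The main obstacle is the ``backward'' direction: ruling out that the briber exploits the combination of deletion and manipulation to win without a genuine cover. Concretely, I expect to establish, by a case distinction on which type of vote is deleted exactly as at the end of Theorem~\ref{theorem:0 dots 0 -gamma -beta -alpha bribery}, that deleting both a genuine 3DM vote and its corresponding $G$-copy (or any ``unintended'' vote) drives some competitor $c\in X\cup Y\cup Z$ or some $S_i$ above the ceiling $\scoresub{max}{p}$ on $p$'s achievable score, a contradiction; hence every winning bribery deletes exactly the votes of a 3DM cover, which closes the reduction. This step is more delicate here than in the pure-veto case, because deletions now also move candidates through the positive front positions, so both the bound $\scoresub{max}{p}$ and the induced competitor gains must be recomputed with the front contributions included. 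Verifying that no such synergy survives is where essentially all of the technical work lies.
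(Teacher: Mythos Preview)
Your high-level strategy---apply the recipe of Section~\ref{sect:bribery hardness from ccdv hardness} to the CCDV reduction---is the right one, but the details show you have mixed up two different CCDV reductions. The reduction in case~1 of Theorem~\ref{theorem:collection of few coefficients CCDV hardness cases} has candidate set $X\cup Y\cup Z\cup\set{p}\cup D$ and 3DM votes of the form $z>d_1>\RESTcandidates>d_2>y>x$; there are no $S_i,S_i'$ candidates here. Those belong to the reduction of Theorem~\ref{theorem:0 dots 0 -gamma -beta -alpha CCDV}, which is for a different generator.

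More seriously, the $G$-vote mechanism you propose does not transfer. In Theorem~\ref{theorem:0 dots 0 -gamma -beta -alpha bribery} the 3DM votes place $p$ in the $-\beta$ position, so swapping the $-\beta$ and $-\gamma$ positions yields a copy from which deleting gives $p$ strictly fewer points (when $\beta>\gamma$); that asymmetry is what makes the $G$-copies less attractive. In the present reduction, $p$ sits in the $0$-block of every 3DM vote, so permuting the low positions leaves $p$'s gain from deletion unchanged and the $G$-copies are exactly as attractive to delete as the originals. Nor do the $G$-votes ``grant the blockers their required points'': the blockers are not present in the 3DM votes at all, so their scores must be supplied some other way.

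The paper's proof avoids both issues by a different mechanism. It first converts the 3DM instance into an $F$-3DM instance (Proposition~\ref{prop:f3dm np complete}) so that each $z\in Z$---which sits in the \emph{first} position of the 3DM votes and thus needs its score \emph{lowered}---already has enough points purely from the 3DM votes; the surplus is then removed by $p$-first setup votes that are never deleted. Only two blockers $B_5,B_6$ are introduced, with their scores realized via a short chain $B_i>\dots>d_1$, $d_1>\dots>d_2$ of votes that makes deleting a $B_i$-vote trigger a fresh dummy who then also demands attention. The backward direction is closed not by per-vote dominance but by a tight global capacity argument on $R=Z\cup\set{B_5,B_6}$: the total points $R$ must lose exactly equals the maximum $R$ can lose under any bribery of $\card X$ voters, which forces every deleted vote to be a 3DM vote and every manipulation vote to place the blockers in the lowest positions. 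This counting argument is what absorbs the front-position complication you correctly flag in your final paragraph.
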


The following two results show that the hardness results in Theorems~\ref{theorem:ccdv alpha1alpha2>alpha3>alpha6} and~\ref{theorem:ccdv hardness alpha1 > alpha 3 > alpha 4} carry over to bribery as well:

\begin{restatable}{restatableTheorem}{theorembriberyalphaoneaphatwobiggerthanalphathreebiggerthanalphasix}
 \label{theorem:bribery alpha1alpha2>alpha3>alpha6}
 Let $f=(\alpha_1,\alpha_2,\alpha_3,\dots,\alpha_3,\alpha_6)$ with $\alpha_1,\alpha_2>\alpha_3>\alpha_6$. Then $f$-bribery is NP-complete.
\end{restatable}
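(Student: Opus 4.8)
The plan is to follow the bribery-hardness recipe of Section~\ref{sect:bribery hardness from ccdv hardness}, exactly as it was applied to the companion generator in Theorem~\ref{theorem:0 dots 0 -gamma -beta -alpha bribery}: start from the CCDV-hardness reduction for the \emph{same} generator and upgrade it so that the additional manipulation power of the briber becomes useless. Membership in \NP\ is immediate. First I would normalize $f$ using equivalence-preserving transformations so that the long middle run of $\alpha_3$'s becomes $0$; then $f$ is equivalent to $(a,b,0,\dots,0,-c)$ with $a\ge b\ge 1$ and $c\ge 1$ (intuitively $a=\alpha_1-\alpha_3$, $b=\alpha_2-\alpha_3$, $c=\alpha_3-\alpha_6$ after clearing denominators). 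By Theorem~\ref{theorem:collection of few coefficients CCDV hardness cases} (the case $\alpha_1,\alpha_2>\alpha_3>\alpha_6$), $f$-CCDV is \NP-complete via a direct reduction from a bounded variant of 3DM, so it suffices to convert that reduction into a bribery reduction.

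For the construction, a bribed voter always ranks $p$ first, gaining $a$ points, so the only positions the briber can exploit are the single negative position worth $-c$ (the last) and, harmlessly for her, the position worth $b$ (the second); the long constant run worth $0$ plays the role of the ``low-variation'' region into which the relevant candidates ($X\cup Y\cup Z$ and the set-gadget candidates $S_i,S_i'$) are forced. To neutralize the one dangerous position I would introduce a single blocking candidate $b_c$ and give it, via setup votes, a score so large that $b_c$ ties or beats $p$ unless $b_c$ occupies the last ($-c$) position of \emph{every} one of the $k$ added votes; this forces the briber to waste the last position of each manipulation vote on $b_c$ rather than on a real competitor. I would add dummy candidates with strongly negative setup scores (so they can never win) to occupy the position worth $b$ and any surplus middle positions. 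Finally, as in Theorem~\ref{theorem:0 dots 0 -gamma -beta -alpha bribery}, I would attach $G$ variants of each 3DM vote, designed so that deleting a $G$-vote is strictly \emph{worse} for the briber than deleting the corresponding genuine 3DM vote; these are used to lower $p$'s baseline in a deletable way and to realize the relative scores of $b_c$ and the competitors. All target scores are then set so that, after subtracting the uniform contribution of the $k$ added votes ($p$ gains $ka$, each relevant candidate gains $0$ from the middle, $b_c$ loses $kc$, and the position-$b$ dummy gains $kb$), the residual deletion problem is \emph{exactly} the CCDV instance of Theorem~\ref{theorem:0 dots 0 -gamma -beta -alpha CCDV}-style produced for this generator. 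Every setup vote places $p$ in a top position, so the briber never gains by deleting it.

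Correctness then splits as usual. For the forward direction, a 3DM cover yields the intended deletions, and letting the $k$ added votes rank $p$ first, $b_c$ last, a dummy in the $b$-position, and the relevant candidates in the score-$0$ middle reproduces the winning CCDV scenario; since the uniform shifts were pre-compensated, $p$ wins. The converse is the delicate part: one must show that in every successful bribe the briber is essentially forced into this behaviour, namely (i) the score of $b_c$ pins it to the last position of all added votes, so no competitor ever occupies a negative position in a manipulation vote; (ii) placing any relevant candidate in the $b$-position only raises that candidate's score, so the briber gains nothing there; and (iii) a case analysis mirroring the one in Theorem~\ref{theorem:0 dots 0 -gamma -beta -alpha bribery} rules out the briber profiting from deleting a $G$-vote, or from deleting both a genuine 3DM vote and its $G$-twin, because in each such case some competitor ends up with strictly more points than $p$'s best achievable score. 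Hence a successful bribe deletes only genuine 3DM votes while its added votes help no competitor, so $p$ wins iff the corresponding deletion already wins the CCDV instance, i.e.\ iff the 3DM instance is positive.

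The main obstacle is step~(iii) together with the bookkeeping forced by the \emph{two} strictly positive front coefficients $a\ge b$, which are absent from the all-nonpositive generator of Theorem~\ref{theorem:0 dots 0 -gamma -beta -alpha bribery}: here the briber automatically gains $ka$ on $p$ and $kb$ on a dummy, so the score targets and the constant $G$ must be chosen to grow with $k=\card X$ fast enough that simultaneously (a)~all relevant candidates must lose points against $p$, making their targets realizable by setup votes, (b)~$b_c$ is genuinely pinned to the last position in every added vote, and (c)~each $G$-vote stays strictly less attractive to delete than its genuine 3DM counterpart. Exhibiting one consistent choice of these parameters that achieves (a)--(c) at once, and carrying out the supporting arithmetic on the resulting $\scorefinal{\cdot}$, $\scorethreedm{\cdot}$, $\scoresetup{\cdot}$, and $\scorelosep{\cdot}$ values, is the technical heart of the argument.
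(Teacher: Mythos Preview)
Your overall strategy---introduce a blocking candidate to absorb the single negative position in every bribed vote, shift all relevant scores by $\card X\alpha$, and reduce the rest to the CCDV instance for the same generator---is exactly the paper's recipe. However, you have misidentified \emph{which} CCDV reduction underlies this case, and this makes your outline substantially more complicated than needed and, in places, literally incorrect.

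The CCDV hardness for $f=(\alpha,\beta,0,\dots,0,-\gamma)$ with $\alpha,\beta,\gamma>0$ is Theorem~\ref{theorem:ccdv alpha1alpha2>alpha3>alpha6}, whose reduction is the \emph{simple} direct 3DM reduction: candidates are $X\cup Y\cup Z\cup\{p\}\cup D$, and each triple $(x,y,z)$ yields a single vote $x>y>\dots>z$. There are \emph{no} set-gadget candidates $S_i,S_i'$ here; those belong to the reduction for the generator $(0,\dots,0,-\gamma,-\beta,-\alpha)$ of Theorem~\ref{theorem:0 dots 0 -gamma -beta -alpha CCDV}, which is a different case. Consequently, the $G$-vote machinery you import from Theorem~\ref{theorem:0 dots 0 -gamma -beta -alpha bribery} is unnecessary and does not fit the construction you claim to be extending.

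The paper's actual argument is short: it reuses the votes $x>y>\dots>z$ from Theorem~\ref{theorem:ccdv alpha1alpha2>alpha3>alpha6}, raises every $c\in X\cup Y\cup Z$ by $\card X\alpha$, and adds a single blocking candidate $B_\gamma$ with score $\card X(\alpha+\gamma)$. The reason only 3DM votes are ever deleted is not a $G$-vote dominance analysis but the \emph{capacity argument} already present in the CCDV proof: all $2\card X$ candidates in $X\cup Y$ must lose points relative to $p$, each 3DM-vote deletion hurts exactly two of them (the ones in positions $1$ and $2$), while every setup vote is built with at most one relevant candidate in a non-zero position and so hurts at most one. With budget $\card X$, the briber is therefore forced to delete only 3DM votes; then $B_\gamma$ must occupy the last slot of every manipulation vote, and the rest is identical to the CCDV argument. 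Your step~(iii) and the associated parameter-tuning with $G$ simply do not arise.
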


\begin{restatable}{restatableTheorem}{theorembriberyhardnessalphaonebiggeralphathreebiggeralphafour}
 \label{theorem:bribery hardness alpha1 > alpha 3 > alpha 4}
 Let $f=(\alpha_1,\alpha_2,\alpha_3,\dots,\alpha_3,\alpha_4,\alpha_5,\alpha_6)$ with $\alpha_1>\alpha_3>\alpha_4$. Then $f$-bribery is NP-complete.
\end{restatable}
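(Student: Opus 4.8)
The plan is to reduce from the \NP-complete problem 3DM, adapting the CCDV hardness reduction for the very same generator---case~\ref{few coefficients CCDV:ccdv hardness alpha1 > alpha 3 > alpha 4} of Theorem~\ref{theorem:collection of few coefficients CCDV hardness cases}---by following the bribery recipe of Section~\ref{sect:bribery hardness from ccdv hardness}, in close analogy to the worked bribery proof of Theorem~\ref{theorem:0 dots 0 -gamma -beta -alpha bribery}. Membership in \NP\ is already noted in Section~\ref{sect:prelim}, so only hardness must be shown. First I would fix a convenient normal form: write $f=(\alpha_1,\alpha_2,\alpha_3,\dots,\alpha_3,\alpha_4,\alpha_5,\alpha_6)$ with $\alpha_1>\alpha_3>\alpha_4$, so that, for every sufficiently large $m$, this generator has both a \emph{good front} (position~$1$ is worth strictly more than $\alpha_3$, and position~$2$ is worth at least $\alpha_3$) and a \emph{bad back} (the three positions $m-2,m-1,m$ are each worth strictly less than $\alpha_3$), while the long middle block of value $\alpha_3$ is a zero-variation region that serves as the low-variation sequence demanded by the recipe.

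The central step is to control the manipulation part of each bribe so that, modulo a uniform offset, a bribe behaves like a CCDV-deletion. Since bribed voters may be assumed to rank $p$ first, the score of $p$ depends only on the deletions. I would then introduce three \emph{blocking} candidates, one for each of the three bad positions, together with enough \emph{dummy} candidates to fill the good front positions other than position~$1$. Their scores (realized by \emph{setup votes}) are chosen so that a blocking candidate wins the election unless it is placed into a bad position of \emph{every} manipulation vote, and so that the dummies can never win; this forces every manipulation vote into the shape ``$p>$ dummies $>$ (competitors in the $\alpha_3$-block) $>$ blocking candidates.'' Consequently the manipulation votes are neutral among the competitor candidates (each gains exactly $\alpha_3$), the punishing power of the bad positions is entirely absorbed by the blocking candidates, and the only residual effect of manipulation is that $p$ gains $\alpha_1-\alpha_3$ per bribe relative to every competitor. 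I would absorb this residual advantage by lowering $p$'s initial score by the corresponding amount over the full budget, exactly as the worked example manages this via its $\scorelosep{\cdot}$ bookkeeping; the construction is arranged (through the blocking candidates' tie conditions) so that the full budget must in fact be used, making the pre-charge well defined.

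To guarantee that the controller's deletions coincide with the deletions used in the CCDV reduction, I would, as in the proof of Theorem~\ref{theorem:0 dots 0 -gamma -beta -alpha bribery}, add $G$ decoy copies (``$G$-votes'') of each genuine 3DM vote, obtained by permuting candidates among the bad positions, where $G$ is a constant depending only on $\alpha_1,\dots,\alpha_6$. These decoys let the blocking candidates accumulate the large, $k$-growing score deficits they require while being strictly less attractive to delete than the genuine 3DM votes; a short case analysis (splitting on which type of $G$-vote is deleted, and on whether the bad coefficients $\alpha_4,\alpha_5,\alpha_6$ coincide) then shows that deleting any decoy pushes some competitor or blocking candidate above $p$, so an optimal bribery deletes only genuine 3DM votes. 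Combining these pieces yields that $M$ is a positive 3DM instance if and only if the constructed instance is a positive $f$-bribery instance.

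The main obstacle I expect is the explicit construction of the setup votes. They must simultaneously (i)~realize the prescribed \emph{relative} scores for all candidate types, (ii)~place $p$ in a top position so that they are never deleted, and (iii)~pin the blocking candidates to the bad positions robustly against \emph{every} manipulation the controller might attempt. The two-sided structure of this generator---good front \emph{and} bad back, rather than the pure-veto shape of the worked example---is what makes this harder here: I must block the back to neutralize punishment while simultaneously accounting for the genuine front advantage $\alpha_1-\alpha_3$, and the constant $G$ has to be tuned so that all the forcing margins grow with $k$ while the score equations remain solvable by single-candidate ``loss'' votes that shift a candidate by $\alpha_3-\alpha_4$ or similar small amounts. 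Carrying out this bookkeeping, rather than any conceptual leap, is where essentially all the work lies.
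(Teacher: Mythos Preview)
Your overall architecture---adapt the CCDV reduction of Theorem~\ref{theorem:collection of few coefficients CCDV hardness cases}.\ref{few coefficients CCDV:ccdv hardness alpha1 > alpha 3 > alpha 4}, force the three bad back positions of every manipulation vote to be occupied by designated blocking candidates, and pre-charge the scores to absorb $p$'s gain of $\alpha_1-\alpha_3$ per bribe---is exactly the paper's plan. The execution, however, differs in two linked respects.

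First, the paper does not introduce three fresh blockers. It reuses the candidate $B$ already present in the CCDV construction (the one sitting in position~$1$ of every 3DM vote) as the blocker for the $\alpha_6$-slot, and adds only two new blockers $B_4,B_5$ for the $\alpha_4,\alpha_5$-slots. Inflating $B$'s target by an additional $\card X(\alpha_3-\alpha_6)$ makes the single tight constraint on $B$ do double duty: it forces every deleted vote to have $B$ in position~$1$ (hence to be a 3DM vote) \emph{and} forces $B$ into the $\alpha_6$-slot of every manipulation vote. Second, to give $B$ enough 3DM-side score for this, the paper does not use decoy votes; it converts the input into an $F$-3DM instance via Proposition~\ref{prop:f3dm np complete} for a constant $F$, which multiplies the number of 3DM votes (and hence $B$'s head start) without changing the budget $\card X$. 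Setup votes for $B_4,B_5$ and the remaining candidates then merely need to avoid placing $B$ in the $\alpha_1$-slot.

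Your $G$-vote layer, as described, does not transfer from Theorem~\ref{theorem:0 dots 0 -gamma -beta -alpha bribery}. There, $p$ sits in a bad position of every 3DM vote, and the $G$-votes move $p$ to a different bad position; this is precisely how the blockers ``accumulate the large, $k$-growing score deficits'' and why deleting a $G$-vote helps $p$ strictly less. Here $p$ lives in the $\alpha_3$-block of the 3DM votes, so permuting $x,y,z$ among the bad positions leaves both $p$ and $B$ untouched: your blockers gain nothing from the $G$-votes, and a $G$-vote is exactly as attractive to delete as its 3DM twin with respect to the only constraint ($B$'s) that governs which votes may be deleted. In short, once you keep $B$'s role from the CCDV proof, the $G$-votes are unnecessary, and as you have specified them they do not deliver either of the two properties you assign to them.
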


\section{Proof of CCDV and Bribery Dichotomy}\label{section:dichotomy proof appendix}

\theoremccdvandbriberydichotomy*

\begin{proof}
 The polynomial-time cases for CCDV follow directly from the discussion in Section~\ref{sect:ccav and ccdv}. For bribery, the result for the first three cases follows from Theorem~\ref{theorem:lin bribery results list}, the result for $f_4$ is Theorem~\ref{theorem:bribery zeroes -beta -alpha in ptime}, and the result for $f_5$ is Theorem~\ref{theorem:100-1 bribery in ptime}.
 
 Therefore, let $f$ be a pure generator not of this form.
 
 If there is some $m$ with $\alpha^m_3>\alpha^m_{m-3}$, then NP-hardness follows from Theorem~\ref{theorem:many different coefficients} and Corollary~\ref{corollary:many coefficients bribery hardness}.
 
 Therefore, it suffices to consider the case that $\alpha^m_3=\alpha^m_{m-3}$ for all $m\ge6$. Therefore, $f$ is of the form $f=(\alpha_1,\alpha_2,\alpha_3,\dots,\alpha_3,\alpha_4,\alpha_5,\alpha_6)$. 
 
 We first consider the case $\alpha_1=\alpha_3$, then $f=(\alpha_3,\dots,\alpha_3,\alpha_4,\alpha_5,\alpha_6)$. If $\alpha_3=\alpha_4$, then $f$ is equivalent to $(0,\dots,0,-\alpha,-\beta)$, i.e., a generator of the form $f_4$. Therefore, assume that $\alpha_3>\alpha_4$. If $\alpha_4=\alpha_6$, then $f$ is equivalent to $3$-veto (generator $f_1$). Therefore, assume that $\alpha_3>\alpha_4>\alpha_6$. In this case, NP-completeness follows from Theorems~\ref{theorem:0 dots 0 -gamma -beta -alpha CCDV} and~\ref{theorem:0 dots 0 -gamma -beta -alpha bribery}.
 
 Therefore, we can assume that $\alpha_1>\alpha_3$. If $\alpha_3>\alpha_4$, then NP-completeness follows from Theorems~\ref{theorem:collection of few coefficients CCDV hardness cases}.\ref{few coefficients CCDV:ccdv hardness alpha1 > alpha 3 > alpha 4} and~\ref{theorem:bribery hardness alpha1 > alpha 3 > alpha 4}. Therefore, assume, for the remainder of the proof, that $\alpha_1>\alpha_3=\alpha_4$. Therefore, $f$ is of the form $f=(\alpha_1,\alpha_2,\alpha_3,\dots,\alpha_3,\alpha_5,\alpha_6)\mathtext{ with }\alpha_1>\alpha_3$.
 
 If $\alpha_2=\alpha_5$, then the claim follows from Theorems~\ref{theorem:alpha 0 dots 0 alpha < beta ccdv} and~\ref{theorem:alpha 0 dots 0 beta < alpha ccdv} in the CCDV case, and from Corollary~\ref{corollary:alpha 0 dots 0 beta bribery dichotomy} for bribery. Therefore, we assume that $\alpha_2>\alpha_5$. We summarize: Using the results so far, we can assume that 
 
 $$f=(\alpha_1,\alpha_2,\alpha_3,\dots,\alpha_3,\alpha_5,\alpha_6)\mathtext{ with }\alpha_1>\alpha_3\mathtext{ and }\alpha_2>\alpha_5.$$
 
 Consider the case that $\alpha_3=\alpha_6$, then $f=(\alpha_1,\alpha_2,\alpha_3,\dots,\alpha_3)$. If $\alpha_1=\alpha_2$, then $f$ is $2$-approval (generator $f_3$), if $\alpha_2=\alpha_3$, then $f$ is $1$-approval (generator $f_2$). Therefore, we can assume that $\alpha_1>\alpha_2>\alpha_3$, in which case NP-completeness follows from Theorems~\ref{theorem:0 minus alpha5 minus alpha1 dots minus alpha1 CCDV hardness} and~\ref{theorem:0 minus alpha5 minus alpha1 dots minus alpha1 bribery hardness}.
 
 Therefore, we assume that $\alpha_3>\alpha_6$. We make a final case distinction.
 
 \begin{itemize}
  \item If $\alpha_5<\alpha_3$, then $f=(\alpha_1,\alpha_2,\alpha_3,\dots,\alpha_3,\alpha_5,\alpha_6)$ with $\alpha_1>\alpha_3>\alpha_5$. In this case, hardness follows from Theorems~\ref{theorem:collection of few coefficients CCDV hardness cases}.\ref{few coefficients CCDV:ccdv hardness alpha1 alpha2 0 dots 0 minus alpha4 minus alpha5 minus alpha6} and~\ref{theorem:bribery hardness alpha1 alpha2 0 dots 0 minus alpha4 minus alpha5 minus alpha6}.
  \item Finally, assume that $\alpha_3=\alpha_5$. In this case, $f=(\alpha_1,\alpha_2,\alpha_3,\dots,\alpha_3,\alpha_6)$, with $\alpha_1>\alpha_3$, $\alpha_2>\alpha_5=\alpha_3$, and $\alpha_3>\alpha_6$. In this case, NP-completeness follows from Theorems~\ref{theorem:collection of few coefficients CCDV hardness cases}.\ref{few coefficients CCDV:ccdv alpha1alpha2>alpha3>alpha6} and~\ref{theorem:bribery alpha1alpha2>alpha3>alpha6}, which then concludes the proof.
 \end{itemize}
\end{proof}

\section{Auxiliary Results}\label{sect:auxiliary results}

In this section we state and prove two auxiliary results required for our hardness proofs.

\subsection{Realizing Scores}

Throughout our hardness proofs, we will make use of the following result from~\cite{HemaspaandraHemaspaandraSchnoor-CCAV-AAAI-2014}, which shows that, given a sufficiently uniform generator, we can construct elections instances with any ``reasonable'' set of scores. This allows us to construct the appropriate situations in our hardness proofs for both CCDV and bribery. 

\begin{restatableLemma}\label{lemma:coefficients realization}
 Given a scoring vector $(\alpha_1,\dots,\alpha_m)$, and for each $c\in\set{1,\dots,m-1}$, numbers $a^c_1,\dots,a^c_m$ in signed unary, and a number $k$ in unary, we can compute, in polynomial time, votes such that the scores of the candidates 
 when evaluating these votes according to the scoring vector $(\alpha_1,\dots,\alpha_m)$
are as follows: There is some $o$ such that for each $c\in\set{1,\dots,m-1}$,
$\score{c}=o+\sum_{i=1}^ma^c_i
\alpha_i$, and $\score{c}>\score{m}+k
\alpha_1$.
\end{restatableLemma}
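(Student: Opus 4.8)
The plan is to recast the statement as a purely combinatorial matrix-realization problem and then solve it with a base-plus-correction construction. First I would note that a multiset of $N$ votes over $m$ candidates is the same object as an $m\times m$ matrix $(n^c_i)$ of nonnegative integers all of whose row sums and column sums equal $N$: the entry $n^c_i$ counts how often candidate $c$ is placed in position $i$, so that $\score{c}=\sum_{i=1}^m n^c_i\alpha_i$. By the Birkhoff--von Neumann theorem such a matrix decomposes into exactly $N$ permutation matrices, and the decomposition (a sequence of perfect matchings in a bipartite graph) is computable in time polynomial in $m$ and $N$. Hence it suffices to build, in polynomial time, a \emph{balanced} nonnegative integer matrix realizing the desired scores with $N$ polynomially bounded; the actual votes then follow.

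For the construction I would start from the target ``shape'' $n^c_i=a^c_i+b_i$ for $c\in\set{1,\dots,m-1}$, where the common base vector $b_i$ is chosen at least $\max_{c,i}\card{a^c_i}$ (which is polynomial since the $a^c_i$ are given in unary) to force every entry nonnegative. This already yields $\score{c}=o+\sum_i a^c_i\alpha_i$ with the \emph{common} offset $o=\sum_i b_i\alpha_i$, exactly as required for $c\le m-1$. The remaining work is to make all row and column sums equal and to give candidate $m$ a sufficiently small score, both \emph{without disturbing the relative scores} $\sum_i a^c_i\alpha_i$ among the candidates $c\le m-1$.

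The separation of candidate $m$ I would achieve by appending \emph{rotation blocks}: votes that keep $m$ in the last position and cyclically permute the candidates $\set{1,\dots,m-1}$ through the first $m-1$ positions, $m-1$ votes per block. Over a full block each $c\le m-1$ gains $\sum_{i<m}\alpha_i$ (an equal shift, which merges into the offset and leaves all pairwise differences intact), whereas $m$ gains only $(m-1)\alpha_m$. Since I may assume the vector is non-constant (if all $\alpha_i$ are equal every candidate always ties and the conclusion $\score{c}>\score{m}+k\alpha_1$ is unachievable, so the lemma is applied only when $\alpha_1>\alpha_m$), each block widens the gap by a fixed positive amount, and $O(k\alpha_1)$ blocks---polynomially many, as $k$ is unary---guarantee $\score{c}>\score{m}+k\alpha_1$ while preserving balance.

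\textbf{The main obstacle} is the balancing step, which is where I expect essentially all the technical work to lie. The shape matrix has row sums $\sum_i b_i+\sigma_c$ with $\sigma_c=\sum_i a^c_i$, and these differ across $c$; equalizing them by padding alters $\score{c}$ by the total $\alpha$-weight of that padding, and to fold into the offset this weight must be one \emph{common} constant $W$ for every row. Concretely I would pad row $c$ with $\Delta_c=N-\sum_i b_i-\sigma_c$ units of total weight exactly $W$ while simultaneously fixing candidate $m$'s row so that every column sums to $N$. Because the $\Delta_c$ differ only by the bounded quantity $\max_c\sigma_c-\min_c\sigma_c$, whereas the attainable weight interval with $\Delta_c$ units is $[\Delta_c\alpha_m,\Delta_c\alpha_1]$ and grows linearly in $\Delta_c$, taking $N$ polynomially large makes $\bigcap_c[\Delta_c\alpha_m,\Delta_c\alpha_1]$ a long interval; a common $W$ in the correct residue class (adjustable through $b_i$ and $N$) then exists, and an explicit nonnegative padding can be read off greedily. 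Verifying this simultaneous feasibility---equal weight $W$ across all rows \emph{and} equal column sums---is the heart of the proof; it is a transportation-polytope feasibility question that the generous slack afforded by large $N$ renders routine.
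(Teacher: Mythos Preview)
Your approach is quite different from the construction the paper inherits (and cites) from~\cite{HemaspaandraHemaspaandraSchnoor-CCAV-AAAI-2014}. The paper does not reprove the lemma, but where it reuses the construction (e.g., in the proofs of Theorems~\ref{theorem:approval generalization ccdv} and~\ref{theorem:veto generalization}) it states explicitly that every vote produced is a cyclic rotation of a single freely chosen initial vote $\overrightarrow v_{init}$ with at most one swap of two candidates applied on top. Full rotation cycles supply the common offset $o$; individual swaps adjust relative scores by amounts $\alpha_i-\alpha_j$, so no matrix balancing or Birkhoff--von~Neumann step is needed. The decisive practical advantage of that route is that the caller chooses $\overrightarrow v_{init}$: this is precisely how the hardness proofs ensure that every setup vote is strictly less attractive to delete than the 3DM votes. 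A Birkhoff--von~Neumann decomposition gives no such control over the individual votes it outputs, so even if your argument closes it would not substitute for the construction the paper actually relies on downstream.

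There is also a genuine gap in your balancing step. You claim a common padding weight $W$ exists ``in the correct residue class (adjustable through $b_i$ and $N$)'', but it is not adjustable: with $\Delta_c$ units the achievable weights lie in $\Delta_c\alpha_m+g\mathbb Z$ where $g=\gcd_i(\alpha_i-\alpha_m)$, so a common $W$ forces $(\sigma_c-\sigma_{c'})\alpha_m\equiv 0\pmod g$, and this constraint is determined entirely by the input, never by $b_i$ or $N$. For instance with $\alpha=(5,3,1)$, $a^1=(1,0,0)$, $a^2=(0,0,0)$ the target difference is $5$ while every realisable score difference under this vector is even---so the obstruction is intrinsic, not an artifact of your method, and the rotation-and-swap construction faces it too. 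In all of the paper's applications some coefficient is normalised to $0$, which makes $g\mid\alpha_j$ for every $j$ and dissolves the residue constraint; you should state this hypothesis explicitly rather than wave it away. You also still owe a sentence explaining why the padding can be spread across columns so that candidate~$m$'s induced row stays nonnegative; this is routine once $N$ is large, but it is exactly the ``transportation-polytope feasibility'' you flagged and then left unverified.
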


\subsection{3DM and its variants}

In this section, we formally define 3DM and prove that the restriction we use in our hardness proofs in fact remains NP-complete.

3DM is the following problem: Given a multiset (in most cases, $M$ will be a set, however some proofs use the multi-set version of the problem) $M\subseteq X\times Y\times Z$ with $X$, $Y$ and $Z$ pairwise disjoint sets of equal size such that each $s\in X\cup Y\cup Z$ appears in exactly $3$ tuples of $M$, decide whether there is a set $C\subseteq M$ with $\card C=\card X$ such that each $s\in X\cup Y\cup Z$ appears in some tuple of $C$ (we also say that $C$ \emph{covers} $s$). 
From the problem definition, it follows that $\card M=3\card X$. 

The condition that each $s$ appears in exactly $3$ tuples is not standard; hence we prove that this version of 3DM indeed remains NP-complete.

\begin{restatable}{restatableProposition}{propthreedmrestriction}\label{prop:3dm restriction}
 3DM is NP-complete.
\end{restatable}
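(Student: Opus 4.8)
The plan is to reduce from the standard (unrestricted) $3$DM problem, which is well known to be NP-complete; membership of our restricted version in \NP{} is immediate, since a candidate cover $C$ can be guessed and its size and coverage verified in polynomial time. The reduction proceeds in two conceptual phases: first bound the number of occurrences of every element by $3$, and then pad so that every occurrence count is \emph{exactly} $3$, all the while preserving $\card X=\card Y=\card Z$ and the existence of a perfect matching. (The intermediate variant in which each element occurs in at most three triples is itself standard, so in principle the first phase may be quoted rather than reproved.)

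For the first phase, I would eliminate high-degree elements by a standard copying construction. If an element $w\in X\cup Y\cup Z$ occurs in triples $T_1,\dots,T_d$ with $d>3$, I would replace $w$ by fresh copies $w_1,\dots,w_d$, let $T_i$ use $w_i$ in place of $w$, and add a \emph{selection gadget} on $w_1,\dots,w_d$ built from fresh auxiliary elements. The gadget is designed so that in any perfect matching exactly one copy $w_i$ is left to be matched by its triple $T_i$, while the remaining $d-1$ copies, together with all auxiliary elements, are covered internally by gadget triples; conversely, every such internal choice is realizable. Then selecting $T_i$ in the new instance corresponds exactly to matching $w$ by $T_i$ in the original, so perfect matchings correspond. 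The auxiliary elements are introduced with low degree by construction, and the gadget has size $O(d)$, keeping the reduction polynomial.

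For the second phase, every element (original, copy, or auxiliary) now occurs once, twice, or three times; I would raise each deficient count to exactly $3$ using \emph{dummy} triples. For each missing occurrence of an element $s$, I would attach a triple containing $s$ together with two fresh dummy partners, and surround those dummies with a small self-contained block of all-dummy triples that (i) has a unique internal perfect matching covering every dummy, and (ii) becomes unmatchable if the attachment triple is selected. Property (i) guarantees the padded instance is matchable whenever the original was (match the real elements as before and take the internal dummy matchings), and property (ii) guarantees that no spurious matching is created, since using an attachment triple to ``free'' a real element would leave its dummy block uncoverable. Throughout, I would add equal numbers of fresh $X$-, $Y$-, and $Z$-dummies to maintain $\card X=\card Y=\card Z$, and verify that every dummy and auxiliary element also ends at exactly three occurrences.

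I expect the main obstacle to be the simultaneous bookkeeping in both gadgets: one must check that every element---copies, auxiliaries, and the several kinds of dummies---ends with occurrence count exactly $3$, that the three coordinate sets stay equal in size, and, most delicately, that the selection and dummy gadgets admit \emph{exactly} the intended internal matchings and no others. The equivalence proof then reduces to a careful case analysis showing that any perfect matching of the constructed instance must respect the gadget structure (one free copy per selection gadget, the forced internal matching per dummy block), from which the correspondence with perfect matchings of the original $3$DM instance follows.
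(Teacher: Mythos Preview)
Your outline is workable in principle, but the paper's proof is considerably simpler and avoids precisely the bookkeeping you flag as the main obstacle. Two differences are worth noting.

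First, the paper does quote the at-most-$3$ variant (Garey--Johnson), as you suggest one might, and then observes that degree-$1$ elements can be eliminated greedily: any triple containing a unique-occurrence element is forced, so one removes it together with its three elements and iterates. This leaves an instance in which every element has degree exactly $2$ or $3$, with no selection gadget needed.

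Second---and this is the key idea you are missing---the paper exploits a counting fact: since each of $X$, $Y$, $Z$ contributes exactly $\card M$ occurrences in total, the number of degree-$2$ elements is the \emph{same} in $X$, $Y$, and $Z$. Hence the deficient elements can be grouped into $t$ cross-coordinate triples $(x,y,z)$ with $x\in X$, $y\in Y$, $z\in Z$ each of degree $2$. For each such group one adds three fresh elements $x',y',z'$ and exactly four new triples $(x,y',z')$, $(x',y,z')$, $(x',y',z)$, $(x',y',z')$. This simultaneously raises $x,y,z$ to degree $3$ and gives $x',y',z'$ degree exactly $3$, while automatically preserving $\card X=\card Y=\card Z$. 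The correctness argument is one line: any perfect matching must contain $(x',y',z')$, so the other three new triples are excluded and the matching restricted to the original elements is a matching of the original instance.

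Your per-element dummy-block approach would also work, but it requires designing a gadget in which every auxiliary element lands at degree exactly $3$ and the coordinate sizes stay balanced; you correctly identify this as delicate, and indeed you leave it unspecified. The paper sidesteps all of it via the pairing observation and the explicit $4$-triple gadget.
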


\begin{proof}
In~\cite{gar-joh:b:int}, it is proved that the version of 3DM where every element may occur in \emph{at most} three triples of $M$ is NP-complete. This immediately implies that the version of 3DM where every element occurs in two or three triples of $M$ is NP-complete at well: If some element $c\in X\cup Y\cup Z$ appears in exactly one triple $(x,y,z)$, we know that the triple $(x,y,z)$ must be part of the cover. So we can delete $(x,y,z)$ and the elements $x$, $y$, and $z$, as well as all other triples that contain one of $x,y,z$. We continue this process until either some element $c$ occurs in no triples (in which case we have a negative instance) or all elements occur in two or three triples. 

We now prove the actual result. For this, consider an instance of 3DM in which every element occurs in two or three triples of $M$.  Note that the number of elements in $X$ that occur in two triples is the same as the number of elements in $Y$ that occur in two triples, which is the same as the number of elements in $Z$ that occur in two triples. Let $t$ denote this number. We view these elements as $t$ elements of $X\times Y\times Z$.  Let $(x,y,z)$ be such an element, i.e., $x\in X$, $y\in Y$, and $z\in Z$ such that $x$, $y$, and $z$ each occur in two triples in $M$.  We add three elements, $x'\in X$, $y'\in Y$, and $z'\in Z$, and add the following four triples to $M$:

\begin{itemize}
 \item $(x, y', z')$,
 \item $(x', y, z')$,
 \item $(x', y', z)$,
 \item $(x', y', z')$.
\end{itemize}

We do this for all $t$ triples and call the resulting set of triples $M’$.  Note that every element occurs in exactly three triples in $M’$. It is easy to see that any cover in $M'$ needs to contain $(x',y',z')$. Therefore, $(x, y', z')$, $(x’, y, z')$, and $(x', y', z)$ are not in the matching, which implies that a matching in $M'$ restricted to the original elements is a matching in $M$.
\end{proof}

In some reductions, the following variant of 3DM is useful: For a natural number $F\ge1$, $F$-3DM is defined analogously to 3DM, except that every $s\in X\cup Y\cup Z$ appears in exactly $3F$ tuples of $M$. The question is still whether there is a cover with size $\card X$. Clearly, $F$-3DM is still NP-complete; in fact the following slightly stronger claim holds:

\begin{restatableProposition}\label{prop:f3dm np complete}
 There is a polynomial-time algorithm which, given a 3DM-instance $I$ and a natural number $F\ge1$ in unary, produces an $F$-3DM instance $I'$ such that $I$ is a positive 3DM instance if and only if $I'$ is a positive $F$-3DM instance.
\end{restatableProposition}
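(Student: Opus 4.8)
The plan is to obtain $I'$ from $I$ by simply replicating each triple $F$ times. Concretely, given the 3DM instance $I$ with pairwise disjoint ground sets $X$, $Y$, $Z$ and triple (multi)set $M \subseteq X \times Y \times Z$, I would keep the same ground sets and let $M'$ be the multiset that contains exactly $F$ copies of every triple of $M$, with the target cover size left unchanged at $\card X$. Since $F$ is supplied in unary and $\card M$ is polynomial in the input size, writing $M'$ down takes polynomial time. Each $s \in X \cup Y \cup Z$ occurs in exactly $3$ triples of $M$ by assumption, hence in exactly $3F$ triples of $M'$, so $I'$ is a syntactically legal $F$-3DM instance. It then remains to prove that $I$ is a positive 3DM instance if and only if $I'$ is a positive $F$-3DM instance.

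I would isolate one elementary observation first and reuse it on both sides: any cover of size $\card X$ is forced to be a perfect matching. Indeed, $\card{X \cup Y \cup Z} = 3\card X$, and each triple covers exactly one element of each of $X$, $Y$, $Z$, so a collection of $\card X$ triples can cover at most $3\card X$ elements, with equality precisely when every element is covered exactly once. This argument is insensitive to whether we work in $M$ or in $M'$.

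For the forward direction, given a cover $C \subseteq M$ with $\card C = \card X$, I would select for each triple of $C$ any one of its $F$ copies in $M'$; this produces a size-$\card X$ subcollection of $M'$ that still covers all of $X \cup Y \cup Z$, so $I'$ is positive. For the backward direction, given a cover $C' \subseteq M'$ with $\card{C'} = \card X$, I would map each selected copy back to the original triple of $M$ from which it was duplicated. Here the perfect-matching observation does the real work: two distinct selected copies cannot descend from the same triple of $M$, since otherwise the three elements of that triple would each be covered twice, contradicting that $C'$ is a perfect matching. Hence the images are $\card X$ distinct triples of $M$ covering every element, so $I$ is positive.

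The construction and the forward direction are routine; the only point requiring care is the backward collapse step, where I must rule out that $C'$ reuses several copies of a single original triple. I expect this to be the sole ``obstacle,'' and it is dispatched by the perfect-matching fact stated above, which I would present as a short preliminary claim and then invoke in the contraction argument.
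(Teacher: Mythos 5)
Your proof is correct and takes essentially the same route as the paper: the paper's proof is exactly this construction (repeat every tuple of $M$ exactly $F$ times) together with the remark that the existence of a cover is clearly invariant under this transformation. Your only addition is to make that invariance explicit—in particular the backward direction via the observation that a size-$\card{X}$ cover must be a perfect matching, so it cannot contain two copies of the same triple—which is a faithful elaboration of the step the paper declares obvious.
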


\begin{proof}
 The proof follows from simply repeating every tuple in the set $M$ from the given instance $I$ exactly $F$ times. Clearly the question whether there is a cover of any size is invariant under this transformation.
\end{proof}

\section{Proofs of Results for CCDV}\label{sect:ccdv proofs}

\subsection{Relationships Between CCDV, \ccdvstar, and CCAV}\label{sect:relationship ccdv ccav}

We now relate the complexities of $f$-\ccdvstar and $f$-CCAV. The problems are very similar, but clearly, a reduction between them has to do more than simply reversing all votes, since the $f$-CCDV problem is only the same as the $\dual f$-CCAV problem in a situation with a fixed set of scores of all candidates, when the votes available for deleting/addition are obtained from each other by simply reversing the order. To set up this situation---that is, to ensure that the relative scores of all candidates are the same in both settings---the following result relies on an implementation Lemma, and therefore requires a uniformity condition on the generator $f$.

\propccdvstarpolyequivccav*

\begin{proof}
 We first show that $f$-CCAV reduces to $\dual f$-\ccdvstar. Hence, let an $f$-CCAV instance be given, with registered voters $R$ and unregistered voters $U$, favorite candidate $p$, and a number $k$ of votes that may be added. The votes in the $\dual f$-\ccdvstar instance are constructed as follows:
 
 \begin{itemize}
  \item the \emph{deletable} votes $D$ contain, for each vote $v\in U$, the vote $\dual v$ obtained from $v$ by reversing the order of candidates in $v$. (Recall that we regard the set of votes as multisets, i.e., each vote can appear more than once.)
  \item the \emph{not deletable} votes $R$ are setup-votes to ensure that the relative points of each candidate are the same as when counting only the votes in $R$ with regard to the original generator $f$. (These can be constructed using Lemma~\ref{lemma:coefficients realization}. Note that simply using the reversals of the votes in $R$ does not give the correct result, since we need the $R$-points with regard to $f$, but must use the generator $\dual f$ to achieve them.)
 \end{itemize}
 
 It is now obvious that adding (up to $k$) votes from $U$ has the same effect as deleting the corresponding deletable votes.
 
 The proof of the converse direction is very similar: Let an $f$-\ccdvstar instance be given, consisting of deletable votes $D$ and fixed votes $R$. Then deleting a vote $v$ in $D$ has the same effect as adding the vote $\dual v$. Hence, by using the implementation Lemma in the same way as in the proof of Proposition~\ref{prop:ccdvstar poly equiv ccav}, we can reduce $f$-\ccdvstar to $\dual f$-CCAV.
\end{proof}

\subsection{Polynomial-time results for CCDV}

A direct consequence of Proposition~\ref{prop:ccdvstar poly equiv ccav} (together with the obvious fact that $f$-CCDV always reduces to $f$-\ccdvstar) is that the polynomial-time results obtained in for CCAV carry over (using dualization). We therefore immediately get the following result:

\begin{restatable}{restatableTheorem}{theoremccdvfewcoefficientsptimecases}
 \label{theorem:ccdv few coefficients ptime cases}
 For the following generators, CCDV and \ccdvstar can be solved in polynomial time:
 \begin{enumerate}
  \item $f_1=(1,\ldots, 1, 0, 0, 0)$
  \item $f_2=(1, 0, \dots,0)$ ($1$-approval),
  \item $f_3=(1, 1, 0, \dots, 0)$ ($2$-approval),
  \item for some $\alpha\ge\beta \ge 0$, $f_4=(0, \ldots, 0, -\beta, -\alpha)$,
  \item $f_5=(2,1,\dots,1,0)$.
 \end{enumerate}
\end{restatable}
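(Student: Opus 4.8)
The plan is to derive this result as a direct corollary of Proposition~\ref{prop:ccdvstar poly equiv ccav} together with the polynomial-time CCAV cases established in~\cite{HemaspaandraHemaspaandraSchnoor-CCAV-AAAI-2014}. Two reductions chain together. First, $f$-CCDV is exactly the special case of $f$-\ccdvstar in which the set of undeletable votes is empty, so a CCDV instance $(V,C,p,k)$ is already a \ccdvstar instance with $R=\emptyset$ and $D=V$; hence $f$-CCDV trivially reduces to $f$-\ccdvstar, and it suffices to place $f$-\ccdvstar in \PTIME{} for each of the five generators. Second, Proposition~\ref{prop:ccdvstar poly equiv ccav} gives that $f$-\ccdvstar is polynomially equivalent to $\dual f$-CCAV. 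Thus the entire argument reduces to checking that $\dual{f_i}$-CCAV lies in \PTIME{} for each $i\in\set{1,\dots,5}$.

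Next I would compute the five duals, recalling that $\dual f$ negates every coefficient and reverses the order, and that equivalence permits an arbitrary positive scaling and additive shift. A short calculation gives: $\dual{f_1}$ is $3$-approval (negate and reverse $3$-veto, then shift by $1$); $\dual{f_2}$ is $1$-veto; $\dual{f_3}$ is $2$-veto; $\dual{f_4}=(\alpha,\beta,0,\dots,0)$ is the ``generalized $2$-approval'' rule obtained from $(0,\dots,0,-\beta,-\alpha)$; and $\dual{f_5}=(2,1,\dots,1,0)$, so $f_5$ is self-dual. Monotonicity is preserved in each case (in particular $\alpha\ge\beta\ge0$ keeps $(\alpha,\beta,0,\dots,0)$ nonincreasing), so each $\dual{f_i}$ is a legitimate generator.

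To finish, I would invoke the CCAV dichotomy of~\cite{HemaspaandraHemaspaandraSchnoor-CCAV-AAAI-2014}: each of $3$-approval, $1$-veto, $2$-veto, $(\alpha,\beta,0,\dots,0)$, and $(2,1,\dots,1,0)$ is one of its polynomial-time CCAV cases. Hence $\dual{f_i}$-CCAV is in \PTIME, so by Proposition~\ref{prop:ccdvstar poly equiv ccav} so is $f_i$-\ccdvstar, and therefore so is $f_i$-CCDV, as claimed.

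I do not expect a genuine obstacle here: the substance is carried by Proposition~\ref{prop:ccdvstar poly equiv ccav} (which itself rests on the implementation Lemma~\ref{lemma:coefficients realization}) and by the CCAV easiness results we may assume. The only point requiring care is the bookkeeping of the duals --- verifying that negating, reversing, and normalizing each $f_i$ yields a scoring vector \emph{equivalent} to the named CCAV-easy rule, so that complexity is preserved. Since equivalence is respected by dualization (if $g$ is a positive affine image of $f$, then $\dual g$ is a positive affine image of $\dual f$), reading off ``$3$-approval,'' ``$1$-veto,'' and so on from the normalized vectors is justified, and the five cases go through uniformly.
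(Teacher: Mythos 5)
Your proposal is correct and follows essentially the same route as the paper's own proof: treat $f$-CCDV as the trivial special case of $f$-\ccdvstar with no undeletable votes, apply Proposition~\ref{prop:ccdvstar poly equiv ccav} to pass to $\dual{f}$-CCAV, and verify that the five duals ($3$-approval, $1$-veto, $2$-veto, $(\alpha,\beta,0,\dots,0)$, and the self-dual $(2,1,\dots,1,0)$) are exactly polynomial-time CCAV cases of~\cite{HemaspaandraHemaspaandraSchnoor-CCAV-AAAI-2014}. Your dual computations agree with those in the paper's proof, so nothing further is needed.
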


\begin{proof}
 All of these results follow directly from Proposition~\ref{prop:ccdvstar poly equiv ccav} and the results in~\cite{HemaspaandraHemaspaandraSchnoor-CCAV-AAAI-2014}. We use the fact that the problem $f$-CCAV is the same problem as $f'$-CCAV when $f'$ is obtained from $f$ by an affine transformation.
 \begin{enumerate}
  \item In this case, $\dual f$ is equivalent to $(1,1,1,0,\dots,0)$.
  \item In this case, $\dual f$ is equivalent to $(1,\dots,1,0)$
  \item In this case, $\dual f$ is equivalent to $(1,\dots,1,0,0)$
  \item In this case, $\dual f$ is equivalent to $(\alpha,\beta,0,\dots,0)$ for some $\alpha\ge\beta\ge0$.
  \item In this case, $\dual f$ is equivalent to $(2,1,\dots,1,0)$ (i.e., $f$ is self-dual.)
 \end{enumerate}
\end{proof}

\subsection{Proof of Results from Section~\ref{sect:ccdv many coefficients}}

In this section, we prove hardness of CCDV for all generators not of the form $(\alpha_1,\alpha_2,\alpha_3,\dots,\alpha_3,\alpha_4,\alpha_5,\alpha_6)$.

\subsubsection{Proof of Theorem~\ref{theorem:approval generalization ccdv}}

This result covers the ``approval-like'' behavior of generators $f$ satisfying $\alpha^m_3>\alpha^m_{m-3}$:

\begin{restatable}{restatableTheorem}{thmapprovalgeneralizationccdv}\label{theorem:approval generalization ccdv}
 Let $f$ be an polynomial-time uniform $\mathbb Q$-generator with $f(m)=(\alpha^m_1,\dots,\alpha^m_m)$ for each $m$. Then there is a polynomial-time computable function $g$ such that
 \begin{itemize}
  \item $g$ takes as input an instance $M$ of 3DM  with $\card M=n$\footnote{in this case, $n$ is in fact the number of \emph{tuples} in $M$, not the size of a binary representation.} and produces an instance $I$ of $f$-CCDV with $m=3n$ candidates,
  \item if $\alpha^m_3>\alpha^m_{2n}$, then: $M$ is a positive instance of 3DM if and only if $I$ is a positive instance of $f$-CCDV.
 \end{itemize}
\end{restatable}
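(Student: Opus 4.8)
The plan is to reduce from the exactly-three-occurrences version of 3DM, which is NP-complete by Proposition~\ref{prop:3dm restriction}. Let $M\subseteq X\times Y\times Z$ be a 3DM instance with $\card X=\card Y=\card Z=k$ and $\card M=n=3k$, so that the universe $X\cup Y\cup Z$ has exactly $n$ elements. The function $g$ will build an $f$-CCDV instance with $m=3n$ candidates: the preferred candidate $p$, one \emph{element candidate} for each of the $n$ elements, and $2n-1$ dummies; the deletion budget is $k$. The intended solution is that deleting the $k$ tuple-votes of a perfect matching pulls every element candidate down exactly to the score of $p$, while no other deletion pattern within budget can.

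For each tuple $(x,y,z)\in M$ (with $x\in X$, $y\in Y$, $z\in Z$) I introduce one \emph{tuple-vote} placing $x,y,z$ in positions $1,2,3$ (so an $X$-element always occupies position $1$, a $Y$-element position $2$, a $Z$-element position $3$), placing $p$ in the last position $3n$, and placing every element candidate \emph{not} in the tuple in a fixed \emph{home} position, where the $n$ element candidates are assigned bijectively to positions $2n,\dots,3n-1$; the remaining slots (positions $4,\dots,2n-1$ and the three homes vacated by $x,y,z$) are filled by dummies. By the hypothesis $\alpha^m_3>\alpha^m_{2n}$ and monotonicity, the top three positions strictly dominate every position $\ge 2n$; this is exactly why $m=3n$ and the threshold $2n$ are forced, since positions $2n,\dots,3n$ supply precisely the $n+1$ ``low'' slots needed to park the $n$ element candidates together with $p$ while reserving the top three for the tuple. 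Using Lemma~\ref{lemma:coefficients realization} I then add setup votes fixing the relative starting scores to prescribed values and making the dummies lose enough points that $k$ deletions can never let them catch $p$; exactly as in the proof of Theorem~\ref{theorem:0 dots 0 -gamma -beta -alpha CCDV}, these setup votes rank $p$ first, so deleting one never benefits the controller and I may assume only tuple-votes are deleted.

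Correctness is a surplus computation. Deleting a set $D'$ of tuple-votes changes the surplus of an element candidate $s$ by $\sum_{v\in D'}\bigl[\alpha^m_{\mathrm{pos}(p,v)}-\alpha^m_{\mathrm{pos}(s,v)}\bigr]$. Since $p$ always sits in the minimal position $3n$, every term is $\le 0$: a vote that \emph{covers} $s$ (has $s$ in its top three) contributes the large drop $\alpha^m_{3n}-\alpha^m_{\mathrm{class}(s)}$ with $\alpha^m_{\mathrm{class}(s)}\ge\alpha^m_3$, whereas a non-covering vote contributes only $\alpha^m_{3n}-\alpha^m_{h_s}$ with $h_s\ge 2n$. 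As each non-covering contribution is a fixed quantity, the total change for $s$ depends only on how many deleted tuples contain $s$, and I set the starting surpluses so that a perfect matching, which covers each element exactly once, drives every surplus to exactly $0$. A short calculation then shows that for \emph{any} $D'$ with $\card{D'}\le k$ an \emph{uncovered} element strictly exceeds $p$, because its total change $\card{D'}\,(\alpha^m_{3n}-\alpha^m_{h_s})$ is strictly less negative than the matching change precisely by $\alpha^m_{\mathrm{class}(s)}-\alpha^m_{h_s}>0$. Hence every element must be covered; since $k$ deletions create exactly $3k=n$ covering incidences among $n$ elements, the deleted tuples cover each element exactly once and thus form a perfect matching, and conversely a matching works. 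This yields the claimed equivalence, and $g$ is clearly polynomial-time computable.

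The main obstacle, and the reason this needs more than the bare ``$3$-approval'' intuition, is that the low positions $2n,\dots,3n$ and even the top three need not carry equal coefficients (for Borda all coefficients are distinct), so deleting one tuple-vote perturbs many candidates by differing amounts. The device that tames this is pinning $p$ to the \emph{minimal} position $3n$: then in any non-covering vote $p$ never loses more than $s$, so non-covering deletions help $s$ strictly less than a covering deletion, and the aggregate non-covering perturbation that a matching inflicts on each element is a fixed quantity the realization lemma can absorb into the starting scores. The one step that requires genuine care is verifying that this pinning makes covering strictly necessary for \emph{every} budget $t\le k$, not merely for $t=k$.
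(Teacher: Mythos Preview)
Your 3DM-vote layout and the surplus argument are sound: placing $p$ in the minimum position and each non-tuple element in a fixed home position $\ge 2n$ does yield the clean inequality $(\alpha^m_{\mathrm{class}(s)}-\alpha^m_{h_s})+(k-t)(\alpha^m_{h_s}-\alpha^m_{3n})>0$ for any uncovered $s$ and any $t\le k$, and the incidence count then forces a perfect matching. That part is essentially the $t\le 2$ branch of the paper's construction without the blocking candidates.

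The gap is in your treatment of the setup votes. You invoke Lemma~\ref{lemma:coefficients realization} and then assert, by analogy with Theorem~\ref{theorem:0 dots 0 -gamma -beta -alpha CCDV}, that ``these setup votes rank $p$ first.'' Neither reference supports this. Lemma~\ref{lemma:coefficients realization} is proved by cycling a fixed initial vote and swapping two candidates; its output votes place $p$ in \emph{every} position, not just the first. And Theorem~\ref{theorem:0 dots 0 -gamma -beta -alpha CCDV} concerns the very special generator $(0,\dots,0,-\gamma,-\beta,-\alpha)$, where all but three positions share the maximal coefficient $0$, so ``$p$ first'' there merely means ``$p$ in the large $0$-block,'' and adjusting any single candidate by $\alpha,\beta,\gamma$ while leaving everyone else in that block is trivial. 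For an arbitrary $f$ with $\alpha^m_3>\alpha^m_{2n}$---think of Borda, where all coefficients are distinct---there is no such block, and it is not at all obvious that you can realize your prescribed targets for every element candidate using only $p$-first votes, while simultaneously controlling the side effects on the other $3n-2$ candidates. You have not argued this, and it is precisely the ``main technical difficulty'' the paper flags.

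The paper's fix is different from what you attempt: it does \emph{not} try to make setup votes individually unattractive. Instead it adds blocking candidates $b_1,\dots,b_t$ (where $t$ is maximal with $\alpha^m_{3+t}>\alpha^m_{2n}$) that occupy the top positions of every 3DM vote, sets their target scores so that they must lose the maximum possible points over $k$ deletions, and then chooses the initial vote $\vec v_{\mathrm{init}}$ for Lemma~\ref{lemma:coefficients realization} so that no cycled-and-swapped setup vote can give the witness set $B$ (or, in the $t\le 2$ case, $R=X\cup Y\cup Z\cup B$) as many points as a 3DM vote does. This tight-budget argument on a witness set is what forces only 3DM votes to be deleted, and it works for every generator in the class. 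If you want to rescue your $p$-first approach, you would need to argue separately that the required relative decreases are realizable by permutations of $C\setminus\{p\}$ under the shifted vector $(\alpha^m_2,\dots,\alpha^m_m)$---effectively re-proving a constrained version of Lemma~\ref{lemma:coefficients realization}---rather than citing the lemma as stated.
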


\begin{proof}
 Let $M\subseteq X\cup Y\cup Z$ be a 3DM-instance. Let $n=\card M=3\card X$, and let $X=\set{c_1,\dots,c_{\card X}}$, $Y=\set{c_{\card X+1},\dots,c_{2\card X}}$, and $Z=\set{c_{2\card X+1},\dots,c_{3\card X}}$. For $c\in X\cup Y\cup Z$, let $k(c)$ denote the unique number $i$ with $c=c_i$, and let $r(c)$ be defined as $0$, $1$, or $2$, depending on whether $c\in X$, $c\in Y$, or $c\in Z$.
 
 Without loss of generality, assume $\alpha^m_m=0$ (subtract $\alpha^m_m$ from every coefficient otherwise). Clearly, it suffices to consider the case $\alpha^3_m>\alpha^m_{2n}$, we produce an arbitrary instance with the correct number of candidates otherwise. Hence, there is some maximal $t$ with $\alpha^m_{3+t}>\alpha^m_{2n}$. Since $f$ is polynomial-time uniform, $t$ can be computed in polynomial time. Clearly, $t<2n-3$. We construct an instance of $f$-CCDV with candidate set $X\cup Y\cup Z\cup\set{b_1,\dots,b_t,p}\cup D$ where $p$ is the preferred candidate, the $b_i$ and $d_i$ are additional \emph{blocking} and \emph{dummy} candidates, and $D$ is a set of $2n-3$ dummy candidates. Since $t<2n-3$, it follows that $\card D\ge3$. By construction, the total number of candidates is $m=3n$.
 
 We first consider the case $t\ge3$. For each $(x,y,z)\in M$, we add a vote as follows
 
 $$b_1>\dots>b_t>x>y>z>\RESTcandidates>S_{xyz}>p,$$
 
 where $S_{xyz}$ denotes the order $s_n>s_{n-1}>\dots s_2>s_1$, with candidates $x$, $y$, and $z$ replaced with dummy candidates from $D$ (recall that $\card D\ge3$) and $\RESTcandidates$ contains the remaining dummy candidates. We say that this vote \emph{covers} the candidates $x$, $y$, and $z$, and we will call the votes obtained from $M$ \emph{3DM-voters}, to distinguish them from the votes introduced below that serve to set up the necessary scores for our candidates. Using Lemma~\ref{lemma:coefficients realization}, we add setup voters ensuring that the relative scores of the candidates are as follows:
 
 \begin{itemize}
  \item $\score{p}=0$,
  \item $\score{b_i}=\card X\cdot\alpha^m_i$ for each $i\in\set{1,\dots,t}$,
  \item $\score{c}=\alpha^m_{t+r(c)}+(\card X-1)\alpha^m_{m-k(c)}$ for each $c\in X\cup Y\cup Z$,
  \item for each $d\in D$, $\score d$ is low enough to ensure that $d$ cannot win the election by deleting at most $\card X$ voters.
 \end{itemize}
 
 We first show that $M$ is a positive 3DM-instance if and only if $p$ can be made a winner of the election by deleting at most $\card X$ of the 3DM-voters. Below, we then argue that the scores can be set up in such a way that, in order to make $p$ win with at most $k$ deletions, the controller will always remove only 3DM-voters.
 
 First assume that $M$ is positive, i.e., there is a cover $C\subseteq M$ with $\card C=\card X$. We remove exactly the votes corresponding to the tuples in $C$, and show that $p$ wins the resulting election. Removing these voters changes the scores of the candidates as follows:
 
 \begin{itemize}
  \item $p$ is in the last position of all the removed votes; since $\alpha^m_m=0$, the score of $p$ remains $0$,
  \item each $b_i$ is in position $i$ in each of the $\card X$ removed votes and hence loses $\card X\cdot\alpha^m_i$ points, therefore $b_i$ ties with $p$,
  \item each $c\in X\cup Y\cup Z$ loses $\alpha^m_{t+r(c)}$ points from deleting the vote covering $c$, as well as $(\card X-1)\cdot\alpha^m_{m-k(c)}$ points from the votes not covering $c$. Therefore, the final score of $c$ is $0$ and $c$ also ties with $p$.
 \end{itemize}
 
 Hence, after removing the votes corresponding to the cover, $p$ is a winner of the election as required.
 
 For the converse, assume that $p$ wins the election after deleting at most $\card X$ 3DM-votes. Deleting fewer than $\card X$ of the 3DM-votes does not suffice, since each $b_i$ must lose at least $\card X\cdot\alpha^m_i$ points (note that $\alpha^m_i\ge\alpha^m_{3+t}>\alpha^m_{2n}\ge 0$), and loses exactly $\alpha^m_i$ points from each removed 3DM-vote. Therefore, exactly $\card X$ 3DM-votes are removed. We prove that these votes correspond to a cover. 
 
 Assume that this is not the case, then there is some $c\in X\cup Y\cup Z$ such that no vote covering $c$ is removed. Then $c$ appears in position $m-k(c)$ in each of the $\card X$ removed votes, and therefore loses $\card X\cdot\alpha^m_{m-k(c)}$ points. Therefore, the final score of $c$ is 
 
 $$\alpha^m_{t+r(c)}+(\card X-1)\alpha^m_{m-k(c)}-\card X\alpha^m_{m-k(c)}=\alpha^m_{t+r(c)}-\alpha^m_{m-k(c)}\ge\alpha^m_{t+3}-\alpha^m_{2n}>0=\score p.$$
 
 Hence, $c$ beats $p$ in the election, a contradiction.
 
 To show that if $p$ can be made a winner of the election, then $M$ is a positive 3DM instance, it remains to show that the setup voters can be chosen such that, in order to make $p$ win with deleting at most $k$ votes, the controller can remove only 3DM-votes.
 
 By construction, the candidates in $B=\set{b_1,\dots,b_t}$ must lose as many points as they gain in $\card X$ many 3DM-votes. Therefore, it suffices to construct the setup votes such that each of these votes gives fewer points to $B$ than each 3DM-vote (which give the maximal possible amount of points to $B$, as the candidates from $B$ are voted in the top $t$ many spots).
 
 Recall that in the proof of Lemma~\ref{lemma:coefficients realization}, all setup votes introduced are obtained from an arbitrary vote $\overrightarrow v_{init}$ by cycling the vote $\overrightarrow v_{init}$ and swapping the position of two candidates in $\overrightarrow v_{init}$. We use the following initial vote $\overrightarrow v_{init}$ in the construction from Lemma~\ref{lemma:coefficients realization}:
 
 $$b_1>d_1>d_2>p>d_3>s_1>b_2>s_2>s_3>s_4>s_5>b_3>\dots>b_t>\RESTcandidates,$$
 
 where $\RESTcandidates$ contains the remainder of the candidates from $X\cup Y\cup Z\cup D$. Recall that $t\ge3$, and we can without loss of generality assume that $n\ge6$. Clearly, by cycling the vote $v_{init}$ and swapping the position of $2$ candidates, no vote is obtained that has all candidates from $B$ among the top $t+3$ positions, as required to give the maximal number of points to $B$ (since $\alpha^m_{t+3}>\alpha^m_{t+4}$), and have the candidate $p$ \emph{not} occur in the first $t+3$ positions (since $p$ does not receive any points from the 3DM-votes).
  
 This concludes the proof for the case $t\ge3$.

 We now consider the case $t\leq 2$. By choice of $t$, we know that $\alpha^m_{t+3}>\alpha^m_{t+4}=\alpha^m_{2n}=\alpha^m_{2n}$ (since $m=3n$). Since $m=3n$ and $t\leq 2$, we know that $\card D=m-t-n-1=2n-t-1\ge 2n-3$, since $m=3n$ and $t\leq 2$. We can without loss of generality assume that $n\ge 3$, and thus $\card D\ge n$. For each tuple $(x,y,z)$ in the 3DM-instance, we produce a vote

 $$x>y>z>b_1>\dots>b_t>S\setminus\set{x,y,z}>D>p,$$

 where $S\setminus\set{x,y,z}$ contains these candidates in some arbitrary order. We again call these voters \emph{3DM-voters}. Using Lemma~\ref{lemma:coefficients realization}, we introduce additional setup votes ensuring that the relative scores of the relevant candidates are as follows: 

\begin{itemize}
 \item $\score{p}=0$,
 \item for each relevant $i$, we have $\score{b_i}=\card X\cdot\alpha^m_{3+i}$,
 \item for each $c\in X\cup Y\cup Z$, we have $\score c=\alpha^m_{r(c)}+(\card X-1)\cdot\alpha^m_{t+4}$,
 \item for each $d\in D$, $d$ cannot win the election by deleting at most $\card X$ voters.
\end{itemize}

In each of the 3DM-voters, the score of a candidate $c\in X\cup Y\cup Z$ that does not appear in the first three positions is $\alpha^m_{t+4}$, since $\alpha^m_{t+4}=\alpha^m_{2n}$, and there are at least $n$ dummy candidates in $D$.

By the same reasoning as in the case $t\ge3$, one can easily see that the given 3DM-instance is positive if and only if $p$ can be made a winner by removing at most $\card X$ 3DM-votes. It remains to show how to construct the setup votes such that, when the controller removes at most $\card X$ many votes, she can only make $p$ win the election when only 3DM-voters are removed.

With $R$, we denote the set of relevant candidates that $p$ has to defeat, i.e., $R=X\cup Y\cup Z\cup\set{b_i\suchthat 1\leq i\leq t}$. Similar as in the case $t\ge3$, removing 3DM-votes removes the maximal number of points from $R$, and from the setup of the scores it is clear that it is necessary to remove as many points from $R$ as possible with removing $\card X$ votes (namely, $\card X\cdot((\sum_{i=1}^{3+t}\alpha^m_i)+(n-3)\cdot\alpha^m_{4+t})$). Hence it suffices to construct the setup votes such that in each of these, the candidates $R$ have fewer points than $(\sum_{i=1}^{3+t}\alpha^m_i)+(n-3)\cdot\alpha^m_{4+t}$.

Similarly to the above case, this can be achieved by choosing the initial setup vote $\overrightarrow v_{init}$, as $$s_1>\Box>\Box>s_2>\Box>\Box>s_3>\Box>\Box>\dots>s_n>\Box>\Box,$$
where $\Box$ is a placeholder for an arbitrary candidate not from $X\cup Y\cup Z$. (Recall that $m=3n$, and $n=\card{X\cup Y\cup Z}$, hence there are sufficiently many candidates not from $S$ filling the $\Box$-positions.) Again, the setup votes are obtained from $\overrightarrow v_{init}$ by rotating an arbitrary number of positions and swapping at most two candidates. Clearly, every vote obtained like this has at least one candidate from $R$ in a position $i$ with $i>5\ge t+3$, hence each of these votes gives fewer points to the candidates from $R$ than the votes introduced for the tuples of 3DM above. As argued above, this concludes the proof.
\end{proof}

\subsubsection{Proof of Theorem~\ref{theorem:veto generalization}}

The next result generalizes $4$-veto. Note that for $3$-veto, CCDV and bribery can be solved in polynomial time due to Theorem~\ref{theorem:ccdv few coefficients ptime cases} and Theorem~\ref{theorem:lin bribery results list}, respectively.

\begin{restatable}{restatableTheorem}{thmvetogeneralization}\label{theorem:veto generalization}
 Let $f$ be an polynomial-time uniform $\mathbb Q$-generator with $f(m)=(\alpha^m_1,\dots,\alpha^m_m)$ for each $m$. Then there is a polynomial-time computable function $g$ such that
 \begin{itemize}
  \item $g$ takes as input an instance $M$ of 3DM  with $\card M=n$\footnote{again, $n$ is in fact the number of \emph{tuples}} and produces an instance $I$ of $f$-CCDV with $m=3n$ candidates,
  \item if $\alpha^m_{2n}>\alpha^m_{m-3}$, then: $M$ is a positive instance of 3DM if and only if $I$ is a positive instance of $f$-CCDV.
 \end{itemize}
\end{restatable}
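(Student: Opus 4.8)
The plan is to give a direct reduction from 3DM, in the restricted form of Proposition~\ref{prop:3dm restriction} (so that each element occurs in exactly three tuples), closely paralleling the proof of Theorem~\ref{theorem:approval generalization ccdv} but with the construction ``turned upside down.'' First I would normalize $f(m)$ so that $\alpha^m_m=0$; the hypothesis $\alpha^m_{2n}>\alpha^m_{m-3}$ then splits each vote into a large \emph{high region} (positions $1,\dots,2n$, all of value $\ge\alpha^m_{2n}$) and a \emph{low region} of at least four positions at the end (positions $m-3,\dots,m$, all of value $\le\alpha^m_{m-3}<\alpha^m_{2n}$). Writing $X\cup Y\cup Z=\set{c_1,\dots,c_n}$ with $\card X=n/3$, the candidate set is $\set p\cup X\cup Y\cup Z$ together with $2n-1$ dummy candidates and a few blocking candidates, for a total of $m=3n$; the budget is $k=\card X$.

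The key design point is that in the approval proof a \emph{covered} element is moved into a high position while the non-covered elements sit in distinct low positions, whereas here the roles of high and low are exchanged: being covered now means being \emph{protected} in a low position, so that a covered element loses little when its vote is deleted. For each tuple $(x,y,z)\in M$ I introduce one \emph{3DM-voter} that places $p$ in the last position (value $0$), places $x$, $y$, $z$ in three fixed low positions assigned by type (say $m-3$, $m-2$, $m-1$ for the $X$-, $Y$-, $Z$-member), places every \emph{other} element $c$ in its own dedicated high position $P_c\in\set{1,\dots,2n}$ (these being distinct across the $n$ elements), and fills the remaining positions with dummies. Each blocker is placed in a further fixed high position of every 3DM-voter. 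Using Lemma~\ref{lemma:coefficients realization} I then add setup voters so that $\score p=0$, $\score c=\alpha^m_{\mathrm{slot}(c)}+(\card X-1)\alpha^m_{P_c}$ for $c\in X\cup Y\cup Z$, each blocker has score exactly $\card X\cdot\alpha^m_{P_b}$, and each dummy is so far behind that it cannot win after $\card X$ deletions.

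For correctness, note that deleting a 3DM-voter leaves $p$ unchanged (value $0$), makes the three elements of that tuple lose only their small slot value, and makes every other element lose its large value $\alpha^m_{P_c}$; each blocker loses $\alpha^m_{P_b}$. The blockers force the full budget to be spent on 3DM-voters, since with fewer than $\card X$ deletions some blocker stays above $p$. Writing $a_c$ for the number of deleted tuples containing $c$, after deleting $\card X$ of the 3DM-voters the score of $c$ becomes $(1-a_c)\bigl(\alpha^m_{\mathrm{slot}(c)}-\alpha^m_{P_c}\bigr)$; since $\alpha^m_{\mathrm{slot}(c)}<\alpha^m_{P_c}$, this is $\le 0$ exactly when $a_c\le1$. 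As the $\card X$ deleted tuples contribute $3\card X=n$ incidences among the $n$ elements, $a_c\le1$ for all $c$ forces $a_c=1$ for all $c$, i.e.\ the deleted tuples form a perfect matching. Conversely, deleting a perfect matching makes every element and every blocker tie $p$, so $p$ wins; completeness is then immediate.

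The part I expect to be the main obstacle is the CCDV-specific difficulty stressed in Section~\ref{sect:ccav and ccdv}: since no vote is protected, I must guarantee that the controller never benefits from deleting a setup vote and is effectively forced to attack only 3DM-voters. As in Theorem~\ref{theorem:approval generalization ccdv}, I would handle this by choosing the initial vote $\overrightarrow v_{\mathrm{init}}$ fed to Lemma~\ref{lemma:coefficients realization} so that $p$ occupies a high position in every setup vote, making any deletion of a setup vote strictly lower $\score p$ and hence never helpful; the blockers then take care of forcing the budget to be fully used. A minor case distinction, analogous to the $t\le2$ versus $t\ge3$ split in Theorem~\ref{theorem:approval generalization ccdv}, may be convenient depending on whether the low region consists of exactly four positions (all of value $0$) or more, but the blocking candidates make the forcing argument uniform. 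Membership in \NP{} being clear, this yields NP-completeness.
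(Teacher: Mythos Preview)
Your reduction is essentially the paper's: the paper also places $p$ last, the three elements of each tuple in the last four slots, the remaining elements in dedicated positions among the first $2n$, and uses blocking candidates in fixed high positions; the score formula and the pigeonhole argument ($a_c\le1$ for all $c$ forces a perfect matching) match. The substantive gap is your treatment of the setup votes.

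Your plan is to choose $\overrightarrow v_{\mathrm{init}}$ so that $p$ sits in a high position in \emph{every} setup vote. This does not work with Lemma~\ref{lemma:coefficients realization}: that lemma produces the setup votes by \emph{cycling} $\overrightarrow v_{\mathrm{init}}$ (plus a single swap), so $p$ visits every position across the family of setup votes; you cannot keep $p$ high throughout. Consequently the ``deleting a setup vote lowers $\score p$, hence never helps'' argument collapses. (Even if you could, note that you would need $p$ in the very first position for the relative scores to never improve, not merely a ``high'' one.)

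The paper fixes this differently, and the fix dictates the number of blockers. It takes $n$ blocking candidates $b_1,\dots,b_n$ (not ``a few''), places all of them in the top $n$ positions of every 3DM vote, and chooses $\overrightarrow v_{\mathrm{init}}$ so that the $b_i$ are spread one every three positions:
\[
\Box>\Box>b_1>\Box>\Box>b_2>\dots>\Box>\Box>b_n.
\]
After any cyclic shift and a single swap, at least one $b_i$ lands in the last four positions; since $\alpha^m_{m-3}<\alpha^m_{2n}\le\alpha^m_n$, every setup vote gives strictly fewer total points to $B=\{b_1,\dots,b_n\}$ than any 3DM vote. Because the target scores force $B$ to lose the maximum possible amount over $\card X$ deletions, only 3DM votes can be deleted. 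In particular, your candidate accounting is also off: with $p$, the $n$ elements, and $2n-1$ dummies you are already at $3n$ candidates and have no room for blockers; the paper uses $n$ blockers and only $n-1$ dummies.
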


\begin{proof}
 As in the proof of Theorem~\ref{theorem:approval generalization ccdv}, we assume that $\alpha^m_m=0$. Let $M\subseteq X\times Y\times Z$ be a 3DM-instance with $\card M=3\card X=n$; we can without loss of generality assume that $n\ge4$. We use the values $r(c)$ and $k(c)$ as defined in the previous proof. We construct an $f$-CCDV instance with candidate set $X\cup Y\cup Z\set{b_1,\dots,b_n}\cup\set p\cup D$, where $D$ is a set of dummy candidates with $\card D=n-1$. By construction, this instance has $3m$ candidates. We again only consider the case that $\alpha^m_{2n}>\alpha^m_{m-3}$. Since $n\ge4$, we know that $\card D\ge 3$. For each $(x,y,z)\in M$, we introduce a voter voting as follows:
 
 $$b_1>b_2>\dots>b_n>S_{xyz}>\RESTcandidates>z>y>x>p,$$
 
 where $S_{xyz}$ is the sequence $s_1>\dots>s_n$ with each $x$, $y$, and $z$ replaced by a candidate from $D$ (recall that $\card D\ge 3$), and $\RESTcandidates$ contains the remaining candidates from $D$ (in an arbitrary order). We again call these votes \emph{3DM-votes}.
 
 Using Lemma~\ref{lemma:coefficients realization}, we introduce additional votes ensuring that the scores are as follows:
 
 \begin{itemize}
  \item for each $i\in\set{1,\dots,n}$, we have $\score{b_i}=\card X\cdot \alpha^m_i$,
  \item for each $c\in X\cup Y\cup Z$, we have $\score c=(\card X-1)\alpha^m_{n+k(c)}+\alpha^m_{m-r(c)-1}$,
  \item $\score p=0$,
  \item for each candidate $d\in D$, the score is so low that $d$ cannot win the election when at most $\card X$ votes are removed.
 \end{itemize}
 
 We show that in the resulting election, $p$ can be made a winner by removing at most $\card X$ votes if and only if $M$ is a positive instance of 3DM. As in the proof of Theorem~\ref{theorem:approval generalization ccdv}, we first show this claim where we only consider 3DM-votes, and then argue how the setup votes can be constructed in such a way that the controller will always remove 3DM-votes only.
 
 First assume that $M$ is a positive 3DM-instance, then is a cover $C\subseteq M$ with exactly $\card X$ elements. We delete exactly the votes corresponding to $C$. Then:
 
 \begin{itemize}
  \item each $b_i$ loses exactly $\card X\cdot\alpha^m_i$ points, therefore having $0$ points afterwards,
  \item each $c\in X\cup Y\cup Z$ loses exactly $\alpha^m_{m-r(c)-1}$ points from the one vote corresponding to a tuple covering $c$, and $(\card X-1)\alpha^m_{n+k(c)}$ points from the remaining $\card X-1$ votes, hence $c$ ends up with $0$ points,
  \item the score of $p$ is not affected, since $\alpha^m_m=0$, hence $p$ still has $0$ points after the votes are removed.
 \end{itemize}

 Therefore, all relevant candidates tie and $p$ is indeed a winner of the election as claimed.
 
 For the converse, assume that $p$ can be made a winner with removing at most $\card X$ of the 3DM-votes. Since $b_1$ must lose $\ell\cdot\alpha^m_1$ many points, at least $\card X$ votes must be removed, hence exactly $\card X$ votes are removed. Let $C$ be the subset of $M$ corresponding to the removed votes, we claim that $C$ is a cover. Assume indirectly that $C$ is not a cover, then, since $\card C=\card X$, there is some $c\in X\cup Y\cup Z$ appearing in at least two tuples of $C$. Then, $c$ loses at most
 
 $$(\card X-2)\cdot\alpha^m_{n+k(c)}+2\alpha^m_{m-r(c)-1}$$ many points, hence $c$ has at least
 
 $
 \begin{array}{lll}
        (\card X-1)\alpha^m_{n+k(c)}+\alpha^m_{m-r(c)-1}-(\card X-2)\cdot\alpha^m_{n+k(c)}-2\alpha^m_{m-r(c)-1} 
 & =   &\alpha^m_{n+k(c)}-\alpha^m_{m-r(c)-1} \\
 & \ge &\alpha^m_{2n}   -\alpha^m_{m-3} \\
 & > & 0,
 \end{array}$
 
 which is a contradiction. Hence $C$ is a cover as required.
 
 It remains to show how to construct the setup votes such that the deletion of any set of $\card X$ many votes which does not only include 3DM-votes fails to ensure that $p$ wins the election. For this, it suffices to construct the initial vote $\overrightarrow v_{init}$ in the following way:
 
 $$\Box>\Box>b_1>\Box>\Box>b_2>\Box>\Box>\dots >b_{n-1}>\Box>\Box>b_n$$
 
 (recall that $m=3\cdot n$, hence there are enough candidates to fill the $\Box$-positions). Since all votes in the setup votes are constructed from $\overrightarrow v_{init}$ by swapping two candidates or cycling, each setup vote has a candidate $b_i$ in one of the last four votes. Since $\alpha^m_{m-3}<\alpha^m_{2n}\leq\alpha^m_{n}$, this implies that in a non-3DM-vote, the sum of the scores of $B=\set{b_1,\dots,b_n}$ is lower than  in a 3DM-vote (where each $b_i$ receives at least $\alpha^m_n$ points). From the above it follows that in order for $p$ to win, the set $B$ must lose exactly $\ell$ times the number of points they lose when deleting a single 3DM-vote. Hence in order to make $p$ win with deleting at most $\ell$ votes, only 3DM-votes can be deleted. This completes the proof.
\end{proof}

\subsubsection{Proof of Theorem~\ref{theorem:many different coefficients}}

We now combine the above two results to show that all ``many coefficients''-cases of CCDV are NP-complete:

\theoremmanydifferentcoefficients*

\begin{proof}
 Clearly, if the condition is true for some $m_0$, then it remains true for each $m\ge m_0$. We use a reduction from 3DM. Let $M$ be an instance of 3DM, let $n=\card{M}$, let $m=3n$. Without loss of generality, assume $n\ge4$, and $3n\ge m_0$. Since $\alpha^m_3>\alpha^m_{m-3}$, we know that $\alpha^m_3>\alpha^m_{2n}$ or $\alpha^m_{2n}>\alpha^m_{m-3}$ must hold, and we can determine, in polynomial time, which of these cases holds, since $f$ is polynomial-time uniform. Let $g$ be the reduction from Theorem~\ref{theorem:approval generalization ccdv} in the first case, and the one from Theorem~\ref{theorem:veto generalization} in the second case. In both cases, $f(M)$ is a positive instance of $f$-CCDV if and only if $M$ is a positive 3DM-instance. Since $g$ is polynomial-time computable, this completes the proof.
 In both cases, $g(M)$ is 
\end{proof}

\subsection{Proofs of Results from Section~\ref{sect:ccdv hardness:generic ccdvstar reductions}}

In this section, we present hardness results for CCDV that are obtained by a direct reduction from \ccdvstar. In these cases, the difficulty in the reduction is the construction of an appropriate set of ``setup votes'' that mirror the ``undeletable'' voters from the \ccdvstar-instance. Since in CCDV, no voter is immune from deletion, we need to setup these votes such that they are ``unattractive'' to delete. More precisely, in the two following proofs deleting one of the ``setup votes'' will immediately imply that the intended candidate $p$ does not win the election, since deleting such a vote implies a ``chain reaction'' of further required deletions that exceeds the budget available to the controller. In the proof of Theorem~\ref{theorem:alpha 0 dots 0 alpha < beta ccdv}, we use an exponential construction (but only logarithmically many steps of it in order to be able to perform the reduction in polynomial time), whereas the proof of Theorem~\ref{theorem:alpha 0 dots 0 beta < alpha ccdv} relies on a simpler linear construction.

\subsubsection{Proof of Theorem~\ref{theorem:alpha 0 dots 0 alpha < beta ccdv}}

\theoremccdvhardnessalphazweodotsalphasmallerbeta*

\begin{proof}
 The dual generator to $f$ is $\dual f=(\beta,0,\dots,0,-\alpha)$. Due to \cite{HemaspaandraHemaspaandraSchnoor-CCAV-AAAI-2014}, $\dual f$-CCAV is NP-complete, hence due to Proposition~\ref{prop:ccdvstar poly equiv ccav}, $f$-\ccdvstar is NP-complete as well. Therefore, it suffices to prove that $f$-\ccdvstar reduces to $f$-CCDV.
 
 Let $I'$ be an instance of $f$-\ccdvstar, consisting of a set $D$ of deletable votes, a set $R$ of votes that cannot be deleted, a preferred candidate $p$, and a budget $k$ indicating the number of votes that the controller can delete. In the following, we will denote a vote $c_1>c_2>\dots>c_{n-1}>c_n$ simply as $c_1>c_n$, since the remaining candidates all receive $0$ points from this vote and thus are not relevant. Without loss of generality, we assume the following:
 
 \begin{itemize}
  \item There is no deletable vote of the form $p>c$ for any candidate $c$: Clearly, the controller will never delete such a vote, hence we can move all these votes to the set $R$ of undeletable votes without changing whether the instance is positive.
  \item There is no deletable vote of the form $c>p$ for any candidate $c$: Since $\beta>\alpha$, removing such a vote (letting $p$ gain at least $\beta$ points against every candidate) is more profitable for the controller than removing any vote of the form $(c_1>c_2)$ (letting $p$ gain $\alpha$ points against a single candidate). We therefore can simply remove all of these votes and decrease the budget accordingly. (In case that there are more votes of this kind than the budget allows, the controller will only remove such votes, and we can use the obvious greedy strategy to decide the instance $I'$, producing a fixed positive/negative instance $I$ depending on whether $I'$ is positive/negative.)
 \end{itemize}
 
 As discussed earlier, the task in a reduction from $f$-\ccdvstar to $f$-CCDV is to convert the undeletable votes from $I'$ into votes that give the same relative points to all relevant candidates, and which cannot be removed by the controller when trying to make $p$ win the election with deleting at most $k$ votes.
 
 Without loss of generality, we assume that $\mathop{gcd}(\alpha,\beta)=1$, and hence there are natural numbers $A,B$ with $1=A\cdot\alpha-B\cdot\beta$.\footnote{From numbers $C$ and $D$ with $1=C\cdot\beta-D\cdot\alpha$, we obtain $A$ and $B$ as required as $A=t\beta-D$ and $B=t\alpha-C$ for sufficiently large $t$ such that these numbers are positive.} Therefore, it suffices to show how to construct votes that the controller will not remove, and which add $\alpha$ points, respectively remove $\beta$ points from the candidates (relative to $p$). Adding $1$ point to a candidate $c\neq p$ then can be implemented by using $A$ groups of votes that each add $\alpha$ points to $c$, and then $B$ groups of votes each removing $\beta$ points. We can assume that indeed all candidates must gain points relatively to $p$ by adding votes of the form $p>c$ for arbitrary candidates $c$ to ensure that $p$ has a sufficient headstart over the remaining candidates. Clearly, such votes will never be deleted by the controller. We do this at least twice to ensure that $\score p\ge2\alpha$.
 
 To reduce the score of a candidate $c$ by $\beta$, we simply add a vote $x>c$ for a fresh dummy candidate $x$, who appears in the $0$-point segment of all other votes. Then, clearly, $p$ beats $x$, and this vote will not be deleted by the controlled. Therefore, it remains to show how to construct a set of votes that add $\alpha$ points to a candidate $c\neq p$ and which the controller will not delete. Since $p$ gets $0$ points in all of the deletable votes (and in all votes we introduce below that replace the votes in $R$), we can compute the score $\score{p}$ that $p$ will have in the final election (this score is unaffected by the controller's delete actions).
 
 Since $p$ never appears in the last position of any vote, $p$ gains either $\alpha$ or $0$ points from any present vote. Hence, $\score p=N_p\alpha$ for some natural number $N_p\ge2$. In the following, let $B=\left\lceil\frac\beta\alpha\right\rceil$. Since $\alpha<\beta$, we know that $B\ge2$. To let a candidate $c\neq p$ gain $\alpha$ points relatively to $p$, we proceed as follows: We add dummy candidates $d_1,\dots,d_\ell$ with $\ell=\left\lceil\log_{\frac\beta\alpha}(k)\right\rceil+2$ by placing them in the positions awarding $0$ points of every existing vote. These dummy candidates are only used for the process of adding $\alpha$ points to $c$ once, further additions in the sequel (even to the same candidate $c$) use a new set of dummy candidates. We now add a single vote $c>d_1$, which lets $c$ gain $\alpha$ points relative to $p$. It remains to add votes ensuring that the vote $c>d_1$ cannot be removed by the controller. These votes will set up the scores of the dummy candidates as follows:

 \begin{itemize}
   \item Each $d_i$ for $1\leq i\leq\ell-1$ ties with $p$,
   \item the only way to make $d_i$ lose points (relative to $p$) is to remove votes $d_i>d_{i+1}$, which then lets $d_{i+1}$ gain points (relative to $p$).
 \end{itemize}
 
 Hence removing the vote $c>d_1$, which lets $d_1$ gain $\beta$ points relative to $p$ requires the controller to remove votes of the form $d_1>d_2$, which in turn lets $d_2$ gain $\beta$ points for each removal, this process continues for $d_i$ with $i\ge 2$. In this way, removing $c>c_1$ triggers a ``chain reaction'' of additional necessary removals. We will set up the votes in such a way that this process forces the controller to remove more votes than her budget allows. This implies that, in fact, she cannot remove the vote $c>d_1$, as required. The numbers of votes we need to add in each step, and consequently the numbers of points that the candidates gain, will essentially grow exponentially in $\frac\beta\alpha$. Since the controller can only remove a polynomial number of votes, we only require logarithmically many steps, yielding a construction that can be performed in polynomial time as required. Specifically, we use the following votes:
 
 \begin{description}
  \item[induction start: score of $d_1$] After adding the vote $c>d_1$, $d_1$ currently has $-\beta$ points.
     \begin{itemize}
      \item We add $(\alpha-1)$ many votes of the form $x>d_1$ for a fresh (as above---the controller will never delete these votes as $p$ strictly beats $x$) dummy candidate $x$, which further decrease the score of $d_1$ to  $-\alpha\beta$ (this ensures that the score of $d_1$ is a multiple of $\alpha$, recall that $\score p=N_p\alpha$). 
      \item We add $N_1:=N_p+\beta$ votes of the form $d_1>d_2$, each of these votes lets $d_1$ gain $\alpha$ points relative to $p$, hence after this $\score{d_1}=-\alpha\beta+(N_p+\beta)\alpha=N_p\alpha=\score p$.
     \end{itemize}
  \item[induction step: score of $d_{i+1}$] After adding the $N_i$ votes $d_i>d_{i+1}$, $d_{i+1}$ currently has $-N_i\beta$ points.
     \begin{itemize}
      \item Add $m_{i+1}:=(-N_i)\mod\alpha$ many votes $x>d_{i+1}$ for a new dummy candidate $x$ as above, then $d_{i+1}$'s score is $-(N_i+m_{i+1})\beta$. By construction, this is a multiple of $\alpha$, namely $-\left\lceil\frac{N_i}{\alpha}\right\rceil\alpha\beta=-M_i\alpha$ (with $M_i=\left\lceil\frac{N_i}\alpha\right\rceil\beta$). Since $m_{i+1}\in\set{0,\dots,\alpha-1}$, the number of votes added in this step is bounded by the constant $\alpha$.
      \item Add $N_{i+1}:=N_p+M_i$ votes of the form $d_{i+1}>d_{i+2}$, then $\score{d_{i+1}}=N_p\alpha=\score p$.
     \end{itemize}
 \end{description}
 
 We claim that the number of votes added is polynomial in the input. For this, we first prove inductively that $N_i\leq N_p+i\beta N_pB^{i-1}$ for all relevant $i$.
 
 \begin{description}
  \item[induction start] Since $N_p\ge1$, we have that $N_1=N_p+\beta\leq N_p+\beta N_pB^0$, hence for $i=1$ the claim holds.
  \item[induction step] Due to the above, $N_{i+1}=N_p+M_i=N_p+\left\lceil\frac{N_i}\alpha\right\rceil\beta$. We have
  
  \begin{tabular}{lclr}
   $N_{i+1}$ & $=$    & $N_p+\left\lceil\frac{N_i}\alpha\right\rceil\beta$ \\
             & $\leq$ & $N_p+\left(\frac{N_i}\alpha+1\right)\beta$ \\
             & $=$    & $N_p+\beta+N_i\frac\beta\alpha$ \\
             & $\leq$ & $N_p+\beta+N_iB$ \\
             & $\leq$ & $N_p+(N_i+\beta)B$ & ($B\ge1$) \\
             & $\leq$ & $N_p+\left(N_p+i\beta N_pB^{i-1}+\beta\right)B$ & (induction) \\
             & $\leq$ & $N_p+\left(N_pB^{i-1}+i\beta N_pB^{i-1}+\beta B^{i-1}\right)B$ & ($B^{i-1}\ge1$) \\
             & $=$    & $N_p+(N_p+i\beta N_p+\beta)B^i$ \\
             & $\leq$ & $N_p+(N_p\beta+i\beta N_p)B^i$ & ($N_p\beta\ge N_p+\beta$, as $N_p,\beta\ge2$) \\
             & $=$    & $N_p+(i+1)N_p\beta B^i,$
  \end{tabular}
  
  as required. This completes the induction.
 \end{description}
 
 We now use the above bound on $N_i$ to show that in fact, only polynomially many votes are added.
Recall that $\ell=\left\lceil\log_{\frac\beta\alpha}(k)\right\rceil+2$. Therefore, for each relevant $i$, we have that
 
 \begin{tabular}{lclr}
  $N_i$ & $\leq$ & $N_p+i\beta N_pB^{i-1}$ \\
        & $\leq$ & $N_p+\ell\beta N_pB^\ell$ & ($i\leq\ell$) \\
        & $\leq$ & $N_p+\ell\beta N_pB^{\log_{\frac\beta\alpha}(k)}B^3$ & (definition of $\ell$) \\
        & $=$    & $N_p+\ell\beta N_p k^{\log_{\frac\beta\alpha}(B)}B^3.$ & ($a^{\log_b(x)}=x^{\log_b(a)}$ for all $a,b,x$)
 \end{tabular}
 
 To see that $a^{\log_b(x)}=x^{\log_b(a)}$ for all $a,b,x$, recall that $\log_a(x)=\frac{\log_b(x)}{\log_b(a)}$. From this we get $\log_b(x)=\log_a(x)\cdot\log_b(a)=\log_a(x^{\log_b(a)})$, and by building the power to the base $a$, this finally implies $x^{\log_b(a)}=a^{\log_b(x)}$ as required.

 Since $N_p$, $\ell$, and $k$ are polynomial in the instance and $\alpha$, $\beta$, and $B$ are constant, this shows that the number of votes required is in fact polynomial in the size of $I'$.
 
 We now show that the above construction in fact enforces that in a successful control operation, only votes corresponding to deletable votes from the instance $I'$ are removed. First note that, since all dummy candidates tie with $p$, and there is no removable vote in which $p$ gains any points, the controller will not remove any vote of the form $d_i>d_{i+1}$, unless she also removes a vote of the form $c>d_1$ as introduced above. Hence it suffices to show that when the controller removes a vote $c>d_1$, she cannot make $p$ win the election by removing at most $k-1$ additional votes.
 
 We prove inductively that if the controller removes the vote $c>d_1$, then she has to remove at least $\left(\frac\beta\alpha\right)^i$ votes of the form $d_i>d_{i+1}$.
 
 \begin{description}
  \item[induction start.] If the controller removes the vote $c>d_1$, then candidate $d_1$ (who previously tied with $p$) gains $\beta$ points against $p$. The only way to make $d_1$ lose these points again (relative to $p$) is to remove votes that vote $d_1$ ahead of $p$. The only votes of this form are the votes $d_1>d_2$, removing such a vote lets $d_1$ lose $\alpha$ points against $p$. Therefore, at least $\frac\beta\alpha$ such votes must be removed.
  
  \item[induction step.] Assume that inductively, $\left(\frac\beta\alpha\right)^i$ votes of the form $d_i>d_{i+1}$ are removed. Each such removal lets $d_{i+1}$ gain $\beta$ points against $p$, hence $d_{i+1}$ gains at least $\left(\frac\beta\alpha\right)^i\beta=\frac{\beta^{i+1}}{\alpha^i}$ points. Since $d_{i+1}$ initially ties with $p$, the controller must remove votes to let $d_{i+1}$ lose this number of points. Analogously to the case $i=1$, the only votes allowing this are votes of the form $d_{i+1}>d_{i+2}$, each removal of one of these votes lets $d_{i+1}$ lose $\alpha$ points relative to $p$. Therefore, at least $\frac1\alpha\frac{\beta^{i+1}}{\alpha^i}=\left(\frac\beta\alpha\right)^{i+1}$ must be removed, as claimed.
 \end{description}
 
 In particular, for $i=\ell-1$, we show that the controller must remove at least $k$ votes of the form $d_{i}>d_{i+1}$, which she cannot do, as her budget is $k$, and she already removed the vote $c>d_1$. To see that at least $k$ such votes must be removed, recall that, due to the above, at least $\left(\frac\beta\alpha\right)^i$ of these votes must be removed. With $i=\ell-1$, it follows that
 
 $$\left(\frac\beta\alpha\right)^{\ell-1}=\left(\frac\beta\alpha\right)^{\left\lceil\log_{\frac\beta\alpha}(k)+1\right\rceil}\ge\left(\frac\beta\alpha\right)^{\log_{\frac\beta\alpha}(k)+1}\ge k,$$
 
 as required. This concludes the proof.
\end{proof}

\subsubsection{Proof of Theorem~\ref{theorem:alpha 0 dots 0 beta < alpha ccdv}}

\theoremalphazerodotszerobetasmalledalphaccdv* 

\begin{proof}
 We proceed analogously to the case $\alpha<\beta$ treated in Theorem~\ref{theorem:alpha 0 dots 0 alpha < beta ccdv}. With the exact same argument as in that proof, it suffices to show how to add a group of votes that add $\alpha$ many points to a candidate $c\neq p$. We can again assume that $p$ gains $0$ points in the deletable votes from the original instance $I'$, since the hardness proof for $\dual f$-CCAV of \cite{HemaspaandraHemaspaandraSchnoor-CCAV-AAAI-2014} ensures this. We can therefore also assume, just as above, that $\score p=N_p\alpha$ for some $N_p\in\mathbb N$, and this score does not change with the controller's actions.
 
 To let a candidate $c\neq p$ gain $\alpha$ points relative to $p$, we again add dummy candidates and a single vote $c>d_1$. Again, if the controller removes this setup-vote, $d_1$ gains $\beta$ points (relative to $p$). Since $\alpha>\beta$, the construction is much simpler than in the proof of Theorem~\ref{theorem:alpha 0 dots 0 alpha < beta ccdv}: Removing a single vote $d_1>d_2$ suffices to make $d_1$ lose the points gained by removing the vote $c>d_1$ is removed. Analogously, removing a single vote $d_{i+1}>d_{i+2}$ undoes the $\beta$ points gained by $d_{i+1}$ when a vote $d_i>d_{i+1}$ is removed. Instead of the exponential/logarithmic process of the proof of Theorem~\ref{theorem:alpha 0 dots 0 alpha < beta ccdv}, we thus can use a simple linear chain: We add $k+1$ dummy candidates, and can easily ensure that each of them has scores as required. In the proof of Theorem~\ref{theorem:alpha 0 dots 0 alpha < beta ccdv}, this lead to a number of votes approximately $\left(\frac\beta\alpha\right)^i$ in step $i$. Since in the current proof, $\frac\beta\alpha$ is smaller than $1$, we do not run into a similar exponentially-growing process this time.
 
 To be more precise, we add dummy candidates $d_1,\dots,d_{k+1}$, and votes as follows:
 
 \begin{itemize}
  \item a single vote $c>d_1$,
  \item for each $i\in\set{1,\dots,k}$, $N_i$ many votes $d_i>d_{i+1}$ (the number $N_i\ge1$ will be defined below).
 \end{itemize}

 We set up the scores of $d_i$ such that $d_i$ currently does not beat $p$, but does so after gaining $\beta$ points, i.e., 
 
 $$N_p\alpha-\beta<\score{d_i}\leq N_p\alpha.$$
 
 Hence, the score of $d_i$ must lie in an interval of length $\beta$ (recall that $\beta<\alpha$). This can be achieved by adding votes as follows:
 
 \begin{description}
  \item[induction start: score of $d_1$.] After adding the vote $c>d_1$, the score of $d_1$ is $-\beta$. To ensure that the score is at least $N_p\alpha-\beta+1$, we add $N_1:=N_p+1$ votes of the form $d_1>d_2$; after this, the score of $d_1$ is $N_p\alpha+\alpha-\beta$. We then add (constantly many) votes of the form $x>d_1$, each removing $\beta$ points from $d_1$, to move the score of $d_1$ into the required interval of length $\beta$.
  \item[induction step: score of $d_{i+1}$.] After adding the $N_i$ many votes $d_i>d_{i+1}$, the score of $d_{i+1}$ is $-N_i\beta$. To ensure that the score is at least $N_p\alpha-\beta+1$, we add $N_{i+1}=\left\lceil N_i\frac\beta\alpha\right\rceil+N_p$ votes of the form $d_{i+1}>d_{i+2}$. We then add constantly many votes of the form $x>d_{i+1}$ to move the score of $d_{i+1}$ into the interval of length $\beta$. 
 \end{description}
 
 It is easy to see that the number of added votes is polynomial: $N_1$ is clearly polynomial, and $N_{i+1}$ is essentially obtained from $N_i$ by multiplying with $\frac\beta\alpha<1$ and adding $N_p$.
 
 It remains to show, analogously to the proof of Theorem~\ref{theorem:alpha 0 dots 0 alpha < beta ccdv}, that the controller cannot remove the vote $c>d_1$ and still make $p$ win with at most $k-1$ further deletions. By construction, removing the vote $c>d_1$ lets $d_1$ beat $p$ in the election, and to reduce the score of $d_1$ relative to $p$, a vote $d_1>d_2$ must be removed. Inductively, when a vote $d_i>d_{i+1}$ is removed, then $d_{i+1}$ gains $\beta$ points relative to $p$ and thus beats $p$, hence a vote $d_{i+1}>d_{i+2}$ must be removed. Therefore, a sequence of $k$ many votes must be removed in addition to the vote $c>d_1$, and hence the controller cannot remove the vote $c>d_1$.
\end{proof}

\subsubsection{Corollary for $f=(\alpha,0,\dots,0,\beta)$}

As a corollary of the above two results, we obtain the following characterization of the complexity of CCDV for all generators of the form $(\alpha_1,\alpha_2,\dots,\alpha_2,\alpha_3)$:

\begin{restatable}{restatableCorollary}{corollaryalphazerodotszerobetaccdvdichotomy}
 \label{corollary:alpha 0 dots 0 beta ccdv dichotomy}
 Let $f=(\alpha_1,\alpha_2,\dots,\alpha_2,\alpha_3)$ be a generator with $\alpha_1>\alpha_2>\alpha_3$. If $f$ is equivalent to $(2,1,\dots,1,0)$, then $f$-CCDV can be solved in polynomial time, otherwise, $f$-CCDV is NP-complete.
\end{restatable}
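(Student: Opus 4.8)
The plan is to reduce this corollary to the two hardness theorems already established (Theorems~\ref{theorem:alpha 0 dots 0 alpha < beta ccdv} and~\ref{theorem:alpha 0 dots 0 beta < alpha ccdv}) together with the polynomial-time case for $(2,1,\dots,1,0)$ from Theorem~\ref{theorem:ccdv few coefficients ptime cases}, after first normalizing $f$ via an equivalence-preserving affine transformation. Since CCDV has the same complexity for equivalent (indeed ultimately equivalent) generators, it suffices to analyze a convenient normal form of $f$.

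First I would apply the affine map $x\mapsto x-\alpha_2$ to the coefficients of $f$. This fixes the middle coefficient at $0$, turning $f$ into the equivalent generator $(\alpha_1-\alpha_2,0,\dots,0,\alpha_3-\alpha_2)$. Writing $\alpha:=\alpha_1-\alpha_2>0$ and $\beta:=\alpha_2-\alpha_3>0$ (both positive since $\alpha_1>\alpha_2>\alpha_3$), this is exactly $(\alpha,0,\dots,0,-\beta)$; after a further positive scaling I may assume that $\alpha,\beta$ are coprime positive integers. This is precisely the form handled by the two CCDV hardness theorems.

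Next I would pin down the dividing line. Using the equivalence criterion from the preliminaries (two vectors are equivalent iff $\beta_i=\gamma\alpha_i+\delta$ for some $\gamma>0$ and $\delta$), I would verify that $(\alpha,0,\dots,0,-\beta)$ is equivalent to $(2,1,\dots,1,0)$ exactly when $\alpha=\beta$: matching the middle coefficient forces $\delta=1$, matching the last forces $\gamma\beta=1$, and matching the first then forces $\gamma\alpha=1$, so the system is solvable iff $\alpha=\beta$. Equivalently, $f$ is equivalent to $(2,1,\dots,1,0)$ iff $\alpha_1-\alpha_2=\alpha_2-\alpha_3$. In that case the polynomial-time claim is immediate from the $f_5$ case of Theorem~\ref{theorem:ccdv few coefficients ptime cases} together with the equivalence-invariance of CCDV.

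Finally, when $\alpha\neq\beta$ I would split into the two remaining subcases: if $\alpha<\beta$ then NP-completeness follows from Theorem~\ref{theorem:alpha 0 dots 0 alpha < beta ccdv}, and if $\alpha>\beta$ then from Theorem~\ref{theorem:alpha 0 dots 0 beta < alpha ccdv}; membership in \NP{} is already noted in the preliminaries. Since these two subcases together with the equivalence case exhaust all possibilities, the dichotomy follows. I expect essentially no technical obstacle here, as the substance lives entirely in the cited theorems; the only thing to get right is the bookkeeping that the affine normalization and the equivalence test precisely align the boundary $\alpha=\beta$ with equivalence to $(2,1,\dots,1,0)$, and that the scaling to coprime integers preserves whether we land in the easy or the hard case.
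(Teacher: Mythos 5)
Your proposal is correct and follows essentially the same route as the paper's proof: normalize $f$ to the form $(\alpha,0,\dots,0,-\beta)$ via an equivalence-preserving transformation, identify $\alpha=\beta$ as exactly the case where $f$ is equivalent to $(2,1,\dots,1,0)$ (handled by Theorem~\ref{theorem:ccdv few coefficients ptime cases}), and otherwise invoke Theorem~\ref{theorem:alpha 0 dots 0 alpha < beta ccdv} or Theorem~\ref{theorem:alpha 0 dots 0 beta < alpha ccdv} depending on the sign of $\alpha-\beta$. Your explicit verification of the equivalence boundary is just a spelled-out version of the paper's terser observation that non-equivalence to $(2,1,\dots,1,0)$ forces $\alpha\neq\beta$.
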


\begin{proof}
 Clearly, $f$ is equivalent to a generator of the form $(\alpha,0,\dots,0,-\beta)$ with $\alpha,\beta>0$. The polynomial-time case follows from Theorem~\ref{theorem:ccdv few coefficients ptime cases}. If $f$ is not equivalent to $(2,1,\dots,1,0)$, then $f$ also is not equivalent to $(1,0,\dots,0,-1)$. Therefore, $\alpha\neq\beta$, and NP-hardness follows from Theorem~\ref{theorem:alpha 0 dots 0 alpha < beta ccdv} or Theorem~\ref{theorem:alpha 0 dots 0 beta < alpha ccdv}.
\end{proof}

\subsection{Proofs of Results from Section~\ref{sect:ccdv hardness by inspecting ccav reduction}}

We now present the proofs of the results stated in Section~\ref{sect:ccdv hardness by inspecting ccav reduction}, these are hardness proofs for CCDV obtained essentially by a reduction from those \ccdvstar-cases that arise in the hardness proofs of the corresponding dual generators from~\cite{HemaspaandraHemaspaandraSchnoor-CCAV-AAAI-2014}. (The proof of Theorem~\ref{theorem:0 dots 0 -gamma -beta -alpha CCDV} is contained completely in the main paper.)

\subsubsection{Proof of Theorem~\ref{theorem:0 minus alpha5 minus alpha1 dots minus alpha1 CCDV hardness}}

\theoremzerominusalphafiveminusalphaonedotsminusalphaoneccdvhardness*

\begin{proof}
 We can equivalently write $f$ as $f=(0,-\alpha_5,-\alpha_1,\dots,-\alpha_1)$ with $0<\alpha_5<\alpha_1$. Similarly as in the proof of Theorem~\ref{theorem:0 dots 0 -gamma -beta -alpha CCDV} above, we consider the hardness proof of $\dual f$-CCAV from~\cite{HemaspaandraHemaspaandraSchnoor-CCAV-AAAI-2014}, where $\dual f=(\alpha_1,\dots,\alpha_1,\alpha_5,0)$. Their proof gives us, by applying Proposition~\ref{prop:ccdvstar poly equiv ccav}, a hardness proof of $f$-\ccdvstar with a reduction from 3DM. We consider the votes introduced and points required by this reduction. Following the proof in~\cite{HemaspaandraHemaspaandraSchnoor-CCAV-AAAI-2014}, and reversing the votes according to~\ref{prop:ccdvstar poly equiv ccav}, we obtain the following reduction. The set of candidates is the same as in the above proof of Theorem~\ref{theorem:0 dots 0 -gamma -beta -alpha CCDV}. For each $S_i=(x,y,z)\in M$, the following four votes are introduced (we only state the first two candidates, since the remaining ones all get the same number of points from the vote):
 
 \begin{itemize}
  \item $S_i>S_i'>\dots$,
  \item $x>S_i>\dots$,
  \item $y>S_i>\dots$,
  \item $z>S_i'>\dots$.
 \end{itemize}

 The scores of the candidates before the controller's action are as follows:
 
 \begin{itemize}
  \item $\score p=0$,
  \item $\score c=\alpha_1$ for each $c\in X\cup Y\cup Z$,
  \item $\score{S_i}=\min(\alpha_1,2(\alpha_1-\alpha_5))$ for each $S_i\in M$,
  \item $\score{S_i'}=\alpha_1-\alpha_5$ for each $S_i\in M$.
 \end{itemize}
 
 We obtain these scores by adding votes of the form $x>d>\dots$ and $d>x>\dots$ for a relevant candidate $x$ and a dummy candidate $d$, where we use a new dummy candidate $d$ each time, who gets $-\alpha_1$ points from all remaining votes. Clearly, using enough ``setup''-votes, we can ensure that the dummy candidates do not win the election.
 
 The budget awarded to the controller by the reduction from~\cite{HemaspaandraHemaspaandraSchnoor-CCAV-AAAI-2014} is $n+2k=5k$. We now show that, in order to make $p$ win the election when the scores are as constructed above and with at most $5k$ removals of voters, the controller will only remove that have candidates from $X\cup\ Y\cup Z\cup\set{S_i, S_i'\suchthat S_i\in M}$ in the first two positions. This then in particular shows that none of the above-introduced setup votes can be deleted by the controller. For this, we fix a set $V$ of votes with $\card V\leq 5k$ such that $p$ wins the election after the votes in $V$ are removed. Then, in particular, after removing the votes in $V$, $p$ is not beaten by any candidate in $X\cup Y\cup Z$. We define the following sets:
 
 \begin{itemize}
  \item Let $C_1$ contain all candidates $c\in X\cup Y$ such that $V$ contains at least one vote having $c$ in the first position,
  \item let $C_2$ contain the remaining candidates from $X\cup Y$, i.e., $C_2=(X\cup Y)\setminus C_1$.
  \item Similarly, let $D_1$ contain all $c\in Z$ such that $V$ contains at least one vote having $c$ in the first position,
  \item let $D_2$ contain the remaining candidates from $Z$, i.e., $D_2=Z\setminus D_1$.
  \item Similarly, let $E_1$ contain all $S_i\in M$ such that $V$ contains at least one vote having $S_i$ in the first position,
  \item let $E_2$ contain the remaining sets $S_i$, i.e., $E_2=M\setminus E_1$.
 \end{itemize}
 
 By definition and since $X\cap Y=\emptyset$, we obtain
 
 \begin{itemize}
  \item $\card{C_1}+\card{C_2}=\card X+\card Y=2k$,
  \item $\card{D_1}+\card{D_2}=\card Z=k$,
  \item $\card{E_1}+\card{E_2}=\card M=3k$.
 \end{itemize}
 
 In each of the $5k$ votes in $V$, only at most the first two candidates gain points relative to $p$, hence there are only $10k$ positions in $V$ which ensure that after removing, the candidates lose points relative to $p$. Since each $S_i'$ must lose points relative to $p$ (as $\alpha_1>\alpha_5$), there are at most $7k$ positions in $V$ available for candidates in $C_1\cup C_2\cup D_1\cup D_2\cup E_1\cup E_2$.
 
 Each candidate in $X\cup Y\cup Z$ must lose $\alpha_1>\alpha_1-\alpha_5$ points relative to $p$; each candidate $S_i$ must lose $\min(\alpha_1,2(\alpha_1-\alpha_5))>\alpha_1-\alpha_5$ points relative to $p$. Therefore, each of these candidates that does not appear in the first position of a vote in $V$ must appear in the second position of at least two votes from $V$. Since there are only $7k$ positions available for these candidates, it follows that
 
 $$\card{C_1}+2\card{C_2}+\card{D_1}+2\card{D_2}+\card{E_1}+2\card{E_2}\leq 7k.$$
 
 With $\card{C_2}=2k-\card{C_1}$, $\card{D_2}=k-\card{D_1}$ and $\card{E_2}=3k-\card{E_1}$, we obtain
 
 \medskip
 
 \begin{tabular}{llll}
 &                   $\card{C_1}+2(2k-\card{C_1})+\card{D_1}+2(k-\card{D_1})+\card{E_1}+2(3k-\card{E_1})$ & $\leq$ & $7k$ \\
 $\Leftrightarrow$ & $4k-\card{C_1}+2k-\card{D_1}+6k-\card{E_1}$                                          & $\leq$ & $7k$ \\
 $\Leftrightarrow$ & $12k-(\card{C_1}+\card{D_1}+\card{E_1})$                                             & $\leq$ & $7k$ \\
 $\Leftrightarrow$ & $\card{C_1}+\card{D_1}+\card{E_1}$                                                   & $\geq$ & $5k$.
 \end{tabular}
 
 \medskip
 
 Clearly, we also have $\card{C_1}+\card{D_1}+\card{E_1}\leq 5k$, since in the $5k$ votes from $V$, only $5k$ first positions are available. Therefore, we obtain
 
 $$\card{C_1}+\card{D_1}+\card{E_1}=5k.$$
 
 Let $e_2^*$ denote the number of votes in $V$ that have a candidate of the form $S_i$ in the second position. Since every candidate from $E_2$ appears in no first position of any vote in $V$, but must lose more than $\alpha_1-\alpha_5$ points relatively to $p$, every such candidate must appear in the second position of at least two votes in $V$. Therefore, $e_2^*\ge2\card{E_2}$. We show that, in fact, the two values are equal.
 
 Therefore, assume indirectly that $\card{E_2}<\frac12e_2^*$. We consider the number of relevant positions in votes in $V$ that the candidates in $X\cup Y\cup Z\cup\set{S_i\suchthat S_i\in M}$ require. Since each candidate in $E_1$ requires one first position, and additionally, $e_2^*$ second positions are required by the candidates of the form $S_i$, and each of the $3k$ candidates $S_i'$ required at least one position, and we assumed that $e_2^*>2\card{E_2}$, and we know from above that $\card{C_1}+\card{D_1}+\card{E_1}=5k$, the number of positions required is at least
 
 \begin{tabular}{ll}
      & $\card{C_1}+2\card{C_2}+\card{D_1}+2\card{D_2}+\card{E_1}+e_2^*+3k$ \\
  $>$ & $\card{C_1}+2(2k-\card{C_1})+\card{D_1}+2(k-\card{D_1})+\card{E_1}+2\card{E_2}+3k$ \\
  $=$ & $\card{C_1}+2(2k-\card{C_1})+\card{D_1}+2(k-\card{D_1})+\card{E_1}+2(3k-\card{E_1})+3k$ \\  
  $=$ & $4k-\card{C_1}+2k-\card{D_1}+6k-\card{E_1}+3k$ \\
  $=$ & $15k-(\card{C_1}+\card{D_1}+\card{E_1})$ \\
  $=$ & $15k-5k$ \\
  $=$ & $10k$.
 \end{tabular}
 
 Since there are only $10k$ positions available, we have a contradiction. Therefore, it follows that $\card{E_2}=\frac12e_2^*$ as claimed.
 
 We now consider all votes in $V$ having a candidate $c\in X\cup Y$ in the first place. Since such a $c$ only needs to lose $\alpha_1$ points against $p$, and there is a vote available (introduced by the reduction above) that has a second relevant candidate (namely $S_i$ with $c\in S_i$) in the second position, and these are the only votes introduced whose deletion removes points from both $c$ and another relevant candidate, we can without loss of generality assume that each vote in $V$ having $c\in X\cup Y$ in the first position has a vote $S_i$ in the second position. In particular, this implies that $\card{C_1}\leq e_2^*$, therefore $$\frac12\card{C_1}\leq\frac12e_2^*=\card{E_2}.$$ 

 From $\card{C_1}+\card{D_1}+\card{E_1}=5k$ and $\card{D_1}\leq\card Z=k$, we obtain $\card{C_1}+\card{E_1}\ge 4k$, with $\card{E_1}=3k-\card{E_2}$, this implies $\card{C_1}-\card{E_2}\ge k$. Since we know $\card{C_1}-\card{E_2}\leq\frac12\card{C_1}$ from the above, it follows that $k\leq\card{C_1}-\card{E_2}\leq\frac12\card{C_1}$, and therefore $\card{C_1}\ge 2k$. Since $\card{C_1}\leq\card{X\cup Y}=2k$, this implies $$\card{C_1}=2k\mathtext{ and }\card{E_2}\ge\frac12\card{C_1}=k.$$
 
 From $\card{C_1}+\card{D_1}+\card{E_1}=5k$, we thus get $\card{D_1}+\card{E_1}=3k$, thus $\card{D_1}+(3k-\card{E_2})=3k$, i.e., $\card{D_1}=\card{E_2}$. Since $\card{D_1}\leq\card{Z}=k$, we get $k\leq\card{E_2}=\card{D_1}\leq k$, i.e.,
 
 $$\card{D_1}=\card{E_2}=k.$$
 
 Therefore, we know that $\card{C_1}=2k$, $\card{C_2}=2k-\card{C_1}=0$, $\card{D_1}=k$, $\card{D_2}=k-\card{D_1}=0$, $\card{E_1}=3k-\card{E_2}=2k$, and $\card{E_2}=k$. Therefore, these candidates together use
 
 $\card{C_1}+2\card{C_2}+\card{D_1}+2\card{D_2}+\card{E_1}+2\card{E_2}=2k+2\cdot 0+k+2\cdot 0+2k+2k=7k$ relevant positions in $V$. Since the candidates of the form $S_i'$ each use at least one relevant position of a vote in $V$, this means that all $10k$ relevant positions in $V$ are used by relevant candidates. In particular, the controller cannot remove any vote that has a dummy candidate in a relevant position, and thus does not remove any of the setup votes. This concludes the proof.
\end{proof}

\subsection{Proofs of Results from Section~\ref{sect:ccdv hardness by direct 3DM reduction}}

In this section we prove Theorem~\ref{theorem:collection of few coefficients CCDV hardness cases}, the proof is split up into the three distinct types of generators covered by the theorem.

\begin{restatable}{restatableTheorem}{theoremccdvhardnessalphaonealphatwozerodotszerominusalphafourminusalphafiveminusalphasix}
 \label{theorem:ccdv hardness alpha1 alpha2 0 dots 0 minus alpha4 minus alpha5 minus alpha6}
 Let $f=(\alpha_1,\alpha_2,\alpha_3,\dots,\alpha_3,\alpha_4,\alpha_5,\alpha_6)$ with $\alpha_1>\alpha_3>\alpha_5$. Then $f$-CCDV is NP-complete.
\end{restatable}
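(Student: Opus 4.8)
The plan is to give a direct reduction from the restricted version of three-dimensional matching shown NP-complete in Proposition~\ref{prop:3dm restriction}, in which every element of $X\cup Y\cup Z$ occurs in exactly three triples, so that $\card M=3\card X=:n$ and every cover has size $k:=\card X$. Membership in \NP{} is immediate, so only hardness needs work. First I would normalize the generator: subtracting $\alpha_3$ from every coefficient (an equivalence-preserving affine transformation) makes the long middle block equal to $0$, leaving a strictly positive leading coefficient $A:=\alpha_1-\alpha_3>0$ in the first position and a strictly negative coefficient $-B:=\alpha_5-\alpha_3<0$ in position $m-1$ (with $\alpha_6-\alpha_3\le -B$ in the last). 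These two inequalities are the only structure the hypothesis guarantees, and they provide the two ``handles'' the reduction exploits: deleting a vote that ranks a candidate $c$ first while $p$ sits in the middle block lowers $c$ by exactly $A$ relative to $p$, whereas the strictly low tail positions let me seat $p$ (or blocking candidates) so as to control which deletions are profitable.

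The construction follows the template of Theorems~\ref{theorem:0 dots 0 -gamma -beta -alpha CCDV} and~\ref{theorem:0 minus alpha5 minus alpha1 dots minus alpha1 CCDV hardness}. For each triple $S_i=(x,y,z)\in M$ I would introduce a constant number of \emph{3DM-votes} together with per-triple linking candidates $S_i,S_i'$ and a supply of dummy candidates. The crucial observation is that the first position, being the unique strictly-high slot, is the only \emph{selective} way to decrease a single competitor relative to $p$: with $p$ in the middle block, deleting a vote lowers only the candidate in position~$1$, since every middle candidate (including $p$) is affected identically. The strictly low tail positions $m-1,m$ are instead used to place $p$ and the linking candidates so that the budget arithmetic forces coordination across a triple. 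I would fix a tight budget (a small multiple of $k$, in the spirit of the related constructions) and calibrate the baseline scores via Lemma~\ref{lemma:coefficients realization} so that each element candidate starts exactly one ``covering unit'' above $p$, each $S_i,S_i'$ starts so that a deletion pattern not corresponding to a chosen triple leaves some competitor above $p$, and all dummy candidates are hopeless. A genuine exact cover then yields the required deletions making everyone tie $p$, which gives the easy direction.

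For the converse I would argue, as in the proof of Theorem~\ref{theorem:0 minus alpha5 minus alpha1 dots minus alpha1 CCDV hardness}, by a counting argument over the ``relevant'' positions available in the deleted votes: since every element must be brought to at most $p$'s score and each 3DM-vote can selectively lower only a bounded number of competitors through its distinguished positions, the available deletions leave just enough relevant positions to cover all $3k$ elements once, forcing the deleted votes to encode an exact cover. Crucially, I must guarantee that the controller never deletes a setup vote produced by Lemma~\ref{lemma:coefficients realization}; I would do this exactly as in the proof of Theorem~\ref{theorem:approval generalization ccdv}, by choosing the initial vote $\overrightarrow v_{init}$ of the realization lemma so that $p$ occupies a high position in every setup vote (all of which are cyclic shifts of $\overrightarrow v_{init}$ with a single swap), whence deleting such a vote strictly hurts $p$ and can be excluded.

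The main obstacle is the bundling step, that is, forcing the three elements of a triple to be covered together while only one strictly-high position is available per vote. Because the middle-versus-tail mechanism is inherently non-selective (placing $p$ in the tail lowers \emph{all} middle candidates at once), the reduction cannot simply deposit the three elements in three distinguished slots; the coordination has to be extracted from the tight budget together with the slot-counting inequality, and this is precisely where the linking candidates $S_i,S_i'$ and the exact score offsets do the work. A secondary difficulty is the degenerate subcases of the hypothesis---for instance $\alpha_2=\alpha_3$ (no second high slot) or $\alpha_5=\alpha_6$ (the two tail slots coincide in value)---which shrink the set of usable distinguished positions. I expect to split into a small number of subcases according to which of the inequalities among $\alpha_2,\alpha_3,\alpha_4,\alpha_5,\alpha_6$ are strict, reusing the same scheme but relocating $p$ and the linking candidates among the guaranteed slots (position~$1$ and positions $m-1,m$) in each.
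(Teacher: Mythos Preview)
Your plan misses the key idea and, as a result, sets up for a much more complicated construction than is needed. You explicitly say that ``the reduction cannot simply deposit the three elements in three distinguished slots'' because the tail is non-selective; but that is precisely what the paper does, and the tail \emph{is} selective once you flip the sign of the argument. In the paper's proof, after normalizing to $\alpha_3=0$, each triple $(x,y,z)$ yields a single 3DM-vote $z>d_1>\RESTcandidates>d_2>y>x$ with $p$ in the middle block: $z$ sits in position~$1$, while $y$ and $x$ sit in positions $m-1$ and $m$. Deleting such a vote lowers $z$ by $\alpha_1$ relative to $p$ (good), but \emph{raises} $y$ by $\alpha_5$ and $x$ by $\alpha_6$ relative to $p$ (bad). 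The scores are calibrated so that each $z\in Z$ starts in the interval $(0,\alpha_1]$ above $p$---forcing at least one deletion per $z$ and hence exactly $\card X$ deletions of 3DM-votes---while each $y$ starts in $(-2\alpha_5,-\alpha_5]$ and each $x$ in $(-2\alpha_6,-\alpha_6]$, so that being hit twice pushes $y$ or $x$ strictly above $p$. Thus the bundling is enforced not by linking candidates and a slot-counting inequality, but by the penalty mechanism: the tail positions are selective because deletion there \emph{helps} the tail candidate, and the score windows cap how many deletions each $x,y$ can tolerate.

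Two further points. First, there are no subcases: only $\alpha_1>0$ and $\alpha_5,\alpha_6>0$ (after normalization) are used, so the possible equalities among $\alpha_2,\alpha_3,\alpha_4$ or $\alpha_5=\alpha_6$ are irrelevant. Second, the setup votes are not handled via the cyclic-shift trick you propose; the paper simply uses votes of the form $p>d_1>\RESTcandidates>d_2>d_3>d_4$ (to give $p$ a head start), $c>d_1>\RESTcandidates>d_2>d_3>d_4$ (to raise $c$ by $\alpha_1$), and $d_1>d_2>\RESTcandidates>d_3>c>d_4$ (to lower $c$ by $\alpha_5$ or $\alpha_6$), and then argues that since each $z\in Z$ must lose points relative to $p$ and only 3DM-votes rank any $z$ above $p$, all $\card X$ deletions must be 3DM-votes. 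Your plan to protect setup votes by placing $p$ high would also work, but the paper's argument is tighter because it ties non-deletability to the $Z$-candidates rather than to $p$ directly.
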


\begin{proof}
 For the proof, we normalize to $\alpha_3=0$ and write $f$ as $f=(\alpha_1,\alpha_2,0,\dots,0,-\alpha_4,-\alpha_5,-\alpha_6)$ with  $\alpha_1>0>-\alpha_5\ge-\alpha_6$ and $\alpha_2\ge0>-\alpha_5$.

 We again reduce from 3DM. Let $M\subseteq X\times Y\times Z$ be a be 3DM-instance. We construct an instance of $f$-CCDV as follows:
 
 \begin{itemize}
  \item The candidate set is $X\cup Y\cup Z\cup\set p\cup D$, where $p\notin X\cup Y\cup Z$ is the preferred candidate, and $D$ is a set of dummy candidates,
  \item For each $(x,y,z)\in M$, we add a vote $$(z>d_1>\RESTcandidates>d_2>y>x),$$ where $d_1,d_2\in D$, and $\RESTcandidates$ contains the remaining candidates in an arbitrary order (all of these candidates obtain $0$ points from this vote). We call these votes \emph{3DM-votes}.
  \item We introduce additional setup-votes (see below for details) such that the relative points of the candidates gained from the 3DM-votes and the setup votes are as follows:
  \begin{itemize}
   \item $\score p=0$
   \item $0<\score z\leq\alpha_1$ for each $z\in Z$ (note that $\alpha_1>0$)
   \item $-2\alpha_5<\score y\leq-\alpha_5$ for each $y\in Y$ (note that $\alpha_5>0$),
   \item $-2\alpha_6<\score x\leq-\alpha_6$ for each $x\in X$ (note that $\alpha_6\ge\alpha_5>0$),
   \item no dummy candidate $d\in D$ can win the election when at most $\card X$ votes are deleted.
  \end{itemize}
 \end{itemize}
 
 We first show that $M$ is positive if and only if $p$ can be made a winner of the election with deleting at most $\card X$ of the 3DM-votes. In the following, we identify elements of $M$ and the corresponding 3DM-votes.
 
 First assume that $M$ is a positive instance, i.e., there is some $C\subseteq M$ with $\card C=\card X$ such that each $c\in X\cup Y\cup Z$ appears in exactly one tuple of $C$. We show that $p$ wins the election when exactly the votes in $C$ are removed. Since, by construction, the dummy candidates cannot win the election, it suffices to show that no candidate $c\in X\cup Y\cup Z$ beats the preferred candidate $p$. Hence let $c$ be such a candidate. Since $C$ is a cover, exactly one 3DM-vote in which $c$ gets a non-zero amount of points is removed. Depending on whether $c\in Z$, $c\in Y$, or $c\in X$, $c$ gains $\alpha_1$, $-\alpha_5$, or $-\alpha_6$ points from this vote. We make a case distinction.
 
 \begin{itemize}
  \item If $c\in Z$, then $c$ loses $\alpha_1$ points. Since $\score c\leq\alpha_1$ initially, the final score of $c$ is at most $0$, and hence $c$ does not beat $p$.
  \item If $c\in Y$, then $c$ loses $-\alpha_5$ points, i.e., gains $\alpha_5$ points. Since $\score c\leq-\alpha_5$ initially, $c$ also does not beat $p$.
  \item If $c\in X$, then analogously, $c$ gains $\alpha_6$ points; since $\score c\leq\alpha_6$, $c$ does not beat $p$.
 \end{itemize}
 
 Therefore, $p$ wins the election as required. 
 
 For the converse, assume that there is a set $C$ of at most $\card X$ 3DM-votes such that removing $C$ makes $p$ win the election. Since initially, each candidate $z\in Z$ beats $p$, each such $z$ must lose points relatively to $p$. The only way for $z$ to lose points relative to $p$ by removing 3DM-votes is to remove a 3DM-vote that has $z$ in the first position. Therefore, $C$ contains, for each $z\in Z$, a tuple covering $z$. To show that $C$ also covers each $x\in X$ and each $y\in Y$, it suffices, due to cardinality reasons, to show that no such candidate appears in two tuples (or votes) from $C$.
 
 First assume that some $y\in Y$ appears in two votes from $C$. Then $y$ gains $2\alpha_5$ points. Since the score of $y$ is initially more than $-2\alpha_5$, this implies that the final score of $y$ is more than $0$, hence $y$ beats $p$, a contradiction. Analogously, if $x\in X$ appears in two votes from $C$, then $x$ gains $2\alpha_6$ points, beating $p$, we again have a contradiction.

 This proves that $M$ is a positive instance if and only if $p$ can be made a winner by deleting at most $\card X$ of the 3DM-votes. It remains to show how we can add votes that set up the relative points of the candidates as required above, and which will not be deleted by the controller. We first compute the points that the candidates obtain from the 3DM-votes; we denote these points with $\scoresub{3DM}c$ for a candidate $c$.
 
 \begin{itemize}
  \item $p$ gets $0$ points from each 3DM-vote, hence $\scoresub{3DM}p=0$,
  \item each $c\in X\cup Y\cup Z$ appears in exactly $3$ tuples from $C$, and gains $0$ points in all other 3DM-votes. Therefore:
  \begin{itemize}
    \item For each $z\in Z$, $\scoresub{3DM}z=3\alpha_1$,
    \item for each $y\in Y$, $\scoresub{3DM}y=-3\alpha_5$,
    \item for each $x\in X$, $\scoresub{3DM}x=-3\alpha_6$.
  \end{itemize}
 \end{itemize}
 
 Therefore, relative to $p$, each $z\in Z$ must lose $2\alpha_1$ points. We achieve this by adding two votes of the form 
 
 $$p>d_1>\RESTcandidates>d_2>d_3>d_4,$$
 
 where $d_1,d_2,d_3,d_4$ are dummy candidates from $D$, and $\RESTcandidates$ contains all remaining candidates. This lets $p$ gain $2\alpha_1$ points relatively to all candidates in $X\cup Y\cup Z$, and hence each $z\in Z$ beats $p$ by exactly $\alpha_1$ points, as required. However, these votes let $p$ also gain points against each candidate in $X\cup Y$. To get the required relative scores for these candidates, we proceed as follows:
 
 \begin{itemize}
  \item After adding the votes above, $p$ beats each $y\in Y$ by $3\alpha_5+2\alpha_1$ points. To ensure that $-2\alpha_5<\score y\leq-\alpha_5$ (relative to $p$), we add the following votes:
  \begin{itemize}
    \item votes of the form $y>d_1>\RESTcandidates>d_2>d_3>d_4$ (with $\RESTcandidates$ and dummies as previously) let $y$ gain $\alpha_1$ points relative to $p$. 
    \item votes of the form $d_1>d_2>\RESTcandidates>d_3>y>d_4$ let $y$ lose $\alpha_5$ points relative to $p$.
  \end{itemize}
  Combining these steps allows to adjust the relative score of $y$ to lie in the required interval of length $\alpha_5$.
  \item For $x\in X$, we proceed analogously by adding votes letting $x$ gain $\alpha_1$ points and votes removing $\alpha_6$ points, which allow the relative score of $x$ to lie in the required interval of length $\alpha_6$.
 \end{itemize}
 
 To ensure that the dummy candidates cannot win the election, we use the following construction: For each occurrence of a dummy candidate in one of the above votes, we use a fresh dummy candidate (who gets $0$ points in all other votes). Then each dummy candidate has at most $\alpha_1$ (absolute) points; the preferred candidate $p$ has $2\alpha_1$ (absolute) points. Therefore, no dummy candidate can beat $p$.
 
 It remains to show that if the CCDV instance is positive, i.e., if $p$ can be made a winner of the election with at most $\card X$ deletions, then $p$ can in fact be made a winner with deleting at most $\card X$ 3DM-votes. To see this, note that each candidate $z\in Z$ must lose points relative to $p$. This is only possible by removing votes in which $z$ is votes ahead of $p$. By construction, this is the case only for the 3DM-votes. Since in each 3DM-vote, only a single candidate from $Z$ is voted ahead of $p$, it is necessary to remove at least $\card X$ 3DM-votes in order to make $p$ win the election. Therefore, if $p$ is made a winner of the election by removing at most $\card X$ votes, then in fact exactly $\card X$ votes are removed, and each of them is a 3DM-vote. This concludes the proof.
\end{proof}

\begin{restatable}{restatableTheorem}{theoremccdvalphaoneaphatwobiggerthanalphathreebiggerthanalphasix}
 \label{theorem:ccdv alpha1alpha2>alpha3>alpha6}
 Let $f=(\alpha_1,\alpha_2,\alpha_3,\dots,\alpha_3,\alpha_6)$ with $\alpha_1,\alpha_2>\alpha_3>\alpha_6$. Then $f$-CCDV is NP-complete.
\end{restatable}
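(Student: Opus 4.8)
The plan is to give a direct reduction from the restricted version of 3DM (Proposition~\ref{prop:3dm restriction}), following closely the structurally identical case treated in Theorem~\ref{theorem:ccdv hardness alpha1 alpha2 0 dots 0 minus alpha4 minus alpha5 minus alpha6}. First I normalize: since scoring vectors are non-increasing we have $\alpha_1\ge\alpha_2>\alpha_3>\alpha_6$, and after an equivalence-preserving affine transformation I may assume $\alpha_3=0$, so that $f=(\alpha_1,\alpha_2,0,\dots,0,-\gamma)$ with $\alpha_1\ge\alpha_2>0$ and $\gamma:=-\alpha_6>0$; clearing denominators and dividing out the common factor, I may also assume $\alpha_1,\alpha_2,\gamma$ are integers with $\gcd(\alpha_1,\alpha_2,\gamma)=1$. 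Given a 3DM instance $M\subseteq X\times Y\times Z$ with $\card M=n=3k$ and $k=\card X=\card Y=\card Z$, every element occurring in exactly three tuples, I build an $f$-CCDV instance on candidate set $X\cup Y\cup Z\cup\set{p}\cup D$ ($D$ a pool of fresh dummies) with budget $k$. For each $(x,y,z)\in M$ I add one \emph{3DM-vote} $x>y>\RESTcandidates>z$, placing $x$ first (value $\alpha_1$), $y$ second (value $\alpha_2$), $z$ last (value $-\gamma$), and $p$ together with dummies in the zero-valued middle.

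Using Lemma~\ref{lemma:coefficients realization} I then add setup votes so that, relative to $\score p=0$, the scores are $\score x\in(0,\alpha_1]$ for each $x\in X$, $\score y\in(0,\alpha_2]$ for each $y\in Y$, and $\score z\in(-2\gamma,-\gamma]$ for each $z\in Z$, with every dummy far too low to win. The core correctness claim is that $M$ is a positive 3DM instance if and only if $p$ can be made a winner by deleting at most $k$ of the 3DM-votes. For the forward direction, deleting the $k$ tuples of a cover makes each $x$ lose $\alpha_1$, each $y$ lose $\alpha_2$, and each $z$ gain $\gamma$, bringing every relevant candidate to score $\le0=\score p$. For the converse, $x$ and $y$ start strictly above $p$: each $x$ can be brought down only by deleting a 3DM-vote in which it sits first, and each $y$ only by deleting one in which it sits second. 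Since each deleted tuple accounts for exactly one element of $X$ and one of $Y$, merely covering $X$ already forces $\ge k$ deletions, so with budget $k$ exactly $k$ tuples are deleted and they cover $X$ and $Y$ perfectly; if they failed to cover some $z\in Z$, then (as $\card Z=k$) another $z'$ would be covered twice, gaining $2\gamma$ and beating $p$. Hence the deleted tuples form a perfect matching.

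The remaining --- and main --- difficulty is to realize these scores with setup votes the controller never wants to delete, so that the equivalence, stated for 3DM-votes, governs arbitrary deletions. The safe device, as in the adjacent proofs, is to keep the two \emph{forcing} families $X$ and $Y$ weakly behind $p$ in every setup vote; then deleting a setup vote can never lower an $x$ or $y$ relative to $p$, the forcing argument still yields $\ge k$ necessary 3DM-deletions, the budget is exhausted, and no setup vote is touched. Raising the \emph{limiting} family $Z$ is harmless, as each $z$ stays below $p$ and the controller gains nothing by lowering it further. The technical heart is the arithmetic of simultaneously placing $\score x$ in a window of length $\alpha_1$ and $\score y$ in a window of length $\alpha_2$ using only the increments available to behind-$p$ candidates: the uniform shifts $-\alpha_1$ and $-\alpha_2$ (putting $p$ first, resp.\ second, ahead of all relevant candidates), the individual step $-\gamma$ (putting a candidate last), and, inside a $p$-first vote, the $+\alpha_2$ that a behind-$p$ candidate gains over the others by taking the second position. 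Because $\alpha_1\ge\alpha_2$ and $\gcd(\alpha_1,\alpha_2,\gamma)=1$, these generate enough granularity to hit both windows. Concretely I realize the setup contribution $\score c-\scoresub{3DM}{c}$ for each candidate via Lemma~\ref{lemma:coefficients realization}, choosing its initial vote $\overrightarrow v_{init}$ --- every setup vote being a cyclic shift of $\overrightarrow v_{init}$ with one transposition --- so that no resulting vote places a member of $X\cup Y$ ahead of $p$, exactly as in the proof of Theorem~\ref{theorem:approval generalization ccdv}; fresh dummies per occurrence keep all dummy scores below $p$. I expect this coupled window-hitting for the two forcing families to be the step needing the most care, everything else mirroring the single-good-position case of Theorem~\ref{theorem:ccdv hardness alpha1 alpha2 0 dots 0 minus alpha4 minus alpha5 minus alpha6}.
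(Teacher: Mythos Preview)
Your reduction, the 3DM-votes $x>y>\RESTcandidates>z$, the target scores, and the correctness argument all match the paper's proof essentially exactly. The only substantive difference is in how you handle the setup votes, and there you make life harder than necessary and invoke a mechanism that does not work as stated.

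You impose the constraint that \emph{no} candidate from $X\cup Y$ is ever ahead of $p$ in a setup vote, and then propose to realize the scores via Lemma~\ref{lemma:coefficients realization} with a carefully chosen $\overrightarrow v_{\mathit{init}}$. But that lemma's construction produces votes that are cyclic shifts of $\overrightarrow v_{\mathit{init}}$ (plus a single transposition), and over all cyclic shifts every candidate visits every position; so for any choice of $\overrightarrow v_{\mathit{init}}$ some shift will place an $X\cup Y$ candidate in position~1 or~2 while $p$ sits in the zero block. The analogy with Theorem~\ref{theorem:approval generalization ccdv} does not carry over: there the forbidden configuration is a \emph{conjunction} over many blocking candidates, which spacing can defeat; here you are forbidding a single candidate from occupying one of two fixed positions, which cycling cannot avoid.

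The paper sidesteps all of this by using a weaker constraint: each setup vote places exactly one relevant candidate $c$ in the first, second, or last slot, fills the other two special slots with fresh dummies, and puts $p$ together with all remaining relevant candidates in the zero block. Thus at most \emph{one} candidate from $X\cup Y$ is ahead of $p$ in any setup vote. Realization is then immediate---with individual steps $+\alpha,+\beta,-\gamma$ (writing $f=(\alpha,\beta,0,\dots,0,-\gamma)$) and $\gcd(\alpha,\beta,\gamma)=1$ one hits exact target scores $\score x=\alpha$, $\score y=\beta$, $\score z=-\gamma$, so no interval arithmetic is needed. The undeletability argument then counts over \emph{both} forcing families at once: all $2k$ candidates in $X\cup Y$ must lose points relative to $p$; a 3DM-vote deletion helps exactly two of them, a setup-vote deletion at most one, and the budget is $k$; hence every deletion must be a 3DM-vote. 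Your zero-ahead constraint would let you run the counting on $X$ alone, but the one-ahead constraint already suffices and eliminates precisely the ``coupled window-hitting'' step you flagged as the technical heart.
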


\begin{proof}
 We can write $f$ more simply as $f=(\alpha,\beta,0,\dots,0,-\gamma)$ with $0\notin\set{\alpha,\beta,\gamma}$. We proceed very similarly to the proof of Theorem~\ref{theorem:ccdv hardness alpha1 alpha2 0 dots 0 minus alpha4 minus alpha5 minus alpha6}; we again reduce from 3DM, with an instance given as $M\subseteq X\times Y\times Z$ with $\card{M}=3k$, $\card X=\card Y=\card Z=k$ be given, where each $c\in X\cup Y\cup Z$ appears in exactly $3$ tuples of $M$, and the sets $X$, $Y$, and $Z$ are pairwise disjoint. We construct an instance of $f$-CCDV as follows:
 
 For each $(x,y,z)\in M$, we add a vote $x>y>\RESTcandidates>z$, where $\RESTcandidates$ includes all remaining candidates. We use setup votes to ensure that the relative scores are as follows:
 
 \begin{itemize}
  \item $\score p=0$,
  \item $\score x=\alpha$ for each $x\in X$,
  \item $\score y=\beta$ for each $y\in Y$,
  \item $\score z=-\gamma$ for each $z\in Z$.
 \end{itemize}
 
 We show that the 3DM instance is positive if and only if $p$ can be made a winner of the election with deleting at most $k$ of the 3DM-votes. We again identify the elements of $M$ and the votes obtained from them.
 
 First assume that $C\subseteq M$ is a cover with size $k$. Then removing all votes in $C$ lets each candidate in $X$ lost $\alpha$ points, each candidate in $Y$ loses $\beta$ points, and each candidate in $Z$ gains $\gamma$ points, hence all candidates tie and $p$ wins the election.
 
 Therefore, it suffices to show that we can in fact construct setup votes achieving the required relative points that will not be deleted by the controller. This is simpler than the corresponding proof in Theorem~\ref{theorem:ccdv hardness alpha1 alpha2 0 dots 0 minus alpha4 minus alpha5 minus alpha6}: Since the controller must ensure that each $x$ and each $y$ loses points relative to $p$, it suffices to achieve the required points with setup votes that each vote at most one candidate from $X\cup Y$ ahead of $p$. Since $2k$ of these candidates (that is all of them) need to lose points against $p$, and removing a 3DM-vote results in $2$ of them losing points against $p$, and removing each setup votes allow at most one candidate from $X\cup Y$ to lose points against $p$, the controller can only remove votes that vote two candidates from $X\cup Y$ ahead of $p$, i.e., the setup votes.
 
 Now, it is easy to adjust the scores of every relevant candidate $c\in X\cup Y\cup Z$ with votes voting $c$ in the first, second, or last spot, all other relevant candidates in the $0$-point sequence, and dummy candidates in the two remaining positions.
 
 To ensure that the dummy candidates cannot win the election, we use essentially the same idea as in the proof of Theorem~\ref{theorem:ccdv hardness alpha1 alpha2 0 dots 0 minus alpha4 minus alpha5 minus alpha6}: We use a fresh dummy candidate for each position where one is needed, and then increase the points of each relevant candidate $c\in X\cup Y\cup Z\cup\set p$ with a vote voting $c$ first, dummy candidates in the second and last positions, and all remaining candidates in the $0$-points sequence. These votes do not change the relative points of the relevant candidates, and ensure that the dummies cannot win the election. Due to the same reasoning as above, these votes cannot be removed by the controller, hence this concludes the proof.
\end{proof}

\begin{restatable}{restatableTheorem}{theoremccdvhardnessalphaonebiggeralphathreebiggeralphafour}
 \label{theorem:ccdv hardness alpha1 > alpha 3 > alpha 4}
 Let $f=(\alpha_1,\alpha_2,\alpha_3,\dots,\alpha_3,\alpha_4,\alpha_5,\alpha_6)$ with $\alpha_1>\alpha_3>\alpha_4$. Then $f$-CCDV is NP-complete.
\end{restatable}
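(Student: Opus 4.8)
The plan is to mirror the direct 3DM reduction used for the preceding Theorem~\ref{theorem:ccdv hardness alpha1 alpha2 0 dots 0 minus alpha4 minus alpha5 minus alpha6}, exploiting that $\alpha_1>\alpha_3$ supplies one strictly ``winning'' head position while $\alpha_3>\alpha_4$ forces all three tail positions to be strictly ``losing.'' First I would normalize $\alpha_3=0$ (subtracting $\alpha_3$ from every coefficient) and, using equivalence-preserving scaling, assume integer coefficients, writing $f=(\alpha_1,\alpha_2,0,\dots,0,-a,-b,-c)$ with $\alpha_1>0$, $0\le\alpha_2\le\alpha_1$, and $0<a\le b\le c$ (the strict negativity of all three tail entries is exactly what $\alpha_3>\alpha_4$ buys us). The reduction is from the variant of 3DM in which every element occurs in exactly three tuples (Proposition~\ref{prop:3dm restriction}); writing $k=\card X$ gives $\card M=3k$, and I would set the deletion budget to $k$.

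For each tuple $(x,y,z)\in M$ I would introduce a \emph{3DM-vote} placing $z$ in the first (head) position, where it gains $\alpha_1$ points, $x$ in the last position (value $-c$), $y$ second-to-last (value $-b$), and fresh dummy candidates in the third tail slot (value $-a$) and in the second position; all remaining candidates, including $p$, sit in the $0$-point block. Using Lemma~\ref{lemma:coefficients realization} I would add setup votes realizing relative scores $\score p=0$, $0<\score z\le\alpha_1$ for each $z\in Z$, and scores strictly inside the intervals $-2c<\score x\le -c$ and $-2b<\score y\le -b$ for $x\in X$, $y\in Y$, with every dummy score pushed low enough that no dummy can win after at most $k$ deletions. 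The forward direction is then immediate: deleting the $k$ votes of a 3DM cover makes each $z$ lose exactly $\alpha_1$ (tying $p$) and each $x$, $y$ regain exactly $c$, $b$ (tying $p$), so $p$ wins.

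The converse is where the score intervals do the work. Every $z$ starts strictly above $p$, and the only votes in which $z$ outscores $p$ are the 3DM-votes with $z$ in the head, so each $z$ forces the deletion of at least one such vote; since distinct elements of $Z$ sit in distinct votes' heads, all $k$ budgeted deletions are consumed and form a $Z$-cover (exactly one deleted vote per $z$). The intervals $-2c<\score x$ and $-2b<\score y$ guarantee that if some $x$ or $y$ lay in two deleted votes it would gain $2c$ (resp.\ $2b$) points and overtake $p$; hence no element of $X\cup Y$ is covered twice, and a cardinality count over the $k$ deleted tuples yields a genuine 3DM cover. As in the earlier proof, I would finally argue that the setup votes are never deleted: they are built with $p$ (or an $X$/$Y$/dummy candidate), never a $Z$-candidate, in a top position, so deleting one can never reduce any $z$'s lead over $p$; since the entire budget is already spent forcing the $Z$-cover, the controller cannot afford to touch them.

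The main obstacle, exactly as in the companion CCDV hardness proofs, is the explicit construction of these setup votes: they must simultaneously realize the prescribed relative scores via Lemma~\ref{lemma:coefficients realization}, keep every dummy out of contention, and---crucially---be shaped so the controller has neither incentive nor budget to delete any of them. The strict negativity of the third tail entry (the point of distinguishing this case from Theorem~\ref{theorem:ccdv hardness alpha1 alpha2 0 dots 0 minus alpha4 minus alpha5 minus alpha6}) is only a minor complication, since the extra $-a$ position of each 3DM-vote is occupied by a fresh dummy and so does not disturb the covering argument; membership in \NP\ is routine because $f$ is polynomial-time uniform.
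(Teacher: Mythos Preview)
Your argument is correct, but it takes a different route from the paper's own proof. What you have actually done is observe that $\alpha_3>\alpha_4$ together with monotonicity forces $\alpha_3>\alpha_5\ge\alpha_6$, and then reuse verbatim the construction from the companion case $\alpha_1>\alpha_3>\alpha_5$ (Theorem~\ref{theorem:ccdv hardness alpha1 alpha2 0 dots 0 minus alpha4 minus alpha5 minus alpha6}): the $Z$-candidate sits in the head, $y,x$ in the last two tail slots, and a dummy absorbs the $-a$ position, so the strict negativity of $-a$ is never used. The budget/enforcer argument (``each $z\in Z$ must shed a point, and only a 3DM vote with $z$ first can do that'') and the interval constraints on $X,Y$ then go through unchanged. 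This is a perfectly legitimate way to dispose of the present theorem; in effect you are showing it is subsumed by the earlier one.

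The paper instead gives an independent construction tailored to the hypothesis $\alpha_3>\alpha_4$: it introduces a single blocking candidate $B$, places $B$ first in \emph{every} 3DM vote, and puts all three of $x,y,z$ in the last three positions (which now are all strictly below $\alpha_3$). Here $B$, not the $Z$-candidates, is the enforcer: $B$'s surplus of $\card X(\alpha_1-\alpha_3)$ forces exactly $\card X$ 3DM deletions, and the interval conditions on every $c\in X\cup Y\cup Z$ then yield a cover. The payoff of the paper's version is downstream: the $B$-based construction is exactly what is recycled in the bribery hardness proof for the same generator (Theorem~\ref{theorem:bribery hardness alpha1 > alpha 3 > alpha 4}), where having a single blocking candidate in position~$1$ makes it easy to control both the delete and the manipulation phases. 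Your approach is shorter for CCDV alone, but would need a fresh argument for the bribery extension.
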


\begin{proof}
 We reduce from 3DM. Hence let $M\subseteq X\times Y\times Z$ be a 3DM-instance. Due to cardinality reasons, $M$ is positive if and only if there are $\card X$ many elements from $M$ which pairwise differ in all components. From $M$, we construct an $f$-CCDV instance as follows: The set of candidates is $X\cup Y\cup Z\cup\set p\cup\set B\cup D$, where $B$ is a new candidate, and $D$ is a set of dummy candidates. For each element $(x,y,z)$ in $M$, we introduce a vote
 
 $$B > d > \RESTcandidates > x > y > z,$$
 
 with $d\in D$, and where $\RESTcandidates$ contains the remaining candidates in an arbitrary order (all of these candidates receive $\alpha_3$ points from this vote). Again, we call these votes \emph{3DM-votes}.
 For a candidate $c\in X\cup Y\cup Z$, we define $r(c)=4,5,6$ if $c\in X,Y,Z$, respectively. We now set up the relative points of the candidates as follows:
 
 \begin{itemize}
  \item $\scorefinal p=0$,
  \item $\scorefinal B=\card X(\alpha_1-\alpha_3)$,
  \item $-2(\alpha_3-\alpha_{r(c)})<\scorefinal c\leq-(\alpha_3-\alpha_{r(c)})$ (note that $r(c)\ge4$ and hence $\alpha_3>\alpha_4\ge\alpha_{r(c)}$).
 \end{itemize}
 
 (recall that these points are relative to the score of $p$, and hence can be negative). We show below how votes can be added to the election instance such that 
 
 \begin{itemize}
  \item the resulting scores relative to $p$ are as indicated above,
  \item to make $p$ a winner with deleting at most $\card X$ votes, the controller can only delete votes introduced from the elements in the 3DM instance (we will call these 3DM-votes in the sequel).
 \end{itemize}
 
 We claim that the 3DM-instance is positive if and only if $p$ can be made a winner of the election with deleting at most $\card X$ of the 3DM-votes.
 
 First assume that the instance is positive, and let $C\subseteq M$ be a cover with $\card C=\card X$. Then deleting exactly the votes that correspond to the cover changes the scores of the candidates (relative to $p$) as follows:
 
 \begin{itemize}
  \item in each of these votes, $B$ gets $\alpha_1$ points and $p$ gets $\alpha_3$ points. Hence, relative to $p$, $B$ loses $\card X(\alpha_1-\alpha_3)$ points, and hence ties with $p$.
  \item for each candidate $c\in X\cup Y\cup Z$, there are $\card X-1$ votes in $C$ in which $c$ gets $\alpha_3$ points, and there is a single vote in which $c$ gets $\alpha_{r(c)}$ points, while $p$ gets $\alpha_3$ points in all of these votes. Hence, relative to $p$, each $c$ gains $\alpha_3-\alpha_{r(c)}$ points. Since $c$'s initial score is at least $\alpha_3-\alpha_{r(c)}$ below $p$, this means that $c$ does not beat $p$.
 \end{itemize}
 
 Now assume that $p$ can be made a winner of the election by removing at most $\card X$ of the 3DM-voters. Since $B$ must lose $\card X(\alpha_1-\alpha_3)$ points, and removing each vote lets $B$ lose $\alpha_1-\alpha_3$ points against $p$ (and $\alpha_1-\alpha_3>0$), it follows that exactly $\card X$ voters must be removed. Let $C\subseteq M$ be the set corresponding to the removed voters. We show that $C$ is a cover. Assume indirectly that this is not the case, then, since $\card C=\card X=\card Y=\card Z$, there is some element $c\in X\cup Y\cup Z$ appearing in at least two of the tuples in $C$. Then with deleting the votes corresponding to $C$, $c$ gains at least $2(\alpha_3-\alpha_{r(c)})$ points against $p$. Since $c$ initially has more than $-2(\alpha_3-\alpha_{r(c)})$ points, this means that $c$ beats $p$ in the final election, a contradiction. Therefore, $C$ is indeed a cover as claimed.
 
 It remains to show how to add votes to the above-introduced 3DM-votes such that the required scores are achieved, and such that any successful control action removing at most $\card X$ many votes will only delete 3DM-votes. We will achieve the latter property by only adding votes that give the candidate $p$ at least as many points as the candidate $B$. Since in order to at least tie with $B$, the preferred candidate $p$ must gain at least $\card X(\alpha_1-\alpha_3)$ points against $p$, and gains exactly $\alpha_1-\alpha_3$ points when a vote as above is deleted, this ensures that in order to ensure that $B$ does not beat $p$, only 3DM-votes can be deleted. 
 
 Hence it remains to show how to implement the above scores (relative to $p$) be adding to the above 3DM-votes only votes in which $p$ gets at least as many points as $B$. For this, we first compute the points (relative to $p$) that each candidate gets from the above 3DM-votes. 
 
 \begin{itemize}
  \item In each of the $3\card X$ votes from the 3DM-instance, $B$ gains $\alpha_1-\alpha_3$ points relative to $p$. Hence $B$'s initial relative score is $$\scorethreedm{B}=3\card X(\alpha_1-\alpha_3).$$
  \item Since each $c\in X\cup Y\cup Z$ appears in exactly $3$ tuples from $M$, each such candidate $c$ loses $\alpha_3-\alpha_{r(c)}$ points relative to $p$ in $3$ of the 3DM-votes, and ties with $p$ in the remaining ones. Therefore, $c$'s initial relative score is $$\scorethreedm{c}=-3(\alpha_3-\alpha_{r(c)}).$$
 \end{itemize}
 
 Therefore, we need to adjust the scores as follows:
 
 \begin{description}
  \item[adjusting the relative score between $B$ and $p$.] Relatively to $p$, the candidate $B$ must lose 
  
  $$\scorethreedm B-\scorefinal B=3\card X(\alpha_1-\alpha_3)-\card X(\alpha_1-\alpha_3)=2\card X(\alpha_1-\alpha_3)$$ 
  
  points. We achieve this by adding $2\card X$ votes of the form
  
  $$p>d>\RESTcandidates>d>d>d,$$
  
  where $d$ stands for (different) dummy candidates from $D$, and $\RESTcandidates$ contains other candidates in an arbitrary order. Each of these votes lets $p$ gain $\alpha_1-\alpha_3$ points against $B$; hence the $2\card X$ votes have the required effect. Note that these votes also add $2\card X(\alpha_1-\alpha_3)$ points to $p$ relative to each candidate $c\in X\cup Y\cup Z$.
  \item[adjusting the relative score between $c$ and $p$, for $c\in X\cup Y\cup Z$.] Recall that each of these $c$ lost $2\card X(\alpha_1-\alpha_3)$ points relative to $p$ by the above adjustment. We adjust the points of $c$ in two steps:
  
  \begin{enumerate}
   \item We add votes to ensure that, relative to $p$, $c$ has at least $-2(\alpha_3-\alpha_{r(c)})+1$ points. Since $c$ lost $2\card X(\alpha_1-\alpha_3)$ against $p$ in the above step, this means $c$ has to gain at least
   
   $$\begin{array}{lcl}
   & & -2(\alpha_3-\alpha_{r(c)})+1+2\card X(\alpha_1-\alpha_3)-\scorethreedm{c} \\ & = &
   -2(\alpha_3-\alpha_{r(c)})+1+2\card X(\alpha_1-\alpha_3)-(-3(\alpha_3-\alpha_{r(c)})) \\ & = &
   -2\alpha_3+2\alpha_{r(c)}+1+2\card X(\alpha_1-\alpha_3)+3\alpha_3-3\alpha_{r(c)} \\ & = & 
   2\card X(\alpha_1-\alpha_3)+\alpha_3-\alpha_{r(c)}+1
   \end{array}$$
   
   points, we denote this number with $\delta_c$. To let $c$ gain at least $\delta_c$ points, we add $\lceil\frac{\delta_c}{\alpha_1-\alpha_3}\rceil$ (recall that $\alpha_1-\alpha_3>0$) votes of the form $$c>d>\RESTcandidates>d>d>d,$$ where again the $d$ stand for (different) dummy candidates, and $\RESTcandidates$ contains all remaining candidates.
   \item after the above step, the score of $c$ may be larger than allowed---recall that we need the score of $c$ to lie in the interval of length $\alpha_3-\alpha_{r(c)}$ between $-2(\alpha_3-\alpha_{r(c)})$ and $-(\alpha_3-\alpha_{r(c)})$. To move the score into this interval, we repeatedly add votes that have all relevant candidates except $c$ in a position gaining $\alpha_3$ points, $c$ in the position gaining $\alpha_{r(c)}$ points, and four dummy candidates in the remaining positions (such a vote lets $c$ gain $\alpha_3-\alpha_{r(c)}$ points against $p$). This ensures that $c$'s score is in the required interval.
  \end{enumerate}
  
  By construction, $p$ gets at least as many points as $B$ in the above votes, and the required points are achieved. It remains to show that the dummy candidates are indeed irrelevant, i.e., never win the election when at most $\card X$ votes are removed. To achieve this, we set up the above votes as follows: We use a fresh dummy candidate for each position where a dummy candidate appears in the above votes, and position the dummy candidate in the block receiving $\alpha_3$ points in the remaining votes. Since $p$ gets at least $\alpha_3$ points in every vote, this means that for each dummy candidate $d$, there is at most one vote in which she can gain points against $p$, and in this vote, she gains $\alpha_1-\alpha_3$ points. However, in the $2\card X$ votes introduced above to adjust the relative score of $p$ and $B$, the candidate $p$ gains $2\card X(\alpha_1-\alpha_3)$ points against the dummy candidates. Hence, when at most $\card X$ votes are removed, $p$ still has a headstart of at last $(\card X-1)(\alpha_1-\alpha_3)$ against each dummy candidate, and therefore strictly beats each dummy candidate. This concludes the proof.
 \end{description}
\end{proof}

Now, Theorem~\ref{theorem:collection of few coefficients CCDV hardness cases} easily follows from the above results:

\theoremcollectionoffewcoefficientsCCDVhardnesscases*

\begin{proof}
 This follows from the above results, since point~\ref{few coefficients CCDV:ccdv hardness alpha1 alpha2 0 dots 0 minus alpha4 minus alpha5 minus alpha6} is exactly the generator covered in Theorem~\ref{theorem:ccdv hardness alpha1 alpha2 0 dots 0 minus alpha4 minus alpha5 minus alpha6}, point~\ref{few coefficients CCDV:ccdv alpha1alpha2>alpha3>alpha6} is Theorem~\ref{theorem:ccdv alpha1alpha2>alpha3>alpha6}, and point~\ref{few coefficients CCDV:ccdv hardness alpha1 > alpha 3 > alpha 4} is Theorem~\ref{theorem:ccdv hardness alpha1 > alpha 3 > alpha 4}
\end{proof}

\section{Proofs of Results for Bribery}\label{sect:proofs bribery}

\subsection{Proofs for Polynomial-Time Bribery Results}

In this section, we prove our polynomial-time bribery results. The following result for approval and veto-like scoring systems were obtained by Lin~\shortcite{lin:thesis:elections}:

\begin{restatable}{restatableTheorem}{theoremlinbriberyresultslist}\label{theorem:lin bribery results list}
 \begin{itemize}
  \item Bribery for $k$-approval is in \PTIME\ if $k\leq 2$ and NP-hard otherwise,
  \item Bribery for $k$-veto is in \PTIME\ if $k\leq 3$ and NP-hard otherwise.
 \end{itemize}
\end{restatable}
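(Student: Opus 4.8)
Since this theorem collects results established by Lin~\shortcite{lin:thesis:elections}, the primary plan is to cite that source for the complete arguments; below I sketch the structure of self-contained proofs and point out which parts can be recovered from results already available in this paper. The four claims split naturally into two polynomial-time membership results and two NP-hardness results, one of each for $k$-approval and $k$-veto.

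For the polynomial-time claims I would treat both rules uniformly by observing that, for fixed $k$, a bribed voter may without loss of generality rank $p$ among the top $k$ positions, so that $p$ always receives a point under $k$-approval and is never vetoed under $k$-veto. The score $p$ attains after bribing a fixed number $b$ of voters is then determined, so one iterates over the polynomially many values of $b$ and, for each, tests feasibility. For $1$-approval this test is the classical plurality-bribery greedy argument (redirect votes from current leaders to $p$). For $2$-approval and for $k$-veto with $k\le 3$, a single bribe couples the gain of $p$ with the point simultaneously granted to (approval) or withheld from (veto) other candidates, so I would encode the optimal assignment of the remaining approval/veto slots as a min-cost network-flow instance, exactly in the spirit of the flow constructions used for Theorems~\ref{theorem:100-1 bribery in ptime} and~\ref{theorem:bribery zeroes -beta -alpha in ptime}. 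I note that the $1$-veto and $2$-veto cases are already subsumed by Theorem~\ref{theorem:bribery zeroes -beta -alpha in ptime} (they are equivalent to $(0,\dots,0,-1)$ and $(0,\dots,0,-1,-1)$), so the genuinely new polynomial-time content of Lin's theorem is $1$-approval, $2$-approval, and $3$-veto.

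For the NP-hardness claims I would observe that the fixed rules in question are themselves ``many coefficients'' generators in the sense of Corollary~\ref{corollary:many coefficients bribery hardness}: for $k$-approval with $k\ge 3$ and for $k$-veto with $k\ge 4$, one has $\alpha^m_3=1$ while $\alpha^m_{m-3}=0$ for all sufficiently large $m$, so $\alpha^m_3>\alpha^m_{m-3}$ holds for some $m$. Hardness therefore follows directly from Corollary~\ref{corollary:many coefficients bribery hardness}, giving an alternative to Lin's direct reductions; either route suffices. (One checks in passing that at the thresholds themselves, $2$-approval and $3$-veto, the condition $\alpha^m_3>\alpha^m_{m-3}$ fails, consistent with membership in \PTIME.)

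The main obstacle is the correctness of the flow formulation for $2$-approval and especially $3$-veto: because each bribe affects several candidates at once, a naive greedy can fail, and one must verify that the chosen number of bribes, the resulting score of $p$, and the distribution of the freed-up approval/veto slots are mutually consistent in the flow instance. Pinning down exactly why the flow argument survives at $k=2$ (approval) and $k=3$ (veto) but breaks one step later---precisely where the hardness reduction takes over---is the delicate point, and it is what accounts for the asymmetry between the two thresholds.
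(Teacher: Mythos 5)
Your primary plan---deferring to Lin's thesis---is exactly what the paper does: Theorem~\ref{theorem:lin bribery results list} is stated as an imported result and no proof of it appears anywhere in the paper, so on that level your proposal matches.

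However, the self-contained alternative you offer for the hardness half has a genuine circularity. You propose to derive \NP-hardness of $k$-approval ($k\ge3$) and $k$-veto ($k\ge4$) bribery from Corollary~\ref{corollary:many coefficients bribery hardness}. But $k$-approval and $k$-veto are two-coefficient generators, so the corollary's proof routes them through Theorem~\ref{theorem:two coefficients case bribery}, and the proof of \emph{that} theorem disposes of the boundary cases by citing Theorem~\ref{theorem:lin bribery results list} itself: when $a(m)\le 3$ for all $m$ (i.e., $3$-approval) or $b(m)\le 4$ for all $m$ (i.e., $4$-veto), it invokes Lin's result, and its own direct 3DM reduction only applies once $a(m)\ge 4$ and $b(m)\ge 5$. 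So your claim that ``either route suffices'' fails precisely at the thresholds $3$-approval and $4$-veto---the first, and hence essential, hardness cases of the statement---where your alternative route assumes what is to be proved. The route is non-circular for $k$-approval with $k\ge 4$ and $k$-veto with $k\ge 5$, but since hardness for larger $k$ does not transfer down to smaller $k$, Lin's direct reductions (or a genuinely new reduction) are indispensable at the thresholds. A secondary point: your flow-based sketches for the polynomial-time cases $2$-approval and $3$-veto remain unverified sketches, and these cases are likewise proved nowhere in this paper (only $1$-veto and $2$-veto are independently covered, via Theorem~\ref{theorem:bribery zeroes -beta -alpha in ptime}, as you correctly observe); so there, too, the citation to Lin is carrying real weight rather than being a convenience.
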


For the remaining polynomial-time bribery cases, note that, while corresponding (with duality) to the generators which give rise to a CCAV-problem in polynomial time, we cannot use arguments analogous to the results presented in Section~\ref{sect:relationship ccdv ccav}: Due to the additional manipulation step, the bribery problem is conceptually more complex than CCAV or CCDV. Therefore, algorithms for bribery tend to be more complicated than their CCAV or CCDV counterparts.

\subsubsection{Proof of Theorem~\ref{theorem:100-1 bribery in ptime}}

We now prove our first polynomial-time algorithm for bribery:

\theorembriberyonezeroesminusoneptime*

\begin{proof}
Let $C$ be the set of candidates, let $V$ be the set of voters,
let $p$ be the preferred candidate, and let $k$ be the number of voters we
can bribe. Without loss of generality, assume we bribe exactly $k$
voters and that all bribed voters rank $p$ first. 

There are three types of voters. Let $V_1$ be the set of voters that rank $p$
last, let $V_2$ be the set of voters that rank $p$ neither first nor last,
and let $V_3$ be the set of voters that rank $p$ first.  Note that bribing
a voter in $V_1$ to vote $p > \dots > d$ is at least as good as bribing
a voter in $V_2$ to vote $p > \dots > d$, which is at least as good as bribing
a voter in $V_3$ to vote $p > \dots > d$.
Thus we will assume that we
bribe as many $V_1$ voters as possible, followed by as many $V_2$ voters
as possible, followed by $V_3$ voters. Since we assume that all $k$
bribed voters put $p$ first, we also know $p$'s score after
bribery.

We consider the following three cases.

\begin{enumerate}
\item $k \leq \card{V_1}$.  We bribe $k$ voters from $V_1$.  
In this case, we view bribery as deleting $k$ voters followed by
a manipulation of $k$ voters.  Greedily delete $a > \dots > p$ for
highest scoring $a$, and update the scores.  Repeat until $k$
voters are deleted. Then add $p > \dots > a$ for highest scoring $a$.
Update the scores until $k$ voters have been added (or use the
manipulation algorithm for this case).

\item $k \geq \card{V_1} + \card{V_2}$.

In this case bribery will make $p$ will be a winner, since after bribery
$p$ will be ranked first by every voter.

\item $\card{V_1} < k < \card{V_1} + \card{V_2}$.

We bribe all voters in $V_1$ and $k - \card{V_1}$ of the $V_2$ voters.

We again view bribery as deletion followed by manipulation.
Delete all $V_1$ voters.  In $V_2$,
deleting a voter $a > \dots > b$ corresponds to transferring a point
from $b$ to $a$.   After deleting $k$ voters, the deleted voters will
be bribed to rank $p$ first and to rank some other candidate last.
After deleting $V_1$, $\score{c} = \scoresub{V_2\cup V_3}{c}$.
For every $V_2$ voter $a > \dots > b$ that is deleted, transfer
one point from $a$ to $b$.  For every bribe $p > \dots > d$, 
delete a point from $d$.  There are exactly $k$ bribes.  
After bribery, $\score{p} = \card{V_3} + k$ and the score
of every other candidate needs to be at most
$\score{p} = \card{V_3} + k$.  

All this immediately translates into the following min-cost network flow
problem.

\begin{enumerate}
\item We have a source $s$ and a sink $t$.
\item We have a node $c$ for every candidate $c \neq p$.
\item For every $c \in C - \{p\}$, there is an edge from
$s$ to $c$ with capacity $\scoresub{V_2\cup V_3}{c}$ and cost 0.
We will be looking for a flow that saturates all of these 
edges.  This ensures that candidates start with right score.
\item For every $c \in C - \{p\}$, there is an edge from
$c$ to $t$ with capacity $\card{V_3} + k$ and cost 0.
This ensures that after bribery, the score of every other candidate
will be at most the score of $p$.
\item For every $a,b \in C - \{p\}$ there is an edge from $a$ to $b$ with
cost 1
and capacity the number of voters that vote $a > \dots > b$.
These are the only edges with a cost.  The min cost will correspond
to the number of $V_2$ voters that we bribe.  So, we need to min
cost to be $k - \card{V_1}$.
\item We have now handled the CCDV part of the problem.  All that is
left to do is to handle the manipulation part.  So, we are adding $k$
vetoes to candidates other than $p$.  Add a node $v$ to handle the 
vetoes.  There is an edge from $v$ to $t$ of capacity $k$ and cost 0 so that
there will be at most $k$ vetoes.
And for every candidate $c \in C - \{p\}$, we
add an edge from $c$ to $v$ of capacity $k$ and cost 0.
\end{enumerate}
It is easy to see that there is a successful bribery of and only if there
is a network flow with value $\sum_{c \in C - \{p\}}\scoresub{V_2\cup V_3}{c}$
and min cost at most $k - \card{V_1}$.
\end{enumerate}
\end{proof}

\subsubsection{Proof of Theorem~\ref{theorem:bribery zeroes -beta -alpha in ptime}}

The following polynomial-time bribery proof uses the manipulation algorithm from the proof of Theorem~\ref{theorem:man}.

\theorembriberyzeroesbetaalphainptime*

\begin{proof}
Let $C$ be the set of candidates, let $V$ be the set of voters,
let $p$ be the preferred candidate and let $k$ be the number of voters we
can bribe.

As in the proof of Theorem~\ref{theorem:100-1 bribery in ptime}, we
partition $V$ into $V_1$, $V_2$, and $V_3$.  
$V_1$ consists of all voters in $V$ that rank $p$ last,
$V_2$ consists of all voters in $V$ that rank $p$ second-to-last,
and $V_3$ consists of the remaining voters.
In the proof of Theorem~\ref{theorem:100-1 bribery in ptime}, it was important
that bribing a $V_1$ voter is always at least as good as 
bribing a $V_2$ voter, which is always at least as good
as bribing a $V_3$ voter.  This is not always the case here, 
as we will see in Example~\ref{example:bribery}:

\begin{example}
\label{example:bribery}
This example shows that we sometimes need to bribe $V_3$ voters.
It also shows that it is sometimes better to bribe a $V_3$ voter
than it is to bribe a $V_2$ voter.
It also shows that an optimal bribery is not always an optimal
deletion followed by an optimal manipulation, since an optimal
deletion would never delete a voter from $V_3$.
We will use the scoring rule $(0, \ldots, 0, -1, -3)$.
Let $C = \{p, a, b, c, d, e, f\}$ and let $V$ consist of the following
voters: 
\begin{itemize}
\item one ($V_2$) voter voting $\cdots > p > a$,
\item one ($V_2$) voter voting $\cdots > p > b$, 
\item one ($V_2$) voter voting $\cdots > p > c$, 
\item two ($V_3$) voters voting $\cdots > e > f$,  and
\item two ($V_3$) voters voting $\cdots > f > e$.
\end{itemize}
The scores of the candidates are as follows.
\begin{itemize}
\item $\score{d} = 0$, 
\item $\score{p} = \score{a} = \score{b} = \score{c} = -3$, and
\item $\score{e} = \score{f} = -8$.
\end{itemize}
We can make $p$ a winner by bribing one of the $V_3$ voters
to vote $p > \cdots > d$.  But
it is easy to see that we can not make $d$ a winner by bribing a $V_2$
voter, wlog, the voter voting $\cdots > p > a$, since
in the bribed election, the score of $p$ will be at most
$-2$, and so both $a$ and $d$ must be in the last position of
the bribed voter.

\medskip
And the following example shows that it is sometimes better to bribe a
$V_2$ voter than it is to bribe a $V_1$ voter.

We will use the scoring rule $(0, \ldots, 0, -2, -3)$.
Let $C = \{p, a, b, c, d\}$ and let $V$ consist of the following
voters: 
\begin{itemize}
\item one ($V_1$) voter voting $\cdots > a > p$,
\item one ($V_1$) voter voting $\cdots > b > p$, 
\item one ($V_1$) voter voting $\cdots > c > p$, 
\item one ($V_2$) voter voting $\cdots > p > d$, 
\item one ($V_3$) voter voting $\cdots > b > d$, 
\item one ($V_3$) voter voting $\cdots > d > b$, 
\item one ($V_3$) voter voting $\cdots > c > d$, 
\item one ($V_3$) voter voting $\cdots > d > c$, 
\item two ($V_3$) voters voting $\cdots > a > d$, and 
\item one ($V_3$) voter voting $\cdots > d > a$.
\end{itemize}
The scores of the candidates are as follows.
\begin{itemize}
\item $\score{p} = -11$,
\item $\score{a} = -9$, 
\item $\score{b} = \score{c} = -7$, and
\item $\score{d} = -17$.
\end{itemize}
We can make $p$ a winner by bribing the $V_2$ voter to vote
$p > \cdots > b > c$.  But if we bribe one of the $V_1$ voters,
$p$'s score will be at most -10, and so the bribed voter needs to 
put $a$, $b$, and $c$ in the two last positions, which won't fit.
\end{example}

Though we may need to bribe $V_3$ voters, we will show that the number
of such voters is limited.

\begin{claim}
\label{claim:few-v3}
If there is a successful bribery, then there is a successful bribery
where we bribe at most $X$ voters from $V_3$.
\end{claim}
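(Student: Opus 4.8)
The plan is to organize the argument around the effect of each bribe on $\score{p}$ and on the two ``veto slots'' ($-\beta$ and $-\alpha$) that a bribed vote casts on candidates other than $p$. First I would record the normalization already in force: every bribed voter ranks $p$ first, so $\score{p}$ after bribing depends only on how many $V_1$ and $V_2$ voters are bribed (each bribed $V_1$ voter gives $p$ back $\alpha$ points, each bribed $V_2$ voter gives back $\beta$), while bribing a $V_3$ voter leaves $\score{p}$ unchanged and merely relocates that voter's two vetoes. This immediately disposes of the case $k\ge\card{V_1}+\card{V_2}$: bribe all of $V_1\cup V_2$, so that $p$ lies outside the last two positions of every remaining vote and $\score{p}=0$; since all coefficients are $\le 0$, no candidate can exceed $0$, and $p$ wins with no $V_3$ bribes, so $X=0$ suffices there.

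For the interesting case $k<\card{V_1}+\card{V_2}$ I would work with a successful bribery that minimizes the number of bribed $V_3$ voters (and, secondarily, the total number of bribes), and measure each candidate $c$ by its \emph{slack} $\score{p}-\score{c}\ge 0$. The engine of the proof is two exchange observations. A $V_1$ bribe raises $\score{p}$ by $\alpha$ while freeing only a single candidate by $\beta\le\alpha$, so it never decreases any candidate's slack; hence whenever an unbribed $V_1$ voter remains, one can convert a $V_3$ bribe into that $V_1$ bribe (placing the new vetoes on the two candidates the $V_3$ bribe was vetoing) and preserve success while lowering $n_3$. Thus in the minimizer either $n_3=0$ or every $V_1$ voter is already bribed. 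A parallel analysis of $V_2$ bribes shows that bribing a $V_2$ voter \emph{while keeping its $\alpha$-candidate vetoed} is also harmless (it just moves $p$'s $\beta$-veto onto a chosen candidate and raises $\score{p}$ by $\beta$); the only way a $V_2$ bribe can hurt is when we must free its $\alpha$-candidate to reuse the freed slot, which costs that candidate $\alpha-\beta$ of slack.

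This isolates the obstruction: a $V_3$ bribe is genuinely needed only to supply an extra veto slot whose ``source'' candidate already has large slack, at the price of forgoing the $\score{p}$ increase, and it cannot be traded for a $V_2$ bribe exactly when the $V_2$ voter's $\alpha$-candidate is \emph{tight}, i.e.\ has slack $<\alpha-\beta$ and hence lies in a band of width less than $\alpha$ just below $\score{p}$. The final step is to bound, by a constant $X=X(\alpha,\beta)$, the number of $V_3$ bribes that survive all such swaps. Here the plan is a pigeonhole/exchange argument on this tight band: because the band has bounded width and every bribe shifts scores only in multiples of $\alpha$ and $\beta$, once more than $X$ $V_3$ bribes are present, two of them together with one unbribed $V_2$ voter can be rerouted---re-vetoing the tight $\alpha$-candidate with one freed slot and recovering the old $V_3$-targets with the remaining slots---to strictly decrease $n_3$ while keeping $p$ a winner, contradicting minimality. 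With the bound in hand, guessing the at most $X$ freed $V_3$ candidates ($O(\card{C}^{X})$ choices) and solving the residual $V_1/V_2$-deletion-plus-manipulation problem by the dynamic program behind Theorem~\ref{theorem:man} delivers the polynomial-time algorithm.

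The main obstacle is precisely this last step: controlling the irreversible freeing of candidates caused by $V_2$ and $V_3$ bribes against the rigidly typed veto capacity (one $\alpha$- and one $\beta$-slot per bribe), and upgrading the ``tight band has bounded width'' intuition into an honest constant via a multi-bribe rerouting argument; by contrast the $V_1$-swap reduction and the easy case are routine.
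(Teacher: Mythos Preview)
Your overall framework matches the paper's: handle $k\ge\card{V_1}+\card{V_2}$ trivially, take a bribery minimizing the number of $V_3$ bribes, and observe that a $V_3$ bribe can always be traded for a $V_1$ bribe, so all of $V_1$ is already bribed and hence at least $\ell+1$ unbribed $V_2$ voters remain. That part is fine.

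The gap is exactly where you flag it, and your sketch for closing it does not work as written. Your proposed rerouting (``two $V_3$ bribes together with one unbribed $V_2$ voter'') trades two bribes for one, so you lose two of the four veto slots while also needing an extra slot to re-veto the tight $\alpha$-candidate you freed; there is no reason the remaining targets can still all be covered. More importantly, the ``tight band has bounded width'' intuition does not by itself yield a constant: arbitrarily many distinct candidates can sit in that band, so pigeonhole on scores gives nothing.

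The paper's argument is a clean dichotomy on the \emph{unbribed $V_2$ voters}, not on the tight candidates. Let $r=\lceil\alpha/\beta\rceil$. Either (i) among the unbribed $V_2$ voters there are $r$ with pairwise distinct last-place candidates: then swap $r$ of the $V_3$ bribes for these $r$ $V_2$ bribes; $p$ gains $r\beta\ge\alpha$, which dominates the at-most-$\alpha$ increase any candidate could suffer from dropping a $V_3$ veto, so the swap succeeds. Or (ii) the unbribed $V_2$ voters hit at most $r-1$ distinct last-place candidates; by pigeonhole some candidate $a$ is last in at least $s'=\lceil s/(r-1)\rceil$ of them, where $s$ is the number of unbribed $V_2$ voters. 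Then $a$ already carries so many $\alpha$-vetoes that its score is at most $-(s'-1)\alpha$ after freeing one, while $p$'s score after the swap is $-(s-1)\beta$; a short calculation shows $-(s-1)\beta\ge-(s'-1)\alpha$ once $s$ exceeds an explicit constant in $\alpha,\beta$, so a single $V_3\to V_2$ swap succeeds. Either branch contradicts minimality once $\ell$ (hence $s$) is large. The constant $X$ is this explicit threshold, not something extracted from a band-width argument.
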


\begin{proof}
If $k \geq \card{V_1} + \card{V_2}$, we can make $p$ a winner by
bribing all voters from $V_1 \cup V_2$ to put $p$ first.

So, let $k < \card{V_1} + \card{V_2}$.
Consider a successful bribery that bribes a minimum number, $\ell$,
of $V_3$ voters and suppose for a contradiction that $\ell > X$.
If we bribe a voter from $V_3$, then for every $c \neq p$,
$\surpl{c}$ decreases by at most $\alpha$.  
If we bribe a voter $\cdots > a > p$ to vote  $p > \cdots > a$, then
for every $c \neq p$,
$\surpl{c}$ decreases by at least $\alpha$.
So, it is never better to bribe a voter from $V_3$ than it
is to bribe a voter from $V_1$.  It follows that all voters from
$V_1$ are bribed and so there are at least $\ell + 1$ unbribed $V_2$ voters.

Let $r = \lceil \alpha/\beta \rceil$.
If there exists a set $C' \subseteq C - \{p\}$ of $r$ candidates
and a set $V' \subseteq V_2$ of $r$ unbribed voters such that
for every $c \in C'$ there is a voter in $V'$ voting $\cdots > p > c$, then
deleting these $r$ unbribed $V_2$ voters
will not increase the surplus of any candidate
(since $p$'s score goes up by $r \beta \geq \alpha$), while deleting any
$r$ voters from $V_3$ will not decrease the surplus of any candidate.
It follows that bribing the $r$ unbribed $V_2$ voters instead
of $r$
$V_3$ voters will give a successful bribery, which contradicts
the assumption of minimality.

Let $s \geq r$ be the number of unbribed $V_2$ voters.
By the argument above, there is a candidate $a$ such that at least 
$s' =  \lceil \frac{s}{r-1} \rceil$ of these voters vote $\cdots p > a$.
Note that if we delete such a voter,
the surplus of all candidates other than $a$ does not increase.
In addition, note 
that if we delete a $V_3$ voter, the surplus
of none of the candidates decreases.
After deletion of one of these unbribed $V_2$ voters that 
vote $\cdots > p > a$, 
the score of $p$ will be $-(s - 1)\beta$
and the score of $a$ will be at most $-(s' - 1) \alpha$.
We need $-(s - 1)\beta \geq -(s' - 1) \alpha$.
It is easy to see that this is true if we choose
\[s \geq \frac{(\alpha/\beta - 1) (r - 1)}{\alpha/\beta - r + 1}.\]

Here is the derivation:
\begin{eqnarray*}
s & \geq &\frac{(\alpha/\beta - 1) (r - 1)}{\alpha/\beta - r + 1} \Rightarrow\\
(\alpha/\beta - r + 1)s & \geq & (\alpha/\beta - 1) (r - 1) \Rightarrow \\
\frac{(\alpha/\beta - r + 1)s} {r-1} & \geq & (\alpha/\beta - 1) \Rightarrow \\
\frac{(\alpha/\beta)s}{r-1} - \frac{(r-1)s}{r-1} & \geq & (\alpha/\beta - 1) \Rightarrow \\
(\frac{s}{r-1} - 1) \frac{\alpha}{\beta} & \geq & (s - 1) \Rightarrow \\
(s' - 1) \alpha & \geq & (s - 1)\beta.
\end{eqnarray*}

It follows that bribing one $V_2$ voter voting
$\cdots p > a$ instead of bribing any $V_3$ voter
is also a successful bribery,
which contradicts the minimality
of the number of $V_3$ voters that are bribed.
\end{proof}

We will now adapt the dynamic programming approach from 
the proof of Theorem~\ref{theorem:man} to show that bribery
is in polynomial time.

Consider an instance of the bribery problem.
Let $C$ be the set of candidates, let $V$ be the set of voters,
let $p$ be the preferred candidate and let $k$ be the number of voters we
can bribe.  Let $C - \{p\} = \{c_1, \ldots, c_m\}$.  

Note that by Claim~\ref{claim:few-v3}, 
there exists a successful bribery if and only if
there exists a set $V_3' \subseteq V_3$
and nonnegative integers $k_1$ and $k_2$ such that
$k_1 + k_2 + \card{V'_3} \leq k$ and $\card{V_3} \leq X$ and
there exists a successful bribery that bribes 
$k_1$ $V_1$ voters, $k_2$ $V_2$ voters, and all voters in $V_3'$.
We assume that every bribed voter puts $p$ first.
Without loss of generality, we assume that $k \leq \card{V}$ and
that $k = k_1 + k_2 + \card{V'_3}$.  

Let $\surpl c = \score{c}-\score{p}$.
Let $V_{1,i}$ be the set of $V_1$ voters that rank $c_i$ next to last
and let $V_{2,i}$ be the set of $V_2$ voters that rank $c_i$ last.

Note that if there exists a successful bribery 
that bribes $k_1$ $V_1$ voters,
$k_2$ $V_2$ voters, and all voters in $V'_3$, then
for all $i$, $1 \leq i \leq m$, there exist nonnegative integers
$x_i$ (the number of times $c_i$ is ranked next to last by a bribed voter),
$y_i$ (the number of times that $c_i$ is ranked last by a bribed voter),
$z_i$ (the number of bribed voters in $V_{1,i}$), and
$w_i$ (the number of bribed voters in $V_{2,i}$) such that:
\begin{enumerate}
\item $x_i + y_i \leq k$,
\item $z_i \leq \card{V_{1,i}}$,
\item $w_i \leq \card{V_{2,i}}$,
\item $\sum_{1 \leq i \leq m} x_i = k$,
\item $\sum_{1 \leq i \leq m} y_i = k$, 
\item $\sum_{1 \leq i \leq m} z_i = k_1$,
\item $\sum_{1 \leq i \leq m} w_i = k_2$, and
\item $\surpl{c_i} - \surplsub{V'_3}{c_i} - \beta x_i - \alpha y_i + \beta z_i + \alpha w_i \leq 0$.
\end{enumerate}

We define the following Boolean predicate $M$.
$M(k, k_\beta, k_\alpha, k'_1, k'_2, s_1, \ldots, s_\ell)$
is true if and only if for all $i$, $1 \leq i \leq \ell$,
there exist natural numbers $x_i, y_i, z_i$, and $w_i$ such that
\begin{enumerate}
\item $x_i + y_i \leq k$,
\item $z_i \leq \card{V_{1,i}}$,
\item $w_i \leq \card{V_{2,i}}$,
\item $\sum_{1 \leq i \leq \ell} x_i = k_\beta$,
\item $\sum_{1 \leq i \leq \ell} y_i = k_\alpha$, 
\item $\sum_{1 \leq i \leq \ell} z_i = k'_1$,
\item $\sum_{1 \leq i \leq \ell} w_i = k'_2$,
\item $s_i - \beta x_i - \alpha y_i + \beta z_i + \alpha w_i \leq 0$.
\end{enumerate}

Note that if there is a successful bribery that bribes
$k_1$ $V_1$ voters, $k_2$ $V_2$ voters, and all voters in $V'_3$, then
$M(k, k, k, k_1, k_2,
\surpl{c_1} - \surplsub{V'_3}{c_1}, \ldots , \surpl{c_m} - \surplsub{V'_3}{c_m})$ is true.
We will now show that the converse is true as well:
For $k = \card{V'_3} + k_1 + k_2$,
if $M(k, k, k, k_1, k_2,
\surpl{c_1} - \surplsub{V'_3}{c_1}, \ldots , \surpl{c_m} - \surplsub{V'_3}{c_m})$ is true then
there exists a successful bribery that bribes
$k_1$ $V_1$ voters, $k_2$ $V_2$ voters, and all voters in $V'_3$.

If $k = \card{V'_3}$ then $z_i = w_i = 0$ and the correctness of the claim
follows from the proof of Theorem~\ref{theorem:man}.

Now suppose that the claim holds for $k \geq \card{V'_3}$. 
We will show that it also
holds for $k + 1$.  So, let $k + 1 = k'_1 + k'_2 + \card{V'_3}$ and
for all $i$, $1 \leq i \leq m$,
let $x_i, y_i, z_i, w_i$ be natural numbers such that:
\begin{enumerate}
\item $x_i + y_i \leq k + 1$,
\item $z_i \leq \card{V_{1,i}}$,
\item $w_i \leq \card{V_{2,i}}$,
\item $\sum_{1 \leq i \leq m} x_i = k + 1$,
\item $\sum_{1 \leq i \leq m} y_i = k + 1$, 
\item $\sum_{1 \leq i \leq m} z_i = k'_1$,
\item $\sum_{1 \leq i \leq m} w_i = k'_2$, and
\item $\surpl{c_i} - \beta x_i - \alpha y_i + \beta z_i + \alpha w_i \leq 0$.
\end{enumerate}

If $k_1' > 0$, let
$r$ be such that $z_r > 0$ and we will bribe a $V_{1,r}$ voter.
Let $k_1 = k'_1 - 1 $ and let $k_2 = k'_2$.
Otherwise, $k'_2 > 0$ and we let
$r$ be such that $w_r > 0$ and we will bribe a $V_{2,r}$ voter.
Let $k_1 = k'_1$ and let $k_2 = k'_2 - 1$.
Let $X = \{i \ | \ x_i + y_i = k+1\}$.  Note that $\card{X} \leq 2$.
Let $i, j$ be such that $i \neq j$, $x_i > 0$, $y_j > 0$,
and $X \subseteq \{i,j\}$.  Let the bribed voter vote $\cdots > c_i > c_j$.
Subtract 1 from  $x_i$ and $y_j$ and $z_r$ (or $w_r$)
and recompute the surpluses. It follows from the induction hypothesis that
we can bribe
$k_1$ $V_1$ voters, $k_2$ $V_2$ voters, and all voters in $V'_3$
to make $p$ a winner.

To conclude the proof of Theorem~\ref{theorem:bribery zeroes -beta -alpha in ptime}, we will now 
show, by dynamic programming, that $M$ is computable
in polynomial time for unary $k, k_\beta, k_\alpha, k_1', k_2' \geq 0$.  
This is easy:
\begin{enumerate}
\item
$M(k,k_\beta, k_\alpha, k_1', k_2')$ is true if and only if
$k_\beta = k_\alpha = k_1' = k_2' =  0$.
\item For $\ell \geq 1$,
$M(k,k_\beta, k_\alpha, k_1', k_2', s_1, \ldots, s_\ell)$ if and only if
there exist natural numbers $x_\ell, y_\ell, z_\ell$, and $w_\ell$ such that:
\begin{enumerate}
\item
$x_\ell + y_\ell \leq k$,
\item
$x_\ell \leq k_\beta$,
\item
$y_\ell \leq k_\alpha$,
\item
$z_\ell \leq k_1'$,
\item
$w_\ell \leq k_2'$,
\item
$s_\ell - \beta x_\ell - \alpha y_\ell + \beta z_\ell + \alpha w_\ell
\leq 0$, and
\item
$M(k,k_\beta - x_\ell, k_\alpha - y_\ell, k_1' - z_\ell, k_2' - w_\ell,
s_1, \ldots, s_{\ell-1}).$
\end{enumerate}
\end{enumerate}
\end{proof}

\subsection{Proofs for Bribery Hardness Results}\label{section:bribery hardness proofs}

This section contains our hardness proofs for bribery, except for Theorem\ref{theorem:0 dots 0 -gamma -beta -alpha bribery}, which is proven in the main paper.

\subsubsection{Proof of Corollary~\ref{corollary:alpha 0 dots 0 beta bribery dichotomy}}

\corollaryalphazerodotszerobetabriberydichotomy*

\begin{proof}
 This follows in a similar way as Corollary~\ref{corollary:alpha 0 dots 0 beta ccdv dichotomy} for CCDV: The polynomial-time result for the generator $(2,1,\dots,1,0)$ follows from Theorem~\ref{theorem:100-1 bribery in ptime}, the hardness results follow from modifications of the proofs of Theorems~\ref{theorem:alpha 0 dots 0 alpha < beta ccdv} and~\ref{theorem:alpha 0 dots 0 beta < alpha ccdv}, which give hardness results for the corresponding CCDV-cases, as follows:
 
 We use essentially the same reduction, except that we construct an instance $I_b$ of $f$-bribery instead of the $f$-CCDV instance $I_c$ constructed in the above proofs. $I_b$ uses the same budget $k$ for the controller as the instance $I_c$. In the following, we will see bribery as deletion of voters followed by manipulation (keeping in mind that bribery is not necessarily \emph{optimal} CCDV followed by \emph{optimal} manipulation). In particular, we will refer to the bribed votes (i.e., the votes the bribed voters cast after the bribery) as \emph{manipulation votes}.
 
 The differences between the CCDV and the bribery setting and the resulting differences between $I_c$ and $I_b$ are as follows:
 
 \begin{itemize}
  \item Without loss of generality, all manipulation votes place the preferred candidate $p$ in the first position. Therefore, $p$'s final score is $k\alpha$ higher than in the CCDV setting, but still is a multiple of $\alpha$, and the value $N_p$ can be computed by the reduction.
  \item The manipulation votes vote $k$ (not necessarily distinct) candidates in the position giving $-\beta$ points. We say that the votes \emph{veto} these candidates. In order to handle this additional strategic freedom of the controller, we proceed as follows:
  \begin{itemize}
    \item We add additional candidates $b_1,\dots,b_k$, who each need to lose $\beta$ points against $p$ in order to not beat $p$.
    \item The scores of $b_1,\dots,b_k$ are set using the same ``setup-vote'' strategy as used in the proofs of Theorems~\ref{theorem:alpha 0 dots 0 alpha < beta ccdv} and~\ref{theorem:alpha 0 dots 0 beta < alpha ccdv} above.
    \item We show that, in order to make $p$ win the election, the controller must use at least one veto for each of the candidates $b_1,\dots,b_k$. To see this, recall that each of these candidates must lose $\beta$ points against $p$. If no veto is used for some $b_i$, then $b_i$ needs to lose these points via deleting a vote of the form $b_i>d_1$. As shown in the proofs of the CCDV results, this causes a ``chain reaction'' of additional removals which requires more removals than allowed by the budget $k$, we have a contradiction. Therefore, at least one veto must be used to ensure that $b_i$ does not beat $p$.
    \item Since at least one veto must be used for each $b_i$, we can without loss of generality assume that the controller uses one veto for each $b_i$ and no deletions, since this suffices to ensure that $b_i$ does not beat $p$.
    \item Therefore, we know that the manipulation votes vote $p>b_1$, $p>b_2$, \dots, $p>b_k$, and the remainder of each proof is identical to the respective CCDV case---with a value $N_p$ increased by $k$ in comparison with the CCDV case, as explained above.
  \end{itemize}
 \end{itemize}
 
 This completes the proof.
\end{proof}

\subsubsection{Proof of Theorem~\ref{theorem:0 minus alpha5 minus alpha1 dots minus alpha1 bribery hardness}}

\theoremzerominusalphafiveminusalphaonedotsminusalphaonebriberyhardness*

\begin{proof}
 In this case, the proof is an easy addition modification to the proof of Theorem~\ref{theorem:0 minus alpha5 minus alpha1 dots minus alpha1 CCDV hardness}. Without loss of generality, we assume that $\alpha_3=0$. Clearly, all manipulation votes will be of the form 
 
 $$p>x>\RESTcandidates$$
 
 for the preferred candidate $p$ and a dummy candidate $x$ (recall that the proof of Theorem~\ref{theorem:0 minus alpha5 minus alpha1 dots minus alpha1 CCDV hardness} introduces dummy candidates that can never win the election, we reuse these candidates here). Since the controller's budget is $5k$, $p$ will gain $5k\alpha_1$ points against every relevant candidate from the manipulation votes. It therefore suffices to let all other relevant candidates gain $5k\alpha_1$ additional points using the setup mechanism described in the proof of Theorem~\ref{theorem:0 minus alpha5 minus alpha1 dots minus alpha1 CCDV hardness}.
\end{proof}

\subsubsection{Proof of Theorem~\ref{theorem:bribery hardness alpha1 alpha2 0 dots 0 minus alpha4 minus alpha5 minus alpha6}}

\theorembriberyhardnessalphaonealphatwozerodotszerominusalphafourminusalphafiveminusalphasix*

\begin{proof}
 We follow the recipe from Section~\ref{sect:bribery hardness from ccdv hardness} to obtain the result from the hardness result for CCDV proved in Theorem~\ref{theorem:ccdv hardness alpha1 alpha2 0 dots 0 minus alpha4 minus alpha5 minus alpha6}. As in the proof of that theorem, we write $f$ as $f=(\alpha_1,\alpha_2,0,\dots,0,-\alpha_4,-\alpha_5,-\alpha_6)$ with $\alpha_1>0>-\alpha_5\ge-\alpha_6$ and $\alpha_2\ge0>-\alpha_5$. Applying the recipe requires the following:
 
 \begin{itemize}
  \item The reduction from 3DM uses the same transformation as above, i.e., for a triple $(x,y,z)$, we generate a 3DM vote $z>d_1>\RESTcandidates>y>x$.
  \item Clearly, all manipulators will vote $p>d>\RESTcandidates>r_1>r_2$, where $d$ is a dummy candidate and $r_1$ and $r_2$ are relevant candidates.
  \item Therefore, $p$ gains $\card X\alpha_1$ points (where $X$ is the set from the 3DM instance, recall that $\card X$ is the controller's budget in the reduction from Theorem~\ref{theorem:ccdv hardness alpha1 alpha2 0 dots 0 minus alpha4 minus alpha5 minus alpha6}, it will also be her budget in the current proof), and the relevant candidates lose $\card X(\alpha_5+\alpha_6)$ points.
  \item We therefore increase the points of each candidate in $X\cup Y\cup Z$ by $\card X\alpha_1$, compared to the reduction from Theorem~\ref{theorem:ccdv hardness alpha1 alpha2 0 dots 0 minus alpha4 minus alpha5 minus alpha6}.
  \item Additionally, we introduce two candidates $B_5$ and $B_6$, such that $\scorefinal{B_i}=\card X(\alpha_1+\alpha_i)$. The intention is that these candidates tie with $p$ if they are voted in the last two positions in all manipulator votes.
 \end{itemize}

 This is realized as follows:
 
 \begin{itemize}
  \item The precise required points of all candidates (as always, relative to $p$) are as follows:
  \begin{itemize}
    \item $\scorefinal{z}=(\card X+1)\alpha_1$, so each $z$ must lose exactly $\alpha_1$ points to tie with $p$ (recall that $p$ gains $\card X\alpha_1$ points from the manipulation votes),
    \item $\scorefinal{y}=\card X\alpha_1-\alpha_5$, so each $y$ may gain exactly $\alpha_6$ points from the deletion of one tuple in which $x$ is voted in the second to last position,
    \item $\scorefinal{x}=\card X\alpha_1-\alpha_6$, so each $x$ may gain exactly $\alpha_6$ points from the deletion of one tuple in which $x$ is voted last,
    \item $\scorefinal{B_i}=\card X(\alpha_1+\alpha_i)$ for $i\in\set{5,6}$.
  \end{itemize}
 \end{itemize}
 
  The proof of the CCDV result (Theorem~\ref{theorem:ccdv hardness alpha1 alpha2 0 dots 0 minus alpha4 minus alpha5 minus alpha6}) relies on the fact that each $z$ can only lose points relative to $p$ by removing 3DM votes. In order to keep this feature in our current bribery setting, we need to ensure that $B_5$ and $B_6$ are voted in the last two positions of every manipulation vote, so that no candidate $z$ can lose points by being voted in one of the last two positions in a manipulation vote. Therefore, there are two main issues to handle:
  
  \begin{enumerate}
   \item we need to ensure that each $z\in Z$ gain sufficiently many points from the 3DM votes, namely $\card X\alpha_1$ points more than in the CCDV reduction,
   \item we must ensure that $B_i$ gains enough points from votes that the controller cannot delete.
  \end{enumerate}
  
  For the first issue, we use the following idea: Using Proposition~\ref{prop:f3dm np complete}, we first transform the given 3DM-instance into an $F$-3DM instance, with a suitably chosen $F$ (see below). Recall that in $F$-3DM, the size of the desired cover is still $\card X$, which is the budget of the controller in Theorem~\ref{theorem:ccdv hardness alpha1 alpha2 0 dots 0 minus alpha4 minus alpha5 minus alpha6}, and which will also be her budget in the current proof. The only relevant difference is that each $c\in X\cup Y\cup Z$ now appears in exactly $3F$ tuples from $M$. This results in the following scores from the 3DM votes:
  
  \begin{itemize}
   \item $\scorethreedm{z}=3F\alpha_1$,
   \item $\scorethreedm{y}=-3F\alpha_5$,
   \item $\scorethreedm{z}=-3F\alpha_6$.
  \end{itemize}
  
  By choosing $F$ sufficiently large enough, and adding votes $p>d_1>\RESTcandidates>d_2>d_3$ for dummy candidates $d_1,d_2,d_3$, we can ensure that the score of each $z$ is exactly the desired number $(\card X+1)\alpha_1$. (Recall that using Proposition~\ref{prop:f3dm np complete}, we can transform a 3DM-instance into an $F$-3DM instance when $F$ is given in unary.) As in the proof of Theorem~\ref{theorem:ccdv hardness alpha1 alpha2 0 dots 0 minus alpha4 minus alpha5 minus alpha6}, we can adjust the points of candidates in $X$ and $Y$ using votes that have $z$ and $p$ both in the $0$-point segment of the votes, and which therefore will not be removed by the controller.
  
  For the second issue, we add points to $B_i$ with votes as follows:
  
  \begin{itemize}
   \item A single vote $B_i>d>\RESTcandidates>d>d_1$, where $d$ stands for arbitrary dummy candidates never used again,
   \item $d_1>d>\RESTcandidates>d>d_2$, with $d$ as above, of these votes we add enough to ensure that $d_1$ does not beat $p$, but cannot gain $\alpha_6$ points without beating $p$. (If adding these votes lets $d_1$ gain too many points, we use additional votes having dummy candidates in the $\alpha_1,\alpha_2,$ and $\alpha_5$ positions, and with $d_2$ in the $-\alpha_6$ position, clearly these votes will not be removed by the controller.)
  \end{itemize}
  
  We now argue that the only way for the controller to remove points from $B_i$ is by voting them in the last two positions of every manipulation vote, which then implies that the candidates from $Z$ must lose their points by removals of 3DM votes as in the proof of Theorem~\ref{theorem:ccdv hardness alpha1 alpha2 0 dots 0 minus alpha4 minus alpha5 minus alpha6}. For this, we denote the set $Z\cup\set{B_5,B_6}$ with $R$. These are the candidates that need to lose points against $p$. Note that none of the votes we introduce have any candidate from $R$ in the $\alpha_2$ position. Therefore, by bribing $\card X$ voters, the candidates in the set $R$ can only lose points (as always, relative to $p$) as follows:
  
  \begin{itemize}
   \item each of the $\card X$ many removed votes lets the set $R$ lose at most $\alpha_1$ points, hence from the removal, these candidates (combined) lose at most $\card X\alpha_1$ points.
   \item each of the $\card X$ many manipulation votes lets the set $R$ lose at most $\card R\alpha_1+\alpha_5+\alpha_6$ points, since $p$ gains $\alpha_1$ points against each of the $\card R$ candidates in $R$, and additionally the candidates in the last two positions lose $\alpha_5$ and $\alpha_6$ points, respectively. Therefore, the manipulation votes let $R$ lose at most $\card X(\card R\alpha_1+\alpha_5+\alpha_6)$ points against $p$.
  \end{itemize}
  
  Altogether, the bribery action therefore lets $R$ lose $\card X\alpha_1+\card X(\card R\alpha_1+\alpha_5+\alpha_6)=
  \card X((\card R+1)\alpha_1+\alpha_5+\alpha_6)$ points. Since $\card R=\card X+2$ (as $\card Z=\card X$), this value is identical to 
  $\card X((\card X+3)\alpha_1+\alpha_5+\alpha_6)$.
  
  Initially, each $z\in Z$ must lose $(\card X+1)\alpha_1$ points, and $B_i$ must lose $\card X(\alpha_1+\alpha_i)$ points. Since there are $\card X$ many candidates $z$, this means that the group $R$ must lose $\card X(\card X+1)\alpha_1+\card X(\alpha_1+\alpha_5)+\card X(\alpha_1+\alpha_6)=\card X((\card X+3)\alpha_1+\alpha_5+\alpha_6)$ points.
  
  Since the two values are equal, every possible loss of a point in the bribery action must be used for the set $R$. Now assume that $B_i$ loses a point by deleting a vote (as opposed to the intention, i.e., losing points only by manipulation). Then, a vote of the form $B_i>d>\RESTcandidates>d>d_1$ is deleted. However, this means that the candidate $d_1$ gains too many points, and hence $d_1$ must lose points using a delete or manipulation action. This is a contradiction, since we just showed that only candidates in $R$ may lose points (relative to $p$) using the bribery action. We therefore know that, in fact, only 3DM votes are removed. This concludes the proof.
\end{proof}

\subsubsection{Proof of Theorem~\ref{theorem:bribery alpha1alpha2>alpha3>alpha6}}

\theorembriberyalphaoneaphatwobiggerthanalphathreebiggerthanalphasix*

\begin{proof}
 This proof is obtained from the proof of Theorem~\ref{theorem:ccdv alpha1alpha2>alpha3>alpha6} similarly to the way the proof of Theorem~\ref{theorem:bribery hardness alpha1 alpha2 0 dots 0 minus alpha4 minus alpha5 minus alpha6} is obtained from that of Theorem~\ref{theorem:ccdv hardness alpha1 alpha2 0 dots 0 minus alpha4 minus alpha5 minus alpha6}. 
 
 We write the generator $f$ as $f=(\alpha,\beta,0,\dots,0,-\gamma)$ with $0\notin\set{\alpha,\beta,\gamma}$. As in the CCDV hardness proof, for each triple $(x,y,z)$ we add a vote $x>y>\RESTcandidates>z$.
 
 As in the proof of~\ref{theorem:bribery hardness alpha1 alpha2 0 dots 0 minus alpha4 minus alpha5 minus alpha6}, following the recipe from Section~\ref{sect:bribery hardness from ccdv hardness}, we again need to increase the scores of each $c\in X\cup Y\cup Z$ by $\card X\alpha$, and we introduce a candidate $B_\gamma$ who will be voted last in every manipulation vote.
 
 As in the proof of Theorem~\ref{theorem:ccdv alpha1alpha2>alpha3>alpha6}, setting up the scores is easier, since the 3DM-votes are more attractive to delete, as deleting them ``hurts'' two candidates instead of just one as in the proofs of Theorems~\ref{theorem:bribery hardness alpha1 alpha2 0 dots 0 minus alpha4 minus alpha5 minus alpha6} and~\ref{theorem:ccdv hardness alpha1 alpha2 0 dots 0 minus alpha4 minus alpha5 minus alpha6}:
 
 \begin{itemize}
  \item We can simply use setup votes that only have one relevant candidate in one of the non-zero positions, and fill the other two with dummy candidates,
  \item This allows us to increase the points of all candidates in $X\cup Y\cup Z$ by $\card X\alpha$ and add a candidate $B_\gamma$ with $\card X(\alpha+\gamma)$ points,
  \item then, by the same capacity argument as in Theorem~\ref{theorem:ccdv alpha1alpha2>alpha3>alpha6}, only votes with two relevant candidates (i.e., from $X\cup Y\cup\set B$) are deleted, hence the candidate $B_\gamma$ is voted last in every manipulation vote. 
 \end{itemize}
 
 Therefore, the reduction works in essentially the same way as in Theorem~\ref{theorem:ccdv alpha1alpha2>alpha3>alpha6}.
\end{proof}

\subsubsection{Proof of Theorem~\ref{theorem:bribery hardness alpha1 > alpha 3 > alpha 4}}

\theorembriberyhardnessalphaonebiggeralphathreebiggeralphafour*

\begin{proof}
 The proof is an easy application of the recipe from Section~\ref{sect:bribery hardness from ccdv hardness} to the proof of Theorem~\ref{theorem:bribery hardness alpha1 > alpha 3 > alpha 4}:
 
 \begin{itemize}
  \item Clearly, $p$ will be voted first in every manipulation vote. Since the construction will enforce that only 3DM votes will be deleted (each of which lets $p$ get $\alpha_3$ points), this means that $p$ will gain $\card X(\alpha_1-\alpha_3)$ points from a bribery action.
  \item As a consequence, we need to increase the points of each candidate by $\card X(\alpha_1-\alpha_3)$.
  \item Additionally, we introduce candidates $B_4$ and $B_5$, where for $i\in\set{4,5}$, $B_i$ must lose $\card X(\alpha_3-\alpha_i)$ points from the manipulation votes.
  \item We additionally increase the points of $B$ by $\card X(\alpha_3-\alpha_6)$. This implies that, in a successful bribery:
  \begin{enumerate}
   \item Only vote giving $\alpha_1$ points to $B$ first can be deleted, and 
   \item $B$ must be voted in a position awarding $\alpha_6$ points in every manipulation vote.
  \end{enumerate}
 \end{itemize}
 
 Therefore, as long as we only use setup votes that give less than $\alpha_1$ points to $B$, we know that these votes will not be deleted. Since $B_4$ and $B_5$ are in the $\alpha_3$-segment of the 3DM votes, $B_4$ and $B_5$ will not lose points from the delete action, and therefore, all positions giving fewer than $\alpha_3$ points in the manipulation votes will be filled with candidates $B$, $B_4$, and $B_5$.
 
 It therefore remains to show how we can increase the relative points of $B$ sufficiently. Lowering the relative points of $B$ to the exact required amount and adjusting the points of the remaining candidates can be done as usual (see, e.g., the proof of Theorem~\ref{theorem:bribery hardness alpha1 > alpha 3 > alpha 4}), note that we do not need votes that give $\alpha_1$ points to $B$ for this.
 
 To increase the points of $B$, we simply transform the given 3DM instance into an $F$-3DM instance for a suitably chosen $F$ with an application of Proposition~\ref{prop:f3dm np complete}. This increases the relative points of $B$ gained by the 3DM votes to $3F\card X(\alpha_1-\alpha_3)$. Since we need to increase $B$'s score by $\card X(\alpha_1-\alpha_3+\alpha_3-\alpha_6)=\card X(\alpha_1-\alpha_6)$, choosing $F=1+\ceilfrac{\alpha_1-\alpha_6}{\alpha_1-\alpha_3}$ suffices.
\end{proof}

\subsubsection{Proof of Theorem~\ref{theorem:approval generalization bribery}}

\thmapprovalgeneralizationbribery*

\begin{proof}
Clearly, it suffices to consider the case where $\alpha^m_4>\alpha^m_{2n}$, and the value $\card M$ satisfies the condition. We can also assume, without loss of generality, that $m$ is large enough such that $f(m)$ uses at least three different coefficients. We let $X\cup Y\cup Z=\set{s_1,\dots,s_{3k}}$. For each $c\in X\cup Y\cup Z$, let $i(c)$ denote the unique index $j$ with $c=s_{j}$. We choose a value $\ell$ such that $\ell\ge3$ and $(\alpha^m_4-\alpha^m_{m-3k})>k(\alpha^m_\ell-\alpha^m_{\ell+3k-1})$. We first prove that such an $\ell$ exists, a matching one then can be found in polynomial time since $f$ is polynomial-time uniform.

For this, we first choose $x$ as the number of sequential blocks of length $3k$ starting at position $4$, such that the last block ends before the position $m-3k$, i.e., $x=\lfloor\frac{m-4}{3k}\rfloor-1$. We now choose $\ell$ as the start of the block with the minimal difference between the coefficient at position $\ell$ and the coefficient at position $\ell+3k-1$, i.e., we choose $\ell\in\mathbb N$ such that $\ell=3+x'\cdot 3k$, and $(\alpha^m_\ell-\alpha^m_{\ell+3k-1})$ is minimal. Then, since between position $4$ and position $m-3k$, we have at least $x$ of these blocks, the difference between $\alpha^m_4$ and $\alpha^m_{m-3k}$ is at least $x$ times the difference inside the block starting at position $\ell$, i.e.,

$$\alpha^m_4-\alpha^m_{m-3k}\ge x(\alpha^m_\ell-\alpha^m_{l-3k-1}).$$

Note that the left-hand side of this inequality is in fact strictly positive, since we know that $\alpha^m_4>\alpha^m_{\frac23m}$, and $\frac23m\ge m-3k$, as $m=3\card M$, $k=\card X$, and $\card M\ge\card X^2+2\card X+2$. To obtain the required inequality $(\alpha^m_4-\alpha^m_{m-3k})>k(\alpha^m_\ell-\alpha^m_{\ell+3k-1})$, it therefore suffices to show that $x>k$. For this, note that by choice of $\card M$ and since $m=3\card M$, and $\card X=k$, we have that

$$
\begin{array}{lll}
 x & =   & \left\lfloor\frac{m-4}{3k}\right\rfloor-1 \\
   & \ge & \frac{m-4}{3k} -2 \\
   & =   & \frac{3\card M-4}{3\card X}-2 \\
   & \ge & \frac{3(\card X^2+2\card X+2)}{3\card X}-2 \\
   & =   & \frac{3\card X^2+6\card X+6}{3\card X}-2 \\
   & =   & \card X+2+\frac{2}{\card X}-2 \\
   & >   & \card X \\
   & =   & k,
\end{array}
$$

as required. Hence $\ell$ as chosen above satisfies the requirements.

Now, let a 3DM-instance $M\subseteq X\times Y\times Z$ be given. We denote $\card X$ with $k$. We construct an $f$-bribery instance as follows:

The candidate set is $X\cup Y\cup Z\cup p\cup B\cup D$, where $p$ is the preferred candidate, $B$ is a set of $n_b:=m-(\ell+3k-1)$ blocking candidates, and $D$ is a set of $\ell$ dummy candidates. Let $B=\set{b_1,\dots,b_{n_b}}$. Note that by construction, the number of candidates is exactly $m$ as required, and also $\card B\neq\emptyset$, $\card D\ge3$. Without loss of generality, we assume that $\alpha^m_m=0$.

For each $(x,y,z)\in M$, we introduce a vote (we will again call these votes \emph{3DM votes}):

$$b_1>x>y>z>\RESTcandidates>S_{xyz}>p,$$

where $\RESTcandidates$ contains the remaining blocking candidates and all dummy candidates, and $S_{xyz}$ contains each $c\in X\cup Y\cup Z\setminus\set{x,y,z}$ in the $i(c)$-th position, with dummy candidates taking the positions of $x$, $y$, and $z$. Let $r(c)$ denote the position in which $c$ is positioned if $c$ is among the first $4$. (I.e., $c$ gains $\alpha^m_{r(c)}$ points from the votes introduced for a tuple containing $c$, and $\alpha^m_{m-3k+i(c)-1}$ from the remaining 3DM votes.) For a candidate $b\in B$ with $b\neq b_1$, let $i(b)$ denote the position of $b$ in the 3DM votes (i.e., $b$ gets $\alpha^m_{i(b)}$ points from each of these votes).

We set up the (relative) scores of the candidates as follows:

\begin{itemize}
 \item $\score{p}=0$,
 \item $\score{c}=k\cdot\alpha^m_1+\alpha^m_{r(c)}+(k-1)\alpha^m_{m-3k+i(c)-1}-k\alpha^m_{\ell+i(c)-1}$
 \item $\score{b_1}=2k\cdot\alpha^m_1$,
 \item $\score{b_i}=k\alpha^m_1+k\cdot\alpha^m_{i(b_i)}-k\alpha^m_{m-i+1}$ for $i\ge2$,
 \item for all dummy candidates, the scores are so low that they cannot win the election with at most than $k$ bribes.
\end{itemize}

We now show that the 3DM instance is positive if and only if $p$ can be made a winner in the $f$-election with bribing at most $k$ voters, assuming that the above scores can be realized by additional setup votes that will never be deleted by the controller.

First assume that the 3DM instance is positive, i.e., there is a cover $C\subseteq M$ with $\card C=k$. Then the following bribery action is successful:

\begin{itemize}
 \item for each $(x,y,z)\in C$, we delete the vote introduced for $(x,y,z)$,
 \item we add $k$ votes of the form
 $$p>\RESTcandidates>s_1>\dots>s_{3k}>b_{n_b}>b_{n_b-1}>\dots>b_2>b_1.$$
\end{itemize}

Note that in the manipulation votes, a candidate $c\in X\cup Y\cup Z$ is voted in position $m-n_b-3k+i(c)=\ell+i(c)-1$.

We show that none of the non-dummy candidates beats $p$ after the bribery action:

\begin{itemize}
 \item $p$ is voted last in every deleted voter and voted first in every manipulation voter, $p$ gains $k\alpha^m_1$ points, therefore $\score p=k\alpha^m_1$ points.
 \item since $C$ is a cover, for each $c$, one vote featuring $c$ and $(k-1)$ votes not featuring $c$ are removed. Therefore, $c$ loses $\alpha^m_{r(c)}+(k-1)\alpha^m_{m-3k+i(c)-1}$ points from the delete action. Since $c$ is voted in position $\ell+i(c)-1$ in each of the manipulation votes, $c$ gains $k\alpha^m_{\ell+i(c)-1}$ points from the manipulation votes. The final score of $c$ is therefore $k\cdot\alpha^m_1+\alpha^m_{r(c)}+(k-1)\alpha^m_{m-3k+i(c)-1}-k\alpha^m_{\ell+i(c)-1}-(\alpha^m_{r(c)}+(k-1)\alpha^m_{m-3k+i(c)-1})+k\alpha^m_{\ell+i(c)-1}= k\cdot\alpha^m_1$, hence $c$ ties with $p$ 
 \item $b_1$ loses $\alpha^m_1$ points with each of the $k$ deletions, and receives $0$ points from the manipulation votes. Therefore, $b_1$'s final score is $k\alpha^m_1$, also tieing with $p$.
 \item for $b\in B\setminus\set{b_1}$, $b_j$ loses $\alpha^m_{i(b)}$ points from each deletion, and gains $\alpha^m_{m-i+1}$ points from each manipulation vote. Therefore, the final score of $b$ is $k\alpha^m_1+k\alpha^m_{i(b)}-k\alpha^m_{m-i+1}-k\alpha^m_{i(b)}+k\alpha^m_{m-i+1}=k\alpha^m_1$ points, hence $b$ also ties with $p$.
\end{itemize}

For the other direction, assume that $p$ can be made a winner by bribing at most $k$ of the 3DM votes. Clearly, every bribed voter will vote $p$ first after the bribery, therefore, $p$'s final score is $k\alpha^m_1$. Since $b_1$ may not beat $p$, it follows that $b_1$ must gain $0$ points from the manipulation votes. It also follows that there are exactly $k$ bribed voters, since $b_1$ must lose $k\alpha^m_1$ points in order not to beat $p$. Similarly, in order not to beat $p$, each $b_i$ for $i\ge2$ may only gain $k\alpha^m_{m-i+1}$ points from the manipulation votes. Therefore (allowing for swaps between the $b_i$ candidates that cancel each other out, and exchanging positions corresponding to coefficients with the same value), we can without loss of generality assume that the manipulation votes vote candidate $b_i$ in position $m-i+1$. Therefore, the last $n_b$ many positions in all manipulation votes are taken by candidates from $B$, and thus, in each of these votes, each candidate $c$ gains at least $\alpha^m_{m-n_b}$ points. Since $n_b=m-(\ell+3k-1)$, this means that each $c$ gains at least $\alpha^m_{\ell+3k-1}$ points in each manipulation votes. 

Let $C$ be the set of all tuples $(x,y,z)$ such that the vote corresponding to $(x,y,z)$ is deleted. We claim that $C$ is a cover. By the above, we know that $\card C=k$. Now assume that there is some $c\in X\cup Y\cup Z$ that is not covered by $C$. Then, $c$ is voted in position $\alpha^m_{m-3k+i(c)-1}$ in each of the removed votes, and hence loses $k\alpha^m_{m-3k+i(c)-1}$ from the deletion of votes. Due to the above, we know that $c$ gains at least $k\alpha^m_{\ell+3k-1}$ points from each manipulation vote. Therefore, $c$'s final score is at least

$$
\begin{array}{lll}
    & k\cdot\alpha^m_1+\alpha^m_{r(c)}+(k-1)\alpha^m_{m-3k+i(c)-1}-k\alpha^m_{\ell+i(c)-1}-k\alpha^m_{m-3k+i(c)-1}+k\alpha^m_{\ell+3k-1} \\
=   & k\cdot\alpha^m_1+\alpha^m_{r(c)}-\alpha^m_{m-3k+i(c)-1}-k\alpha^m_{\ell+i(c)-1}+k\alpha^m_{\ell+3k-1} \\
\ge & k\cdot\alpha^m_1+(\alpha^m_4-\alpha^m_{m-3k+i(c)-1})-k(\alpha^m_{\ell+i(c)-1}-\alpha^m_{\ell+3k-1}) \\
\ge & k\cdot\alpha^m_1+(\alpha^m_4-\alpha^m_{m-3k})-k(\alpha^m_{\ell}-\alpha^m_{\ell+3k-1}) \\
\end{array}
$$

Since $(\alpha^m_4-\alpha^m_{m-3k})>k(\alpha^m_\ell-\alpha^m_{\ell+3k-1})$, this value exceeds $k\alpha^m_1$, which is the final score of $p$. Therefore, $c$ beats the preferred candidate $p$, and hence the bribery action is not successful, we have a contradiction.

It remains to show that we can add setup vote ensuring that the relative points are as above, and that will never be deleted by the controller. Note that the candidate $b_1$ initially beats $p$ with a headstart of $2k\alpha^m_1$ points. In the manipulation votes, $p$ can gain at most $k\alpha^m_1$ points against $b_1$ (by voting $p$ first and $b_1$ last in all of the votes). Therefore, $b_1$ must lose $k\alpha^m_1$ points from the delete actions, and $p$ may not lose any points from the delete actions. In other words, each of the $k$ deleted votes must give $\alpha^m_1$ points to $b_1$, and $0$ points to $p$. Therefore, it suffices to construct setup votes that result in the above points, and which never have, \emph{at the same time}, $p$ in a position giving $0$ points and $b_1$ in a position giving $\alpha^m_1$ points.

For this, we modify the construction from the implementation Lemma:

\begin{itemize}
 \item If, in the Lemma, the score of $b_1$ is raised by $\alpha^m_1$ (relative to $p$), then we simply repeat one of the 3DM votes, which also has the desired effect. (Note that the controller can never delete two copies of the same 3DM vote, as then she will not be able to construct a cover, which is required as seen above). The side-effect for the other candidates can then be undone by modifying their reletive points (compared to $p$).
 \item If, in the Lemma, the score of some other candidate is changed (relative to $p$), we proceed as follows: We choose $v_0$ as the vote $b_1>p>\RESTcandidates$. Then:
 \begin{itemize}
   \item Since $f$ uses at least $3$ coefficients for $m$ candidates, we know that via rotation of $v_0$, we never produce a vote that gives $\alpha^m_1$ points to $b_1$ and $0$ points to $p$.
   \item Since we are not increasing the score of $b_1$ by $\alpha^m_1$, we never perform a swap that lets $p$ get $0$ points and $b_1$ get $\alpha^m_1$ points.
 \end{itemize}
\end{itemize}

Therefore, none of the setup votes give $0$ points to $p$ and $\alpha^m_1$ points to $b_1$, which concludes the proof.

\end{proof}

\subsubsection{Proof of Theorem~\ref{theorem:veto generalization bribery}}

\thmvetogeneralizationbribery*

\begin{proof}
 The proof is obtained from the proof of the CCDV version (Theorem~\ref{theorem:veto generalization}) in a similar way as the proof of Theorem~\ref{theorem:approval generalization bribery} is obtained from its CCDV version, Theorem~\ref{theorem:approval generalization ccdv}.
 
 Now let a 3DM-instance $M\subseteq X\times Y\times Z$ be given; let $m=3\card M$ be the number of candidates, let $k=\card X$. We write $X\cup Y\cup Z$ as $\set{s_1,\dots,s_{3k}}$, and for $c\in X\cup Y\cup Z$, with $i(c)$ we denote the unique index $i$ with $s_i=c$, and $r(c)$ is $1$, $2$, or $3$, depending on whether $c\in X$, $c\in Y$ or $c\in Z$.
 
 We first choose an $\ell\ge1$ such that $(\alpha^m_{3k+1}-\alpha^m_{m-3})>k(\alpha^m_\ell-\alpha^m_{\ell+3k-1})$ and $n_b\ge 1$, where $n_b=m-\ell-3k+1$. We show that such an $\ell$ exists, the value then can be found in polynomial time since $f$ is polynomial-time uniform. We proceed in a similar way as in the proof of Theorem~\ref{theorem:approval generalization bribery}:
 
 Let $x$ be the number of sequential blocks of length $3k$ that start after position $1$, such that after the start of the last block, there are still at least $3k-3$ positions left, i.e., $x=\left\lceil\frac{m-3k}{3k}\right\rceil-4=\left\lceil\frac{m}{3k}\right\rceil-5$. Now choose $\ell$ of the form $\ell=1+x'3k$ with $1\leq x'\leq x$ such that $\alpha^m_\ell-\alpha^m_{\ell+3k}$ is minimal. Then, $(\alpha^m_{3k+1}-\alpha^m_{m-3})\ge x(\alpha^m_\ell-\alpha^m_{\ell+3k-1})$. 
 
 We first show that the left-hand side of this inequality is strictly positive. Since $\alpha^m_{\frac23m}>\alpha^m_{m-3}$, it suffices to show that $3k+1\leq\frac23m$, i.e., $3\card X+1\leq 2\card M$ (recall that $m=3\card M$). This follows easily since $\card M\ge\card X^3$. To show the required inequality, it now suffices to show that $x>k$. We obtain this as follows:
 
 $$
 \begin{array}{ccccccccccc}
  x & = & \left\lceil\frac{m}{3k}\right\rceil-5 & \ge & \frac{m}{3k}-6 & = & \frac{3\card M}{3\card X}-6 \\
    & = & \frac{\card M}{\card X}-6 & \ge & \frac{\card X^3}{\card X}-6 & \ge & \card X^2-6.
 \end{array}
 $$
 
 Since $\card X=k$, this value exceeds $k$ for sufficiently large $\card X$. We therefore have found a value $\ell$ as required.
 
 Without loss of generality, we assume that $\alpha^m_m=0$, and that the gcd of $\alpha^m_1,\dots,\alpha^m_m$ is $1$.
  
 We now construct the $f$-bribery instance as follows:
 \begin{itemize}
  \item the set of candidates is $X\cup Y\cup Z\cup\set p\cup B\cup D$, where $B$ is a set of $n_b$ blocking candidates and $D$ a set of dummy candidates such that the total number of candidates is $3\card M$,
  \item the preferred candidate is $p$,
  \item for each $(x,y,z)\in M$, we introduce a vote $b_1>S_{xyz}>\RESTcandidates>z>y>x>p$, where $S_{xyz}$,
  \item we introduce additional setup votes (see below) ensuring that the relative points of the candidates are as follows:
  \begin{itemize}
    \item $\score p=0$,
    \item $\score{b_1}=2k\alpha^m_1$,
    \item $\score{b_i}=k\alpha^m_1+k(\alpha^m_1-\alpha^m_{m-i+1})$ for $i\ge2$ 
    \item $\score c=k\alpha^m_1+\alpha^m_{m-r(c)}+(k-1)\alpha^m_{1+i(c)}-k\alpha^m_{\ell+i(c)-1}$.
  \end{itemize}
 \end{itemize}

 In a similar way as in the proof of Theorem~\ref{theorem:approval generalization bribery}, we can show that $p$ can be made a winner of the election by bribing at most $k$ of the above-introduced voters if and only if the 3DM instance is positive. In particular, the controller needs to bribe exactly $k$ voters, since $b_1$ must lose $k\alpha^m_1$ points, and $p$ must win $k\alpha^m_1$ points. The controller can make $p$ win the election by bribing the votes corresponding to a cover, and then letting every bribed voter vote
 
 $$p>\RESTcandidates>s_1>s_2>\dots>s_{3k}>b_{n_b}>\dots>b_1.$$
 
 Now assume that the controller can make $p$ win the election by bribing at most $k$ (and hence, without loss of generality, with exactly $k$) voters. With the same mechanism as used in the proof of Theorem~\ref{theorem:approval generalization bribery}, we know that the candidates $b_1,\dots,b_{n_b}$ must occupy the last $n_b$ positions in every manipulator vote. Therefore, each candidate $c\in X\cup Y\cup Z$ obtains at least $k\alpha^m_{\ell+3k-1}$ points from the $k$ manipulator votes.
 
 Now assume that the bribed voters do not correspond to a cover. Since $k$ voters are bribed, then there is some candidate $c$ such that at least two votes voting $c$ in one of the last four positions are deleted. Therefore, $c$ loses at most $(k-2)\alpha^m_{1+i(c)}+2\alpha^m_{m-r(c)}$ points from the delete action, and gains at least $k\alpha^m_{\ell+3k-1}$ points from the manipulation votes.
 
 Therefore, the final score of $c$ is at least
 
 $$
 \begin{array}{cc}
     & k\alpha^m_1+\alpha^m_{m-r(c)}+(k-1)\alpha^m_{1+i(c)}-k\alpha^m_{\ell+i(c)-1}-2\alpha^m_{m-r(c)}-(k-2)\alpha^m_{1+i(c)}+k\alpha^m_{\ell+3k-1} \\
  =  & k\alpha^m_1-\alpha^m_{m-r(c)}+\alpha^m_{1+i(c)}-k(\alpha^m_{\ell+i(c)-1}-\alpha^m_{\ell+3k-1}) \\
 \ge & k\alpha^m_1+(\alpha^m_{1+i(c)}-\alpha^m_{m-4})-k(\alpha^m_\ell-\alpha^m_{\ell+3k-1}) \\
 \ge & k\alpha^m_1+(\alpha^m_{3k+1}-\alpha^m_{m-4})-k(\alpha^m_\ell-\alpha^m_{\ell+3k-1}).
 \end{array}
 $$
 
 Since $p$ has at most $k\alpha^m_1$ points after the bribery and $(\alpha^m_{3k+1}-\alpha^m_{m-4})>k(\alpha^m_\ell-\alpha^m_{k+3k-1})$ due to the choice of $\ell$, this means that $c$ beats $p$, and hence we have a contradiction.
 
 It remains to show how to construct the setup votes to ensure that the scores are as above. This can be done with the exact same mechanism as in the proof of Theorem~\ref{theorem:approval generalization bribery}.
\end{proof}

\subsubsection{Proof of Theorem~\ref{theorem:bribery fixed coefficients additional case}}

As discussed above, compared to CCDV, we need an additional case to cover all ``many coefficients''-cases of the bribery problem, the above-stated Theorem~\ref{theorem:bribery fixed coefficients additional case}. We now prove this result.

\theorembriberyfixedcoefficientsadditionalcase*

\begin{proof}
 We write $f$ as $(\alpha_1,\alpha_2,\alpha_3,0,\dots,0,-\alpha_4,-\alpha_5,-\alpha_6)$ for $\alpha_3>0$ and $\alpha_4\leq\alpha_5\leq\alpha_6\ge0$. We reduce from 3DM as follows: Let $M\subseteq X\times Y\times Z$ be given. Again, for $c\in X\cup Y\cup Z$, we define $r(c)$ as $1$, $2$, or $3$, depending on whether $c\in X$, $c\in Y$, or $c\in Z$. We construct an instance of $f$-bribery as follows:
 
 \begin{itemize}
  \item the set of candidates is $X\cup Y\cup Z\cup p\cup\set{b_1,b_2,b_3}\cup D$ for a sufficiently large set $D$ of dummy candidates.
  \item for each $(x,y,z)$ we add a vote $x>y>z>\dots>d_1,d_2>p$ for dummy candidates $d_1$ and $d_2$.
  \item the preferred candidate is $p$, the controller's budget is $\card X$.
  \item we additionally add setup votes to ensure the following relative scores:
  
  \begin{itemize}
    \item $\score p=-(\alpha_1+\alpha_6)$,
    \item $\score c=\alpha_{r(c)}$,
    \item $\score{b_i}=k\alpha_{m-i+1}$,
    \item the score of the dummy candidates is so low that they cannot win the election.
  \end{itemize}
 \end{itemize}
 
 If the 3DM instance is positive, then the controller can remove the $\card X$ many votes corresponding to the cover, and let all bribed voters vote $p>d_1>d_2>\RESTcandidates>b_3>b_2>b_1$, this lets $p$ gain $k(\alpha_1+\alpha_6)$ points, each $c$ loses $\alpha_{r(c)}$ points, and each $b_i$ loses $k\alpha_{m-i+1}$ points, hence all candidates tie and $p$ wins the election.
 
 For the other direction, if the controller can bribe at most $k$ votes of the above to ensure that $p$ wins the election, then $p$ has at most $0$ points after the bribery. By construction, the candidates $b_1$, $b_2$, and $b_3$ must occupy the last $3$ positions in each manipulation vote, therefore no $c\in X\cup Y\cup Z$ can lose points from the manipulation votes. Therefore, the removed votes must form a cover of $X\cup Y\cup Z$ to ensure that each $c\in X\cup Y\cup Z$ loses $\alpha_{r(c)}$ points and thus ties with $p$.
 
 It remains to show how to construct the corresponding setup votes. This follows in the same way as in the proof of Theorem~\ref{theorem:approval generalization ccdv}: We add enough dummy candidates such that in the votes resulting from the application of the construction lemma, at most $2$ of the first $3$ positions are filled with relevant candidates. Since the points are ``tight,'' this means that the controller cannot afford to remove a vote not introduced for a 3DM-tuple as above.
\end{proof}

\subsubsection{Proof of Theorem~\ref{theorem:two coefficients case bribery}}

We now give the proof of Theorem~\ref{theorem:two coefficients case bribery}, classifying the complexity of the bribery problem for all generators that only use two coefficients.

\theoremtwocoefficientscasebribery*

\begin{proof}
 Since $f$ only uses two coefficients, we can without loss of generality assume that for each $m$, $f(m)$ is of the form $(1^{a(m)}0^{b(m)})$ for some functions $a,b\colon\mathbb N\rightarrow\mathbb N$. Note that $a(m)+b(m)=m$. Since $f$ is a pure generator, the functions $a$ and $b$ are monotone. Since $\alpha^m_3>\alpha^m_4$, we know that for sufficiently large $m$, we have that $a(m)\ge3$ and $b(m)\ge4$. If $a(m)\leq 3$ for all $m$, then, for sufficiently large $m$, $f$ is $3$-approval, analogously if $b(m)\leq 4$ for all $m$ then $f$ is ultimately equivalent to $4$-veto. In both cases, \NP-hardness of $f$-bribery follows from Theorem~\ref{theorem:lin bribery results list}. Therefore, we can assume that, for sufficiently large $m$, $a(m)\ge4$ and $b(m)\ge5$.
 
 We reduce from 3DM as follows. Let $M\subseteq X\times Y\times Z$ be a 3DM-instance, let $k=\card X$. We use the candidate set $X\cup Y\cup Z\cup p\cup D\cup\set{B}$, i.e., the number of candidates is $m=3k+2+\card D$, where $D$ is a set of $9k+6$ dummy candidates, i.e., $m=9k+8$. Without loss of generality, we assume $a(m)\ge4$ and $b(m)\ge5$. We make a case distinction.
 
 First consider the case that $b(m)\ge 3k+1$. In this case, we treat the generator as a variant of approval, since there are enough ``bad positions'' in the vote to place ``most of the candidates.'' We set up the election as follows:
 
 \begin{itemize}
  \item for each $(x,y,z)\in M$, we introduce a vote $b_1>x>y>z>D>X\cup Y\cup Z\setminus{x,y,z}>p$; again, we call these votes 3DM-votes.
  \item we set up the points such that, relative to $p$,
  \begin{itemize}
    \item $\score p=0$,
    \item $\score b_1=2k$,
    \item $\score c=k+1$ for each $c\in X\cup Y\cup Z$,
    \item the dummy candidates do not have enough points to win the election,
    \item the controller's budget is $k$.
  \end{itemize}
  \end{itemize}
  It is easy to see that the preferred candidate $p$ can be made a winner of the election with at most $k$ bribes if and only if the 3DM instance is positive:
  
  If the instance is positive, then the controller can bribe a set of voters corresponding to a cover, and let each manipulator vote $p>\dots>X\cup Y\cup Z>b_1$. Then $p$ gains $k$ points, $b_1$ loses $k$ points, and each $c\in X\cup Y\cup Z$ loses $1$ points, hence all non-dummy candidates tie and $p$ wins the election.
  
  For the other direction, clearly each $c\in X\cup Y\cup Z$ must lose $1$ points, hence the deleted votes form a cover.
  
  The setup votes can easily be constructed by repeating 3DM votes to lower the score of $p$ relative to other candidates, and then using votes having $p$ as the first candidate and the remaining candidates in a suitable order to adjust their score. Clearly, only (copies of) the 3DM votes will be deleted, since $p$ is optimally positioned in the other votes.

 Now consider the case that $b(m)<3k+1$. Then $a(m)=m-b(m)>m-3k-1=9k+8-3k-1=6k+7$. We use the following reduction:
 
 \begin{itemize}
  \item for each $(x,y,z)\in M$, we introduce a vote $b_1>X\cup Y\cup Z\setminus{x,y,z}>d_1>\dots>d_{b(m)}>\dots>x>y>z>p$, again these votes are called 3DM votes. Note that the candidates $d_1,\dots,d_{b(m)}$ each get $1$ point from each 3DM vote.
  \item let $d_1,\dots,d_{b(m)}$ be the first $b(m)$ dummy candidates (these exist, since $b(m)<3k+1$),
  \item we set up the points such that, relative to $p$,
  \begin{itemize}
    \item $\score p=0$,
    \item $\score c=k-1$ for each $c\in X\cup Y\cup Z$,
    \item $\score d_i=2k$ for each $i\leq b(m)$,
    \item the remaining dummy candidates do not have enough points to win the election,
    \item the controller's budget is $k$.
  \end{itemize}
 \end{itemize}
 
 Again, $p$ can be made a winner of the election with at most $k$ bribes if and only if the 3DM instance is positive:
 
 If the instance is positive, the controller bribes a set of votes corresponding to the cover, and lets each manipulator vote $p>\dots>d_{b(m)}>\dots>d_1$. Then:
 
 \begin{itemize}
  \item $p$ gains $k$ points, and has $k$ points in the end,
  \item each $c\in X\cup Y\cup Z$ loses $k-1$ point from the delete action and gains $k$ points from the manipulation votes, and thus ties with $p$,
  \item each $d_i$ for $i\leq b(m)$ loses $k$ points and hence ties with $p$.
 \end{itemize}
 
 Therefore, $p$ wins the election. For the other direction, note that with deleting $k$ 3DM votes, the candidates $b_i$ for $i\leq b(m)$ still have $k$ points each, and thus must be places in the $0$-point segment of the manipulation votes, in particular, each $c\in X\cup Y\cup Z$ gets $k$ points from the manipulation votes. Now assume that the deleted votes to not correspond to a cover. Since $k$ votes must be deleted to ensure that $b_1$ does not beat $p$, this means that there is some $c\in X\cup Y\cup Z$ such that two votes not giving any point to $c$ are removed. Therefore, $c$ loses only at most $k-2$ points in the delete action, and thus (since $c$ also gains $k$ points from the manipulation votes) ends with $k+1$ points, beating $c$. This is a contradiction.
 
 The setup votes are constructed in the exact same way as in the above case $b(m)\ge3k+1$.
\end{proof}

\subsubsection{Proof of Corollary~\ref{corollary:many coefficients bribery hardness}}

The following combines the previous results to obtain the hardness result for all ``many coefficients'' bribery cases.

\corollarymanycoefficientsbriberyhardness*

\begin{proof}
 If $f$ only uses two coefficients, then the claim follows from Theorem~\ref{theorem:two coefficients case bribery}. Therefore, assume that $f$ uses at least three coefficients for the remainder of the proof.
 
 We reduce from 3DM. Hence, let a 3DM-instance $M\subseteq X\times Y\times Z$ be given. We apply Proposition~\ref{prop:f3dm np complete} to ensure that $\card M\ge\card X^3$ and $\card M\ge\card X^2+2\card X+2$. Let $m=3\card M$. Without loss of generality, assume that $m$ is large enough such that $f(m)$ uses $3$ different coefficients, and such that $\alpha^m_3>\alpha^m_{m-3}$ (note that when this inequality is true for any $m$, it remains true for all larger $m$).
 
 We now make a case distinction:
 \begin{itemize}
  \item If $\alpha^m_4>\alpha^m_{\frac23m}$, then we apply the reduction from Theorem~\ref{theorem:approval generalization bribery}.
  \item If $\alpha^m_{\frac23m}>\alpha^m_{m-4}$, then we apply the reduction from Theorem~\ref{theorem:veto generalization bribery}.
 \end{itemize}
 
 Otherwise, we have that $\alpha^m_4=\alpha^m_{m-4}$. Since $\alpha^m_3>\alpha^m_{m-4}$, this implies $\alpha^m_3>\alpha^m_4=\alpha^m_{m-4}$, and $f$ is of the form $(\alpha_1,\alpha_2,\alpha_3,\alpha_4,\dots,\alpha_4,\alpha_5,\alpha_6,\alpha_7)$ with $\alpha_3>\alpha_4$. Therefore, the result follows from Theorem~{theorem:bribery fixed coefficients additional case}.
\end{proof}

\end{appendix}

\end{document}